\numberwithin{equation}{section}
\newcommand\RedeclareMathOperator{%
  \@ifstar{\def\rmo@s{m}\rmo@redeclare}{\def\rmo@s{o}\rmo@redeclare}%
}
\newcommand\rmo@redeclare[2]{%
  \begingroup \escapechar\m@ne\xdef\@gtempa{{\string#1}}\endgroup
  \expandafter\@ifundefined\@gtempa
     {\@latex@error{\noexpand#1undefined}\@ehc}%
     \relax
  \expandafter\rmo@declmathop\rmo@s{#1}{#2}}
\newcommand\rmo@declmathop[3]{%
  \DeclareRobustCommand{#2}{\qopname\newmcodes@#1{#3}}%
}
\newtheorem*{rep@theorem}{\rep@title}
\newcommand{\newreptheorem}[2]{%
\newenvironment{rep#1}[1]{%
 \def\rep@title{#2 \ref{##1}}%
 \begin{rep@theorem}}%
 {\end{rep@theorem}}}
\newtheorem{theorem}{Theorem}[section]
\newtheorem{lemma}[theorem]{Lemma}
\newtheorem{proposition}[theorem]{Proposition}
\newtheorem{corollary}[theorem]{Corollary}
\newtheorem{fact}[theorem]{Fact}
\newtheorem{conjecture}[theorem]{Conjecture}
\theoremstyle{definition}
\newtheorem{definition}[theorem]{Definition}
\newtheorem{example}[theorem]{Example}
\newtheorem{algorithm}[theorem]{Algorithm}
\theoremstyle{remark}
\newtheorem{remark}[theorem]{Remark}
\newcommand{\f}{\frac}
\newcommand{\lt}{\left}
\newcommand{\rt}{\right}
\newcommand{\vocab}{\textit}
\DeclareMathOperator*{\E}{{\mathbb{E}}}
\RedeclareMathOperator*{\P}{{\mathbb{P}}}
\DeclareMathOperator*{\Var}{{\mathrm{Var}}}
\newcommand{\eps}{\varepsilon}
\newcommand{\bN}{{\mathbb{N}}}
\newcommand{\bR}{{\mathbb{R}}}
\newcommand{\bZ}{{\mathbb{Z}}}
\newcommand{\cA}{{\mathcal{A}}}
\newcommand{\cB}{{\mathcal{B}}}
\newcommand{\cE}{{\mathcal{E}}}
\newcommand{\cG}{{\mathcal{G}}}
\newcommand{\cL}{{\mathcal{L}}}
\newcommand{\cP}{{\mathcal{P}}}
\newcommand{\cPt}{{\cP_2}}
\newcommand{\cQ}{{\mathcal{Q}}}
\newcommand{\cX}{{\mathcal{X}}}
\newcommand{\cV}{{\mathcal{V}}}
\newcommand{\sD}{{\mathsf{D}}}
\newcommand{\sE}{{\mathsf{E}}}
\newcommand{\tO}{{\widetilde O}}
\newcommand{\tOmega}{{\widetilde \Omega}}
\newcommand{\tTheta}{{\widetilde \Theta}}
\newcommand{\T}{{\mathtt{T}}}
\newcommand{\F}{{\mathtt{F}}}
\newcommand{\Q}{{\mathtt{err}}}
\newcommand{\bx}{{\bar x}}
\newcommand{\oT}{{T'}}
\newcommand{\betast}{{\beta^*}}
\newcommand{\kappast}{{\kappa^*}}
\newcommand{\kappastst}{{\kappa^{**}}}
\newcommand{\psist}{\psi^*}
\newcommand{\varphist}{\varphi_*}
\newcommand{\betamax}{{\beta_{\max}}}
\newcommand{\betamin}{{\beta_{\min}}}
\newcommand{\bp}{{\beta_{+}}}
\newcommand{\bm}{{\beta_{-}}}
\newcommand{\bz}{{\beta_0}}
\newcommand{\phip}[1]{\Phi^{(#1)}}
\newcommand{\Gp}[1]{{G^{(#1)}}}
\newcommand{\Sp}[1]{{S^{(#1)}}}
\newcommand{\vp}[1]{v^{(#1)}}
\newcommand{\xp}[1]{x^{(#1)}}
\newcommand{\yp}[1]{y^{(#1)}}
\newcommand{\zp}[1]{z^{(#1)}}
\newcommand{\varphiep}[1]{\varphi_E^{(#1)}}
\newcommand{\Psip}[1]{\Psi^{(#1)}}
\newcommand{\psip}[1]{\psi^{(#1)}}
\newcommand{\rhop}[1]{\rho^{(#1)}}
\newcommand{\kst}{{k^*}}
\newcommand{\nst}{{n^*}}
\newcommand{\tst}{{t^*}}
\newcommand{\Rst}{{R^*}}
\newcommand{\hg}{{\hat g}}
\newcommand{\tq}{{\tilde q}}
\newcommand{\tLambda}{{\tilde \Lambda}}
\newcommand{\ind}[1]{\mathds{1}\lt\{#1\rt\}}
\newcommand{\norm}[1]{\lt\|#1\rt\|}
\newcommand{\diff}[1]{{~\mathrm{d}#1}}
\newcommand{\indep}{{\perp\!\!\!\perp}}
\newcommand{\Svalid}{S_{\mathrm{valid}}}
\newcommand{\Sconsec}{S_{\mathrm{consec}}}
\newcommand{\Sindep}{S_{\mathrm{indep}}}
\newcommand{\Sogp}{S_{\mathrm{ogp}}}
\newcommand{\Snobad}{S_{\text{no-bad}}}
\newcommand{\Sconc}{S_{\mathrm{conc}}}
\newcommand{\DFG}{{\mathrm{DFG}}}
\newcommand{\DGW}{{\mathrm{DGW}}}
\newcommand{\PGW}{{\mathrm{PGW}}}
\newcommand{\Sat}{{\mathrm{Sat}}}
\newcommand{\Rep}{{\mathrm{Rep}}}
\newcommand{\Pois}{{\mathrm{Pois}}}
\newcommand{\unif}{{\mathrm{unif}}}
\newcommand{\Va}{{\mathrm{Va}}}
\newcommand{\Cl}{{\mathrm{Cl}}}
\newcommand{\pa}{{\mathrm{pa}}}
\newcommand{\gr}{{\mathrm{gr}}}
\newcommand{\round}{{\texttt{round}}}
\newcommand{\strictround}{{\texttt{strictRound}}}
\newcommand{\XOR}{{\texttt{XOR}}}
\newcommand{\poly}{{\mathrm{poly}}}
\newcommand{\polylog}{{\mathrm{polylog}}}
\title{The Algorithmic Phase Transition of Random $k$-SAT \\ for Low Degree Polynomials}
\author{
    Guy Bresler\thanks{Massachusetts Institute of Technology, Department of EECS. Email: \texttt{guy@mit.edu}. Supported by MIT-IBM Watson AI Lab and NSF CAREER award CCF-1940205.}
    \and
    Brice Huang\thanks{Massachusetts Institute of Technology, Department of EECS. Email: \texttt{bmhuang@mit.edu}. Supported by NSF Graduate Research Scholarship 1745302, a Siebel Scholarship, and NSF TRIPODS award 1740751.}
}
\date{October 29, 2021}
\keywords{Random $k$-SAT; average-case complexity; algorithmic phase transition; low degree polynomial hardness; overlap gap property.}
\begin{document}

\maketitle

\begin{abstract}
    Let $\Phi$ be a uniformly random $k$-SAT formula with $n$ variables and $m$ clauses.
    We study the algorithmic task of finding a satisfying assignment of $\Phi$.
    It is known that satisfying assignments exist with high probability up to clause density $m/n = 2^k \log 2 - \f12 (\log 2 + 1) + o_k(1)$, while the best polynomial-time algorithm known, the \verb|Fix| algorithm of Coja-Oghlan \cite{Coj10}, finds a satisfying assignment at the much lower clause density $(1 - o_k(1)) 2^k \log k / k$.
    This prompts the question: is it possible to efficiently find a satisfying assignment at higher clause densities?

    We prove that the class of \emph{low degree polynomial} algorithms cannot find a satisfying assignment at clause density $(1 + o_k(1)) \kappast 2^k \log k / k$ for a universal constant $\kappast \approx 4.911$.
    This class encompasses \verb|Fix|, message passing algorithms including Belief and Survey Propagation guided decimation (with bounded or mildly growing number of rounds), and local algorithms on the factor graph.
    This is the first hardness result for any class of algorithms at clause density within a constant factor of that achieved by \verb|Fix|.
    Our proof establishes and leverages a new many-way overlap gap property tailored to random $k$-SAT.
\end{abstract}

\setcounter{tocdepth}{1}
\tableofcontents

\section{Introduction}
\label{sec:intro}

The $k$-SAT problem occupies a central role in complexity theory as the first and canonical NP-complete problem \cite{Coo71}.
Its average-case analogue, random $k$-SAT, likewise has a central role in average-case computational complexity.
In this problem, we are given a $k$-CNF formula with $m$ clauses and $n$ variables whose $km$ literals are sampled uniformly and i.i.d. from the $2n$ possible literals;\footnote{In a variant of this definition, the $m$ clauses are chosen uniformly and without replacement among all $2^k \binom{n}{k}$ clauses with $k$ distinct, non-complementary literals. This definition behaves identically to ours in the large-$n$ limit, and all properties of random $k$-SAT we show in this paper apply equally to this model.} see \cite{Ach09} for a survey.
There are two natural fundamental questions for random $k$-SAT.
First, at what scalings of $(n,k,m)$ are there satisfying assignments? Second, when can they be found by efficient algorithms?

Early work showed that for fixed $k$ the interesting regime of random $k$-SAT is when $m = \Theta(n)$, and that the problem's qualitative behavior in the large-$n$ limit depends on the \vocab{clause density} $\alpha = m/n$.
Namely, \cite{FP83} showed that if $\alpha \ge 2^k \log 2$, random $k$-SAT is unsatisfiable with high probability; on the positive side, \cite{CF90} showed that if $\alpha < 2^k / k$, a simple algorithm finds a satisfying assignment with nontrivial probability, and \cite{CR92} improved the guarantee to with high probability.

As we tune $\alpha$, we encounter phase transitions separating one qualitative behavior from another.
Two phase transitions are of primary interest to us: the \vocab{satisfiability threshold}, below which random $k$-SAT admits a satisfying assignment with high probability, and the \vocab{algorithmic threshold}, below which a polynomial-time algorithm produces a satisfying assignment with high probability.

The satisfiability threshold is well understood.
\cite{KKKS98} showed that random $k$-SAT is unsatisfiable with high probability at clause density $2^k \log 2 - \f12 (\log 2 + 1) + o_k(1)$, where $o_k(1)$ denotes a term limiting to $0$ as $k\to\infty$.
\cite{CP16} showed that for a different $o_k(1)$ term, random $k$-SAT is satisfiable with high probability at clause density $2^k \log 2 - \f12 (\log 2 + 1) + o_k(1)$.
For large $k$, the landmark result of Ding, Sly, and Sun \cite{DSS15} precisely identified the satisfiability threshold $\alpha_s(k)$ within this $o_k(1)$ range, proving that with high probability, random $k$-SAT is satisfiable when $\alpha < \alpha_s(k)$ and unsatisfiable when $\alpha > \alpha_s(k)$.

In the present paper we study the algorithmic threshold, which is much less understood.
The best polynomial-time algorithm known, the \verb|Fix| algorithm of Coja-Oghlan \cite{Coj10}, finds a satisfying assignment with high probability at clause density $(1-o_k(1)) 2^k \log k / k$, nearly a factor of $k$ below the satisfiability threshold.
A body of evidence has emerged to suggest that this is the correct threshold, but rigorous results that efficient algorithms fail beyond this threshold have been scarce.

In the early 2000s, statistical physicists developed a rich but non-rigorous theory describing the solution geometry of random $k$-SAT, among other random constraint satisfaction problems \cite{KMRSZ07}.
This theory predicts several phase transitions in random $k$-SAT's solution geometry, which we now summarize; see \cite[Figure 2]{KMRSZ07} for an illustration.
At low clause density, the space of satisfying assignments is one large cluster.
When the clause density reaches the \vocab{uniqueness threshold}, disconnected solution clusters appear but the main cluster contains all but an exponentially small fraction of solutions.
At the \vocab{clustering threshold}, the solution space shatters into an exponentially large number of clusters, each with an exponentially small fraction of solutions.
Additional clauses cause these clusters to shrink until at the \vocab{condensation threshold}, the solution space is dominated by a few clusters of strongly varying sizes.
Finally, beyond the \vocab{satisfiability threshold} there are no satisfying assignments.
Many of these predictions have since been proven rigorously: the prediction of the satisfiability threshold was confirmed in \cite{DSS15}, and the physics prediction of the random regular NAE-$k$-SAT condensation threshold was recently confirmed in \cite{NSS20}.

\cite{KMRSZ07} predicted that Markov Chain Monte Carlo (MCMC) algorithms succeed up to the clustering threshold and no more.
Since then, this threshold has emerged as the predicted limit of \emph{all} efficient algorithms, and structural phenomena in the clustered regime have been rigorously established that (still non-rigorously) suggest algorithmic hardness.
\cite{AC08} showed that clustering occurs at clause density $(1+o_k(1)) 2^k \log k / k$, confirming the prediction of \cite{KMRSZ07}.
They showed that at this clause density, long-range correlations appear in random $k$-SAT's solution space, in the following sense.
Say variable $x_i$ of a satisfying assignment $x\in \{\T,\F\}^n$ is \vocab{frozen} if any satisfying assignment $y$ with $x_i\neq y_i$ is at Hamming distance $\Omega(n)$ from $x$.
Then, in all but an $o(1)$ fraction of satisfying assignments, all but an $o_k(1)$ fraction of bits are frozen with high probability.
This suggests that above this clause density, local search is unlikely to succeed, and any algorithmic solution to random $k$-SAT must use a qualitatively different approach.

The rigorous evidence for the algorithmic threshold consists of exhibiting algorithms on one side and producing bounds against specific algorithms or restricted computational models on the other side.
There is a long history of work on heuristic algorithms for $k$-SAT.
The oldest heuristic is the \vocab{Davis-Putnam-Logemann-Loveland} (DPLL) algorithm \cite{DP60, DLL61}, a backtracking based search algorithm which still forms the basis for many modern SAT solvers.
Other heuristics that have emerged include the \vocab{pure literal} rule \cite{GPB82}; \vocab{unit clause propagation} \cite{CF90}; \vocab{shortest clause} \cite{CR92, FS96}; \vocab{walksat} \cite{Pap91, CFFKV09}; and \vocab{Belief} and \vocab{Survey Propagation guided decimation} \cite{MRS07, BMZ05}.
However, there is no evidence, rigorous or non-rigorous, that any of these algorithms succeed beyond clause density $O_k(2^k/k)$.
(See \cite[Table 1]{Coj10} for a review of these algorithms' performances.)
The breakthrough result \cite{Coj10} produced the algorithm \verb|Fix|, which provably finds a satisfying assignment with high probability up to clause density $(1-o_k(1)) 2^k \log k / k$.
This is the best algorithm to date, and the above physics evidence suggests that this clause density is optimal up to lower order terms.

The earliest rigorous hardness result is \cite{LMS98}, which proved that the pure literal rule does not solve random $3$-SAT above clause density approximately $1.63$.
\cite{AS00} generalized this result, showing that so-called \emph{myopic algorithms} cannot solve random $3$-SAT above clause density approximately $3.26$.
(The random $3$-SAT satisfiability threshold is conjectured to be about $4.26$ \cite{MPZ02}.)

For large $k$, the earliest hardness result is \cite{ABM04}, which showed that DPLL type algorithms require exponential running time beyond clause density $O_k(2^k/k)$.
Note that this threshold is \emph{smaller} than the clause density $(1-o_k(1))2^k \log k / k$ where \verb|Fix| succeeds; thus DPLL algorithms are provably suboptimal.
Gamarnik and Sudan \cite{GS17} showed that \emph{balanced sequential local algorithms}, which include Belief and Survey Propagation guided decimation (with constant or mildly growing number of message passing rounds) cannot solve random NAE-$k$-SAT at clause density $(1+o_k(1)) 2^{k-1} \log^2 k / k$.
The quantity $2^{k-1}$ is the NAE-$k$-SAT analogue of $2^k$ for $k$-SAT.
Remaining negative results are bounds against specific algorithms.
\cite{Het16} proved that Survey Propagation guided decimation (without restriction on the number of rounds) fails at clause density $(1 + o_k(1)) 2^k \log k / k$, and \cite{CHH17} proved that walksat fails at clause density $O_k(2^k \log^2 k / k)$.
Table~\ref{tab:neg-results} summarizes these results.
To date, all negative results either differ from the conjectured threshold $(1+o_k(1)) 2^{k} \log k / k$ by a factor growing in $k$ or are tailored to a specific algorithm.

\begin{table}
    \centering
    \def\arraystretch{1.5}
    \begin{tabular}{|c|c|c|}
        \hline
        Reference & Algorithm or algorithm class & Clause density \\
        \hline
        \hline
        \cite{ABM04} & DPLL algorithms & $O_k(2^k/k)$ \\
        \hline
        \cite{GS17} & Balanced sequential local algorithms (NAE-$k$-SAT) & $(1+o_k(1)) 2^{k-1} \log^2 k / k$ \\
        \hline
        \cite{Het16} & Survey Propagation guided decimation & $(1+o_k(1)) 2^k \log k / k$ \\
        \hline
        \cite{CHH17} & Walksat & $O_k(2^k \log^2 k / k)$ \\
        \hline
        \hline
        This work & Low degree polynomials & $(1+o_k(1))\kappast 2^k \log k / k$ \\
        \hline
    \end{tabular}
    \caption{Algorithmic hardness results for random $k$-SAT with large $k$. The conjectured algorithmic threshold is $(1+o_k(1)) 2^k \log k / k$.}
    \label{tab:neg-results}
\end{table}

In this paper, we show that \emph{low degree polynomial} algorithms do not solve random $k$-SAT above clause density $(1+o_k(1)) \kappast 2^k \log k / k$ for a universal constant $\kappast \approx 4.911$.
Low degree polynomials encompass many of the above algorithms, including \verb|Fix|, Belief and Survey Propagation guided decimation, and local and sequential local algorithms on the factor graph.
This is the first hardness result for any class of algorithms within a constant factor of the conjectured algorithmic threshold.

Our result gives strong evidence that the algorithmic threshold is within a constant factor of $2^k \log k / k$.
Because our techniques link clustering to hardness, we believe the true algorithmic threshold is $(1+o_k(1)) 2^k \log k / k$, matching \verb|Fix| and the onset of clustering; we leave the question of closing this constant factor gap as an important open problem.

\subsection{Algorithmic Hardness from the Overlap Gap Property}

The proof of our result is based on making rigorous an appropriate understanding of random $k$-SAT's solution geometry.
This proof extends a line of work on the \emph{overlap gap property} (OGP) and develops techniques to overcome obstacles limiting the reach of prior OGP methodology.
We now summarize the OGP program and our contribution to it; a more detailed discussion can be found in Section~\ref{sec:ogp-discussion}.

The recent line of work on the OGP \cite{GS14, RV17, GS17, GL18, CGPR19, GJ21, GJW20, Wei20, GK21, GJW21, HS21}, see \cite{Gam21} for a survey, is the first to link physics intuitions about solution geometry to rigorous algorithmic hardness results.
Initiated by Gamarnik and Sudan in \cite{GS14}, the OGP program links algorithmic hardness to an ``overlap gap," a formalization of clustering defined as the absence of a pair of solutions a medium distance apart.
In its original form, an OGP argument shows that in (part of) the clustered regime, the problem's solution space exhibits an overlap gap with high probability.
It then shows that any stable algorithm solving the problem can be used to construct a forbidden pair of solutions, and thus such an algorithm cannot exist.

In many problems, the classic OGP shows stable algorithms fail well below the point where solutions exist, but not to the believed algorithmic threshold.
This is because the overlap gap is a ``worst case" notion of clustering requiring there to be zero solution pairs at medium distance, while the notion of clustering that appears to coincide with hardness is ``average case," allowing a small minority of medium distance solution pairs.
To improve the threshold where hardness for stable algorithms is shown, later work has considered forbidden structures consisting of \emph{several} solutions, which we term multi-OGPs.
Building on the line of work \cite{GS14, RV17, GJW20}, the paper of Wein \cite{Wei20} showed using a multi-OGP that low degree polynomials cannot solve maximum independent set at any objective larger than the believed algorithmic threshold.
Multi-OGPs have also been used to rule out stable algorithms for random NAE-$k$-SAT \cite{GS17} and the Number Partitioning Problem \cite{GK21} well below the existential threshold (and the reach of the classic OGP), and for spin glass optimization \cite{HS21} at the algorithmic threshold.

For random $k$-SAT, early work \cite{DMMZ08} showed that the classic OGP occurs at clause density $(1+o_k(1)) \f12 2^k \log 2$.
This clause density is below the satisfiability threshold, confirming the clustering picture at this clause density.
However, it remains well above the conjectured algorithmic threshold.
Establishing a multi-OGP for random $k$-SAT within a constant factor of the algorithmic limit presents unique challenges.
Our approach is most similar to that of \cite{Wei20}: we express the log first moment of an overlap structure as a free entropy, and our goal is to find an overlap structure making this quantity negative.
However, in contrast to the maximum independent set problem considered in \cite{Wei20}, where the independence of the Erd\H{o}s-R\'enyi graph's edges makes the free entropy analysis tractable by the principle of deferred decisions, the free entropy for random $k$-SAT has complex dependencies which make a tight analysis difficult.
It is a priori unclear how to even define the forbidden structure in the multi-OGP.

We identify the correct forbidden structure and prove that it does not occur with high probability.
To achieve this, we make three conceptual contributions.
First, we define notions of overlap profile and overlap entropy.
Second, we define the multi-OGP in terms of this formalism; this is itself a key innovation, as all (multi-)OGPs in the literature have not required the overlap profile's full power.
Third, we perform a novel free entropy analysis to show the multi-OGP occurs.
We are optimistic that many problems, including those with similarly complex energy landscapes, may be amenable to the techniques developed in this paper.

\subsection{Hardness for Restricted Classes of Computation}

Reasoning about the power of restricted classes of algorithms is at the heart of theoretical computer science.
As discussed above, there is a long line of work showing hardness of random $k$-SAT for various computational models and algorithms \cite{LMS98, AS00, ABM04, GS17, Het16, CHH17}.
More generally, for other problems, the limits of various computational models have been studied, including circuits \cite{Ajt83, FSS84, Has86, CSS18}, the convex hierarchies of Sherali-Adams and L\'ov\'asz-Schrijver (see \cite{CMM09} and references therein), the sum of squares hierarchy \cite{Gri01, KMOW17, BHKKMP19}, and local algorithms on graphs \cite{GS14, RV17}.

Recently, low degree polynomial algorithms have emerged as a prominent class in average case complexity and statistical inference.
As outlined in \cite[Appendix A]{GJW20}, this class contains many popular and powerful frameworks, including spectral methods, local algorithms on graphs, and (approximate) message passing \cite{DMM09, BM11, JM13, Mon19, AMS20, Sel21}.
In addition, a recent flurry of work has shown that for many problems in high-dimensional statistics, including planted clique, sparse PCA, community detection, and tensor PCA, low degree polynomials are as powerful as the best polynomial-time algorithms known \cite{HS17, HKPRSS17, Hop18, BKW20, KWB19, DKWB19, CHKRT20, BB20, LZ20, SW20, BBKMW21, BBHLS21}.
Thus, showing that low degree polynomial algorithms fail at some threshold provides evidence that all polynomial-time algorithms fail at that threshold.

\subsection{Notation}

For all positive integers $n$, $[n]$ denotes the set $\{1,\ldots,n\}$.
For two assignments $x,y\in \{\T,\F\}^n$, let $\Delta(x,y) = \f{1}{n} |\{i\in [n]: x_i\neq y_i\}|$ denote the normalized Hamming distance.
We occasionally consider assignments $x,y\in \{\T,\F,\Q\}^n$ which allow an error symbol; for such assignments the definition of $\Delta$ extends verbatim.

Throughout, $\log$ denotes the natural logarithm. The binary entropy function $H:[0,1]\to [0,\log 2]$ is $H(x) = -x\log x - (1-x)\log (1-x)$.
We often use the basic inequality $H(x)\le x\log \f{e}{x}$.
We also overload notation and denote by $H(\cdot)$ the entropy of certain distributions.
These will be defined where first used.

All our results are in the double limit as $n\to\infty$, and then $k\to\infty$.
Thus, the notations $O(\cdot), \Omega(\cdot), o(\cdot), \omega(\cdot)$ indicate asymptotic behavior in $n$, suppressing any dependence on $k$.
With a tilde (e.g. $\tO(\cdot)$) these notations also suppress $\polylog(n)$ factors.
When subscripted with $k$, these notations indicate asymptotic behavior in $k$ of a quantity independent of $n$.

\paragraph{Organization.}
The rest of this paper is structured as follows.
In Section~\ref{sec:results}, we state our main results.
Section~\ref{sec:ogp-discussion} summarizes the progress of the OGP program and places our contributions in this context.
Sections~\ref{sec:impossibility-ogp} through \ref{sec:impossibility-ldp-stability} are devoted to the proof of Theorem~\ref{thm:impossibility}, our hardness result for low degree polynomials.
Section~\ref{sec:impossibility-ogp} develops the formalism needed to define our central multi-OGP.
This section proves Theorem~\ref{thm:impossibility} assuming Proposition~\ref{prop:impossibility-prob-bounds}(\ref{itm:impossibility-prob-bound-consecutive}), that outputs of the low degree polynomial are stable with nontrivial probability, and Proposition~\ref{prop:impossibility-prob-bounds}(\ref{itm:impossibility-prob-bound-ogp}), that the main multi-OGP occurs with high probability.
Sections~\ref{sec:impossibility-multi-ogp} and \ref{sec:impossibility-ldp-stability} prove these propositions.
Section~\ref{sec:local-algs} proves Theorem~\ref{thm:impossibility-local-algs}, which shows that at clause density $(1+o_k(1)) \kappast 2^k \log k / k$, local algorithms cannot solve random $k$-SAT with even very small probability.
Section~\ref{sec:simulation} shows that a class of algorithms we call local memory algorithms, which include \verb|Fix| and sequential local algorithms, can be simulated by local algorithms and low degree polynomials.
Using these simulation results, Section~\ref{sec:achievability} proves Theorem~\ref{thm:achievability}, our converse achievability result that local algorithms and low degree polynomials both solve random $k$-SAT at clause density $(1-o_k(1)) 2^k \log k / k$.
Section~\ref{sec:discussion} gives concluding remarks.

\paragraph{Acknowledgements.}
We are grateful to the anonymous reviewers for their comments and suggestions, which have improved this paper.
We thank David Gamarnik, Mehtaab Sawhney, Mark Sellke, and Alex Wein for helpful conversations.
This work was done in part while the authors were participating in the Simons Institute programs in Probability, Geometry, and Computation in High Dimensions (Fall 2020) and Computational Complexity of Statistical Inference (Fall 2021).

\section{Results}
\label{sec:results}

Throughout this paper, $\cV = \{x_1,\ldots,x_n\}$ denotes a set of propositional variables.
The set of corresponding literals, consisting of the variables in $\cV$ and their negations, is $\cL = \{x_1,\ldots,x_n,\bx_1,\ldots,\bx_n\}$.
Let $\Omega_k(n,m)$ denote the set of all $k$-CNF formulas over $\cV$ with $m$ clauses.
We allow literals to appear multiple times in a clause and clauses to appear multiple times in a formula.
We treat each $\Phi \in \Omega_k(n,m)$ as an ordered $m$-tuple of clauses, each of which is an ordered $k$-tuple of literals.
Let $\Phi_i$ ($i\in [m]$) denote the $i$th clause of $\Phi$ and $\Phi_{i,j}$ ($j\in [k]$) denote the $j$th literal of $\Phi_i$.
The central object of this paper is the following distribution.

\begin{definition}[Random $k$-SAT]
    \label{defn:random-ksat}
    The \vocab{random $k$-SAT} distribution $\Phi_k(n,m)$ is the law of a uniformly random sample from $\Omega_k(n,m)$.
    Equivalently, we can sample $\Phi\sim \Phi_k(n,m)$ by sampling the literals $\Phi_{i,j}$ i.i.d. from $\unif(\cL)$.
\end{definition}

We now define the constant $\kappast$ in our hardness results.
Define the function $\iota : (1, +\infty) \to \bR$ by
\[
    \iota(\beta) = \f{\beta}{1 - \beta e^{-(\beta - 1)}}.
\]
One easily checks that $\iota$ is strictly convex, with $\iota(\beta) \to +\infty$ when $\beta \to 1^+$ or $\beta \to +\infty$.
Let $\kappast = \min \iota(\beta) \approx 4.911$.
The minimum is attained at $\betast\approx 3.513$, the unique solution to $\beta^2 e^{-(\beta-1)} = 1$ in $(1, +\infty)$.

\subsection{Computational Hardness for Low Degree Polynomials}

We study the class of low degree polynomial algorithms, defined as follows.
This is the same computational model considered in \cite{GJW20, Wei20}.

\begin{definition}[Low degree polynomial]
    \label{defn:ldp}
    A \vocab{degree-$D$ polynomial} is a function $f: \bR^N \to \bR^n$ of the form
    \[
        f(x) = \lt(f_1(x), \ldots, f_n(x)\rt),
    \]
    where each $f_i : \bR^N \to \bR$ is a multivariate polynomial (in the ordinary sense) with real coefficients of degree at most $D$.
    A \vocab{random degree-$D$ polynomial} is defined similarly, except the coefficients are random (but independent of the input $x$).
    Formally, for an arbitrary probability space $(\Omega, \P_\omega)$, a random degree-$D$ polynomial is a function $f: \bR^N \times \Omega \to \bR^n$ such that for each $\omega \in \Omega$, $f(\cdot, \omega)$ is a degree-$D$ polynomial.
\end{definition}

\begin{remark}
    We will see in Lemma~\ref{lem:impossibility-assume-deterministic} that randomness does not increase the power of the class of low degree polynomials.
\end{remark}

We now define how to encode a $k$-SAT formula as an input to a low degree polynomial.
Define an arbitrary total order on $\cL$.
We encode each $\Phi\in \Omega_k(n,m)$ as a ``one-hot" vector of indicators $\Phi_{i,j,s}$ ($i\in [m], j\in [k], s\in [2n]$) that $\Phi_{i,j}$ is the $s$th element of $\cL$.
This encoding is an element of $\{0,1\}^N$, where $N = m\cdot k \cdot 2n$.
Slightly abusing notation, we identify $\Phi$ with this encoding.

Next, we define how to interpret the output of a low degree polynomial as a Boolean assignment.
We introduce the symbol $\Q$ and define the function $\round : \bR \to \{\T,\F,\Q\}$ by
\[
    \round(x) =
    \begin{cases}
        \T & x \ge 1, \\
        \F & x \le -1, \\
        \Q & \text{otherwise}.
    \end{cases}
\]
When applied to a real-valued vector, $\round$ is applied coordinate-wise.
Thus, outputs of the polynomial that are at least $1$ represent true, outputs that are at most $-1$ represent false, and
outputs in the interval $(-1, 1)$ are errors.
It is important to exclude $(-1, 1)$ so that a small change in the polynomial output cannot induce a large change in (the valid outputs of) the assignment.

In the following definition, we relax the notion of satisfying assignment in two ways: we allow the algorithm to make mistakes in a small fraction $\eta$ of positions (including all $\Q$ outputs and possibly others), and after repairing these mistakes we allow a small fraction $\nu$ of clauses to not be satisfied.

\begin{definition}
    [$(\eta, \nu)$-satisfy]
    Let $\eta, \nu \in [0,1]$.
    An assignment $x\in \{\T,\F\}^n$ \emph{$\nu$-satisfies} $\Phi \in \Omega_k(n,m)$ if it satisfies at least $(1-\nu)m$ clauses of $\Phi$.
    Moreover, $x \in \{\T,\F,\Q\}^n$ \emph{$(\eta, \nu)$-satisfies} $\Phi$ if there exists $y\in \{\T,\F\}^n$ such that $\Delta(x,y) \le \eta$ and $y$ $\nu$-satisfies $\Phi$.
\end{definition}
We remark that any $x$ with more than $\eta n$ entries equal to $\Q$ does not $(\eta, \nu)$-satisfy $\Phi$.
We will show that for small $\eta, \nu$ independent of $n$, a low degree polynomial cannot produce a satisfying assignment for random $k$-SAT even in this relaxed sense.
Formally, we will show hardness for the following notion of solve.

\begin{definition}
    [$(\delta, \gamma, \eta, \nu)$-solve]
    \label{defn:ldp-solve-ksat}
    Let $\delta, \eta, \nu \in [0,1]$ and $\gamma \ge 1$.
    A random polynomial $f: \bR^N \times \Omega \to \bR^n$ \vocab{$(\delta, \gamma, \eta, \nu)$-solves} $\Phi_k(n,m)$ if the following conditions hold.
    \begin{enumerate}[label=(\alph*), ref=\alph*]
        \item $\P_{\Phi, \omega} \lt[\text{$(\round \circ f)(\Phi, \omega)$ $(\eta, \nu)$-satisfies $\Phi$}\rt] \ge 1-\delta$.
        \item \label{itm:solve-normalization} $\E_{\Phi, \omega} \lt[\norm{f(\Phi, \omega)}_2^2\rt] \le \gamma n$.
    \end{enumerate}
\end{definition}

Here, $\delta$ is the algorithm's failure probability and
$\gamma$ is a normalization parameter.
We think of $\gamma$ as a large constant; condition (\ref{itm:solve-normalization}) is necessary because otherwise we can scale $f$ to make the condition that valid outputs of $f$ are outside the interval $(-1, 1)$ meaningless.

The following theorem is our main result, that no low degree polynomial can solve random $k$-SAT at clause density $\kappa 2^k\log k / k$ for any $\kappa > \kappast$.

\begin{theorem}[Hardness for low degree polynomials]
    \label{thm:impossibility}
    Fix $\kappa > \kappast$.
    Let $\alpha = \kappa 2^k\log k / k$ and $m = \lfloor \alpha n \rfloor$.
    There exists $\kst = \kst(\kappa) > 0$ such that for any $k\ge \kst$, there exists $\nst > 0$, $\eta = \Omega_k(k^{-1})$, $\nu = \f{1}{k^2 2^k}$, and $C_1, C_2 > 0$ (depending on $\kappa, k$) such that the following holds.
    If $n\ge \nst$, $\gamma \ge 1$, $1\le D \le \f{C_1n}{\gamma \log n}$ and
    \[
        \delta \le \exp\lt(-C_2 \gamma D \log n\rt),
    \]
    then there is no random degree-$D$ polynomial that $(\delta, \gamma, \eta, \nu)$-solves $\Phi_k(n,m)$.
\end{theorem}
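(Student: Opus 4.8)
\medskip\noindent
The plan is to argue by contradiction through a many-way overlap gap property, in the manner of the low-degree OGP arguments of \cite{GJW20, Wei20} but deployed via the overlap-profile formalism of Section~\ref{sec:impossibility-ogp}. Suppose some random degree-$D$ polynomial $(\delta,\gamma,\eta,\nu)$-solves $\Phi_k(n,m)$ with the stated parameters. By Lemma~\ref{lem:impossibility-assume-deterministic} we may assume it is a deterministic polynomial $f:\bR^N\to\bR^n$ of degree $D$ with $\E_\Phi[\norm{f(\Phi)}_2^2]\le\gamma n$ such that, with probability at least $1-\delta$ over $\Phi\sim\Phi_k(n,m)$, the rounded output $(\round\circ f)(\Phi)$ $(\eta,\nu)$-satisfies $\Phi$. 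The goal is to run $f$ on a carefully correlated family of random $k$-SAT instances and show that the resulting tuple of near-satisfying assignments must realize an overlap profile that Proposition~\ref{prop:impossibility-prob-bounds}(\ref{itm:impossibility-prob-bound-ogp}) forbids with high probability.

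First I would construct the interpolation. Fix an ensemble $(\Phi^{(0)},\ldots,\Phi^{(T)})$ of instances, each marginally distributed as $\Phi_k(n,m)$, in which consecutive instances agree outside a tiny fraction of the $N$ one-hot coordinates --- obtained by resampling the literals of $\Phi^{(0)}$ in small batches, so that $\Phi^{(0)}$ is deformed into an independent copy $\Phi^{(T)}$. Write $g^{(t)}=(\round\circ f)(\Phi^{(t)})$. Two ingredients control this sequence. By a union bound over $t$, the solve guarantee holds at every interpolation point with probability at least $1-(T+1)\delta$. And, since $f$ has degree $D$ with $\E[\norm{f}_2^2]\le\gamma n$, resampling a small fraction of the input changes $\E[\norm{f(\Phi^{(t)})-f(\Phi^{(t+1)})}_2^2]$ by a proportionally small amount; a concentration argument for degree-$D$ polynomials promotes this to a high-probability per-step bound, and hence --- by a union bound over $t$ --- to joint Hamming-stability of the whole sequence $(g^{(t)})$ with nontrivial probability. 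This is Proposition~\ref{prop:impossibility-prob-bounds}(\ref{itm:impossibility-prob-bound-consecutive}). The reason $\round$ excludes $(-1,1)$ enters here: since $\T$ is output only above $1$ and $\F$ only below $-1$, a small $\norm{f(\Phi^{(t)})-f(\Phi^{(t+1)})}_2^2$ bounds the number of coordinates on which $g^{(t)}$ and $g^{(t+1)}$ differ while both lying in $\{\T,\F\}$; combined with the $(\eta,\nu)$-satisfy condition, which caps each $g^{(t)}$'s $\Q$-count at $\eta n$, we may repair each $g^{(t)}$ to a genuine assignment $y^{(t)}\in\{\T,\F\}^n$ that $\nu$-satisfies $\Phi^{(t)}$ with $\Delta(g^{(t)},y^{(t)})\le\eta$, so that consecutive repaired assignments lie within normalized Hamming distance $O(\eta)$ of one another.

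Next I would extract the contradiction. On the intersection of the two events above with the high-probability event of Proposition~\ref{prop:impossibility-prob-bounds}(\ref{itm:impossibility-prob-bound-ogp}) --- an intersection that is nonempty precisely because the hypotheses $D\le C_1 n/(\gamma\log n)$ and $\delta\le\exp(-C_2\gamma D\log n)$ are calibrated to keep the total failure probability below the nontrivial stability probability (the OGP failure being $e^{-\Omega(n)}$ only after a union bound over the interpolation and a discretization of size $\exp(O(\gamma D\log n))$ reflecting the effective dimensionality $N^{O(D)}$ of degree-$D$ polynomials, and the accumulated solve failure being $(T+1)\delta$) --- we obtain near-solutions $y^{(0)},\ldots,y^{(T)}$ of the correlated instances whose overlap profile drifts by only $O(\eta)$ per step. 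Because the ensemble deforms $\Phi^{(0)}$ into an independent copy, this slowly varying profile cannot stay in the ``trivial'' region throughout; and since $\eta=\Theta_k(k^{-1})$ is chosen small enough that the $O(\eta)$ per-step drift is smaller than the width of the forbidden region, the profile cannot jump past it either. Hence a greedy choice of indices $t_1<\cdots<t_M$ produces a tuple $(y^{(t_1)},\ldots,y^{(t_M)})$ realizing exactly the overlap profile that Proposition~\ref{prop:impossibility-prob-bounds}(\ref{itm:impossibility-prob-bound-ogp}) rules out for tuples of $\nu$-satisfying assignments of those instances --- a contradiction. The constant $\kappast=\min_{\beta>1}\iota(\beta)$ enters only through that proposition: it is the clause density above which the free entropy of the forbidden overlap profile becomes negative, making the forbidden structure have first moment $e^{-\Omega(n)}$.

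I expect the main obstacle of this part to be making the ``continuity'' step airtight: one must define the forbidden overlap profile and the index-selection rule so that this profile is simultaneously \emph{unreachable} by any tuple of correlated near-solutions (the content of Proposition~\ref{prop:impossibility-prob-bounds}(\ref{itm:impossibility-prob-bound-ogp})) and \emph{unavoidable} for a slowly varying sequence obtained by deforming an instance into an independent copy, all while absorbing the $\eta$-slack from rounding and repair, the $\nu$-slack in satisfaction, and the degree-dependent discretization and concentration losses that pin down the admissible ranges of $D$ and $\delta$. The genuinely hard step, however, lies one layer deeper --- in Proposition~\ref{prop:impossibility-prob-bounds}(\ref{itm:impossibility-prob-bound-ogp}) itself, namely the novel free-entropy analysis of Section~\ref{sec:impossibility-multi-ogp} establishing that the many-way forbidden structure has negative first moment at every clause density above $\kappast 2^k\log k/k$ --- which I would regard as the crux, with the argument sketched here as its deployment.
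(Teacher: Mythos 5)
Your high-level plan---reduce to a deterministic polynomial, run it on an interpolation path of correlated instances, show the outputs are valid and Hamming-stable with nontrivial probability, greedily extract a tuple of near-solutions realizing a forbidden overlap structure, and invoke the multi-OGP to derive a contradiction---matches the paper's. But there is a genuine gap in how you obtain joint Hamming-stability, and it is not a detail: a union bound over $t$ does not work. The paper's Proposition~\ref{prop:impossibility-total-influence} only controls the \emph{sum} of per-step bad probabilities, $\sum_{j} \lambda_j \le 4D/c$ with $c=(\bp-\bm)/(\gamma k)$, and the interpolation revisits each coordinate $k$ times, so $\sum_{t=1}^T \lambda_{\sigma(t)} \le 4Dk\gamma/(\bp-\bm)$. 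This quantity is far larger than $1$ for any $D\ge 1$, so a union bound gives no information about the probability of avoiding every bad step. What the paper needs, and proves in Lemma~\ref{lem:impossibility-graphwalk} via a random-walk/Jensen argument on a product graph, is the multiplicative lower bound $\P(\Snobad)\ge (2n)^{-\sum_t \lambda_{\sigma(t)}}$: this is an exponentially small but \emph{nontrivial} (i.e., $n^{-O(\gamma D k^2)}$) lower bound that is useful even when the sum of bad rates is large. The entire calibration of the admissible $\delta$ and $D$ in the theorem statement flows from this bound (we need $(T+1)\delta$ and $e^{-\Omega(n)}$ to be smaller than $(2n)^{-O(\gamma D k^2)}$), so this step is structural, not cosmetic.

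Relatedly, your parenthetical about the OGP failure being $e^{-\Omega(n)}$ ``only after a union bound over the interpolation and a discretization of size $\exp(O(\gamma D\log n))$ reflecting the effective dimensionality $N^{O(D)}$ of degree-$D$ polynomials'' misattributes where $D$ enters. The bounds $\P(\Sindep^c), \P(\Sogp^c) \le e^{-\Omega(n)}$ in Proposition~\ref{prop:impossibility-prob-bounds}(\ref{itm:impossibility-prob-bound-independent},\ref{itm:impossibility-prob-bound-ogp}) are first-moment statements about the random $k$-SAT ensemble and the $(T+1)^{k+1} n^{2^k}$ choices of indices and overlap profiles; they are entirely independent of the polynomial and of $D$. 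There is no discretization of the polynomial space anywhere. The degree $D$ enters only through the stability lower bound $\P(\Svalid\cap\Sconsec)\ge (2n)^{-O(\gamma D k^2)}-(T+1)\delta$, and the constraints on $D$ and $\delta$ in the theorem statement exist precisely so that this quantity stays above the $e^{-\Omega(n)}$ failure probability of the OGP events. With that correction, the rest of your outline---including the use of conditional overlap entropy's Lipschitz dependence on Hamming distance to turn $\Sconsec$ into a statement about small entropy increments, and the role of $\Sindep$ in forcing the entropy to eventually exceed $\bp\log k/k$ so that the moat must be crossed---matches the paper's argument.
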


The only property of low degree polynomials we use is their smoothness, in the sense of Proposition~\ref{prop:impossibility-no-cbad-edge}.
Thus Theorem~\ref{thm:impossibility} applies to any algorithm satisfying the conclusion of this proposition.

Note that Theorem~\ref{thm:impossibility} only rules out algorithms succeeding with quite large probability.
This is a limitation of our methods, shared by all results leveraging OGP to show hardness for low degree polynomials \cite{GJW20, Wei20}.
Our converse achievability result, Theorem~\ref{thm:achievability}, will show that at clause densities where efficient algorithms solving random $k$-SAT exist, they can be simulated by low degree polynomials and succeed with probability larger than that forbidden by Theorem~\ref{thm:impossibility}.
We will also see in Theorem~\ref{thm:impossibility-local-algs} that local algorithms, a more restricted computation class that nonetheless simulates \verb|Fix|, as well as Belief and Survey Propagation Guided Decimation, cannot solve random $k$-SAT with even very small probability.

The constant $\kappast$ can likely be optimized further.
However, without further conceptual insights our methods stall at a value of $\kappast$ strictly larger than $1$, lower bounded by approximately $1.716$.
Thus further ideas are needed to close the constant factor gap between our hardness results and the best algorithms.
See Appendix~\ref{appsec:impossibility-kappast-discussion} for a discussion of these points.
Despite this barrier, we believe the algorithmic phase transition for low degree polynomials does occur at clause density $(1+o_k(1)) 2^k \log k / k$, matching the physics prediction and positive results.
This is formalized in the following conjecture, which we leave as an open problem.
\begin{conjecture}
    \label{conj:alg-threshold-one}
    Theorem~\ref{thm:impossibility} (and Theorem~\ref{thm:impossibility-local-algs}) holds for all $\kappa > 1$.
\end{conjecture}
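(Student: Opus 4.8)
The statement is the conjectural strengthening of Theorem~\ref{thm:impossibility} and Theorem~\ref{thm:impossibility-local-algs} from the threshold $\kappa > \kappast$ down to the clustering threshold $\kappa > 1$. Since the frozen-variable geometry of \cite{AC08} sets in exactly at $\kappa = 1$, and our method is built to convert clustering into hardness, the plan is to keep the overall architecture --- an ensemble multi-OGP combined with the low-degree stability estimate behind Proposition~\ref{prop:impossibility-no-cbad-edge} --- and to replace the two ingredients that currently leak a constant factor: the family of forbidden overlap profiles, and the free entropy bound certifying that the forbidden profile does not occur.

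First, regarding the forbidden structure: the multi-OGP in the body of the paper fixes a finite tuple of solutions along a short interpolation path and forbids one particular overlap profile; the value $\kappast$, and even the optimized barrier $\approx 1.716$ discussed in Appendix~\ref{appsec:impossibility-kappast-discussion}, arises because this family is too coarse to capture the ultrametric cluster geometry predicted at $\kappa$ near $1$. I would search for a richer forbidden configuration --- a branching tree of solutions, or a profile indexed by a continuous interpolation parameter with a prescribed correlation kernel --- designed so that the forbidden event matches as tightly as possible what the low-degree stability bound actually forces on an algorithm's trajectory, namely the existence of solutions at \emph{every} overlap value in a medium window consistent with that trajectory.

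Second, regarding the certification: the current argument bounds the log first moment of the overlap structure by a free entropy and shows it is negative, but near $\kappa = 1$ this quantity is only barely negative for the best structure, so the convexity relaxations used along the way (e.g.\ $H(x) \le x \log(e/x)$, decoupling of clause-satisfaction probabilities) cost precisely the missing factor. The plan is to evaluate the relevant exponents essentially exactly, as is done for the existential threshold in \cite{DSS15} and the classic OGP in \cite{DMMZ08}: condition on the local structure of the factor graph (short-cycle and local-neighborhood statistics) to remove the low-order fluctuations that otherwise force a lossy union bound, then argue that the correctly weighted, conditioned first-moment exponent of the refined structure is negative for all $\kappa > 1$.

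The hard part will be marrying these two refinements. Tight structural control via conditioning or the cavity method typically yields statements in expectation or with probability $1 - o(1)$, whereas the OGP-to-algorithm reduction needs the forbidden event with probability $1 - e^{-\Omega(n)}$, uniformly over the exponentially large ensemble of correlated formulas $\Phi^{(t)}$ used to build the interpolation path --- otherwise the union bound over the ensemble, or the invocation of the stability estimate, fails. Producing exponential concentration of a tightly tuned, conditioned first moment at the exact clustering threshold, while keeping the forbidden structure robust to the $o(1)$-sized input perturbations a low-degree polynomial can induce, is exactly the conceptual insight the paper flags as missing; I do not see how to supply it with present tools, and regard the conjecture as genuinely open. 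A more modest intermediate target, consistent with Appendix~\ref{appsec:impossibility-kappast-discussion}, is to push $\kappast$ below $2$ by optimizing the profile family alone.
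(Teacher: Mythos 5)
This statement is a conjecture that the paper explicitly leaves as an open problem; there is no proof in the paper to compare against. You correctly treat it as such: your proposal is a research program, not a proof, and you honestly conclude that the conjecture remains open with present tools. The obstacles and directions you identify --- the barrier $\kappastst \ge \psist_2 \approx 1.716$ for the negative free entropy chaining technique, the possible use of a branching OGP in the style of \cite{HS21} or the Ramsey-theoretic overlap control of \cite{GK21}, and the tension between a tightly tuned first-moment exponent and the exponentially-strong, ensemble-uniform forbidden event required by the stability reduction --- coincide with the paper's own assessment in Section~\ref{sec:discussion} and Appendix~\ref{appsec:impossibility-kappast-discussion}, so there is no gap to report beyond the one you already name: the conjecture is unproven.
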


\subsection{Computational Hardness for Local Algorithms}

We now consider local algorithms on the factor graph.
We first define the factor graph of a $k$-SAT instance.

\begin{definition}[Factor graph]
    \label{defn:factor-graph}
    The factor graph of $\Phi\in \Omega_k(n,m)$ is a signed bipartite graph $(G, \rho)$, where $G = (\Va_G, \Cl_G, E_G)$ is a bipartite graph with left-vertices $\Va_G$, right-vertices $\Cl_G$, and edges $E_G$, and $\rho : E_G \to \{\T,\F\}$ associates each edge with a polarity.
    Here, $\Va_G = \{v_1,\ldots,v_n\}$ and $\Cl_G = \{c_1,\ldots,c_m\}$ represent the variables and clauses of $\Phi$.
    Every literal $x_j$ or $\bar x_j$ in clause $\Phi_i$ corresponds to an edge $e = (v_j, c_i) \in E_G$.
    Edge $e$ has polarity $\rho_G(e) = \T$ if the literal is $x_j$ and $\rho_G(e) = \F$ if the literal is $\bar x_j$.
\end{definition}

To define local algorithms, we first introduce formalism for rooted graphs.
Let $(\Omega, \P_\omega)$ be an arbitrary probability space.
\begin{definition}[Rooted decorated bipartite graph]
    \label{defn:rooted-decorated-bipartite-graph}
    A \emph{decorated bipartite graph} is a tuple $(G, \rho, \varphi)$.
    Here $G = (\Va_G, \Cl_G, E_G)$ is a bipartite graph and $V_G = \Va_G \cup \Cl_G$.
    Moreover, $\rho, \varphi$ are maps $\rho : E_G \to \{\T,\F\}$ and $\varphi : V_G \cup E_G \to \Omega$.
    A \emph{rooted decorated bipartite graph} is a tuple $(v, G, \rho, \varphi)$, where $(G, \rho, \varphi)$ is a decorated bipartite graph and $v\in V_G$.
\end{definition}

Let $\Lambda$ denote the set of rooted decorated bipartite graphs.
Two such graphs are isomorphic of there exists a bijection between them preserving $v, \Va_G, \Cl_G, E_G, \rho, \varphi$.

\begin{definition}[$r$-neighborhood]
    \label{defn:r-nbd}
    Let $(v, G, \rho, \varphi) \in \Lambda$ and $r\in \bN$.
    Define the $r$-neighborhood $N_r(v, G) = (v, G')$, where $\Va_{G'}\subseteq \Va_G$, $\Cl_{G'}\subseteq \Cl_G$ are the sets of vertices reachable from $v$ by a path of length at most $r$ and $E_{G'}$ is the set of edges on those paths.
    Further, define $N_r(v, G, \rho, \varphi) = (v, G', \rho', \varphi') \in \Lambda$, where $(v, G') = N_r(v, G)$ and $\rho' = \rho \big|_{G'}$, $\varphi' = \varphi \big|_{G'}$ are the restrictions of $\rho, \varphi$ to $G'$.
\end{definition}

\begin{definition}[$r$-local function]
    \label{defn:local-fn}
    A function $f$ with domain $\Lambda$ is \emph{$r$-local} if the value of $f(v, G, \rho, \varphi)$ depends only on the isomorphism class of $N_r(v, G, \rho, \varphi)$.
\end{definition}

In other words, a local function has access to the topology of the $r$-neighborhood, the vertex and edge decorations, and the location of the root, but not the identities of the vertices and edges.

\begin{definition}[$r$-local algorithm]
    \label{defn:local-alg}
    Let $f$ be an $r$-local function with codomain $\{\T,\F\}$.
    The $r$-local algorithm based on $f$, denoted $\cA_f$, runs as follows on input $\Phi \in \Omega_k(n,m)$ with factor graph $(G, \rho)$.
    \begin{enumerate}[label=(\arabic*), ref=\arabic*]
        \item Sample $\varphi \sim (\Omega, \P_\omega)^{\otimes (V_G \cup E_G)}$ (i.e. each output of $\varphi : V_G \cup E_G \to \Omega$ is sampled i.i.d. from $(\Omega, \P_\omega)$) independently of $\Phi$.
        \item For each $v = v_i \in \Va_G$, set $x_i = f(v, G, \rho, \varphi)$.
        \item Output $(x_1,\ldots,x_n)\in \{\T,\F\}^n$.
    \end{enumerate}
\end{definition}

We now state our hardness result for local algorithms.

\begin{theorem}[Hardness for local algorithms]
    \label{thm:impossibility-local-algs}
    Fix $\kappa > \kappast$.
    Let $\alpha = \kappa 2^k\log k / k$ and $m = \lfloor \alpha n \rfloor$.
    There exists $\kst = \kst(\kappa) > 0$ such that for any $k\ge \kst$, there exists $\eta = \Omega_k(k^{-2})$ (depending on $\kappa, k$) and $\nu = \f{1}{k^2 2^k}$ such that the following holds.
    For all $r\in \bN$, there exists $\nst > 0$ (depending on $\kappa, k, r$) such that if $n\ge \nst$, then for any $r$-local algorithm $\cA$ with output in $\{\T,\F\}^n$,
    \[
        \P \lt[
            \text{$\cA(\Phi)$ $(\eta, \nu)$-satisfies $\Phi$}
        \rt]
        \le
        \exp(-\tOmega(n^{1/3})).
    \]
    The probability is over the randomness of $\Phi \sim \Phi_k(n,m)$ and the (independent) internal randomness of $\cA$.
\end{theorem}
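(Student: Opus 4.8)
The plan is to re-use the many-way overlap gap property already established in the proof of Theorem~\ref{thm:impossibility}---namely the conclusion of Proposition~\ref{prop:impossibility-prob-bounds}(\ref{itm:impossibility-prob-bound-ogp}), that the forbidden overlap structure is absent with probability $1-\exp(-\Omega(n))$ (which a first-moment argument provides)---and to replace the low-degree stability analysis of Section~\ref{sec:impossibility-ldp-stability} with a stability analysis tailored to $r$-local algorithms, in the spirit of the classical local-algorithm arguments of \cite{GS14, GS17}. The point is that $r$-local algorithms are far more stable than generic low degree polynomials: for fixed $k$ and $r$, the factor graph of $\Phi_k(n,m)$ has all degrees bounded by $\polylog n$ with probability $1-\exp(-n^{\Omega(1)})$, so on this event every $r$-neighborhood has size at most $W = (\log n)^{O_k(r)}$, and resampling a single clause changes the output of $\cA$ in at most $2W$ coordinates. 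This strong rigidity is what I expect to upgrade the forbidden success probability from ``$\delta$ slightly subconstant'' (as in Theorem~\ref{thm:impossibility}) to $\exp(-\tOmega(n^{1/3}))$.

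Concretely, I would proceed in three steps. First, I would record two consequences of locality: (i) conditionally on the bipartite incidence graph $G$, the polarities $\rho$ are i.i.d.\ uniform on $\{\T,\F\}$ and the decorations $\varphi$ are i.i.d., so the outputs of $\cA$ at variables with pairwise vertex- and edge-disjoint $r$-neighborhoods are independent, and there are $\Omega(n/\polylog n)$ such variables; and (ii) if $\Phi,\Phi'$ agree outside a clause set $S$ and $\cA$ is run on both with a shared $\varphi$, then $\Delta(\cA(\Phi),\cA(\Phi')) \le \tfrac1n\sum_{c\in S}\bigl(|N_r(c,G)| + |N_r(c,G')|\bigr) \le 2|S|W/n$ on the good event. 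Second, following the ensemble construction used for Theorem~\ref{thm:impossibility}, I would build correlated instances $\Phi^{(0)},\ldots,\Phi^{(L)}$ by resampling clauses in blocks, run $\cA$ on each with one common $\varphi$, and extract from the outputs an $R$-tuple whose pairwise clause-agreements realize the overlap profile of the forbidden structure; by (ii) the outputs move by only $2W(m/L)/n$ per block, so the realized overlaps vary continuously in the interpolation parameter, and by (i) they concentrate exponentially well, being sums of $\Omega(n/\polylog n)$ independent bounded contributions over disjoint-neighborhood coordinates. Third, I would derive the contradiction: on the good event, success of $\cA$ at the $R$ relevant ensemble points forces the forbidden overlap structure, contradicting Proposition~\ref{prop:impossibility-prob-bounds}(\ref{itm:impossibility-prob-bound-ogp}); balancing the probability of the good event ($1-\exp(-n^{\Omega(1)})$), the probability that the overlap profile is realized to the required precision ($1-\exp(-\tOmega(n))$ by the concentration above), and the probability that the forbidden structure is absent ($1-\exp(-\Omega(n))$) against the assumed per-point success probability of $\cA$ then forces that success probability below $\exp(-\tOmega(n^{1/3}))$. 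The exponent $n^{1/3}$ should come out of optimizing the number $L$ of interpolation blocks against the per-block stability error $W/n$ and the concentration scale.

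The hard part will be the second step: arranging the ensemble so that the $R$-tuple of outputs realizes the \emph{precise} overlap profile demanded by the multi-OGP---not merely pairwise-small or pairwise-large overlaps---while keeping the extraction efficient enough to preserve an exponentially small failure probability. This needs the $R$ runs to be coupled carefully through both the shared decoration $\varphi$ and the shared clauses, with the \emph{expected} pairwise overlaps controlled as functions of the sharing fractions and their concentration obtained from the independence of disjoint-neighborhood contributions. A secondary technical nuisance will be absorbing the worst-case blow-up of $r$-neighborhood sizes into the $\exp(-n^{\Omega(1)})$ error term, via tail bounds on the degree sequence of $\Phi_k(n,m)$.
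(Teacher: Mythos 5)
Your high-level plan---reuse the multi-OGP and replace low-degree stability with the much stronger concentration enjoyed by $r$-local algorithms---is the right one, and both observations (i) and (ii) do appear in essence in the paper (McDiarmid with a bad event of non-locally-small factor graph, bounded differences coming from disjoint-neighborhood contributions). But there is a genuine missing idea in the third step that the proposal cannot close as written. You need to deduce a bound on the \emph{single-instance} success probability $p = \P[\cA(\Phi)\text{ $(\eta,\nu)$-satisfies }\Phi]$ from the statement that a \emph{joint} event involving $k{+}1$ (or $R$) correlated runs is impossible. ``Balancing the probability of the good event, the concentration probability, and the multi-OGP probability against the assumed per-point success probability'' is not a valid deduction: if each run succeeds with probability $p$, the joint success probability $\P(\Svalid)$ a priori lies anywhere between $\max(0,(k{+}1)p - k)$ and $p$, and neither bound forces $p$ to be small. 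The paper supplies exactly the missing ingredient as Proposition~\ref{prop:impossibility-local-prob-bounds}(\ref{itm:impossibility-local-prob-bound-valid}): a correlation inequality $\P(\Svalid)\ge p^{k+1}$, proven in Subsection~\ref{subsec:impossibility-local-prob-bound-valid} by nested applications of Jensen's inequality. Crucially, this argument leans on a \emph{different interpolation geometry} than the single linear path you describe: the paper uses $k$ interpolation paths of length $km$ emanating from a common root $\phip{0}$, precisely so that after a permutation the instances $\Psip{0},\ldots,\Psip{k}$ share coordinate prefixes in a \emph{nested} fashion, and Jensen can be applied one prefix-block at a time. With a single chain and block resampling, the sharing pattern is not nested and the same Jensen bootstrap does not go through; you would need a different FKG-style argument, which is not obviously available.

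A secondary gap is that the paper avoids any union bound over the choice of indices $t_1,\ldots,t_k$ by pre-committing to them \emph{deterministically}: $t_\ell$ is defined as the first time the \emph{expected} conditional overlap entropy $\E H(\pi(\xp{\ell}|\xp{0},\ldots,\xp{\ell-1}))$ crosses the threshold $\bz\log k / k$ (Subsection on selecting problem instances, plus Lemma~\ref{lem:local-alg-e-bd} controlling how these expectations move). The realized entropies are then shown to concentrate around these expectations via McDiarmid (Proposition~\ref{prop:impossibility-local-prob-bounds}(\ref{itm:impossibility-local-prob-bound-conc})). Your ``extract from the outputs an $R$-tuple realizing the overlap profile'' phrasing suggests a realization-dependent choice of indices, as in the low-degree proof; that route re-introduces a union bound over $\binom{L}{R}$ choices, which is affordable when the target probability is only $\text{poly}(n)^{-1}$ (as in Theorem~\ref{thm:impossibility}) but is not obviously affordable when chasing $\exp(-\tOmega(n^{1/3}))$. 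Finally, note the concentration rate you quote for the overlap profile, $1-\exp(-\tOmega(n))$, is slightly optimistic: the bad event (some factor graph not $r$-locally small) contributes $\exp(-\Omega(n^{1/3}))$, so the correct rate is $1-\exp(-\tOmega(n^{1/3}))$, which is also where the final exponent $n^{1/3}$ actually comes from, rather than from trading off a block count $L$ against a per-block error.
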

This theorem rules out a much smaller success probability than Theorem~\ref{thm:impossibility} because our OGP argument in this setting can leverage concentration properties of local algorithms, which are considerably stronger than stability properties of low degree polynomials.

\subsection{Achievability Results}

The following result shows that local algorithms and \emph{constant} degree polynomials solve random $k$-SAT at clause density $(1-\eps)2^k \log k / k$ for any $\eps > 0$.
This gives a lower bound on the algorithmic phase transition within a constant factor and provides a converse to Theorems~\ref{thm:impossibility} and \ref{thm:impossibility-local-algs}.

\begin{theorem}
    \label{thm:achievability}
    Fix $\eps > 0$.
    Let $\alpha = (1-\eps) 2^k \log k / k$ and $m = \lfloor \alpha n \rfloor$.
    There exists $\kst = \kst(\eps) > 0$ such that for any $k\ge \kst$ and $\eta > k^{-12}$, there exist $\nst, r, D, \gamma > 0$ and a sequence $\delta(n)=o(1)$ (dependent on $\eps , k, \eta$) such that the following holds for all $n\ge \nst$.
    \begin{enumerate}[label=(\alph*), ref=\alph*]
        \item \label{itm:achievability-local}
        There exists an $r$-local algorithm $\cA$ such that
        \[
            \P \lt[
                \text{$\cA(\Phi)$ $(\eta, 0)$-satisfies $\Phi$}
            \rt]
            \ge
            1-\delta(n).
        \]
        \item \label{itm:achievability-ldp}
        There exists a (deterministic) degree-$D$ polynomial that $(\delta(n), \gamma, \eta, 0)$-solves $\Phi_k(n,m)$.
    \end{enumerate}
    There also exists a sequence $\nu(n) = o(1)$ (dependent on $\eps, k, \eta$) such that the following holds for all $n\ge \nst$.
    \begin{enumerate}[resume, label=(\alph*), ref=\alph*]
        \item \label{itm:achievability-local-conc}
        There exists an $r$-local algorithm $\cA$ such that
        \[
            \P \lt[
                \text{$\cA(\Phi)$ $(\eta, \nu(n))$-satisfies $\Phi$}
            \rt]
            \ge
            1-\exp(-\tOmega(n^{1/3})).
        \]
        \item \label{itm:achievability-ldp-conc}
        There exists a (deterministic) degree-$D$ polynomial that $(\exp(-\tOmega(n^{1/5})), \gamma, \eta, \nu(n))$-solves $\Phi_k(n,m)$.
    \end{enumerate}
\end{theorem}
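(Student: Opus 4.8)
The plan is to obtain all four conclusions by simulating the \verb|Fix| algorithm of Coja-Oghlan \cite{Coj10} in the two computational models, using the simulation results of Section~\ref{sec:simulation}. Recall that \verb|Fix| outputs a satisfying assignment with high probability at clause density $(1-o_k(1))2^k\log k/k$, so for any fixed $\eps>0$ and all $k\ge\kst(\eps)$ it succeeds at $\alpha=(1-\eps)2^k\log k/k$. As recorded in Section~\ref{sec:simulation}, \verb|Fix| is a local memory algorithm, and that section shows how to simulate any local memory algorithm by an $r$-local algorithm and by a degree-$D$ polynomial; since \verb|Fix| uses a bounded number of passes, $r$ and $D$ can be taken to be constants (depending on $\eps,k,\eta$ but not $n$), at the price of disagreeing with \verb|Fix| on only a small fraction of variables — those whose relevant neighborhood in the factor graph is combinatorially atypical. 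The tolerances $\eta$ and $\nu$ in the notion of $(\eta,\nu)$-satisfy absorb exactly this simulation error, which is why the result is stated with $\eta$ bounded below and, in the concentrated versions, with $\nu=\nu(n)>0$ rather than $\nu=0$.

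For (\ref{itm:achievability-local}) and (\ref{itm:achievability-ldp}): let $\cA$ be the $r$-local algorithm that Section~\ref{sec:simulation} produces from \verb|Fix|, with $r$ chosen so that, with high probability, $\cA(\Phi)$ agrees with \verb|Fix|'s output on at least $(1-\eta)n$ coordinates. On the intersection of this event with the high-probability event that \verb|Fix| returns a fully satisfying assignment $y^\star$, we have $\Delta(\cA(\Phi),y^\star)\le\eta$ while $y^\star$ satisfies all clauses, so $\cA(\Phi)$ $(\eta,0)$-satisfies $\Phi$; letting $\delta(n)$ be the probability that either event fails gives $\delta(n)=o(1)$, proving (\ref{itm:achievability-local}). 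For (\ref{itm:achievability-ldp}), feed \verb|Fix| into the polynomial simulation of Section~\ref{sec:simulation} instead; it yields a deterministic degree-$D$ polynomial with $D$ and a normalization $\gamma$ independent of $n$, whose rounded output agrees with \verb|Fix|'s output off an at-most-$\eta$ fraction of coordinates (with $\Q$ outputs counted among the disagreements), hence $(\eta,0)$-satisfies $\Phi$ on the same event.

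For the concentrated statements (\ref{itm:achievability-local-conc}) and (\ref{itm:achievability-ldp-conc}), rather than comparing with $y^\star$ I would bound directly the number of clauses violated by $x:=\cA(\Phi)$ (with $\Q$ outputs set to $\T$ in the polynomial case). A variable on which $x$ differs from $y^\star$ can violate only clauses of which it is the unique satisfying literal under $y^\star$, and at this clause density a typical variable is the unique satisfying literal of only $\Theta(\log k)$ clauses; since $x$ differs from $y^\star$ on at most an $o(1)$-fraction of variables with high probability, the violated-clause count has expectation $o(m)$. This count is a sum over clauses of $\{0,1\}$-valued functions, each depending only on a bounded-radius neighborhood of its clause in the factor graph, so after conditioning on the high-probability event that the factor graph has small maximum degree and no dense local structure — so these neighborhoods have size $\polylog(n)$ — a bounded-differences inequality shows the count is within $o(m)$ of its mean except with probability $\exp(-\tOmega(n^{1/3}))$. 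Hence $x$ $(\eta,\nu(n))$-satisfies $\Phi$ for a suitable $\nu(n)=o(1)$ with that probability, which is (\ref{itm:achievability-local-conc}); passing through the polynomial simulation, the additional rounding step weakens the exponent to $n^{1/5}$, giving (\ref{itm:achievability-ldp-conc}).

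I expect the main obstacle to be the locality-of-computation statement for \verb|Fix| underlying Section~\ref{sec:simulation}: one must unpack Coja-Oghlan's algorithm and verify that, apart from a vanishing fraction of variables and of clauses, the value it assigns to a variable (respectively, whether it satisfies a given clause) is determined by a bounded-radius, combinatorially typical neighborhood. This is a decay-of-correlations statement for the algorithm's own dynamics — the cascade of re-assignments \verb|Fix| performs must stay short-range except near rare bad configurations — rather than a statement about the solution geometry, and all of the above rests on it. The remaining steps are routine: checking that the truncated algorithm still agrees with \verb|Fix| on a $(1-\eta)$-fraction of variables with high probability, bounding the locality of the violated-clause statistic, handling the $\polylog(n)$-sized (rather than bounded) neighborhoods in the concentration step, and tuning the parameters to reach the stated $n^{1/3}$ and $n^{1/5}$ exponents.
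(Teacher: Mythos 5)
Your proposal has two genuine gaps, both of which trace to misreadings of the structure the paper actually uses.

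\textbf{The full \texttt{Fix} is not a local memory algorithm, and $\eta>k^{-12}$ is not simulation error.} You write that, as recorded in Section~\ref{sec:simulation}, \texttt{Fix} is a local memory algorithm, and that $\eta>k^{-12}$ absorbs the cost of truncating its bounded-radius computations. In fact the paper only establishes this for the \emph{first phase}, \texttt{Fix1} (Fact~\ref{fac:fix-local-memory}); the third phase of \texttt{Fix} runs a global maxflow, which the paper explicitly declines to simulate locally. What \texttt{Fix1} produces is not a satisfying assignment but one within normalized Hamming distance $k^{-12}$ of a satisfying $z^\star$ (Theorem~\ref{thm:fix1-guarantee}), and it is this $k^{-12}$ gap — not the simulation slack $\eta'$, which can be taken as small a constant as one likes — that forces $\eta>k^{-12}$. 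Consequently, your proof of part~(\ref{itm:achievability-local}) cannot compare $\cA(\Phi)$ against a satisfying assignment that ``\texttt{Fix} returns''; it must chain two approximations, $\Delta(\cA(\Phi),\texttt{Fix1}(\Phi))\le\eta'$ and $\Delta(\texttt{Fix1}(\Phi),z^\star)\le k^{-12}$, to reach $(\eta,0)$-satisfaction with $\eta=k^{-12}+2\eta'$.

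\textbf{The concentration argument concentrates the wrong quantity.} For parts~(\ref{itm:achievability-local-conc},\ref{itm:achievability-ldp-conc}) you propose to bound the clause-violation count of $x=\cA(\Phi)$ directly, claiming $x$ differs from the satisfying $y^\star$ on an $o(1)$-fraction of variables. But $r$ and $D$ must be constants independent of $n$, so the simulation error is a fixed constant $\eta'$, and the theorem allows the tolerance $\eta$ to be any constant above $k^{-12}$. Thus $x$ can disagree with $z^\star$ on $\Theta(n)$ coordinates, and (even under your bound on violated clauses per flipped variable) $x$ itself may violate $\Theta(n)$ clauses — which is not $o(m)$ for fixed $k$, and so $x$ need not $\nu(n)$-satisfy $\Phi$ for any $\nu(n)=o(1)$. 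The paper avoids this by concentrating $\Sat_\eta(\cA(\Phi),\Phi)$, the maximum number of satisfied clauses over all $y$ with $\Delta(\cA(\Phi),y)\le\eta$ (Propositions~\ref{prop:local-alg-conc} and~\ref{prop:local-alg-truncated-conc}). This random variable has expectation $(1-o(1))m$ precisely because of part~(\ref{itm:achievability-local}), and concentrates via McDiarmid with a bad event (Lemma~\ref{lem:mcdiarmid-with-bad}); a deviation of $m/\log n$ around the mean then gives $(\eta,\nu(n))$-satisfaction without requiring $x$ to be nearly satisfying. Your instinct to use bounded differences with a rare bad event is the right one, but it must be applied to the $\eta$-best-assignment objective, not the raw output.
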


We prove this theorem by simulating the first phase of \verb|Fix| by a local algorithm and any local algorithm by a constant degree polynomial.
We can arrange both simulations to be accurate within an arbitrarily small constant (i.e. independent of $n$, arbitrarily small in $k$) normalized Hamming distance, with failure probability $\exp(-\Omega(n^{1/3}))$.
The requirement $\eta > k^{-12}$ arises because the first phase of \verb|Fix| produces an assignment within normalized Hamming distance $k^{-12}$ of a satisfying assignment, which is repaired by the rest of \verb|Fix|.
We believe that it is possible to simulate the rest of \verb|Fix| by a local algorithm, which would show Theorem~\ref{thm:achievability} for any $\eta > 0$; we do not attempt this improvement.
Note that $k^{-12}$ is well within the range of $\eta$ ruled out by our hardness results.

In fact, we will show that local algorithms simulate any \emph{local memory algorithm}.
In this generalization of local algorithms, the algorithm makes its local decisions in series (in a random vertex order), and each decision can leave information on the vertices it accesses, which future decisions can see.
This class includes the first phase of \verb|Fix| and the sequential local algorithms considered in \cite{GS17}.
Recall that the latter class includes Belief and Survey Propagation Guided Decimation.

In parts (\ref{itm:achievability-local-conc},\ref{itm:achievability-ldp-conc}), where the goal is to satisfy all but an $o(1)$ fraction of clauses, Theorem~\ref{thm:achievability} gives algorithms with success probability $1-\exp(-\tOmega(n^{1/3}))$ and $1-\exp(-\tOmega(n^{1/5}))$.
This is within the range ruled out by even Theorem~\ref{thm:impossibility}.
Of course, if the goal is to satisfy \emph{all} clauses as in parts (\ref{itm:achievability-local},\ref{itm:achievability-ldp}), we cannot ensure such a high success probability because $\Phi\sim \Phi_k(n,m)$ is unsatisfiable with probability $1/\poly(n)$ -- for example, if the first $2^k$ clauses each contain variables $x_1,\ldots,x_k$ with all $2^k$ possible polarities.

\section{The Overlap Gap Program and Sketch of Main Ideas}
\label{sec:ogp-discussion}

In this section, we outline our methods in the context of the OGP literature.
We review the OGP work on maximum independent set and introduce the negative free entropy chaining approach to multi-OGP from \cite{Wei20}.
We discuss the challenges to extending this approach beyond maximum independent set and how we overcome these challenges for random $k$-SAT.

We remark that OGP and (several forms of) multi-OGP have also been harnessed to show the failure of stable algorithms in problems such as largest submatrix \cite{GL18}, maxcut \cite{CGPR19}, number partitioning \cite{GK21}, and spin glass optimization \cite{GJ21, GJW20, GJW21, HS21}.
A different variant of OGP has been linked to hardness in regression and planted problems \cite{GZ17, GZ19, GJS19, BWZ20}.

\subsection{OGP and Multi-OGP for Maximum Independent Set}

Maximum independent set was the first problem where OGP methods derived a sharp algorithmic phase transition.
In this problem, we are given a sample $G\sim G(n,d/n)$ of a sparse Erd\H{o}s-R\'enyi graph and our task is to find a large independent set; the desired size of the set controls the problem difficulty.
We work in the double limit where $n\to \infty$, and then $d \to \infty$.
It is known \cite{Fri90, BGT10} that the largest independent set of this graph has asymptotic size $\f{2\log d}{d} n$.
More precisely, if $S_{\max}$ is the largest independent set, then as $n\to\infty$ for fixed $d$ we have $\f{1}{n}|S_{\max}| \to \alpha_d$, for some $\alpha_d = (1 + o_d(1)) \f{2\log d}{d}$.
However, the best polynomial-time algorithm to date \cite{Kar76} only finds an independent set of asymptotic size $\f{\log d}{d} n$, half the optimum.
It is believed that no polynomial-time algorithm can find an asymptotically larger independent set.

Rigorous results about this problem's solution geometry support this conjecture: \cite{CE15} showed that for any fixed $\eps > 0$, independent sets of size $(1 + \eps) \f{\log d}{d} n$ are clustered in a way that implies that any local Markov chain that samples these sets mixes slowly (but not necessarily that a local Markov chain cannot efficiently find a single such set).

\paragraph{Hardness against local algorithms via OGP and multi-OGP.}
In \cite{GS14}, Gamarnik and Sudan proved that \vocab{local algorithms} (also called \vocab{factors of i.i.d.} algorithms) cannot find independent sets of size $(1 + 1/\sqrt2 + \eps) \f{\log d}{d}n$ for any $\eps > 0$.
Their argument consists of two parts.
First, they show that with high probability, $G\sim G(n, d/n)$ does not have two independent sets of this size with intersection size in $[(1-\delta)\f{\log d}{d}n, (1+\delta)\f{\log d}{d}n]$, for $\delta > 0$ depending on $\eps$.
Then, they construct an interpolation of correlated runs of a putative local algorithm that finds an independent set of the desired size.
From this interpolation, they extract two runs that find two large independent sets with the forbidden intersection, yielding a contradiction.

Rahman and Vir\'ag \cite{RV17} generalized this argument, showing that local algorithms cannot find an independent set of size $(1+\eps) \f{\log d}{d} n$ for any $\eps > 0$.
Their key insight is to consider a forbidden overlap structure involving \emph{several} large independent sets, generated from several correlated runs of a local algorithm.
They also showed local algorithms can find an independent set of size $(1-\eps) \f{\log d}{d}n$, giving the first instance of a multi-OGP identifying a sharp algorithmic phase transition.


\paragraph{Hardness against low degree polynomials by the ensemble innovation.}
Later work extended this impossibility result to low degree polynomials, a significantly more powerful class of algorithms.
Gamarnik, Jagannath, and Wein \cite{GJW20} showed that low degree polynomials cannot find independent sets of size $(1 + 1/\sqrt2 + \eps) \f{\log d}{d}n$.
Their argument leverages an \vocab{ensemble OGP}, an idea introduced in \cite{CGPR19}.
They construct an interpolation, this time over a sequence of \emph{correlated problem instances}.
They show that with high probability, there do not exist two independent sets, \emph{possibly of different problem instances}, of the desired size with intersection size in $[(1-\delta)\f{\log d}{d}n, (1+\delta)\f{\log d}{d}n]$.
Due to the stability of low degree polynomials, the outputs of a low degree polynomial on consecutive problems in the interpolation are close with nontrivial probability.
So, a low degree polynomial finding independent sets of the desired size can be used to construct the forbidden structure.

Wein \cite{Wei20} tightened this result using an \vocab{ensemble multi-OGP}, combining the multi-OGP and ensemble OGP ideas.
In this approach, the interpolation is over a sequence of correlated problem instances and the forbidden structure consists of several independent sets, possibly of different problems, with prescribed many-way overlaps.
Wein showed that a low degree polynomial that finds independent sets of size $(1+\eps) \f{\log d}{d}n$ can be used to construct the forbidden structure, and thus low degree polynomials cannot find independent sets of this size.
Conversely, Wein showed that low degree polynomials can simulate the local algorithms that find independent sets of size $(1-\eps) \f{\log d}{d} n$.
This gives a stronger algorithmic phase transition: low degree polynomials find independent sets of asymptotic size $\f{\log d}{d}n$ and no more.

\paragraph{Negative free entropy chaining in ensemble multi-OGP.}
At a high level, the ensemble multi-OGP in \cite{Wei20} chains together many small negative free entropy contributions to force a free entropy to be negative.
To simplify the discussion, we consider an overlap structure consisting of several independent sets in the \emph{same} problem instance.
We will see that the following argument shows this structure does not occur with high probability exactly when it also shows this structure, where the $\Sp{\ell}$ can be from different problem instances, does not occur with high probability (see Remark~\ref{rem:ensemble-commentary}).
Consider the normalized log first moment
\begin{equation}
    \label{eq:ogp-free-entropy}
    \f{1}{n}
    \log
    \E_{G\sim G(n, d/n)}
    \# \lt(
    \begin{array}{c}
        (\Sp{1},\ldots,\Sp{L}) :
        \text{$\Sp{1},\ldots,\Sp{L}$ are independent} \\
        \text{sets of $G$ of size $(1+\eps) \f{\log d}{d}n$ satisfying $P$}
    \end{array}
    \rt),
\end{equation}
where $P$ is a set of conditions on how $\Sp{1},\ldots,\Sp{L}$ overlap.
The structure inside the expectation in \eqref{eq:ogp-free-entropy} is the forbidden structure we wish to rule out.
The log first moment \eqref{eq:ogp-free-entropy} can be thought of as a free entropy density of the uniform model over copies of this structure; we henceforth refer to \eqref{eq:ogp-free-entropy} as a free entropy.
If \eqref{eq:ogp-free-entropy} is negative, then this structure does not occur with high probability and the multi-OGP occurs.

The key idea in \cite{Wei20} is to set $P = P_2 \cap P_3 \cap \cdots \cap P_L$, where $P_\ell$ is a condition on how $\Sp{\ell}$ overlaps with $\Sp{1},\ldots,\Sp{\ell-1}$, such that the following occurs for all $2\le \ell \le L$.
\begin{enumerate}[label=(\arabic*), ref=\arabic*]
    \item \label{itm:multi-ogp-intuition-small-neg} Let $\cE_\ell$ denote \eqref{eq:ogp-free-entropy} with $(\Sp{1},\ldots,\Sp{\ell})$ in place of $(\Sp{1},\ldots,\Sp{L})$ and $P_2 \cap \cdots \cap P_\ell$ in place of $P$.
    Then, $\cE_\ell$ is smaller than $\cE_{\ell-1}$ by an amount bounded away from $0$.
    Informally, $P_\ell$ requires $\Sp{\ell}$ to overlap with its predecessors in a way that contributes a small negative free entropy to \eqref{eq:ogp-free-entropy}.
    \item \label{itm:multi-ogp-intuition-moat} For any fixed $\Sp{1},\ldots, \Sp{\ell-1}$, if $\Sp{\ell}$ starts at $\Sp{\ell-1}$, evolves by small steps, and eventually evolves far away from all of $\Sp{1},\ldots, \Sp{\ell-1}$, then at some point along this evolution the condition $P_\ell$ occurs.
    Informally, $P_\ell$ defines a moat that a stably evolving $\Sp{\ell}$ must cross.
\end{enumerate}
Due to condition (\ref{itm:multi-ogp-intuition-small-neg}), if we set $L$ large enough, \eqref{eq:ogp-free-entropy} becomes negative, and the structure in \eqref{eq:ogp-free-entropy} is forbidden with high probability.
Suppose a low degree polynomial can find a size $(1+\eps) \f{\log d}{d}n$ independent set with large enough probability.
Because the outputs of a low degree polynomial on a sequence of correlated problem instances is (with nontrivial probability) a stable sequence, condition (\ref{itm:multi-ogp-intuition-moat}) allows us to find a subsequence of $L$ outputs forming the forbidden structure.
Namely, we take $\Sp{1}$ to be the first output in the sequence, and for $\ell \ge 2$ we take $\Sp{\ell}$ to be the first output after $\Sp{\ell-1}$ such that $P_\ell$ holds.
This derives the desired contradiction.

The main technical challenge is to design the $P_\ell$ such that both (\ref{itm:multi-ogp-intuition-small-neg}) and (\ref{itm:multi-ogp-intuition-moat}) hold.
To do this, one must construct a moat topologically disconnecting a high-dimensional space such that, for all values of $\Sp{\ell}$ in the moat, the free entropy decrease in condition (\ref{itm:multi-ogp-intuition-small-neg}) occurs.
The requirement that the moat topologically disconnects the space gives us little control, and therein lies the difficulty.

\cite[Proposition 2.3]{Wei20} carries out this approach by defining $P_\ell$ as the condition that
\[
    \lt|
        \Sp{\ell} \setminus
        \lt(\Sp{1} \cup \cdots \cup \Sp{\ell-1}\rt)
    \rt|
    \in
    \lt[
        \f{\eps \log d}{4d} n,
        \f{\eps \log d}{2d}n
    \rt]
\]
and proving that the free entropy decrease in condition (\ref{itm:multi-ogp-intuition-small-neg}) occurs.

Let us remark on the challenges of extending this technique beyond maximum independent set.
In maximum independent set, due to the independence of the edges of $G\sim G(n,d/n)$, the expectation in \eqref{eq:ogp-free-entropy} is essentially controlled by the total number of non-edges in the union $\Sp{1}\cup \cdots \cup \Sp{L}$, for $\Sp{1},\ldots,\Sp{L}$ with overlap structure satisfying $P$.
This fact makes the analysis of \eqref{eq:ogp-free-entropy} tractable and shows in the relative simplicity of the moats $P_\ell$, which only consider $\Sp{\ell}$'s non-intersection with the union of its predecessors.

In random $k$-SAT and other problems, the corresponding free entropy is more dependent and more tools are needed to carry out this technique.
We develop these tools for random $k$-SAT.
The forbidden structure we devise will take into account more fine-grained overlap information than previous work.

\subsection{Multi-OGP for Random $k$-SAT and Our Contributions}

This paper extends the negative free entropy chaining technique to show an ensemble multi-OGP for random $k$-SAT at clause density $(1+o_k(1)) \kappast 2^k \log k / k$.
We leverage this ensemble multi-OGP to show our hardness results.

Prior to this work, Gamarnik and Sudan \cite{GS17} used a (non-ensemble) multi-OGP to prove that balanced sequential local algorithms do not solve random NAE-$k$-SAT beyond clause density $(1+o_k(1)) 2^{k-1} \log^2 k / k$.
They required the algorithm to be \vocab{balanced}: on any input, each of the algorithm's output bits must be unbiased over the algorithm's internal randomness.
Their interpolation is over correlated runs of the algorithm on a single input, and their proof requires balance to ensure that two fully independent runs give outputs that are far apart.
Due to this requirement, their result required the symmetry provided by the NAE variant of random $k$-SAT.
We improve on this result in three ways:
\begin{enumerate}[label=(\arabic*), ref=\arabic*]
    \item \label{itm:improvement-threshold} We improve the threshold clause density by a logarithmic factor, to $(1+o_k(1)) \kappast 2^k \log k / k$.
    \item \label{itm:improvement-ldp} We generalize the algorithm class from balanced sequential local algorithms to local and low degree algorithms.
    Recall that both of these computation classes simulate sequential local algorithms, even without the balance requirement.
    \item \label{itm:improvement-no-nae} We show hardness for random $k$-SAT instead of NAE-$k$-SAT.
    A simple adaptation of our argument shows hardness of random NAE-$k$-SAT at clause density $(1+o_k(1)) \kappast 2^{k-1} \log k / k$.
\end{enumerate}

\paragraph{Improvements due to ensemble OGP.}
We consider an ensemble multi-OGP, where the random variable resampled in the interpolation is the $k$-SAT instance instead of the algorithm's internal randomness.
The ensemble interpolation allows us to show hardness for local and low degree algorithms.
It also obviates the requirement of balance, so we no longer require the additional symmetry provided by NAE-$k$-SAT.
This achieves improvements (\ref{itm:improvement-ldp}) and (\ref{itm:improvement-no-nae}).

\paragraph{A tighter free entropy analysis.}
Crucially, we conduct a tighter free entropy analysis to achieve improvement (\ref{itm:improvement-threshold}).
In contrast to previous work, our forbidden structure considers all $2^k$ ways $k+1$ satisfying assignments $\yp{0},\dots,\yp{k}$ can agree or disagree.
We formalize such an agreement pattern as an \vocab{overlap profile} $\pi$.
We will introduce this formally in Subsection~\ref{subsec:impossibility-overlap-profile}.
We will see in Lemma~\ref{lem:impossibility-sogp-exp-rate} that the analogue of the free entropy \eqref{eq:ogp-free-entropy} for random $k$-SAT at clause density $\alpha$ is
\begin{equation}
    \label{eq:ogp-ksat-free-entropy}
    \log 2 +
    \max_{\pi \in P}
    \lt[
        H(\pi) -
     \f{\alpha}{2^k}
        \E_{I\sim \unif\lt([n]^k\rt)}
        \lt|\lt\{\yp{\ell}[I]: 0\le \ell\le k\rt\}\rt|
    \rt].
\end{equation}
Here $\yp{0},\dots,\yp{k}$ have overlap profile $\pi$, $P$ is a collection of overlap constraints, and $\pi \in P$ denotes the set of overlap profiles $\pi$ consistent with $P$.
Moreover, $\yp{\ell}[I]$ is the bit string obtained by indexing $\yp{\ell}$ in positions $I$, namely $(\yp{\ell}_{I_1}, \ldots, \yp{\ell}_{I_k})$.
The positive term $H(\pi)$ is the \emph{overlap entropy} of $\pi$, which arises because $\log 2 + H(\pi)$ is the exponential rate of the number of assignment sequences $\yp{0},\dots,\yp{k}$ with overlap profile $\pi$.
The negative term captures the log likelihood that a random formula is satisfied by all of $\yp{0},\dots,\yp{k}$.
We think of these two terms as the entropy and energy terms, respectively.
We will choose $P$ such that the magnitude of the energy term exceeds the entropy term by more than $\log 2$, which causes \eqref{eq:ogp-ksat-free-entropy} to be negative.
This implies the absence (except with exponentially small probability) of a constellation of satisfying assignments with overlap profile $\pi \in P$.

Similarly to \cite{Wei20}, we chain together many small negative free entropies to make \eqref{eq:ogp-ksat-free-entropy} negative.
Because the random $k$-SAT free entropy is dependent and harder to analyze, it is significantly more difficult to identify the correct high-dimensional moats.
In the multi-OGP of \cite{GS17}, the condition $P$ stipulates that the normalized Hamming distances $\Delta(\yp{i}, \yp{j})$ of $k$ satisfying assignments are pairwise approximately $\f{\log k}{k}$.
Using this, the energy term in \eqref{eq:ogp-ksat-free-entropy} can be lower bounded by an inclusion-exclusion truncated at level $2$.
The inclusion-exclusion truncation is not sharp, and consequently this analysis requires the larger clause density $\alpha = (1+o_k(1)) 2^k \log^2 k / k$ (for NAE-$k$-SAT, $(1+o_k(1)) 2^{k-1} \log^2 k / k$) to show that the contribution of each $\yp{\ell}$ to \eqref{eq:ogp-ksat-free-entropy} is a small negative number.
The fact that this natural estimate of \eqref{eq:ogp-ksat-free-entropy} gives a threshold too large by a $\log k$ factor highlights the difficulty of accurately controlling the $k$-SAT free entropy and the necessity of finding good moats.

We find the correct moats.
We set $P = P_1 \cap \cdots \cap P_k$, where $P_\ell$ governs how $\yp{\ell}$ overlaps with its predecessors $\yp{0},\ldots,\yp{\ell-1}$.
Each $P_\ell$ defines a moat that a smooth evolution of $\yp{\ell}$ starting from $\yp{\ell-1}$ must cross.
In order to obtain a fine control over the tradeoff between entropy and energy in \eqref{eq:ogp-ksat-free-entropy}, we develop a notion of \vocab{conditional overlap entropy} $H(\pi(\yp{\ell} | \yp{0}, \ldots, \yp{\ell-1}))$, which is the contribution of $\yp{\ell}$ to the entropy term $H(\pi)$.
Informally, this is a measure of the additional diversity that $\yp{\ell}$ adds to the assignments $\yp{0}, \ldots, \yp{\ell-1}$.
For each $\ell \ge 1$, our condition $P_\ell$ stipulates that
\[
    H\lt(\pi(\yp{\ell} | \yp{0}, \ldots, \yp{\ell-1})\rt)
    \in
    \lt[\bm \f{\log k}{k}, \bp \f{\log k}{k}\rt].
\]
This choice of forbidden structure \emph{in terms of the conditional overlap entropy} is an important contribution of our work.
The choice is motivated by the subsequent energy analysis, which shows a lower bound on the energy contribution of $\yp{\ell}$ that counterbalances the entropy increase.
We next summarize this analysis.

\paragraph{Energy increment bound via decoupling.}
We can express the energy term (without the prefactor) as
\begin{align}
    \label{eq:ogp-energy-term}
    \E_{I\sim \unif\lt([n]^k\rt)}
    \lt|\lt\{\yp{\ell}[I]: 0\le \ell\le k\rt\}\rt|
    &=
    \sum_{\sigma \in \{\T,\F\}^k}
    p(\sigma),
    \quad \text{where} \\
    \notag
    p(\sigma)
    &=
    \P_{I\sim \unif\lt([n]^k\rt)}
    \lt[\text{$\sigma = \yp{\ell}[I]$ for some $0\le \ell\le k$}\rt].
\end{align}
For each $\sigma$, $1-p(\sigma)$ is the probability that $\sigma \neq \yp{\ell}[I]$ for all $0\le \ell \le k$.
This can be conditionally expanded as a product of $k$ factors, where the $\ell$th factor is the probability that $\sigma \neq \yp{\ell}[I]$ given the values of $\yp{0}[I], \ldots, \yp{\ell-1}[I]$.
We think of ($1$ minus) this factor as the contribution of $\yp{\ell}$ to $p(\sigma)$.

We apply the following estimate to decouple these products into sums.
We round any factors in the conditional expansion that are less than $1 - \f{1}{k\log k}$ up to $1$.
Then, we note that for $0\le \eps_1,\ldots,\eps_k \le \f{1}{k\log k}$,
\[
    1 - (1 - \eps_1)(1-\eps_2)\cdots (1-\eps_k) \approx \eps_1 + \eps_2 + \cdots + \eps_k,
\]
up to a $1+o_k(1)$ multiplicative factor.
This decouples the contributions of the $\yp{\ell}$ to the $p(\sigma)$.
We can bound the total contribution of $\yp{\ell}$ to the energy term \eqref{eq:ogp-energy-term} by summing the now-decoupled contributions over $\sigma\in \{\T,\F\}^k$.

\paragraph{Probabilistic reinterpretation.}
Miraculously, this sum can be reinterpreted as the success probability of an experiment involving a sum of $k$ i.i.d. random variables, which can be controlled by concentration inequalities.
We find that if the contribution of $\yp{\ell}$ to the entropy term $H(\pi)$ is $\beta \f{\log k}{k}$, then its contribution to the rescaled energy term \eqref{eq:ogp-energy-term} is at least $1 - \beta e^{-(\beta-1)}$.
This motivates the choice of $\iota(\beta)$ as the (rescaled) ratio of these contributions, and $\kappast$ as the best possible ratio.
When $\alpha = \kappa 2^k \log k / k$ for $\kappa > \kappast$, the condition $P_\ell$ requires $\beta$ to be in a range where the contribution of $\yp{\ell}$ to the energy term of \eqref{eq:ogp-ksat-free-entropy} exceeds its contribution to the entropy term by at least $\eps \f{\log k}{k}$, for constant $\eps$ depending on $\kappa$.
Thus the overall contribution of $\yp{\ell}$ to \eqref{eq:ogp-ksat-free-entropy} is upper bounded by $-\eps \f{\log k}{k}$.
Summed over the $\yp{\ell}$, this shows that \eqref{eq:ogp-ksat-free-entropy} is negative, establishing the multi-OGP.

This energy analysis via decoupling and probabilistic reinterpretation is original and is another key contribution of our work.

\paragraph{Future directions.}
Because we establish a multi-OGP for random $k$-SAT within a constant factor of the conjectured algorithmic threshold, we believe it is possible to leverage multi-OGPs to show algorithmic hardness at or near the limits of efficient algorithms for many other problems.
Closing the remaining constant factor gap and extending the results of this paper to other random constraint satisfaction problems are important open problems.


\section{Proof of Impossibility for Low Degree Polynomials}
\label{sec:impossibility-ogp}

This section and the next two sections are devoted to proving our main impossibility result, Theorem~\ref{thm:impossibility}.
Throughout, we fix $\kappa > \kappast$.
We set $\alpha = \kappa 2^k \log k / k$ and $m = \lfloor \alpha n \rfloor$.

\subsection{Reduction to Deterministic Low Degree Polynomial}
\label{subsec:impossibility-assume-deterministic}

The following lemma shows that randomness does not significantly improve the power of low degree polynomial algorithms.

\begin{lemma}
    \label{lem:impossibility-assume-deterministic}
    Suppose there exists a random degree-$D$ polynomial that $(\delta,\gamma,\eta,\nu)$-solves $\Phi_k(n,m)$.
    Then, there exists a deterministic degree-$D$ polynomial that $(3\delta,3\gamma,\eta,\nu)$-solves $\Phi_k(n,m)$.
\end{lemma}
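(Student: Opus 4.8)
The plan is to derandomize by a standard probabilistic argument: if a random degree-$D$ polynomial $f(\cdot, \omega)$ works in expectation over $\omega$, then some fixed realization $\omega^*$ works, up to a constant loss in the parameters. Write $p(\omega) = \P_\Phi[(\round \circ f)(\Phi, \omega) \text{ $(\eta,\nu)$-satisfies } \Phi]$ and $q(\omega) = \E_\Phi \norm{f(\Phi,\omega)}_2^2$. The hypothesis gives $\E_\omega[1 - p(\omega)] \le \delta$ (by Fubini, since the overall failure probability is $\le \delta$) and $\E_\omega q(\omega) \le \gamma n$. By Markov's inequality, $\P_\omega[1 - p(\omega) \ge 3\delta] \le 1/3$ and $\P_\omega[q(\omega) \ge 3\gamma n] \le 1/3$. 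Since these two bad events have total probability at most $2/3 < 1$, there exists $\omega^* \in \Omega$ avoiding both, i.e. $1 - p(\omega^*) \le 3\delta$ and $q(\omega^*) \le 3\gamma n$.

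The concluding step is to take $f^* := f(\cdot, \omega^*)$, which is a deterministic degree-$D$ polynomial by Definition~\ref{defn:ldp}. By construction $\P_\Phi[(\round \circ f^*)(\Phi) \text{ $(\eta,\nu)$-satisfies } \Phi] = p(\omega^*) \ge 1 - 3\delta$ and $\E_\Phi \norm{f^*(\Phi)}_2^2 = q(\omega^*) \le 3\gamma n$, so $f^*$ $(3\delta, 3\gamma, \eta, \nu)$-solves $\Phi_k(n,m)$. (A deterministic polynomial is trivially a random polynomial over a trivial probability space, so this fits Definition~\ref{defn:ldp-solve-ksat}.)

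There is essentially no obstacle here — this is a routine union-bound/averaging argument, and the constant $3$ is just the bound $1/(1 - 2/3)$ from wanting both Markov events to fail simultaneously. The only mild subtlety is making sure the measurability of $\omega \mapsto p(\omega)$ and $\omega \mapsto q(\omega)$ is unproblematic so that Fubini and Markov apply; this is immediate since $f$ is jointly measurable and $\Phi$ ranges over a finite set $\Omega_k(n,m)$, so both $p$ and $q$ are finite averages (over $\Phi$) of measurable functions of $\omega$. One could equally split the loss unevenly (e.g. $(2\delta, 2\gamma)$ with Markov thresholds chosen so the bad probabilities sum to less than $1$), but $3$ is clean and suffices for all later applications.
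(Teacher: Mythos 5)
Your proof is correct and matches the paper's argument step for step: both pass to $\E_\omega$ of the failure probability and of the second moment, apply Markov's inequality twice with thresholds $3\delta$ and $3\gamma n$ to get bad-event probabilities $\le 1/3$ each, and conclude by union bound that some $\omega^*$ avoids both. The only addition is your (correct, but unnecessary for the paper's purposes) remark on measurability.
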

\begin{proof}
    Let $f: \bR^N \times \Omega \to \bR^n$ be a random degree-$D$ polynomial that $(\delta,\gamma,\eta,\nu)$-solves $\Phi_k(n,m)$.
    Then,
    \[
        \E_{\omega}\lt[
            \P_{\Phi} \lt[
            \text{$(\round \circ f)(\Phi, \omega)$ does not $(\eta, \nu)$-satisfy $\Phi$}
            \rt]
        \rt]
        \le \delta
        \qquad
        \text{and}
        \qquad
        \E_{\omega} \lt[
            \E_{\Phi} \lt[
            \norm{f(\Phi, \omega)}_2^2
            \rt]
        \rt]
        \le \gamma n.
    \]
    By Markov's inequality,
    \[
        \P_{\omega} \lt[
            \P_{\Phi} \lt[
                \text{$(\round \circ f)(\Phi, \omega)$ does not $(\eta, \nu)$-satisfy $\Phi$}
            \rt]
            \ge
            3\delta
        \rt]
        \le \f13
        \qquad
        \text{and}
        \qquad
        \P_{\omega} \lt[
            \E_{\Phi} \lt[
                \norm{f(\Phi, \omega)}_2^2
            \rt]
            \ge
            3\gamma n
        \rt]
        \le \f13.
    \]
    So, there exists $\omega \in \Omega$ such that the deterministic polynomial $g(\Phi) = f(\Phi, \omega)$ satisfies
    \[
        \P_{\Phi} \lt[
            \text{$(\round \circ g)(\Phi)$ $(\eta, \nu)$-satisfies $\Phi$}
        \rt]
        \ge 1-3\delta
        \qquad
        \text{and}
        \qquad
        \E_{\Phi} \lt[
            \norm{g(\Phi)}_2^2
        \rt]
        \le 3\gamma n.
    \]
\end{proof}

By Lemma~\ref{lem:impossibility-assume-deterministic}, it suffices to show hardness for deterministic polynomials.
For the rest of this section and Section~\ref{sec:impossibility-ldp-stability}, except where stated, $f : \bR^N \to \bR^n$ is a deterministic degree-$D$ polynomial.

We let $\cA(\Phi) = \cB((\round \circ f)(\Phi), \Phi)$, where $\cB(x,\Phi)$ is a deterministic, computationally unbounded subroutine outputting $y\in \{\T,\F\}^n$ with $\Delta(x,y)\le \eta$.
(If $x$ has more than $\eta n$ entries equal to $\Q$, $\cB$ outputs ``fail.")
Informally, $\cB$ is a computationally unbounded assistant that repairs an $\eta n$ fraction of entries of $(\round \circ f)(\Phi)$.

Because $f$ is deterministic, $\cA$ is also deterministic.
Note that $\round \circ f$ outputting a $(\eta, \nu)$-satisfying assignment of $\Phi$ is equivalent to $\cA$ outputting a $\nu$-satisfying assignment of $\Phi$.
Showing that this does not occur with the required probability will be our task from here on.

\subsection{The Interpolation Path}
\label{subsec:impossibility-interpolation}

We can enumerate the $km$ literals of a formula $\Phi\in \Omega_k(n,m)$ in lexicographic order:
\[
    \Phi_{1,1}, \Phi_{1,2}, \ldots, \Phi_{1,k}, \Phi_{2,1}, \ldots, \Phi_{m,k}.
\]
For $j\in [km]$, let $L(j)$ denote the pair $(a,b)$ such that $\Phi_{a,b}$ is the $j$th literal in this order.
That is, $L(j) = (a,b)$ is the unique pair of integers $(a,b)\in [m]\times [k]$ satisfying $k(a-1)+b = j$.
We now define a sequence of correlated random $k$-SAT formulas.

\begin{definition}
    [Interpolation path]
    \label{defn:interpolation}
    Let $T = k^2m$.
    Let $\phip{0}, \ldots, \phip{T} \in \Omega_k(n,m)$ be the sequence of $k$-SAT instances sampled as follows.
    First, sample $\phip{0} \sim \Phi_k(n,m)$.
    For each $1\le t\le T$, let $\sigma(t) \in [km]$ be the unique integer such that $t \equiv \sigma(t) \pmod{km}$.
    Then, $\phip{t}$ is obtained from $\phip{t-1}$ by resampling $\phip{t}_{L(\sigma(t))}$ from $\unif(\cL)$.
    Moreover, for $0\le t\le T$, let $\xp{t} = \cA(\phip{t})$.
\end{definition}
In other words, we start from a random $k$-SAT instance and resample the literals one by one in lexicographic order.
After we have resampled all the literals we start over, repeating the procedure until each literal has been resampled $k$ times.
Note that each $\phip{t}$ is marginally a sample from $\Phi_k(n,m)$ and that if $|t-t'| \ge km$, then $\phip{t} \indep \phip{t'}$.
We run our assisted low degree algorithm $\cA$ on all these $k$-SAT instances and collect the outputs as the sequence $\xp{0}, \xp{1}, \ldots, \xp{T} \in \{\T,\F\}^n$.

\subsection{Overlap Profiles}
\label{subsec:impossibility-overlap-profile}

We now introduce the overlap profile of an ordered list of assignments.
The overlap profile summarizes the bitwise agreement and disagreement pattern of a list of assignments.

Let $\cPt(\ell)$ denote the set of unordered partitions of $\{0,\ldots,\ell-1\}$ into two (possibly empty) sets.
For example,
\[
    \cPt(3)
    =
    \big\{
      \lt\{ \{0,1,2\}, \emptyset \rt\},
      \lt\{ \{0,1\}, \{2\} \rt\},
      \lt\{ \{0,2\}, \{1\} \rt\},
      \lt\{ \{1,2\}, \{0\} \rt\}
    \big\}.
\]
Note that $|\cPt(\ell)| = 2^{\ell-1}$.

\begin{definition}
    [Overlap profile]
    \label{defn:impossibility-overlap-profile}
    Let $\yp{0}, \ldots, \yp{\ell-1} \in \{\T,\F\}^n$ be a sequence of assignments.
    Their \vocab{overlap profile} $\pi = \pi(\yp{0}, \ldots, \yp{\ell-1})$, is a vector $\pi \in \bR^{2^{\ell - 1}}$ indexed by unordered pairs $\{S,T\}\in \cPt(\ell)$, where
    \[
        \pi_{S,T} =
        \f{1}{n} \lt|
            i\in [n]:
            \text{all $\{\yp{t}_i : t\in S\}$ equal one value and all $\{\yp{t}_i : t\in T\}$ equal the other value}
        \rt|.
    \]
\end{definition}

\begin{example}
    Let $\ell = 3$.
    The overlap profile $\pi = \pi(\yp{0}, \yp{1}, \yp{2})$ consists of four entries $\pi_{012,\emptyset}$, $\pi_{01,2}$, $\pi_{02,1}$, and $\pi_{12,0}$, where
    \[
        \pi_{012,\emptyset}
        =
        \f{1}{n} \lt|
            i\in [n]:
            \yp{0}_i = \yp{1}_i = \yp{2}_i
        \rt|
        \qquad
        \text{and}
        \qquad
        \pi_{01,2}
        =
        \f{1}{n} \lt|
            i\in [n]:
            \yp{0}_i = \yp{1}_i \neq \yp{2}_i
        \rt|,
    \]
    and $\pi_{02,1}, \pi_{12,0}$ are analogous to $\pi_{01,2}$.
\end{example}

We can interpret an overlap profile as a probability distribution: $\pi_{S,T}$ is the probability that in a random position $i\sim \unif([n])$, all $\{\yp{t}_i: t\in S\}$ equal one value and all $\{\yp{t}_i: t\in T\}$ equal the other.
We naturally define the \emph{overlap entropy} of $\yp{0},\ldots,\yp{\ell-1}$ by
\[
    H\lt(\pi(\yp{0}, \ldots, \yp{\ell-1})\rt) =
    -\sum_{\{S,T\} \in \cPt(\ell)}
    \pi_{S,T} \log \pi_{S,T}.
\]
This is the entropy of the unordered pair of sets $\{S,T\}$ obtained by sampling $i\sim \unif([n])$ and partitioning $\{0,\ldots,\ell-1\}$ based on the value of $\yp{t}_i$.

We also define conditional overlap profiles.
Let $\pi = \pi(\yp{0},\ldots,\yp{\ell-1})$.
For each $\{S,T\} \in \cPt(\ell-1)$ with $\pi_{S,T} > 0$, $\pi_{\cdot | S,T}$ is a probability distribution on the two partitions $\{S \cup \{\ell-1\}, T\}$ and $\{S, T \cup \{\ell-1\}\}$ with
\[
    \pi_{S \cup \{\ell-1\}, T | S,T} = \f{\pi_{S \cup \{\ell-1\}, T}}{\pi_{S \cup \{\ell-1\}, T} + \pi_{S, T\cup \{\ell-1\}}}
    \qquad
    \text{and}
    \qquad
    \pi_{S, T \cup \{\ell-1\} | S,T} = \f{\pi_{S, T \cup \{\ell-1\}}}{\pi_{S \cup \{\ell-1\}, T} + \pi_{S, T\cup \{\ell-1\}}}.
\]
(If $\pi_{S \cup \{\ell-1\}, T} = \pi_{S, T\cup \{\ell-1\}} = 0$, we define this distribution arbitrarily.)
This is the distribution of the agreement pattern of $\yp{0},\ldots,\yp{\ell-1}$ on a uniformly random position, conditioned on the agreement pattern of $\yp{0},\ldots,\yp{\ell-2}$ in that position being $\{S,T\}$.
We denote the resulting collection of distributions, one for each $\{S,T\}\in \cPt(\ell-1)$, by $\pi_{\cdot | \cdot} = \pi( \yp{\ell-1} | \yp{0}, \ldots, \yp{\ell-2} )$.
We analogously define the \emph{conditional overlap entropy}
\[
    H\lt( \pi( \yp{\ell-1} | \yp{0}, \ldots, \yp{\ell-2} ) \rt) =
    \sum_{\{S,T\} \in \cPt(\ell-1)}
    \pi_{S,T}
    H(\pi_{\cdot | S,T}).
\]

Before proceeding, we collect some properties of overlap profiles which will be useful in the rest of the section.
The proofs of these assertions follow readily from the above definitions.

\begin{fact}
    \label{fac:overlap-profile-properties}
    Overlap profiles have the following properties.
    \begin{enumerate}[label=(\alph*), ref=\alph*]
        \item \label{itm:overlap-profile-property-count}
        There are at most $n^{2^{\ell-1}}$ distinct overlap profiles of $\ell$ assignments $\yp{0},\ldots,\yp{\ell-1} \in \{\T,\F\}^n$.
        \item \label{itm:overlap-profile-property-chainrule}
        Overlap entropies satisfy the chain rule
        \[
            H\lt( \pi(\yp{0},\ldots,\yp{\ell-1}) \rt)
            =
            H\lt( \pi(\yp{0},\ldots,\yp{\ell-2}) \rt) +
            H\lt( \pi(\yp{\ell-1} | \yp{0},\ldots,\yp{\ell-2}) \rt).
        \]
        \item \label{itm:overlap-profile-property-duplication}
        Repeated assignments do not affect overlap entropies.
        That is, if $\zp{0},\ldots,\zp{r-1}$ are the distinct elements of $\yp{0},\ldots,\yp{\ell-1}$, then
        \[
            H\lt( \pi(\yp{0},\ldots,\yp{\ell-1}) \rt) =
            H\lt( \pi(\zp{0},\ldots,\zp{r-1}) \rt).
        \]
        If $\zp{0},\ldots,\zp{r-2}$ are the distinct elements of $\yp{0},\ldots,\yp{\ell-2}$, then
        \[
            H\lt( \pi(\yp{\ell-1} | \yp{0},\ldots,\yp{\ell-2}) \rt) =
            H\lt( \pi(\yp{\ell-1} | \zp{0},\ldots,\zp{r-2}) \rt).
        \]
        Furthermore, if $\yp{\ell-1} \in \{\yp{0},\ldots,\yp{\ell-2}\}$, then $H\lt( \pi(\yp{\ell-1} | \yp{0},\ldots,\yp{\ell-2}) \rt) = 0$.
    \end{enumerate}
\end{fact}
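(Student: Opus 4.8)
The plan is to realize the overlap entropy of a list of assignments as the ordinary Shannon entropy of an explicit discrete random variable, after which all three parts reduce to standard facts about entropy. Fix $\yp{0},\ldots,\yp{\ell-1}\in\{\T,\F\}^n$, sample $i\sim\unif([n])$, and let $X_\ell$ be the unordered partition of $\{0,\ldots,\ell-1\}$ into $\{t:\yp{t}_i=\T\}$ and $\{t:\yp{t}_i=\F\}$. By Definition~\ref{defn:impossibility-overlap-profile} the law of $X_\ell$ is exactly $\pi(\yp{0},\ldots,\yp{\ell-1})$, and $H(\pi(\yp{0},\ldots,\yp{\ell-1}))$ is by definition the Shannon entropy $H(X_\ell)$. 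Likewise the partition $X_{\ell-1}$ of $\{0,\ldots,\ell-2\}$ induced by $\yp{0},\ldots,\yp{\ell-2}$ at the same position $i$ has law $\pi(\yp{0},\ldots,\yp{\ell-2})$, and restricting a partition of $\{0,\ldots,\ell-1\}$ to $\{0,\ldots,\ell-2\}$ (allowing an empty block, consistent with $\cPt(\ell-1)$) exhibits $X_{\ell-1}$ as a deterministic function of $X_\ell$. For part (\ref{itm:overlap-profile-property-count}), I would simply observe that each of the $2^{\ell-1}$ coordinates of $\pi$ is an integer multiple of $1/n$ in $[0,1]$ and the coordinates sum to $1$, so $\pi$ is determined by the tuple $(n\pi_{S,T})_{\{S,T\}\in\cPt(\ell)}$ of nonnegative integers summing to $n$; there are $\binom{n+2^{\ell-1}-1}{2^{\ell-1}-1}$ of these, which is at most $n^{2^{\ell-1}}$ once $n$ is large (as we may assume, all statements being in the $n\to\infty$ limit).

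For part (\ref{itm:overlap-profile-property-chainrule}), I would apply the chain rule $H(X_\ell)=H(X_{\ell-1})+H(X_\ell\mid X_{\ell-1})$, valid since $X_{\ell-1}$ is a function of $X_\ell$. For each $\{S,T\}\in\cPt(\ell-1)$, conditioned on $X_{\ell-1}=\{S,T\}$ the variable $X_\ell$ takes only the two values $\{S\cup\{\ell-1\},T\}$ and $\{S,T\cup\{\ell-1\}\}$ — the two preimages of $\{S,T\}$ under restriction, corresponding to whether $\yp{\ell-1}_i$ agrees with the common value of $S$ or of $T$ — with conditional probabilities $\pi_{S\cup\{\ell-1\},T\mid S,T}$ and $\pi_{S,T\cup\{\ell-1\}\mid S,T}$. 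Hence $H(X_\ell\mid X_{\ell-1}=\{S,T\})=H(\pi_{\cdot\mid S,T})$, so $H(X_\ell\mid X_{\ell-1})=\sum_{\{S,T\}\in\cPt(\ell-1)}\pi_{S,T}\,H(\pi_{\cdot\mid S,T})$, which is precisely $H(\pi(\yp{\ell-1}\mid\yp{0},\ldots,\yp{\ell-2}))$ (terms with $\pi_{S,T}=0$ drop out, matching the arbitrary convention there).

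For part (\ref{itm:overlap-profile-property-duplication}), the point is that if $\yp{a}=\yp{b}$ as assignments then $a$ and $b$ lie in the same block of $X_\ell$ for \emph{every} $i$. More generally, let $\zp{0},\ldots,\zp{r-1}$ be the distinct elements of $\yp{0},\ldots,\yp{\ell-1}$ and let $Z_r$ be the analogous partition variable for $\zp{0},\ldots,\zp{r-1}$. Pulling a partition of $\{0,\ldots,r-1\}$ back along the surjection collapsing each index of $\{0,\ldots,\ell-1\}$ onto the index of its distinct representative is injective on partitions and carries $Z_r$ to $X_\ell$; thus $X_\ell$ and $Z_r$ determine one another, $H(X_\ell)=H(Z_r)$, giving the first identity. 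The conditional statements then follow by combining this with (\ref{itm:overlap-profile-property-chainrule}): write $H(\pi(\yp{\ell-1}\mid\yp{0},\ldots,\yp{\ell-2}))=H(\pi(\yp{0},\ldots,\yp{\ell-1}))-H(\pi(\yp{0},\ldots,\yp{\ell-2}))$ and split cases. If $\yp{\ell-1}\in\{\yp{0},\ldots,\yp{\ell-2}\}$ the two lists have the same set of distinct elements, so the difference vanishes and the conditional overlap entropy is $0$; otherwise the distinct elements of $\yp{0},\ldots,\yp{\ell-1}$ are $\zp{0},\ldots,\zp{r-2},\yp{\ell-1}$, and the difference equals $H(\pi(\yp{\ell-1}\mid\zp{0},\ldots,\zp{r-2}))$ by another application of (\ref{itm:overlap-profile-property-chainrule}).

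None of the steps are genuinely hard, as the statement of the Fact already signals. The only points that will demand care in a careful write-up are the bookkeeping around $\cPt(\ell)$ permitting empty blocks — so that ``restriction'' of partitions is well defined and its two preimages are enumerated correctly, including the degenerate case where a block becomes empty — and checking that the index-collapsing maps used in part (\ref{itm:overlap-profile-property-duplication}) really are bijections between the relevant supports. I would structure the proof so that parts (\ref{itm:overlap-profile-property-count})--(\ref{itm:overlap-profile-property-duplication}) all read off from the random-variable description established in the first paragraph.
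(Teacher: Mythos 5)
Your proof is correct and is exactly the sort of argument the paper has in mind: the Fact is stated without proof (the paper says the assertions ``follow readily from the above definitions''), and the random-variable reformulation you set up — realizing $H(\pi(\yp{0},\ldots,\yp{\ell-1}))$ as the Shannon entropy of the partition-valued variable $X_\ell$ at a uniform position $i$, with $X_{\ell-1}$ a function of $X_\ell$ — is the natural way to make ``readily follows'' precise, reducing (\ref{itm:overlap-profile-property-chainrule}) to the standard entropy chain rule and (\ref{itm:overlap-profile-property-duplication}) to the observation that $X_\ell$ and the collapsed variable $Z_r$ determine one another. The one point worth a sentence of care in a final write-up is part (\ref{itm:overlap-profile-property-count}): the stars-and-bars count $\binom{n+2^{\ell-1}-1}{2^{\ell-1}-1}$ is not literally at most $n^{2^{\ell-1}}$ for every $n$ (e.g.\ for $n=1$), but you correctly flag that it is once $n$ is large relative to $2^{\ell-1}$, which is all the paper uses (the bound only appears inside an $e^{o(n)}$ factor with $\ell\le k+1$ fixed). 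Everything else — the two-preimage structure of the restriction map, the zero-probability convention in the conditional profile, the case split on whether $\yp{\ell-1}$ repeats — is handled correctly.
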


\subsection{Outline of Proof of Impossibility}
\label{subsec:impossibility-outline}

Recall that $\iota(\beta) = \f{\beta}{1 - \beta e^{-(\beta-1)}}$ is strictly convex with with $\iota(\beta) \to +\infty$ when $\beta \to 1^+$ or $\beta \to +\infty$, and has minimum $\kappast$ attained at $\betast$.
Because $\kappa > \kappast$, there exist two solutions $\betamin, \betamax$ to $\iota(\beta) = \kappa$, with $\betamin \in (1, \betast)$ and $\betamax \in (\betast, +\infty)$.
Set $\bm = \f{\betamin + \betast}{2}$ and $\bp = \f{\betamax + \betast}{2}$.
(This choice is arbitrary; any deterministic $\betamin < \bm < \bp < \betamax$ will do.)
Set $\eps > 0$ such that $\f{\beta + \eps}{1 - \beta e^{-(\beta - 1)}} \le \kappa$ for all $\beta \in [\bm, \bp]$.
We emphasize that $\bm, \bp, \eps$ depend on $\kappa$ only.

For the rest of this proof, take $\eta = \f{\bp - \bm}{8k}$ and $\nu = \f{1}{k^2 2^k}$.
We next define the events $\Svalid, \Sconsec, \Sindep, \Sogp$, which are measurable in the interpolation path $\phip{0},\ldots,\phip{T}$ defined in Definition~\ref{defn:interpolation}.
Define
\[
    \Svalid = \lt\{
        \text{$\xp{t}$ $\nu$-satisfies $\phip{t}$ for all $0\le t\le T$}
    \rt\}.
\]
This is the event that $\cA$ succeeds on all $\phip{t}$.
Define
\[
    \Sconsec = \lt\{
        \text{
            $\Delta(\xp{t}, \xp{t-1}) \le \f{\bp - \bm}{2k}$
            for all $1\le t\le T$
        }
    \rt\}.
\]
This is the event that outputs of $\cA$ on consecutive $\phip{t}$ are close in Hamming distance.
Define $\Sindep$ as the event that there do not exist indices $0\le t_0 \le t_1 \le \cdots \le t_k\le T$ with $t_k \ge t_{k-1}+km$ and an assignment $y\in \{\T,\F\}^n$ such that
\begin{enumerate}[label=(IND-\Alph*), ref=IND-\Alph*, leftmargin=50pt]
    \item \label{itm:def-sindep-satisfy} $y$ $\nu$-satisfies $\phip{t_k}$;
    \item \label{itm:def-sindep-entropy} $H\lt(\pi(y | \xp{t_0}, \ldots,\xp{t_{k-1}}) \rt) \le \bp \f{\log k}{k}$.
\end{enumerate}
This is the event that if $t_k$ is large enough that $\phip{t_k}$ is independent of $\phip{t_0},\ldots,\phip{t_{k-1}}$, then all $\nu$-satisfying assignments to $\phip{t_k}$ have high conditional overlap entropy relative to the outputs of $\cA$ on $\phip{t_0},\ldots,\phip{t_{k-1}}$.
Finally, define $\Sogp$ as the event that there do not exist indices $0\le t_0 \le t_1 \le \cdots \le t_k\le T$ and assignments $\yp{0}, \ldots,\yp{k} \in \{\T, \F\}^n$ such that
\begin{enumerate}[label=(OGP-\Alph*), ref=OGP-\Alph*, leftmargin=50pt]
    \item \label{itm:def-sogp-satisfy} For all $0\le \ell\le k$, $\yp{\ell}$ $\nu$-satisfies $\phip{t_\ell}$;
    \item \label{itm:def-sogp-entropy} For all $1\le \ell\le k$, $H\lt(\pi(\yp{\ell} | \yp{0}, \ldots,\yp{\ell-1}) \rt) \in \lt[\bm \f{\log k}{k}, \bp \f{\log k}{k}\rt]$.
\end{enumerate}
$\Sogp$ defines the main forbidden structure of our argument.
Informally, this forbidden structure consists of $k+1$ assignments, each $\nu$-satisfying possibly different $\phip{t}$ in the interpolation, such that each assignment has medium conditional overlap entropy relative to its predecessors.

The key ingredients in our proof of Theorem~\ref{thm:impossibility} are the following two propositions.
Proposition~\ref{prop:impossibility-events-relation} shows that these four events do not simultaneously occur, and Proposition~\ref{prop:impossibility-prob-bounds} controls their probabilities.
These two propositions derive the main contradiction: if a low degree algorithm $(\delta,\gamma,\eta,\nu)$-solves $\Phi_k(n,m)$ for the requisite $(\delta,\gamma,\eta,\nu)$, then Proposition~\ref{prop:impossibility-events-relation} implies $\Svalid \cap \Sconsec \cap \Sindep \cap \Sogp = \emptyset$, while Proposition~\ref{prop:impossibility-prob-bounds} and a union bound imply $\Svalid \cap \Sconsec \cap \Sindep \cap \Sogp \neq \emptyset$.

\begin{proposition}
    \label{prop:impossibility-events-relation}
    For all sufficiently large $k$, $\Svalid \cap \Sconsec \cap \Sindep \cap \Sogp = \emptyset$.
\end{proposition}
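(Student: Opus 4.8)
The plan is to argue by contradiction: assume the interpolation path lies in $\Svalid \cap \Sconsec \cap \Sindep$, and construct from it a forbidden OGP structure, contradicting $\Sogp$. The construction is the greedy ``first-crossing'' scheme standard in multi-OGP arguments. I would set $\yp{0} = \xp{t_0}$ with $t_0 = 0$. Then, inductively, having chosen $\yp{0}, \ldots, \yp{\ell-1}$ as outputs $\xp{t_0}, \ldots, \xp{t_{\ell-1}}$ at times $t_0 \le \cdots \le t_{\ell-1}$, I would define $t_\ell$ to be the \emph{first} time $t > t_{\ell-1}$ (in fact the first time $t \ge t_{\ell-1} + km$, so that independence can be invoked later) such that $H(\pi(\xp{t} \mid \xp{t_0}, \ldots, \xp{t_{\ell-1}})) \ge \bm \frac{\log k}{k}$, and set $\yp{\ell} = \xp{t_\ell}$. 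Property (\ref{itm:def-sogp-satisfy}) is immediate from $\Svalid$. The crux is to verify property (\ref{itm:def-sogp-entropy}): that at this first crossing time, the conditional overlap entropy has not \emph{overshot} past $\bp \frac{\log k}{k}$, i.e. it lands in the window $[\bm \frac{\log k}{k}, \bp \frac{\log k}{k}]$.

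The overshoot control is where $\Sconsec$ enters. The key observation is that the map $t \mapsto H(\pi(\xp{t} \mid \xp{t_0}, \ldots, \xp{t_{\ell-1}}))$ changes slowly in $t$: from time $t-1$ to $t$ the assignment changes in at most $\frac{\bp - \bm}{2k} \cdot n$ coordinates (by $\Sconsec$), and a Hamming perturbation of size $\delta n$ changes any overlap profile entry by at most $\delta$, hence changes the conditional overlap entropy by at most some explicit modulus-of-continuity bound. Here I would use a Lipschitz-type estimate for the entropy of a discrete distribution under $\ell_1$-perturbation; since the conditional overlap entropy is a sum over $\{S,T\} \in \cPt(\ell-1)$ of $\pi_{S,T} H(\pi_{\cdot \mid S,T})$ with at most $2^{\ell-1} \le 2^{k-1}$ terms, the increment from one step to the next is at most (something like) $O(\frac{\bp-\bm}{k} \cdot k \log k \cdot 2^k) = o(\frac{\log k}{k})$ — I need to be careful that this is genuinely small compared to the window width $(\bp - \bm)\frac{\log k}{k}$. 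Actually the cleanest route: at $t = t_{\ell-1}$ we have $\xp{t} = \yp{\ell-1} \in \{\yp{0},\ldots,\yp{\ell-1}\}$, so by Fact~\ref{fac:overlap-profile-properties}(\ref{itm:overlap-profile-property-duplication}) the conditional overlap entropy is $0 < \bm \frac{\log k}{k}$; so before the first crossing the quantity starts below the window, and by the small-increment bound it cannot jump over the entire window $[\bm\frac{\log k}{k}, \bp\frac{\log k}{k}]$ of width $(\bp-\bm)\frac{\log k}{k}$ in a single step, provided the per-step increment is less than $(\bp-\bm)\frac{\log k}{k}$. This is the heart of the argument and I expect the bookkeeping of the entropy modulus of continuity (being explicit that steps of normalized Hamming size $\frac{\bp-\bm}{2k}$ perturb the conditional overlap entropy by less than $(\bp-\bm)\frac{\log k}{k}$) to be the main obstacle; it should work because the $\frac{1}{k}$ in the step size beats the $\log k$, and $k$ is large.

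It remains to check that the induction can actually be carried out to $\ell = k$, i.e. that the crossing times $t_0 \le t_1 \le \cdots \le t_k$ all exist within $[0, T]$ — this is exactly where $\Sindep$ is used. Suppose for contradiction that for some $\ell \le k$ no valid $t_\ell \le T$ exists. Since $T = k^2 m \ge k \cdot km$, there is certainly room for $k$ successive jumps of size $km$; so the only way the crossing can fail is that for \emph{every} $t$ with $t_{\ell-1} + km \le t \le T$ we have $H(\pi(\xp{t} \mid \xp{t_0},\ldots,\xp{t_{\ell-1}})) < \bm\frac{\log k}{k} \le \bp\frac{\log k}{k}$. Pick any such $t$ with $\phip{t}$ independent of $\phip{t_0},\ldots,\phip{t_{\ell-1}}$ (this independence is why I demanded $t \ge t_{\ell-1}+km$ at each step, so that $t$ is at least $km$ beyond all of $t_0,\ldots,t_{\ell-1}$ as well — I should take the gap at \emph{each} step to be $km$, which uses $T \ge k \cdot km$; one checks $\ell \le k$ so the total length consumed is at most $k \cdot km \le T$). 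Then $y = \xp{t}$, together with the times $t_0, \ldots, t_{\ell-1}, t$, would witness the forbidden structure in the definition of $\Sindep$: $y$ $\nu$-satisfies $\phip{t}$ by $\Svalid$ (property \ref{itm:def-sindep-satisfy}), and $H(\pi(y \mid \xp{t_0},\ldots,\xp{t_{\ell-1}})) < \bp\frac{\log k}{k}$ (property \ref{itm:def-sindep-entropy}); pad the index list to length $k+1$ if $\ell - 1 < k$ by repeating the last index, which by Fact~\ref{fac:overlap-profile-properties}(\ref{itm:overlap-profile-property-duplication}) does not change the conditional overlap entropy. This contradicts $\Sindep$, so all crossing times exist, the OGP structure $(\yp{0},\ldots,\yp{k}; t_0,\ldots,t_k)$ is constructed, and it contradicts $\Sogp$. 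Hence $\Svalid \cap \Sconsec \cap \Sindep \cap \Sogp = \emptyset$.
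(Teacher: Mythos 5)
Your proposal is essentially the paper's proof: greedy first-crossing, $\Sconsec$ plus the Hamming-to-entropy modulus bound (Lemma~\ref{lem:impossibility-hamming-to-entropy}) to prevent overshoot, $\Sindep$ plus the duplication trick (Fact~\ref{fac:overlap-profile-properties}(\ref{itm:overlap-profile-property-duplication})) to force the quantity above $\bp\tfrac{\log k}{k}$ by time $t_{\ell-1}+km$, and hence a crossing in the window. Two corrections.

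First, the parenthetical ``(in fact the first time $t \ge t_{\ell-1}+km$, so that independence can be invoked later)'' is wrong and, taken literally, breaks your overshoot argument. $\Sindep$ is invoked only at the one distinguished time $t' = t_{\ell-1}+km$, where it yields $H(\pi(\xp{t'}\mid \yp{0},\ldots,\yp{\ell-1})) > \bp\tfrac{\log k}{k}$ --- this value is \emph{above} the window, not below. So if your search for $t_\ell$ began at $t_{\ell-1}+km$, the first admissible time would already overshoot. The search must start at $t_{\ell-1}+1$, where the conditional overlap entropy is $0$ by duplication; the crossing into $[\bm\tfrac{\log k}{k}, \bp\tfrac{\log k}{k}]$ is then forced between $t_{\ell-1}$ and $t'$, and the inequality $t_\ell \le t_{\ell-1}+km$ is an \emph{output} of the argument, not an input. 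This also means you need no independence constraint at the chosen $t_\ell$ itself.

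Second, your heuristic modulus estimate $O(\tfrac{\bp-\bm}{k}\cdot k\log k\cdot 2^k)$ carries a spurious $2^k$ and is not $o(\tfrac{\log k}{k})$. The correct statement is Lemma~\ref{lem:impossibility-hamming-to-entropy}: a perturbation of normalized Hamming distance $\delta \le \tfrac12$ changes the conditional overlap entropy by at most the binary entropy $H(\delta)$, which has no dependence on the number of partition classes. With $\delta = \tfrac{\bp-\bm}{2k}$ and $H(\delta) \le \delta \log\tfrac{e}{\delta}$, the per-step change is at most $(\bp-\bm)\tfrac{\log k}{k}$ for $k$ large --- the window width --- which is exactly what you need for the intermediate-value step.
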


\begin{proposition}
    \label{prop:impossibility-prob-bounds}
    Suppose $f$ is a deterministic degree-$D$ polynomial that $(\delta,\gamma,\eta,\nu)$-solves $\Phi_k(n,m)$.
    For all sufficiently large $k$, the following inequalities hold.
    \begin{enumerate}[label=(\alph*), ref=\alph*]
        \item \label{itm:impossibility-prob-bound-consecutive} $\P(\Svalid \cap \Sconsec) \ge (2n)^{- 4\gamma Dk^2 / (\bp - \bm)} - (T+1) \delta$.
        \item \label{itm:impossibility-prob-bound-independent} $\P(\Sindep^c) \le \exp(-\Omega(n))$.
        \item \label{itm:impossibility-prob-bound-ogp} $\P(\Sogp^c) \le \exp(-\Omega(n))$.
    \end{enumerate}
\end{proposition}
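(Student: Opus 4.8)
The plan is to prove the three bounds essentially separately; for each I would reduce to a fixed tuple $\vec t=(t_0,\dots,t_k)$ with $0\le t_0\le\dots\le t_k\le T$ (there are only $(T+1)^{k+1}=\poly(n)$ of these for fixed $k$, since $T=k^2m$ and $m=O(n)$) and take a union bound over $\vec t$ at the end. For part (\ref{itm:impossibility-prob-bound-consecutive}), I would first handle $\Svalid$: each $\phip{t}$ is marginally distributed as $\Phi_k(n,m)$, so $\P(\Svalid^c)\le(T+1)\delta$. For $\Sconsec$, the triangle inequality for $\Delta$ and the fact that $\cB$ perturbs its argument by at most $\eta=\tfrac{\bp-\bm}{8k}$ in normalized Hamming distance reduce matters to showing that, with probability at least $(2n)^{-4\gamma Dk^2/(\bp-\bm)}$, $\round\circ f$ changes on at most a $\tfrac{\bp-\bm}{4k}$ fraction of coordinates at each of the $T=k^2m$ resampling steps of Definition~\ref{defn:interpolation}; this is exactly the smoothness of low degree polynomials, Proposition~\ref{prop:impossibility-no-cbad-edge}, proved in Section~\ref{sec:impossibility-ldp-stability}. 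Its proof shows that, conditioned on the rest of the formula, at least a $(2n)^{-O_k(\gamma D/m)}$ fraction of the $2n$ possible values of a resampled literal keep $\round\circ f$ nearly fixed, using the bounded degree and second moment of $f$ (here $\E\norm{f}_2^2\le\gamma n$) and handling the rounding ambiguity of $\round$ by passing to a high-probability event on which $f$'s outputs avoid the window $(-1,1)$; multiplying over the $T$ steps and subtracting $\P(\Svalid^c)$ gives (\ref{itm:impossibility-prob-bound-consecutive}).

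For part (\ref{itm:impossibility-prob-bound-independent}), impose additionally $t_k\ge t_{k-1}+km$. The key point is that any $km$ consecutive resampling steps resample every literal exactly once (the relevant indices form a cyclic shift of $(1,\dots,km)$), so $\phip{t_k}$ is a fresh sample from $\Phi_k(n,m)$ independent of $\phip{t_0},\dots,\phip{t_{k-1}}$, hence of $\xp{t_0},\dots,\xp{t_{k-1}}$; condition on the latter. A standard entropy count — Fact~\ref{fac:overlap-profile-properties}(\ref{itm:overlap-profile-property-count}) gives at most $n^{2^k}$ conditional overlap profiles, and a profile of conditional overlap entropy at most $\bp\tfrac{\log k}{k}$ is realized by at most $\exp(\bp n\tfrac{\log k}{k})$ assignments — shows that at most $\poly(n)\exp(\bp n\tfrac{\log k}{k})$ assignments $y$ satisfy (\ref{itm:def-sindep-entropy}). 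For fixed $y$, the number of clauses of $\phip{t_k}$ it leaves unsatisfied is $\Bin(m,2^{-k})$, so a Chernoff lower-tail bound with $\nu=\tfrac{1}{k^22^k}$ gives $\P_{\phip{t_k}}[\,y\text{ $\nu$-satisfies }\phip{t_k}\,]\le\exp(-(1-o_k(1))m2^{-k})=\exp(-(1-o_k(1))\kappa n\tfrac{\log k}{k})$, using $m2^{-k}=(1-o(1))\kappa n\tfrac{\log k}{k}$. Union bounding over $y$ and over $\vec t$ gives $\P(\Sindep^c)\le\poly(n)\exp(-(\kappa(1-o_k(1))-\bp)n\tfrac{\log k}{k})$, which is $\exp(-\Omega(n))$ for large $k$: indeed $\iota(\beta)>\beta$ for every $\beta>1$ forces $\kappa=\iota(\betamax)>\betamax>\bp$.

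For part (\ref{itm:impossibility-prob-bound-ogp}), fix $\vec t$ and apply a first-moment bound to $N:=\#\{(\yp{0},\dots,\yp{k}):\text{(\ref{itm:def-sogp-satisfy}) and (\ref{itm:def-sogp-entropy})}\}$, grouping tuples by their overlap profile $\pi=\pi(\yp{0},\dots,\yp{k})$ (at most $n^{2^k}$ of them) and by the sets $E_\ell$ of clauses each $\yp{\ell}$ leaves unsatisfied ($|E_\ell|\le\nu m$; at most $\binom{m}{\nu m}^{k+1}=\exp(O_k(n\tfrac{\log k}{k}))$ choices, a rate vanishing with $k$ since $H(\nu)\le\tfrac{2\log2}{k2^k}$). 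For fixed $\pi$ and fixed exception sets, the number of realizing tuples is $\exp((1+o(1))n(\log2+H(\pi)))$, while — because distinct clause indices of the interpolation use disjoint blocks of fresh literals and a random clause on variables $I$ is simultaneously satisfied by $\yp{0},\dots,\yp{k}$ with probability $1-2^{-k}|\{\yp{\ell}[I]:0\le\ell\le k\}|$ — the probability that all non-excepted clauses of $\phip{t_\ell}$ are satisfied by $\yp{\ell}$ is at most $\exp(-(1-o_k(1))n\tfrac{\alpha}{2^k}\E_I|\{\yp{\ell}[I]:0\le\ell\le k\}|)$; this is Lemma~\ref{lem:impossibility-sogp-exp-rate}, the ensemble correlations only helping (cf.\ Remark~\ref{rem:ensemble-commentary}). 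Combining, $\tfrac1n\log\E N\le o(1)+\big(\log2+\max_{\pi\in P}[\,H(\pi)-\tfrac{\alpha}{2^k}\E_I|\{\yp{\ell}[I]:0\le\ell\le k\}|\,]\big)$, the parenthesized quantity being the free entropy \eqref{eq:ogp-ksat-free-entropy}, so it remains to show that free entropy is bounded above by a negative constant for all large $k$.

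By the chain rule (Fact~\ref{fac:overlap-profile-properties}(\ref{itm:overlap-profile-property-chainrule})) and the identity $\E_I|\{\yp{\ell}[I]:0\le\ell\le k\}|=1+\sum_{\ell=1}^k\P_I[\yp{\ell}[I]\notin\{\yp{j}[I]:j<\ell\}]$, the bracket splits as $\log2-\tfrac{\alpha}{2^k}+\sum_{\ell=1}^k\big(H(\pi(\yp{\ell}\mid\yp{0},\dots,\yp{\ell-1}))-\tfrac{\alpha}{2^k}\P_I[\yp{\ell}[I]\notin\{\yp{j}[I]:j<\ell\}]\big)$. The crux — and the step I expect to be the main obstacle — is the energy increment bound: if $H(\pi(\yp{\ell}\mid\yp{0},\dots,\yp{\ell-1}))=\beta_\ell\tfrac{\log k}{k}$ then $\P_I[\yp{\ell}[I]\notin\{\yp{j}[I]:j<\ell\}]\ge(1-o_k(1))\,(1-\beta_\ell e^{-(\beta_\ell-1)})$. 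I would prove this by the decoupling scheme of Section~\ref{sec:ogp-discussion}: writing $p(\sigma)=\P_I[\sigma=\yp{\ell'}[I]\text{ for some }\ell'\le\ell]$, expand $1-p(\sigma)$ conditionally over bit-patterns $\sigma\in\{\T,\F\}^k$ as a product $\prod_{\ell'}(1-\eps_{\ell'}(\sigma))$, round every factor with $\eps_{\ell'}(\sigma)>\tfrac1{k\log k}$ up to $1$ so that $p(\sigma)\ge(1-o_k(1))\sum_{\ell'}\eps_{\ell'}(\sigma)\ind{\eps_{\ell'}(\sigma)\le\tfrac1{k\log k}}$, and recognize $\sum_\sigma$ of the $\ell$-term as the success probability of an experiment summing $k$ i.i.d.\ random variables (one per coordinate of $I$, governed by the conditional overlap profile $\pi(\yp{\ell}\mid\yp{0},\dots,\yp{\ell-1})$); concentration, followed by an optimization at fixed conditional-entropy budget, produces the function $\beta e^{-(\beta-1)}$ and hence $\kappast=\min_\beta\iota(\beta)$. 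Granting this, since $\tfrac{\alpha}{2^k}=\kappa\tfrac{\log k}{k}$ and $\bm,\bp,\eps$ were chosen so that $\beta-\kappa(1-\beta e^{-(\beta-1)})\le-\eps$ for all $\beta\in[\bm,\bp]$, condition (\ref{itm:def-sogp-entropy}) forces each of the $k$ summands to be at most $-\eps\tfrac{\log k}{k}$, so the free entropy \eqref{eq:ogp-ksat-free-entropy} is at most $\log2-\tfrac{\alpha}{2^k}-\eps\log k\to-\infty$. Hence $\E N\le\exp(-\Omega(n))$ for large $k$, and a final union bound over $\vec t$ gives $\P(\Sogp^c)\le\exp(-\Omega(n))$.
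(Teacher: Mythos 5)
Your proposal is essentially correct and follows the same strategy as the paper: it relies on the same three key ingredients (the random-walk stability bound of Proposition~\ref{prop:impossibility-no-cbad-edge} for part (\ref{itm:impossibility-prob-bound-consecutive}), the independence-plus-entropy-count first moment for (\ref{itm:impossibility-prob-bound-independent}), and the ensemble free-entropy bound Lemma~\ref{lem:impossibility-sogp-exp-rate} combined with the energy increment bound Proposition~\ref{prop:impossibility-ogp-energy-term} for (\ref{itm:impossibility-prob-bound-ogp})). The minor variations you make — replacing the combinatorial bound on $\P[\Bin(m,2^{-k})\le\nu m]$ by a Chernoff bound in part (\ref{itm:impossibility-prob-bound-independent}), arguing $\kappa>\bp$ via $\iota(\beta)>\beta$ instead of via the choice of $\eps$, and reorganizing the free entropy in part (\ref{itm:impossibility-prob-bound-ogp}) through the telescoping identity $\E_I|\{\yp{\ell}[I]\}|=1+\sum_\ell\P_I[\text{new}_\ell]$ with a per-increment bound — are all equivalent computations that yield the same rate. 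One imprecision worth flagging: in your sketch of part (\ref{itm:impossibility-prob-bound-consecutive}) you describe Proposition~\ref{prop:impossibility-no-cbad-edge} as showing that ``conditioned on the rest of the formula, at least a $(2n)^{-O_k(\gamma D/m)}$ fraction of the $2n$ resampled-literal values keep $\round\circ f$ nearly fixed,'' and that the bound follows by ``multiplying over the $T$ steps'' after ``passing to a high-probability event on which $f$'s outputs avoid $(-1,1)$.'' This isn't how the argument goes: such a per-step conditional fraction is not uniformly bounded (it depends on where the walk is), and the actual proof (Lemma~\ref{lem:impossibility-graphwalk}) instead bounds $\E_v\log q(v)$ by a careful induction with Jensen, averaging over starting points rather than conditioning on history; and the $(-1,1)$ window is handled deterministically by absorbing the at most $\eta n$ coordinates that $\cB$ repairs (including all $\Q$'s) into the sets $B^{(t)}$, not via a separate probabilistic event. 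Since you invoke the proposition rather than reprove it, this doesn't create a gap, but the mechanism is different from what you describe.
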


The remainder of this section and Sections~\ref{sec:impossibility-multi-ogp} and \ref{sec:impossibility-ldp-stability} will be devoted to proving these propositions.
We will prove Proposition~\ref{prop:impossibility-events-relation} in Subsection~\ref{subsec:impossibility-construct-ogp} and Proposition~\ref{prop:impossibility-prob-bounds}(\ref{itm:impossibility-prob-bound-independent}) in Subsection~\ref{subsec:impossibility-ogp-independent}.
We will prove Proposition~\ref{prop:impossibility-prob-bounds}(\ref{itm:impossibility-prob-bound-ogp}), which establishes the main multi-OGP, in Section~\ref{sec:impossibility-multi-ogp}.
Finally, we will prove Proposition~\ref{prop:impossibility-prob-bounds}(\ref{itm:impossibility-prob-bound-consecutive}) in Section~\ref{sec:impossibility-ldp-stability}.
Let us first see how these results imply Theorem~\ref{thm:impossibility}.

\begin{proof}[Proof of Theorem~\ref{thm:impossibility}]
    Assume for sake of contradiction that there exists a (random) degree-$D$ polynomial $f:\bR^N \times \Omega \to \bR^n$ that $(\delta,\gamma,\eta,\nu)$-solves $\Phi_k(n,m)$.
    By Lemma~\ref{lem:impossibility-assume-deterministic}, there exists a deterministic degree-$D$ polynomial $g:\bR^N \to \bR^n$ that $(3\delta,3\gamma,\eta,\nu)$-solves $\Phi_k(n,m)$.
    We set $\kst = \kst(\kappa)$ large enough that Propositions~\ref{prop:impossibility-events-relation} and \ref{prop:impossibility-prob-bounds} both hold.
    By Proposition~\ref{prop:impossibility-prob-bounds} and a union bound,
    \[
        \P(\Svalid \cap \Sconsec \cap \Sindep \cap \Sogp)
        \ge
        (2n)^{-12 \gamma Dk^2/(\bp - \bm)}
        - 3(T+1) \delta
        - \exp(-\Omega(n)).
    \]
    We will show this probability is positive for suitable $C_1, C_2$.
    Let $C_2 = 2 + \f{12k^2}{\bp - \bm}$.
    Recall that $T = k^2 m = k^2 \lfloor \alpha n \rfloor$.
    If $\delta \le \exp(-C_2 \gamma D \log n)$, then $3(T+1) \delta \le \f13 (2n)^{-12 \gamma Dk^2/(\bp - \bm)}$ for sufficiently large $n$.
    Note that if $D \le \f{C_1 n}{\gamma \log n}$, then
    \[
        (2n)^{-12 \gamma Dk^2/(\bp - \bm)}
        \ge
        n^{-24 \gamma Dk^2/(\bp - \bm)}
        \ge
        \exp\lt(-\f{24 C_1 k^2}{\bp - \bm}n\rt).
    \]
    Let $C_1$ be small enough that the right-hand side is asymptotically larger than the $\exp(-\Omega(n))$ term.
    Thus for sufficiently large $n$, the $\exp(-\Omega(n))$ term is at most $\f13 (2n)^{-12 \gamma Dk^2/(\bp - \bm)}$.
    Therefore, there exists $\nst$ such that if $n\ge \nst$, then $\P(\Svalid \cap \Sconsec \cap \Sindep \cap \Sogp) > 0$.
    This implies that $\Svalid \cap \Sconsec \cap \Sindep \cap \Sogp \neq \emptyset$, contradicting Proposition~\ref{prop:impossibility-events-relation}.
\end{proof}

\subsection{Constructing the Forbidden Structure from Algorithm Outputs}
\label{subsec:impossibility-construct-ogp}

In this subsection, we will prove Proposition~\ref{prop:impossibility-events-relation}.
We will show that if $\Svalid$, $\Sconsec$, and $\Sindep$ all hold, then we can construct an example of the structure forbidden by $\Sogp$, and therefore all four events cannot hold simultaneously.

We will need the following auxiliary lemma, which shows that a small change of $x\in \{\T,\F\}^n$ in Hamming distance induces only a small change in $H(\pi(x | \yp{0},\ldots,\yp{\ell-1}))$.
This lemma allows us to convert $\Sconsec$ to a guarantee that consecutive conditional overlap entropies are small.
We defer the proof to Subsection~\ref{subsec:hamming-to-entropy}.

\begin{lemma}
    \label{lem:impossibility-hamming-to-entropy}
    Let $\ell \in \bN$ be arbitrary and let $x, x', \yp{0},\ldots,\yp{\ell-1} \in \{\T,\F\}^n$.
    If $\Delta(x,x') \le \f12$, then
    \[
        \lt|
            H\lt( \pi(x  | \yp{0},\ldots,\yp{\ell-1}) \rt) -
            H\lt( \pi(x' | \yp{0},\ldots,\yp{\ell-1}) \rt)
        \rt|
        \le
        H\lt( \Delta(x,x') \rt).
    \]
    The $H(\cdot)$ on the right denotes the binary entropy function.
\end{lemma}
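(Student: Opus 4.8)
The plan is to track how each summand of the conditional overlap entropy changes when we replace $x$ by $x'$, and then sum these changes using concavity of the binary entropy. Recall that
\[
    H\lt( \pi(x \mid \yp{0},\ldots,\yp{\ell-1}) \rt) =
    \sum_{\{S,T\} \in \cPt(\ell)}
    \pi_{S,T}\, H(\pi_{\cdot \mid S,T}),
\]
where $\pi = \pi(\yp{0},\ldots,\yp{\ell-1})$ is the (fixed, $x$-independent) overlap profile of the conditioning assignments, and for each pair $\{S,T\}$ with $\pi_{S,T}>0$, the conditional distribution $\pi_{\cdot\mid S,T}$ is a Bernoulli-type distribution on the two ways $x$ can extend the pattern $\{S,T\}$. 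Concretely, letting $n_{S,T} = n\,\pi_{S,T}$ be the number of positions $i$ realizing the pattern $\{S,T\}$, and letting $a_{S,T}$ (resp.\ $a'_{S,T}$) be the number of those positions where $x_i$ (resp.\ $x'_i$) takes the value associated with $S$, we have $\pi_{\cdot\mid S,T}$ supported on $\{a_{S,T}/n_{S,T}, 1 - a_{S,T}/n_{S,T}\}$, so $H(\pi_{\cdot\mid S,T}) = H(a_{S,T}/n_{S,T})$. The key point is that the conditioning profile $\pi$ is the same for both $x$ and $x'$; only the ``which side'' counts $a_{S,T}$ change.

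First I would reduce to bounding $\sum_{\{S,T\}} \pi_{S,T}\, |H(a_{S,T}/n_{S,T}) - H(a'_{S,T}/n_{S,T})|$ by the triangle inequality. Next, since flipping the value of $x$ at a single position $i$ changes exactly one of the counts $a_{S,T}$ by exactly $\pm 1$ (namely for the pattern $\{S,T\}$ realized at $i$), if $d_{S,T}$ denotes the number of positions $i$ in pattern class $\{S,T\}$ with $x_i \neq x'_i$, then $|a_{S,T} - a'_{S,T}| \le d_{S,T}$ and $\sum_{\{S,T\}} d_{S,T} = n\,\Delta(x,x')$. The binary entropy function $H$ is concave with $H(0)=H(1)=0$, and one has the elementary bound $|H(p) - H(q)| \le H(|p-q|)$ for $p,q\in[0,1]$ (this follows from concavity and symmetry of $H$ about $1/2$). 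Applying this with $p = a_{S,T}/n_{S,T}$, $q = a'_{S,T}/n_{S,T}$ gives
\[
    \lt| H(a_{S,T}/n_{S,T}) - H(a'_{S,T}/n_{S,T}) \rt|
    \le H\lt( d_{S,T}/n_{S,T} \rt),
\]
so that the quantity we must bound is at most $\sum_{\{S,T\}} \pi_{S,T}\, H(d_{S,T}/n_{S,T}) = \sum_{\{S,T\}} \pi_{S,T}\, H(d_{S,T}/(n\pi_{S,T}))$.

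To finish, I would invoke concavity of $H$ once more, this time over the index set $\{S,T\}$: viewing the $\pi_{S,T}$ as weights summing to at most $1$ (they sum to exactly $1$) and the arguments $d_{S,T}/(n\pi_{S,T})$ as points in $[0,1]$, Jensen's inequality gives
\[
    \sum_{\{S,T\}} \pi_{S,T}\, H\lt( \f{d_{S,T}}{n\pi_{S,T}} \rt)
    \le H\lt( \sum_{\{S,T\}} \pi_{S,T} \cdot \f{d_{S,T}}{n\pi_{S,T}} \rt)
    = H\lt( \f{1}{n}\sum_{\{S,T\}} d_{S,T} \rt)
    = H(\Delta(x,x')),
\]
which is the claimed bound. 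The hypothesis $\Delta(x,x')\le\tfrac12$ is used only to ensure the final argument of $H$ lies in the range $[0,\tfrac12]$ where $H$ is increasing, so the bound is meaningful; alternatively one can drop this and state $H(\min(\Delta,1-\Delta))$, but $\tfrac12$ suffices for our application. A minor technical point to handle carefully is the case $\pi_{S,T}=0$, where the corresponding summand is simply absent (equivalently, contributes $0$), so the division is never by zero. I expect the only real subtlety to be the bookkeeping that a single coordinate flip affects exactly one pattern class and changes its ``side count'' by exactly one — once that is cleanly stated, the two applications of concavity of $H$ are routine.
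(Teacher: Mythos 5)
Your decomposition and overall strategy coincide with the paper's, but there is a gap in the per-class bound. After invoking $|H(p)-H(q)|\le H(|p-q|)$, you immediately replace $|p-q|$ by the larger quantity $d_{S,T}/n_{S,T}$ \emph{inside} $H$, i.e.\ you implicitly use $H(|p-q|)\le H(d_{S,T}/n_{S,T})$. This does not follow: $H$ is not monotone on $[0,1]$, and an individual ratio $d_{S,T}/n_{S,T}$ can exceed $\f12$ even when $\Delta(x,x')\le\f12$ globally, since $\Delta(x,x')$ is only a $\pi$-weighted average of these ratios. The intermediate inequality is in fact false. For instance, take a pattern class with $n_{S,T}=10$, $a_{S,T}=1$, $a'_{S,T}=7$, $d_{S,T}=8$ (realizable: one disagreement position flips from side $S$ to side $T$, seven flip $T$ to $S$, and two positions stay on side $T$ in both); then $|p-q|=0.6$, $d_{S,T}/n_{S,T}=0.8$, and $H(0.6)\approx 0.673 > 0.500\approx H(0.8)$. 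The per-class target $|H(p)-H(q)|\le H(d_{S,T}/n_{S,T})$ does happen to be true (one needs the additional combinatorial constraint $d_{S,T}/n_{S,T}\le\min(p+q,\,2-p-q)$ together with a case split on whether $d_{S,T}/n_{S,T}$ lies above or below $\f12$), but the chain you wrote does not establish it.

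The repair is to not push $|a_{S,T}-a'_{S,T}|\le d_{S,T}$ inside $H$ at this stage. Keeping $|p_{S,T}-q_{S,T}|$ as the argument, Jensen gives $\sum_{\{S,T\}}\pi_{S,T}H(|p_{S,T}-q_{S,T}|)\le H\lt(\sum_{\{S,T\}}\pi_{S,T}|p_{S,T}-q_{S,T}|\rt)$, and only then does one bound the single scalar argument: $\sum\pi_{S,T}|p_{S,T}-q_{S,T}|\le \sum\pi_{S,T}\cdot d_{S,T}/n_{S,T}=\f1n\sum d_{S,T}=\Delta(x,x')\le\f12$. Since $H$ is increasing on $[0,\f12]$, this yields $H(\Delta(x,x'))$. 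This is exactly the paper's route (their $\lambda_{S,T},\lambda'_{S,T},\mu$ playing the roles of your $p,q,\sum\pi|p-q|$), and it also clarifies where the hypothesis $\Delta\le\f12$ is genuinely needed: in the monotonicity step \emph{after} Jensen, not, as you suggest, merely to make the final bound "meaningful."
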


\begin{proof}[Proof of Proposition~\ref{prop:impossibility-events-relation}]
    Set $k$ large enough that $\f{\bp - \bm}{2k} \le \f12$ and $H\lt(\f{\bp - \bm}{2k}\rt) \le (\bp - \bm)\f{\log k}{k}$.
    The second inequality holds for all sufficiently large $k$ due to the inequality $H(x) \le x \log \f{e}{x}$.

    Suppose that $\Svalid$, $\Sconsec$, and $\Sindep$ all hold.
    For $0\le \ell \le k$, let $\yp{\ell} = \xp{t_\ell}$, where $0 \le t_0 \le t_1 \le \cdots \le t_k \le T$ are defined as follows.
    Let $t_0 = 0$.
    For $1\le \ell\le k$, let $t_\ell$ be the smallest $t>t_{\ell-1}$ such that $H(\xp{t} | \yp{0},\ldots,\yp{\ell-1}) \in \lt[\bm \f{\log k}{k}, \bp \f{\log k}{k}\rt]$.
    We will show that such $t_\ell$ exists and satisfies $t_\ell \le t_{\ell-1} + km$.

    Let $t' = t_{\ell-1} + km$, and let $I = \{t_{\ell-1},t_{\ell-1}+1,\ldots,t'\}$.
    For $t\in I$, let $h(t) = H\lt( \pi(\xp{t} | \yp{0},\ldots,\yp{\ell-1}) \rt)$; we will analyze the evolution of $h(t)$ as we increment $t\in I$.
    By Fact~\ref{fac:overlap-profile-properties}(\ref{itm:overlap-profile-property-duplication}), $h(t_{\ell-1}) = 0$.

    In the definition of $\Sindep$, set $t_k = t'$ and $t_\ell, t_{\ell+1}, \ldots, t_{k-1}$ equal to $t_{\ell-1}$.
    By Fact~\ref{fac:overlap-profile-properties}(\ref{itm:overlap-profile-property-duplication}) (which allows us to ignore the duplicated $t_{\ell}, \ldots, t_{k-1}$), $\phip{t'}$ has no $\nu$-satisfying assignment $y$ with $H\lt( \pi(y | \yp{0},\ldots,\yp{\ell-1}) \rt) \le \bp \f{\log k}{k}$.
    But because $\Svalid$ holds, $\xp{t'}$ $\nu$-satisfies $\phip{t'}$.
    It follows that $h(t') > \bp \f{\log k}{k}$.

    Because $\Sconsec$ holds, we have $\Delta(\xp{t}, \xp{t-1}) \le \f{\bp - \bm}{2k}$ for all $t$.
    By Lemma~\ref{lem:impossibility-hamming-to-entropy},
    \begin{equation}
        \label{eq:h-small-move}
        |h(t) - h(t-1)|
        \le
        H\lt( \Delta(\xp{t}, \xp{t-1}) \rt)
        \le
        H\lt( \f{\bp - \bm}{2k} \rt)
        \le
        (\bp - \bm) \f{\log k}{k}.
    \end{equation}
    Since $h(t_{\ell-1}) = 0$ and $h(t') > \bp \f{\log k}{k}$, \eqref{eq:h-small-move} implies the existence of $t\in I$ such that $h(t) \in \lt[\bm \f{\log k}{k}, \bp \f{\log k}{k}\rt]$.
    So, $t_\ell$ is well defined and satisfies $t_\ell \le t_{\ell-1} + km$.

    Because the interpolation path has length $T = k^2 m$, and $t_\ell \le t_{\ell-1} + km$ for all $1\le \ell \le k$, this procedure sets all of $t_1,\ldots,t_k$ before the end of the interpolation.
    Finally, because $\Svalid$ holds, $y_\ell$ $\nu$-satisfies $\phip{t_\ell}$ for all $0\le \ell \le k$.
    We have thus constructed the structure forbidden by $\Sogp$.
\end{proof}

\subsection{Solutions to Independent Instances Contribute Large Overlap Entropy}
\label{subsec:impossibility-ogp-independent}

In this subsection, we will prove Proposition~\ref{prop:impossibility-prob-bounds}(\ref{itm:impossibility-prob-bound-independent}).
The proof is by a first moment argument.

\begin{proof}[Proof of Proposition~\ref{prop:impossibility-prob-bounds}(\ref{itm:impossibility-prob-bound-independent})]
    By Markov's inequality, $\P(\Sindep^c)$ is upper bounded by the expected number of $(t_0,\ldots,t_k,y)$ satisfying $0\le t_0 \le \cdots \le t_k \le T$, $t_k \ge t_{k-1} + km$, and  conditions (\ref{itm:def-sindep-satisfy}) and (\ref{itm:def-sindep-entropy}).
    There are at most $(T+1)^{k+1}$ possible choices of $(t_0,\ldots,t_k)$.
    Because $t_k \ge t_{k-1} + km$, $\phip{t_k}$ is independent of $\xp{t_0},\ldots,\xp{t_{k-1}}$.

    Let $P = P(\xp{t_0},\ldots,\xp{t_{k-1}})$ denote the set of all overlap profiles $\pi = \pi(\xp{t_0},\ldots,\xp{t_{k-1}}, y)$ over $y\in \{\T,\F\}^n$ with $H(\pi(y | \xp{t_0}, \ldots, \xp{t_{k-1}})) \le \bp \f{\log k}{k}$.
    By Fact~\ref{fac:overlap-profile-properties}(\ref{itm:overlap-profile-property-count}), $|P| \le n^{2^k}$.
    Thus,
    \begin{align*}
        \P(\Sindep^c)
        &\le
        (T+1)^{k+1}
        n^{2^k}
        \max_{\substack{0\le t_0\le \cdots \le t_k \le T \\ t_k \ge t_{k-1} + km}}
        \max_{\pi \in P} \\
        &\qquad \E_{\phip{t_k}}
        \# \lt(
            y\in \{\T,\F\}^n :
            \text{$y$ $\nu$-satisfies $\phip{t_k}$ and $\pi(\xp{t_0},\ldots,\xp{t_{k-1}},y) = \pi$}
        \rt)
    \end{align*}
    We can evaluate this inner expectation by linearity of expectation.
    The number of $y$ satisfying that $\pi(\xp{t_0},\ldots,\xp{t_{k-1}},y) = \pi$ is
    \begin{align*}
        \prod_{\{S,T\} \in \cPt(k)}
        \binom{\pi_{S,T} n}{\pi_{S\cup \{k\}, T}n}
        &=
        \exp\lt(
            n \sum_{\{S,T\} \in \cPt(k)}
            \pi_{S,T} H \lt(\f{\pi_{S\cup \{k\}, T}}{\pi_{S,T}}\rt)
            + o(n)
        \rt) \\
        &=
        \exp\lt(
            n H\lt(\pi (y | \xp{t_0},\ldots,\xp{t_{k-1}}) \rt)
            + o(n)
        \rt) \\
        &\le
        \exp\lt(
            n \bp \f{\log k}{k}
            + o(n)
        \rt).
    \end{align*}
    Recall that $\phip{t_k}$ is independent of $\xp{t_0},\ldots,\xp{t_{k-1}}$.
    Because $\sum_{j=0}^{\nu m} \binom{m}{j} \le (m+1)\binom{m}{\nu m}$, the probability that any one of these $y$ $\nu$-satisfies $\phip{t_k}$ is at most
    \begin{align*}
        \sum_{S\subseteq [m], |S| \le \nu m}
        (1 - 2^{-k})^{m-|S|}
        &\le
        (m+1)\binom{m}{\nu m} (1-2^{-k})^{(1-\nu)m} \\
        &\le
        \exp\lt(\nu m \log \f{e}{\nu} - (1-\nu) 2^{-k} m + o(n)\rt) \\
        &=
        \exp\lt(n\lt(-\kappa \f{\log k}{k} + o_k(1)\rt)+ o(n)\rt).
    \end{align*}
    Here we used that $\binom{a}{b} \le \lt(\f{ea}{b}\rt)^b$.
    Thus,
    \[
        \P(\Sindep^c)
        \le
        \exp\lt( - n(\kappa - \bp - o_k(1)) \f{\log k}{k} + o(n)\rt),
    \]
    where the $(T+1)^{k+1}n^{2^k}$ is absorbed in the $o(n)$.
    Finally, as
    \[
      \bp + \eps \le \f{\bp + \eps}{1 - \bp e^{-(\bp - 1)}} \le \kappa,
    \]
    we have $\kappa - \bp \ge \eps$.
    Thus $\P(\Sindep^c) = \exp(-\Omega(n))$ for sufficiently large $k$.
\end{proof}

\subsection{Small Hamming Distance Implies Small Conditional Overlap Entropy Difference}
\label{subsec:hamming-to-entropy}

We now present the deferred proof of Lemma~\ref{lem:impossibility-hamming-to-entropy}, which shows that a small change in $x\in \{\T,\F\}^n$ causes only a small change in $H(\pi(x | \yp{0},\ldots,\yp{\ell-1}))$.

\begin{proof}[Proof of Lemma~\ref{lem:impossibility-hamming-to-entropy}]
    For each partition $\{S,T\}\in \cPt(\ell)$, let
    \[
        I_{S,T} =
        \lt\{
            i\in [n]:
            \text{all $\{\yp{t}_i : t\in S\}$ equal one value and all $\{\yp{t}_i : t\in T\}$ equal the other value}
        \rt\}.
    \]
    Note that $|I_{S,T}| = \pi_{S,T} n$.
    If $\pi_{S,T}\neq 0$, define
    \[
        \lambda_{S,T} = \f{1}{|I_{S,T}|} \# \lt(i \in I_{S,T} : x_i = \T \rt)
        \qquad
        \text{and}
        \qquad
        \lambda'_{S,T} = \f{1}{|I_{S,T}|} \# \lt(i \in I_{S,T} : x'_i = \T \rt).
    \]
    (If $\pi_{S,T} = 0$, we can set these values arbitrarily in $[0,1]$.)
    On each index set $I_{S,T}$, $x$ and $x'$ differ in at least
    \[
        |I_{S,T}| \cdot |\lambda_{S,T} - \lambda'_{S,T}|
        =
        \pi_{S,T} |\lambda_{S,T} - \lambda'_{S,T}| n
    \]
    positions.
    Thus,
    \[
        \f12
        \ge
        \Delta(x,x')
        \ge
        \sum_{\{S,T\} \in \cPt(\ell)}
        \pi_{S,T} |\lambda_{S,T} - \lambda'_{S,T}|.
    \]
    Let $\sum_{\{S,T\} \in \cPt(\ell)} \pi_{S,T} |\lambda_{S,T} - \lambda'_{S,T}| = \mu$.
    Moreover, from the definition of conditional overlap entropy,
    \[
        H\lt( \pi(x  | \yp{0},\ldots,\yp{\ell-1}) \rt)
        =
        \sum_{\{S,T\} \in \cPt(\ell)}
        \pi_{S,T} H(\lambda_{S,T}),
    \]
    and analogously for $x'$.
    Note that $H(\cdot)$ is concave, so $H'(\cdot)$ is decreasing.
    Thus, for all $[a,b] \in [0,1]$ with $a \ge b$,
    \[
        H(a) - H(b)
        =
        \int_a^b H'(x)\diff{x}
        \le
        \int_0^{a-b} H'(x)\diff{x}
        = H(a-b).
    \]
    Similarly $H(1-b) - H(1-a) \le H(a-b)$, whence $|H(a)-H(b)| \le H(a-b)$.
    Thus,
    \begin{align*}
        \lt|
            H\lt( \pi(x  | \yp{0},\ldots,\yp{\ell-1}) \rt) -
            H\lt( \pi(x' | \yp{0},\ldots,\yp{\ell-1}) \rt)
        \rt|
        &\le
        \sum_{\{S,T\} \in \cPt(\ell)}
        \pi_{S,T}
        \lt| H(\lambda_{S,T}) - H(\lambda'_{S,T}) \rt| \\
        &\le
        \sum_{\{S,T\} \in \cPt(\ell)}
        \pi_{S,T}
        H\lt(|\lambda_{S,T} - \lambda'_{S,T}|\rt).
    \end{align*}
    By concavity of $H(\cdot)$, this last quantity has maximum value $H(\mu)$, attained when all the $|\lambda_{S,T} - \lambda'_{S,T}|$ are equal to $\mu$.
    Because $H(\cdot)$ is increasing on $[0,\f12]$ and $\mu \le \Delta(x,x') \le \f12$, we conclude that
    \[
        \lt|
            H\lt( \pi(x  | \yp{0},\ldots,\yp{\ell-1}) \rt) -
            H\lt( \pi(x' | \yp{0},\ldots,\yp{\ell-1}) \rt)
        \rt|
        \le
        H(\mu)
        \le
        H\lt( \Delta(x,x') \rt).
    \]
\end{proof}

\section{Proof of Presence of Ensemble Multi-OGP}
\label{sec:impossibility-multi-ogp}

In this section, we will prove Proposition~\ref{prop:impossibility-prob-bounds}(\ref{itm:impossibility-prob-bound-ogp}), which shows that the forbidden structure in $\Sogp$ does not occur with high probability.

\subsection{Proof Outline}

We first give a high level overview of the proof, which is by another first moment computation.
Throughout this section, for $I\in [n]^k$ and $x\in \{\T,\F\}^n$, let $x[I] = (x_{I_1}, \ldots, x_{I_k})$ be the subsequence of $x$ indexed by $I$.
We begin with the following lemma, which bounds the exponential rate of $\P(\Sogp^c)$ in terms of a maximum over overlap profiles.
We will prove this lemma in Subsection~\ref{subsec:impossibility-sogp-exp-rate-proof}.

\begin{lemma}
    \label{lem:impossibility-sogp-exp-rate}
    Let $P$ denote the set of overlap profiles $\pi = \pi(\yp{0}, \ldots, \yp{k})$ over $\yp{0}, \ldots, \yp{k} \in \{\T,\F\}^n$ satisfying that for all $1\le \ell \le k$, $H\lt( \pi(\yp{\ell} | \yp{0}, \ldots, \yp{\ell-1}) \rt) \in \lt[\bm \f{\log k}{k}, \bp \f{\log k}{k}\rt]$.
    Then,
    \begin{equation}
        \label{eq:impossibility-sogp-exp-rate}
        \f{1}{n}
        \log \P(\Sogp^c)
        \le
        \log 2 +
        \max_{\pi \in P}
        \lt[
            H(\pi) -
            \kappa \f{\log k}{k}
            \E_{I\sim \unif\lt([n]^k\rt)}
            \lt|\lt\{\yp{\ell}[I]: 0\le \ell\le k\rt\}\rt|
        \rt] + o_k(1) + o(1),
    \end{equation}
    where in the expectation, $\yp{0}, \ldots, \yp{k} \in \{\T,\F\}^n$ is a sequence of assignments with overlap profile $\pi$.
\end{lemma}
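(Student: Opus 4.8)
The plan is to bound $\P(\Sogp^c)$ by a first moment (union bound) over the forbidden structure, decomposing the structure according to its overlap profile. First I would apply Markov's inequality to the number of tuples $(t_0,\dots,t_k,\yp{0},\dots,\yp{k})$ witnessing the event $\Sogp^c$. The choices of time indices $0\le t_0\le\cdots\le t_k\le T$ contribute only a factor $(T+1)^{k+1} = \poly(n)^{O_k(1)}$, which is absorbed into the $o(n)$ (equivalently $o(1)$ after dividing by $n$). Next, rather than summing over all $2^{n(k+1)}$ assignment tuples directly, I would stratify by the overlap profile $\pi = \pi(\yp{0},\dots,\yp{k})$. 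By Fact~\ref{fac:overlap-profile-properties}(\ref{itm:overlap-profile-property-count}) there are at most $n^{2^k}$ distinct profiles, again absorbed into $o(n)$; and condition (\ref{itm:def-sogp-entropy}) of $\Sogp$ says only profiles $\pi\in P$ need be considered. So it remains to bound, for each fixed $\pi\in P$ and fixed index tuple, the expected number of assignment tuples with that profile such that $\yp{\ell}$ $\nu$-satisfies $\phip{t_\ell}$ for all $\ell$.

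For a fixed profile $\pi\in P$, the number of assignment tuples $(\yp{0},\dots,\yp{k})\in(\{\T,\F\}^n)^{k+1}$ with $\pi(\yp{0},\dots,\yp{k})=\pi$ is a product of multinomial coefficients: we choose which of the $n$ coordinates falls into each of the $2^k$ cells of the profile (there are $2^k$ unordered partitions in $\cPt(k+1)$), then within each cell decide which of the two sides of the partition is $\T$. This count is $\exp(n(\log 2 + H(\pi)) + o(n))$: the $\log 2$ comes from the global bit-flip freedom (one binary choice overall, or equivalently the $2^{k}$ within-cell $\T/\F$ choices contribute $2^k \cdot (\text{something})$ but collapse — I will compute this carefully), and $H(\pi)$ is the exponential rate of the multinomial count by Stirling. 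Then, since each $\yp{\ell}$ must $\nu$-satisfy a formula $\phip{t_\ell}$ and these formulas are marginally $\Phi_k(n,m)$-distributed, I would bound the probability that all of $\yp{0},\dots,\yp{k}$ simultaneously satisfy their (coupled) formulas. The key point: whether clause $i$ of $\phip{t_\ell}$ is violated by $\yp{\ell}$ depends only on $\yp{\ell}[I_i]$ where $I_i$ is the tuple of variable indices in clause $i$; and the events ``clause $i$ is violated'' are, up to the coupling of the interpolation and up to a negligible correction, controlled by a single sample $I\sim\unif([n]^k)$. The probability that a fixed clause (with literal indices $I$) is satisfied by \emph{all} of $\yp{0},\dots,\yp{k}$ is $1 - 2^{-k}|\{\yp{\ell}[I] : 0\le\ell\le k\}|$ — each assignment forbids exactly one of the $2^k$ sign patterns, and the number of distinct forbidden patterns is the size of that set. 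Using $\nu$-satisfiability (allowing $\nu m$ bad clauses), the same entropy bookkeeping as in the proof of Proposition~\ref{prop:impossibility-prob-bounds}(\ref{itm:impossibility-prob-bound-independent}) (i.e. $\sum_{j\le\nu m}\binom{m}{j}(1-p)^{m-j}$ with $p$ the per-clause violation probability) gives that this probability is $\exp(-\alpha n \cdot 2^{-k}\,\E_I|\{\yp{\ell}[I]\}| + o_k(\alpha n) + o(n))$, and substituting $\alpha = \kappa 2^k\log k/k$ produces the energy term $-\kappa\frac{\log k}{k}\E_I|\{\yp{\ell}[I]\}|$.

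The main obstacle, and the step deserving the most care, is handling the \emph{dependence across the interpolation path}: the formulas $\phip{t_0},\dots,\phip{t_k}$ are coupled (they share literals that haven't been resampled between the relevant times), so the events ``$\yp{\ell}$ $\nu$-satisfies $\phip{t_\ell}$'' are not independent, and moreover the clause-index tuples $I_i$ are shared between formulas. I would argue that for the first-moment \emph{upper} bound this dependence only helps or is negligible: one can condition on the underlying common randomness and bound the conditional probability that \emph{all} formulas are satisfied by the product structure over clauses, where each clause independently contributes a factor at most $1 - 2^{-k}|\{\yp{\ell}[I_i]\}|$ per the worst case over which formula sees which literal. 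Concretely, since $\yp{\ell}$ is required to satisfy $\phip{t_\ell}$ and the literals of $\phip{t_\ell}$ are (marginally, and conditionally on the history in a way that still gives the right bound) uniform on $\cL$, the probability a given clause position survives for all $\ell$ is at most the stated quantity; the $\nu$-relaxation is absorbed by the $o_k(1)$ since $\nu = 1/(k^2 2^k)$ makes $\nu m\log(e/\nu) = o_k(\alpha n)$. Combining the entropy count $\exp(n(\log 2 + H(\pi)))$ with the energy probability bound, taking $\log$, dividing by $n$, and maximizing over $\pi\in P$ yields exactly \eqref{eq:impossibility-sogp-exp-rate}, with all combinatorial prefactors swept into $o(1)$ and all $k$-dependent slack (from $\nu$, from rounding $m=\lfloor\alpha n\rfloor$, and from the $(1-\nu)$ vs.\ $1$ in the exponent) swept into $o_k(1)$.
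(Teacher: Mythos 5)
Your overall plan follows the paper's first-moment argument: Markov's inequality, a union bound over the polynomially many choices of time indices $(t_0,\dots,t_k)$ and overlap profiles $\pi\in P$ (using Fact~\ref{fac:overlap-profile-properties}(\ref{itm:overlap-profile-property-count})), an $\exp\bigl(n(\log 2 + H(\pi)) + o(n)\bigr)$ count of assignment tuples with a given profile, and a clause-by-clause energy estimate. The counting you flag as uncertain is $2^n\binom{n}{\pi n}$: the factor $2^n$ is one binary choice per coordinate (which side of the unordered partition at that coordinate is $\T$), equivalently the free choice of $\yp{0}$, not a single global bit flip.

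The genuine gap is in your handling of the coupled formulas, which the paper isolates in Lemma~\ref{lem:impossibility-sogp-exp-rate-probability-term}. Your claim that the per-clause satisfaction probability is ``at most $1 - 2^{-k}|\{\yp{\ell}[I_i]\}|$ per the worst case over which formula sees which literal'' is false: if the times $t_\ell$ are pairwise more than $km$ apart, the clauses $\phip{t_\ell}_i$ are independent and the per-clause probability equals $(1-2^{-k})^{k+1}$, which by Bernoulli's inequality is at least $1-(k+1)2^{-k}$ and can strictly exceed your claimed bound when $\E_I|\{\yp{\ell}[I]\}|$ is near $k+1$. The same-formula case is not a pointwise worst case. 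What does work, and what the paper proves, is the exponential-level bound $\exp\bigl(-2^{-k}\,\E_I|\{\yp{\ell}[I]:0\le\ell\le k\}|\bigr)$ per clause position: one decomposes $\{\phip{t_\ell}_i\}_\ell$ into equivalence classes $B_1,\dots,B_r$ of identical clauses, bounds each block by $\exp\bigl(-2^{-k}\,\E_I|\{\yp{\ell}[I]:\ell\in B_s\}|\bigr)$ via $1-x\le e^{-x}$, and then invokes the set-theoretic superadditivity $\sum_s|\{\yp{\ell}[I]:\ell\in B_s\}|\ge|\{\yp{\ell}[I]:0\le\ell\le k\}|$ to combine. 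You also omit the ``interrupted'' clauses (those $i$ for which some $t_\ell$ falls partway through resampling clause $i$), where this block structure does not hold; the paper discards these $\le k+1$ clauses together with the $(k+1)\nu m$ clauses lost to the $\nu$-relaxation. Both the superadditivity step and the interrupted-clause accounting need to be supplied to make your sketch correct.
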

Note that the expectation $\E_{I\sim \unif\lt([n]^k\rt)} \lt|\lt\{\yp{\ell}[I]: 0\le \ell\le k\rt\}\rt|$ has the same value for any $\yp{0}, \ldots, \yp{k} \in \{\T,\F\}^n$ with overlap profile $\pi$.
So, the quantity inside the maximum is a function of $\pi$.

The negative term in \eqref{eq:impossibility-sogp-exp-rate} arises as an upper bound on the exponential rate of the probability that $\yp{0}, \ldots, \yp{k}$ all respectively $\nu$-satisfy $\phip{t_0}, \ldots, \phip{t_k}$, for fixed $\yp{0}, \ldots, \yp{k}$ and $t_0, \ldots, t_k$.
Let us first argue heuristically that this bounds the exponential rate; we will formalize this reasoning in Lemma~\ref{lem:impossibility-sogp-exp-rate-probability-term} below.
We expect this probability to be maximized when $t_0 = \cdots = t_k$, because making the $t_i$ different only introduces additional randomness (see Remark~\ref{rem:ensemble-commentary}).
So, let $\phip{t_0}, \ldots, \phip{t_k}$ all equal the same $k$-SAT instance $\Phi \sim \Phi_k(n,m)$.
We also focus on the probability that $\yp{0}, \ldots, \yp{k}$ all \emph{satisfy} $\Phi$; we will see that the relaxation to $\nu$-satisfy only changes the exponential rate by $o_k(1)$.
The probability that $\yp{0}, \ldots, \yp{k}$ all satisfy the first clause $\Phi_1$ is $1 - 2^{-k} \E_{I\sim \unif\lt([n]^k\rt)} \lt|\lt\{\yp{\ell}[I]: 0\le \ell\le k\rt\}\rt|$, because if $\Phi_1$ contains the variables $x_{I_1}, \ldots, x_{I_k}$, there are $\lt|\lt\{\yp{\ell}[I]: 0\le \ell\le k\rt\}\rt|$ ways to set these variables' polarities in $\Phi_1$ so that one of $\yp{0}, \ldots, \yp{k}$ does not satisfy $\Phi_1$.
Then, the probability that $\yp{0}, \ldots, \yp{k}$ all satisfy $\Phi$ is upper bounded by
\[
    \lt(
        1 - 2^{-k}
        \E_{I\sim \unif\lt([n]^k\rt)}
        \lt|\lt\{\yp{\ell}[I]: 0\le \ell\le k\rt\}\rt|
    \rt)^m
    \le
    \exp\lt(
        -\f{m}{2^k}
        \E_{I\sim \unif\lt([n]^k\rt)}
        \lt|\lt\{\yp{\ell}[I]: 0\le \ell\le k\rt\}\rt|
    \rt),
\]
and $\f{m}{2^k} \approx n\kappa \f{\log k}{k}$.
The second ingredient in the proof of Proposition~\ref{prop:impossibility-prob-bounds}(\ref{itm:impossibility-prob-bound-ogp}) is the following proposition, which lower bounds the expectation in the negative term of \eqref{eq:impossibility-sogp-exp-rate}.
We will prove this proposition in Subsection~\ref{subsec:impossibility-energy-lb}.
Proving the bound in this proposition is one of the main technical challenges of this paper, which we overcome via a surprising probabilistic reformulation of the left-hand expectation.

\begin{proposition}
    \label{prop:impossibility-ogp-energy-term}
    Let $\beta_1,\ldots,\beta_k \in [\bm, \bp]$, and let $\yp{0}, \ldots, \yp{k} \in \{\T,\F\}^n$ be assignments satisfying that $H\lt( \pi(\yp{\ell} | \yp{0}, \ldots, \yp{\ell-1}) \rt) = \beta_\ell \f{\log k}{k}$.
    Then,
    \[
        \E_{I\sim \unif([n]^k)}
        \lt|\lt\{\yp{\ell}[I] : 0\le \ell \le k\rt\}\rt|
        \ge
        (1 - o_k(1))
        \sum_{\ell=1}^k \lt(1 - \beta_\ell e^{-(\beta_\ell-1)}\rt).
    \]
\end{proposition}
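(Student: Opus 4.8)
The plan is to telescope the count over $\ell$, reducing to a bound on the marginal contribution of each $\yp{\ell}$. For any fixed $I\in[n]^k$ one has $\bigl|\{\yp{\ell}[I]:0\le\ell\le k\}\bigr| = 1 + \sum_{\ell=1}^{k}\ind{\yp{\ell}[I]\notin\{\yp{0}[I],\ldots,\yp{\ell-1}[I]\}}$, so taking expectations,
\[
    \E_{I\sim\unif([n]^k)}\bigl|\{\yp{\ell}[I]:0\le\ell\le k\}\bigr|
    = 1 + \sum_{\ell=1}^{k}\P_I\bigl[\yp{\ell}[I]\notin\{\yp{0}[I],\ldots,\yp{\ell-1}[I]\}\bigr].
\]
Thus it suffices to prove, for each $\ell$ with $\beta:=\beta_\ell\in[\bm,\bp]$, the single-$\ell$ estimate $\P_I[\yp{\ell}[I]\notin\{\yp{0}[I],\ldots,\yp{\ell-1}[I]\}]\ge(1-o_k(1))\bigl(1-\beta e^{-(\beta-1)}\bigr)$; summing over $\ell$ and absorbing the harmless $+1$ into the $(1-o_k(1))$ prefactor gives the proposition. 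Using Fact~\ref{fac:overlap-profile-properties}(\ref{itm:overlap-profile-property-duplication}) I would first replace $\yp{0},\ldots,\yp{\ell-1}$ by the sublist $z_1,\ldots,z_r$ of its distinct entries, which changes neither the event nor the conditional overlap entropy $H(\pi(\yp{\ell}\mid z_1,\ldots,z_r))=\beta\f{\log k}{k}$, and which guarantees $r\le\ell\le k$ (and $\beta>1$ forces $\yp{\ell}\notin\{z_1,\ldots,z_r\}$).

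Next I would set up the combinatorial picture that exposes the relevant i.i.d.\ structure. For $i\in[n]$ put $C_i=\{s\in[r]:z_{s,i}=\yp{\ell}_i\}$, the set of predecessors agreeing with $\yp{\ell}$ at coordinate $i$; then $\yp{\ell}[I]=z_s[I]$ holds exactly when $s\in\bigcap_{j=1}^{k}C_{I_j}$, so the quantity to lower bound is $\P\bigl[\bigcap_{j=1}^{k}C_{I_j}=\emptyset\bigr]$, with $C_{I_1},\ldots,C_{I_k}$ i.i.d.\ copies of $C_I$, $I\sim\unif([n])$. Grouping coordinates by the pattern $P=\{S,T\}\in\cPt(r)$ that $z_1,\ldots,z_r$ realize at $i$, and writing $\pi_P$ for the fraction of coordinates with pattern $P$ and $a_P\le\tfrac12$ for the fraction of those on which $\yp{\ell}$ sides with the smaller part of $P$, the conditional overlap entropy unpacks as $\sum_P\pi_P\,H(a_P)=\beta\f{\log k}{k}$. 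Since the right side is $o_k(1)$, the ``deviation'' coordinates (those on which $C_i$ equals the minority part of its pattern) are rare; here the product-to-sum decoupling of Section~\ref{sec:ogp-discussion} — round any near-one factor below $1-\f1{k\log k}$ up to $1$ and use $1-\prod_j(1-\eps_j)=(1+o_k(1))\sum_j\eps_j$ for $\eps_j\le\f1{k\log k}$ — lets me replace products of near-one factors by sums at the cost of a $(1+o_k(1))$ multiplicative factor and a lower-order error for the rounded-away heavy parts. Equivalently one may run everything through the representation $\E_I\bigl|\{\yp{\ell}[I]\}\bigr|=\sum_{\sigma\in\{\T,\F\}^k}\P_I[\sigma=\yp{\ell}[I]\text{ for some }\ell]$ from Section~\ref{sec:ogp-discussion} and read off the same per-$\yp{\ell}$ contribution.

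The crux — and the step I expect to be the main obstacle — is the probabilistic reinterpretation bounding $\P\bigl[\bigcap_{j=1}^{k}C_{I_j}\ne\emptyset\bigr]$ by $\beta e^{-(\beta-1)}+o_k(1)$. Let $N=\#\{j:\yp{\ell}\text{ deviates at }I_j\}$, a sum of $k$ i.i.d.\ indicators with mean $k\bar a$, where $\bar a=\sum_P\pi_P a_P$; concavity of $a\mapsto a\log(e/a)$, the inequality $H(a)\le a\log(e/a)$, and Jensen applied to the entropy identity give $\beta\f{\log k}{k}\le\bar a\log(e/\bar a)$, hence $k\bar a\ge(1-o_k(1))\beta$. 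Splitting $\P\bigl[\bigcap_j C_{I_j}\ne\emptyset\bigr]=\sum_{m\ge0}\P[N=m]\,\P\bigl[\bigcap_j C_{I_j}\ne\emptyset\mid N=m\bigr]$, the large-$m$ tail is killed by a Chernoff bound, the $m=0$ term is at most $(1-\bar a)^k\le e^{-k\bar a}\le e^{-(1-o_k(1))\beta}$, and the substance is to show the conditional collision probabilities decay fast enough in $m$ that, summed against the (essentially $\Pois$) law of $N$, the total stays below $\beta e^{-(\beta-1)}$ — the constant engineered precisely so that the ratio of the entropy increment $\beta\f{\log k}{k}$ to the energy increment $1-\beta e^{-(\beta-1)}$ equals $\iota(\beta)$, which is why it reappears in the definition of $\kappast$. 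The delicacy, absent in the toy case where all predecessors coincide (there $\P[\text{collision}]=(1-\Delta(z,\yp{\ell}))^k$ and the entropy directly pins down $\Delta$), is that $\yp{\ell}$'s deviations may be spread over many distinct predecessors, so $N$ alone does not decide whether a common predecessor survives on the minority sides; handling this requires the decoupling of the previous paragraph together with a careful choice of a reference side at each coordinate, so that the residual condition ``some predecessor lies on the reference side at every sampled coordinate'' becomes a clean event about a sum of i.i.d.\ variables, and extracting the exact exponential rate $e^{-(\beta-1)}$ requires applying the concentration estimate at the right scale.
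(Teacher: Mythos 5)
Your telescoping identity $\lt|\{\yp{\ell}[I]:0\le\ell\le k\}\rt| = 1 + \sum_{\ell=1}^{k}\ind{\yp{\ell}[I]\notin\{\yp{0}[I],\ldots,\yp{\ell-1}[I]\}}$ is exact, and the reduction to the per-$\ell$ estimate $\P_I\lt[\yp{\ell}[I] \in \{\yp{0}[I],\ldots,\yp{\ell-1}[I]\}\rt] \le \beta_\ell e^{-(\beta_\ell-1)} + o_k(1)$ is a legitimate route to the proposition. But you never establish this estimate, and the sketch you offer does not close. The concrete obstruction is the one you flag yourself: with more than one distinct predecessor, the random set $\bigcap_j C_{I_j}$ is not a function of your deviation count $N$, so conditioning on $N$ and asserting that the conditional collision probabilities ``decay fast enough'' requires controlling which predecessors survive the intersection across coordinates with different patterns $P$, and no clean i.i.d.\ structure emerges from a ``reference side'' choice. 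Moreover the $m=0$ bookkeeping silently replaces $\P[N=0]\cdot\P[\text{collision}\mid N=0]$ by $\P[N=0]$, a step that is only tight in the degenerate case $r=1$.

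The paper's proof avoids the collision event altogether, and this is the missing idea. For each $\ell$ it introduces the truncated quantity $q_\ell(\sigma) = \E_I\lt[\prod_r \phi_\ell(\sigma_r \mid \yp{\le\ell-1}_{I_r})\,\ind{\prod_r \phi_\ell(\sigma_r\mid\cdot) \le \f{1}{k\log k}}\rt]$, a function of the conditional law of $\yp{\ell}$ alone which carries no dependence on whether $\sigma$ already appeared among the predecessors. Proposition~\ref{prop:impossibility-ogp-decouple-joint-event-aux} establishes $p_\ell(\sigma) \ge (1-\f{1}{k\log k})p_{\ell-1}(\sigma)+q_\ell(\sigma)$: the $1-\f{1}{k\log k}$ factor is exactly the cost of ignoring whether $\sigma$ was already seen, and iterating over $\ell$ incurs only a benign multiplicative $1-\f{1}{\log k}$ (Proposition~\ref{prop:impossibility-ogp-decouple-joint-event}). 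Then Proposition~\ref{prop:impossibility-ogp-dart-game} identifies $\sum_\sigma q_\ell(\sigma) = \P\lt[\sum_r u_r \ge \log k + \log\log k\rt]$ where $u_r = -\log\phi_\ell(b_r\mid\xi_r)$ is the negative log-conditional-likelihood of a freshly sampled bit; $\E u_r$ equals the conditional overlap entropy, and a Chernoff bound on the truncation $\min(u_r,\log k)$ delivers the rate $1-\beta e^{-(\beta-1)}$. The variable $u_r$ records the \emph{magnitude} $-\log a_P$ of each deviation, not merely its occurrence; that finer information is what makes a single Chernoff bound suffice, and it is exactly what your $N$ discards. Your per-$\ell$ target is in fact recoverable from these lemmas, but proving it requires the truncation and peeling machinery you have not constructed.
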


From Lemma~\ref{lem:impossibility-sogp-exp-rate} and Proposition~\ref{prop:impossibility-ogp-energy-term}, we can see the main ideas of the proof of Proposition~\ref{prop:impossibility-prob-bounds}(\ref{itm:impossibility-prob-bound-ogp}) and understand the motivation of the definition of $\Sogp$.
The ideas are as follows.

We will prove Proposition~\ref{prop:impossibility-prob-bounds}(\ref{itm:impossibility-prob-bound-ogp}) by showing that the right-hand side of \eqref{eq:impossibility-sogp-exp-rate} is negative.
For each $\pi\in P$, this quantity can be regarded as a free entropy, with entropy term $\log 2 + H(\pi)$ and energy term $\kappa \f{\log k}{k} \E_{I\sim \unif\lt([n]^k\rt)} \lt|\lt\{\yp{\ell}[I]:  0\le \ell\le k\rt\}\rt|$.
This free entropy exhibits a tradeoff where as the entropy term increases, the assignments $\yp{0}, \ldots, \yp{k}$ become more diverse, and so the energy term increases too.
The event $\Sogp$ is selected so that for overlap profiles $\pi\in P$, where $P$ is defined in Lemma~\ref{lem:impossibility-sogp-exp-rate}, the energy term is larger than the entropy term, which makes the free entropy negative.
In particular, (due to Fact~\ref{fac:overlap-profile-properties}(\ref{itm:overlap-profile-property-chainrule})) we think of $H\lt( \pi(\yp{\ell} | \yp{0}, \ldots, \yp{\ell-1}) \rt)$ as the amount that $\yp{\ell}$ contributes to the entropy term.
Given this contribution, Proposition~\ref{prop:impossibility-ogp-energy-term} lower bounds the amount that $\yp{\ell}$ contributes to the energy term.
In the definition of $\Sogp$, we require the entropy contribution to be in a medium range $\lt[\bm \f{\log k}{k}, \bp \f{\log k}{k}\rt]$ because (recalling the definition of $\bm, \bp$) in this range the energy-to-entropy ratio is favorable to the energy term.
Specifically, we show that if $\yp{\ell}$ contributes an entropy in this range, the energy it contributes is at least $\eps \f{\log k}{k}$ more.
Thus each $\yp{\ell}$ decreases the free entropy by at least $\eps \f{\log k}{k}$.
Together, the $k$ assignments $\yp{1}, \ldots, \yp{k}$ contribute a free entropy decrease of $\eps \log k$, which dominates the starting free entropy of $\log 2$ and makes the overall free entropy negative.

We now prove Proposition~\ref{prop:impossibility-prob-bounds}(\ref{itm:impossibility-prob-bound-ogp}) given Lemma~\ref{lem:impossibility-sogp-exp-rate} and Proposition~\ref{prop:impossibility-ogp-energy-term}.

\begin{proof}[Proof of Proposition~\ref{prop:impossibility-prob-bounds}(\ref{itm:impossibility-prob-bound-ogp})]
    Let $P$ be as in Lemma~\ref{lem:impossibility-sogp-exp-rate}.
    Let $\pi \in P$, and consider assignments $\yp{0}, \ldots, \yp{k} \in \{\T,\F\}^n$ with $\pi(\yp{0},\ldots,\yp{k}) = \pi$.
    For $1\le \ell \le k$, define $\beta_\ell$ by $H\lt(\pi(\yp{\ell} | \yp{0}, \ldots, \yp{\ell-1})\rt) = \beta_\ell \f{\log k}{k}$.
    Note that the $\beta_\ell$ are determined given $\pi$ and satisfy $\beta_1,\ldots,\beta_k\in [\bm, \bp]$.
    By Fact~\ref{fac:overlap-profile-properties}(\ref{itm:overlap-profile-property-chainrule}), $H(\pi) = \f{\log k}{k} \sum_{\ell=1}^k \beta_\ell$.
    By Proposition~\ref{prop:impossibility-ogp-energy-term},
    \begin{align*}
        -\kappa \f{\log k}{k} \E_{I\sim \unif([n]^k)} \lt|\lt\{\yp{\ell}[I] : 0\le \ell \le k\rt\}\rt|
        &\le
        -(1 - o_k(1)) \f{\log k}{k}
        \sum_{\ell=1}^k
        \kappa \lt(1 - \beta_\ell e^{-(\beta_\ell-1)}\rt) \\
        &\le
        -(1 - o_k(1)) \f{\log k}{k}
        \sum_{\ell=1}^k
        \lt(\beta_\ell + \eps\rt).
    \end{align*}
    The last inequality uses that $\f{\beta + \eps}{1 - \beta e^{-(\beta - 1)}} \le \kappa$ for all $\beta \in [\bm, \bp]$.
    Therefore,
    \begin{align*}
        H(\pi) - \kappa \f{\log k}{k} \E_{I\sim \unif([n]^k)} \lt|\lt\{\yp{\ell}[I] : 0\le \ell \le k\rt\}\rt|
        &\le
        \f{\log k}{k}
        \sum_{\ell=1}^k \beta_\ell
        -(1 - o_k(1)) \f{\log k}{k}
        \sum_{\ell=1}^k
        \lt(\beta_\ell + \eps\rt) \\
        &\le
        o_k(1) \f{\log k}{k} \sum_{\ell=1}^k \beta_\ell
        - (1-o_k(1)) \eps \log k \\
        &\le
        o_k(1) \bp \log k - (1 - o_k(1)) \eps \log k \\
        &=
        - (1 - o_k(1)) \eps \log k.
    \end{align*}
    This bound holds for an arbitrary $\pi \in P$, and thus for the maximum over $\pi \in P$.
    By Lemma~\ref{lem:impossibility-sogp-exp-rate},
    \[
        \f{1}{n} \log \P(\Sogp^c)
        \le
        \log 2 - (1 - o_k(1)) \eps \log k + o_k(1) + o(1) < 0
    \]
    for sufficiently large $k$ and $n$.
    Thus $\P(\Sogp^c) \le \exp(-\Omega(n))$.
\end{proof}

\subsection{Bounding the Exponential Rate by a Free Entropy}
\label{subsec:impossibility-sogp-exp-rate-proof}

In this subsection, we will prove Lemma~\ref{lem:impossibility-sogp-exp-rate}.
We begin with the following lemma, which bounds the probability term arising in the first moment upper bound of $\P(\Sogp^c)$.
\begin{lemma}
    \label{lem:impossibility-sogp-exp-rate-probability-term}
    Suppose $\yp{0}, \ldots, \yp{k} \in \{\T,\F\}^n$ is a sequence of assignments and $0\le t_0\le t_1\le \cdots \le t_k \le T$.
    Then,
    \[
        \f{1}{n}
        \log
        \P \lt[
            \text{$\yp{\ell}$ $\nu$-satisfies $\phip{t_\ell}$ for all $0\le \ell\le k$}
        \rt]
        \le
        -\kappa \f{\log k}{k}
        \E_{I\sim \unif([n]^k)}
        \lt|\lt\{\yp{\ell}[I] : 0\le \ell \le k\rt\}\rt|
        + o_k(1) + o(1).
    \]
\end{lemma}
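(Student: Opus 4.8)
The plan is to establish this by a first--moment (union bound) computation that exploits the independence of the interpolation across clauses. First I would pass from ``$\nu$-satisfies'' to ``satisfies'': if $\yp{\ell}$ $\nu$-satisfies $\phip{t_\ell}$ for every $\ell$, there are sets $S_0,\dots,S_k\subseteq[m]$ with $|S_\ell|\le \nu m$ such that $\yp{\ell}$ satisfies every clause of $\phip{t_\ell}$ outside $S_\ell$; union bounding over the $\binom{m}{\le \nu m}^{k+1}$ such tuples costs only $(k+1)\nu\alpha\log(e/\nu)+o(1)=O(\log k/k)+o(1)=o_k(1)+o(1)$ in the exponential rate, since $\nu=k^{-2}2^{-k}$ and $\alpha=\kappa 2^k\log k/k$. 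So it suffices to bound $\P[\yp{\ell}\text{ satisfies }\phip{t_\ell}\text{ for all }\ell]$. Because the literals of clause $a$ are resampled only at steps $t\equiv k(a-1)+b \pmod{km}$ ($b\in[k]$), the evolutions of distinct clauses are independent, so this probability factors as $\prod_{a=1}^m q_a$ with $q_a=\P[\yp{\ell}\text{ satisfies }\phip{t_\ell}_a\text{ for all }\ell]$. Using $\log q_a\le -(1-q_a)$ and $m2^{-k}=\kappa\f{\log k}{k}n+O(2^{-k})$, the lemma reduces to
\[
\sum_{a=1}^m (1-q_a)\ \ge\ m\,2^{-k}\,\E_{I\sim\unif([n]^k)}\big|\{\yp{\ell}[I]:0\le\ell\le k\}\big|\ -\ o_k(1)\,n\ -\ o(n).
\]

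The heart of the proof is a per-clause estimate built on the \emph{epoch structure} of the resampling: each position $(a,b)$ is resampled exactly $k$ times over $[0,T]$ (as $T=k^2m$), splitting $[0,T]$ into $k+1$ epochs, and the resampling schedules of positions $(a,b)$ and $(a,b')$ of a common clause are offset by $|b-b'|<k$, so the corresponding epoch boundaries all lie within distance $k$ of one another. Call clause $a$ \emph{good} if no $t_\ell$ lies within distance $k$ of a resampling time of any of its positions; for each fixed $t_\ell$ only $O(1)$ clauses fail this, so all but $O(k)$ clauses are good. For a good clause, every $t_\ell$ lies in the same-indexed epoch for all $k$ positions, so the times split into consecutive blocks $W_1,\dots,W_s$ ($s\le k+1$) on each of which $\phip{t_\ell}_a$ is constant, and the resulting clauses $\Phi^{(1)},\dots,\Phi^{(s)}$ are i.i.d.\ uniform (they are assembled from distinct, hence independent, epoch-literals). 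Then $1-q_a$ is the probability that some $\Phi^{(u)}$ is violated by some $\yp{\ell}$ with $\ell\in W_u$, a union of $s$ independent events of probabilities $p_u=2^{-k}\E_I|\{\yp{\ell}[I]:\ell\in W_u\}|$. By the second Bonferroni inequality and $\bigcup_u\{\yp{\ell}[I]:\ell\in W_u\}=\{\yp{\ell}[I]:0\le\ell\le k\}$,
\[
1-q_a\ \ge\ \sum_{u}p_u-\sum_{u<u'}p_up_{u'}\ \ge\ 2^{-k}\,\E_I\big|\{\yp{\ell}[I]:0\le\ell\le k\}\big|\ -\ 2^{-2k}(k+1)^4 .
\]
Summing over the good clauses, and using $1-q_a\ge0$ on the $O(k)$ bad ones, costs a total error $O\!\big(m\,2^{-2k}(k+1)^4\big)+O\!\big(k^2 2^{-k}\big)=o_k(1)\,n+o(n)$, since $m\,2^{-2k}=O(2^{-k}\log k/k)\cdot n$. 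This gives the displayed inequality and hence the lemma.

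I expect the per-clause step to be the main obstacle. A naive inclusion--exclusion over all pairs $(\ell,\ell')$ fails: when $\yp{\ell}\approx\yp{\ell'}$ and $t_\ell,t_{\ell'}$ fall in a common epoch, $\P[\text{both violate }\phip{t}_a]$ is of order $2^{-k}$ rather than $\approx 4^{-k}$, and such terms, summed over $m=\Theta(2^k\log k/k)\cdot n$ clauses, overwhelm the main term. The resolution is exactly the observation that on all but $O(k)$ clauses the $k+1$ times organize \emph{coherently} into mega-epochs, so that $q_a$ reduces to a product over genuinely independent clause copies where the elementary bound suffices; the delicate points are the $O(k)$ bound on non-coherent clauses and the matching of epoch indices across the $k$ positions. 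This also makes precise the heuristic in the text that letting the $t_\ell$ differ ``only introduces additional randomness''.
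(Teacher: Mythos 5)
Your proposal is correct and follows essentially the same route as the paper's proof: reduce $\nu$-satisfy to satisfy by a union bound over forgiven clause indices, factor the probability across clauses using independence of the per-clause evolutions, and exploit the fact that on all but $O(k)$ clauses the times $t_0,\dots,t_k$ partition into blocks on which the clause is identical (independent across blocks) so that the per-clause violation probability is governed by the count $\E_I|\{\yp\ell[I]\}|$. The only substantive difference is in the last per-clause step: you use $\log q_a\le -(1-q_a)$ together with a second Bonferroni lower bound on $1-q_a$ (producing an error term $\sum_{u<u'}p_up_{u'}$ that you must then control), whereas the paper instead upper bounds the probability directly via $\prod_s(1-p_s)\le\exp(-\sum_s p_s)$ and then uses subadditivity $\sum_s|\{\yp\ell[I]:\ell\in B_s\}|\ge|\{\yp\ell[I]:0\le\ell\le k\}|$, which sidesteps the Bonferroni error entirely and is a bit cleaner. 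Also note that the paper's notion of "interrupted" clause (at most $k+1$ of them; $t_\ell$ falls strictly between two consecutive literal-resampling indices of that clause) is sharper than your "within distance $k$" goodness criterion, though both give $O(k)$ bad clauses, which suffices.
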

\begin{proof}
    Say a clause index $i\in [m]$ is \vocab{interrupted} if for some $0\le \ell \le k$, $t_\ell$ satisfies $1\le \sigma(t_\ell) - (i-1)k \le k-1$, where $\sigma(\cdot)$ is defined in Definition~\ref{defn:interpolation}.
    Informally, $i$ is interrupted if there is some $\ell$ such that $\phip{t_\ell}$ is partway through resampling the $i$th clause.
    Let $U$ denote the set of interrupted clause indices.
    Note that each $t_\ell$ interrupts at most one clause, so $|U| \le k+1$.

    Say a clause index $i\in [m]$ is \vocab{bad} if some $\yp{\ell}$ fails to satisfy $\phip{t_\ell}_i$.
    Let $S$ denote the set of bad clause indices.
    If $\yp{\ell}$ $\nu$-satisfies $\phip{t_\ell}$ for all $0\le \ell \le k$, then each $\yp{\ell}$ fails to satisfy at most $\nu m$ clauses of $\phip{t_\ell}$, so $|S| \le (k+1)\nu m$.

    We will see that because so few clause indices are interrupted or bad, it does not hurt our analysis to throw them out.
    We have that
    \begin{align}
        \notag
        &\P \lt[
            \text{$\yp{\ell}$ satisfies $\phip{t_\ell}$ for all $0\le \ell\le k$}
        \rt] \\
        \notag
        &\le
        \sum_{S\subseteq [m], |S| \le (k+1) \nu m}
        \P \lt[
            \text{$\yp{\ell}$ satisfies $\phip{t_\ell}_i$ for all $0\le \ell\le k$, $i\in [m]\setminus S$}
        \rt] \\
        \notag
        &\le
        (m+1) \binom{m}{(k+1)\nu m}
        \max_{S\subseteq [m], |S| \le (k+1) \nu m}
        \P \lt[
            \text{$\yp{\ell}$ satisfies $\phip{t_\ell}_i$ for all $0\le \ell\le k$, $i\in [m]\setminus (S\cup U)$}
        \rt] \\
        \label{eq:impossibility-sat-ensemble-to-clause-ensemble}
        &=
        (m+1) \binom{m}{(k+1)\nu m}
        \max_{S\subseteq [m], |S| \le (k+1) \nu m}
        \prod_{i\in [m]\setminus (S\cup U)}
        \P \lt[
            \text{$\yp{\ell}$ satisfies $\phip{t_\ell}_i$ for all $0\le \ell\le k$}
        \rt].
    \end{align}
    The last step uses that over $i \in [m]$, the collections of clauses $\{\phip{t}_i: 0\le t\le T\}$ are mutually independent.

    We now fix a single $i\in [m] \setminus (S\cup U)$ and analyze the last probability.
    We exploit the following stochastic property of non-interrupted clauses: if $i$ is not interrupted, then the clauses $\phip{t_0}_i, \phip{t_1}_i,\ldots, \phip{t_k}_i$ can be partitioned into equivalence classes, such that all clauses in the same equivalence class are identical and all clauses in different equivalence classes are mutually independent.
    Formally, for some $1\le r\le k+1$, there is a surjective map $\tau : \{0,\ldots,k\} \to [r]$ (dependent only on the indices $t_0,\ldots,t_k$ and $i$) such that for i.i.d. clauses $C_1,\ldots,C_r \sim \Phi_k(n,1)$,
    \[
        \lt(\phip{t_0}_i, \phip{t_1}_i,\ldots, \phip{t_k}_i\rt)
        =_d
        \lt(C_{\tau(0)}, C_{\tau(1)}, \ldots, C_{\tau(k)}\rt).
    \]
    For $1\le s\le r$, let $B_s = \tau^{-1}(s)$ be the set of $\ell\in \{0,\ldots,k\}$ such that $\phip{t_\ell}_i$ corresponds to $C_s$.
    Thus $B_1, \ldots, B_r$ partition $\{0,\ldots,k\}$.
    Now,
    \begin{equation}
        \label{eq:impossibility-clause-ensemble-to-clause}
        \P \lt[
            \text{$\yp{\ell}$ satisfies $\phip{t_\ell}_i$ for all $0\le \ell\le k$}
        \rt]
        =
        \prod_{s=1}^r
        \P \lt[
            \text{$\yp{\ell}$ satisfies $C_s$ for all $\ell \in B_s$}
        \rt].
    \end{equation}
    Let $I \in [n]^k$ be the indices of the $k$ variables sampled by $C_s$, so $I \sim \unif([n]^k)$.
    Given $I$, there are $\lt|\lt\{\yp{\ell}[I] : \ell \in B_s\rt\}\rt|$ ways to assign polarities to these $k$ variables such that for some $\ell \in B_s$, $\yp{\ell}$ does not satisfy $C_s$.
    Thus, conditioned on $I$, the probability that $\yp{\ell}$ satisfies $C_s$ for all $\ell \in B_s$ is $1 - 2^{-k} \lt|\lt\{\yp{\ell}[I] : \ell \in B_s\rt\}\rt|$.
    It follows that
    \begin{align*}
        \P \lt[
            \text{$\yp{\ell}$ satisfies $C_s$ for all $\ell \in B_s$}
        \rt]
        &=
        1 - 2^{-k}
        \E_{I\sim \unif([n]^k)}
        \lt|\lt\{\yp{\ell}[I] : \ell \in B_s\rt\}\rt| \\
        &\le
        \exp\lt(
            - 2^{-k}
            \E_{I\sim \unif([n]^k)}
            \lt|\lt\{\yp{\ell}[I] : \ell \in B_s\rt\}\rt|
        \rt).
    \end{align*}
    So, using \eqref{eq:impossibility-clause-ensemble-to-clause} and recalling that $B_1,\ldots,B_r$ partition $\{0,\ldots,k\}$, we have
    \begin{align*}
        \P \lt[
            \text{$\yp{\ell}$ satisfies $\phip{t_\ell}_i$ for all $0\le \ell\le k$}
        \rt]
        &\le
        \exp\lt(
            - 2^{-k}
            \E_{I\sim \unif([n]^k)}
            \sum_{s=1}^r
            \lt|\lt\{\yp{\ell}[I] : \ell \in B_s\rt\}\rt|
        \rt) \\
        &\le
        \exp\lt(
            - 2^{-k}
            \E_{I\sim \unif([n]^k)}
            \lt|\lt\{\yp{\ell}[I] : 0\le \ell \le k\rt\}\rt|
        \rt).
    \end{align*}
    Next, we substitute into \eqref{eq:impossibility-sat-ensemble-to-clause-ensemble}.
    Since $|S| \le (k+1)\nu m$, $|U| \le k+1$, and $m = \lfloor \alpha n \rfloor \ge \alpha n - 1$,
    \begin{align*}
      |m \setminus (S \cup U)|
      &\ge
      (1-(k+1)\nu) m - (k+1) \\
      &\ge
      (1-(k+1)\nu) \alpha n - k-2.
    \end{align*}
    So,
    \begin{align*}
        &\P \lt[
            \text{$\yp{\ell}$ satisfies $\phip{t_\ell}$ for all $0\le \ell\le k$}
        \rt] \\
        &\le
        (m+1) \binom{m}{(k+1)\nu m}
        \exp\lt(
            -\f{(1-(k+1)\nu) \alpha n - k-2}{2^k}
            \E_{I\sim \unif([n]^k)}
            \lt|\lt\{\yp{\ell}[I] : 0\le \ell \le k\rt\}\rt|
            +o(n)
        \rt).
    \end{align*}
    Thus, using that $\binom{a}{b} \le \lt(\f{ea}{b}\rt)^b$,
    \begin{align*}
        &\f{1}{n}
        \log
        \P \lt[
            \text{$\yp{\ell}$ satisfies $\phip{t_\ell}$ for all $0\le \ell\le k$}
        \rt] \\
        &\le
        (k+1)\nu \alpha \log\f{e}{(k+1)\nu}
        -\f{(1-(k+1)\nu) \alpha}{2^k}
        \E_{I\sim \unif([n]^k)}
        \lt|\lt\{\yp{\ell}[I] : 0\le \ell \le k\rt\}\rt|
        + o(1) \\
        &\le
        -\f{\alpha}{2^k}
        \E_{I\sim \unif([n]^k)}
        \lt|\lt\{\yp{\ell}[I] : 0\le \ell \le k\rt\}\rt|
        + o_k(1) + o(1).
    \end{align*}
    The result follows from $\alpha = \kappa 2^k \log k / k$.
\end{proof}

\begin{proof}[Proof of Lemma~\ref{lem:impossibility-sogp-exp-rate}]
    By Markov's inequality, $\P(\Sogp^c)$ is upper bounded by the expected number of $0 \le t_0 \le t_1 \le \cdots \le t_k \le T$ and $(\yp{0},\ldots,\yp{k})$ satisfying conditions (\ref{itm:def-sogp-satisfy}) and (\ref{itm:def-sogp-entropy}).
    There are at most $(T+1)^{k+1}$ choices of $(t_0,\ldots,t_k)$, and (by Fact~\ref{fac:overlap-profile-properties}(\ref{itm:overlap-profile-property-count})) $|P| \le n^{2^k}$.
    By linearity of expectation,
    \begin{align*}
        \P(\Sogp^c)
        &\le
        (T+1)^{k+1}
        n^{2^k}
        \max_{\substack{0\le t_0\le \cdots \le t_k \le T \\ \pi \in P}}
        \E
        \lt[ \# \lt(
            \begin{array}{l}
                (\yp{0}, \ldots, \yp{k}) \in \{\T,\F\}^{n\times (k+1)}: \\
                \text{$\yp{\ell}$ $\nu$-satisfies $\phip{t_\ell}$ for all $0\le \ell\le k$} \\
                \text{and~} \pi(\yp{0}, \ldots, \yp{k}) = \pi
            \end{array}
        \rt) \rt].
    \end{align*}
    Let $\pi n$ be the scalar product of $\pi$, treated as a vector, by $n$.
    There are $2^n \binom{n}{\pi n}$ sequences of assignments $(\yp{0}, \ldots, \yp{k})$ with $\pi(\yp{0}, \ldots, \yp{k}) = \pi$: $2^n$ ways to choose $\yp{0}$, and then $\binom{n}{\pi n}$ ways to assign the positions $[n]$ to the partitions of $\{0,\ldots,k\}$.
    Over all of these sequences of assignments, the probability of the event that $\yp{\ell}$ satisfies $\phip{t_\ell}$ for all $0\le \ell\le k$ is uniformly upper bounded by Lemma~\ref{lem:impossibility-sogp-exp-rate-probability-term}.
    By linearity of expectation, the last expectation is upper bounded by
    \[
        2^n \binom{n}{\pi n}
        \exp\lt(
            -n \lt(
              \kappa \f{\log k}{k}
              \E_{I\sim \unif([n]^k)}
              \lt|\lt\{\yp{\ell}[I] : 0\le \ell \le k\rt\}\rt|
              + o_k(1) + o(1)
            \rt)
        \rt).
    \]
    Because $\binom{n}{\pi n} = \exp\lt(n(H(\pi) + o(1))\rt)$, the result follows.
\end{proof}

\begin{remark}
    \label{rem:ensemble-commentary}
    The step in the proof of Lemma~\ref{lem:impossibility-sogp-exp-rate-probability-term} where we lower bound $\sum_{s=1}^r \lt|\lt\{\yp{\ell}[I] : \ell \in B_s\rt\}\rt|$ by $\lt|\lt\{\yp{\ell}[I] : 0\le \ell \le k\rt\}\rt|$ is tight when $t_0,\ldots,t_k$ are all equal, because in this case $r=1$ and $B_1 = \{0,1,\ldots,k\}$.
    Thus the exponential rate of $\P(\Sogp^c)$ is dominated by the case when the $t_i$ are equal.
    In other words, $\P(\Sogp^c)$ has the same exponential rate as if, in the definition of $\Sogp$, we required all the $\yp{\ell}$ to $\nu$-satisfy the same $\phip{t}$.
    This shows the power of the ``ensemble" part of the ensemble multi-OGP: for no cost in the exponential rate, we can generalize the forbidden structure to an ensemble.
    All ensemble (multi-)OGPs in the literature share and leverage this property, see \cite{GJW20, Wei20}.
\end{remark}

\subsection{Lower Bounding the Energy Term}
\label{subsec:impossibility-energy-lb}

In this subsection, we will prove Proposition~\ref{prop:impossibility-ogp-energy-term}.
Let $\yp{0}, \ldots, \yp{k}$ and $\beta_1,\ldots,\beta_k$ be as in Proposition~\ref{prop:impossibility-ogp-energy-term}.
Without loss of generality, we can set $\yp{0} = \T^n$.

To analyze the expectation in Proposition~\ref{prop:impossibility-ogp-energy-term}, we introduce the following probabilistic quantities.
For $0\le \ell \le k$ and $\sigma \in \{\T,\F\}^k$, define
\[
    E_{\ell}(\sigma)
    =
    \lt\{I \in [n]^k : \yp{\ell'}[I] = \sigma \text{~for some~} 0\le \ell' \le \ell\rt\}
    \qquad
    \text{and}
    \qquad
    p_{\ell}(\sigma)
    =
    \P_{I \sim \unif([n]^k)} \lt(E_{\ell}(\sigma)\rt).
\]
In other words, $E_{\ell}(\sigma)$ is the event that $\sigma$ appears in the set $\lt\{\yp{\ell'}[I] : 0\le \ell' \le \ell\rt\}$, and $p_{\ell}(\sigma)$ is the probability of this event.
The probabilities $p_k(\sigma)$ will be relevant to our analysis by the following identity \eqref{eq:impossibility-ogp-energy-term-to-probabilities}, while the probabilities $p_\ell(\sigma)$ for $\ell < k$ will arise in our inductive analysis below, where we lower bound $p_k(\sigma)$ by peeling off one of $\yp{1},\ldots,\yp{k}$ at a time.
We have that
\begin{equation}
    \label{eq:impossibility-ogp-energy-term-to-probabilities}
    \E_{I\sim \unif([n]^k)} \lt|\lt\{\yp{\ell}[I] : 0\le \ell \le k\rt\}\rt|
    =
    \E_{I\sim \unif([n]^k)}
    \lt[
        \sum_{\sigma \in \{\T,\F\}^k}
        \ind{I\in E_k(\sigma)}
    \rt]
    =
    \sum_{\sigma \in \{\T,\F\}^k}
    p_k(\sigma).
\end{equation}

To prove Proposition~\ref{prop:impossibility-ogp-energy-term}, we will need to lower bound the right-hand side of \eqref{eq:impossibility-ogp-energy-term-to-probabilities}.
This task will require several definitions; to motivate these definitions, we first outline our technique for deriving this lower bound.

Our first step is a conditional expansion.
Let $I \sim \unif([n]^k)$.
We reveal the $k$ bit strings $\yp{1}[I], \ldots, \yp{k}[I]$ one by one.
(Recall that we fixed $\yp{0}=\T^n$, so $\yp{0}[I]=\T^k$ is known.)
Conditioned on its predecessors $\yp{1}[I], \ldots, \yp{\ell-1}[I]$, the distribution of $\yp{\ell}[I]$ can be described in terms of the conditional overlap profile $\pi(\yp{\ell} | \yp{0}, \ldots, \yp{\ell-1})$.
Then, $1 - p_k(\sigma)$, the probability that $\sigma$ does not appear in $\lt\{\yp{\ell}[I] : 0\le \ell \le k\rt\}$, can be expanded as a product of $k$ factors: the $\ell$th factor is the conditional probability that the revealed value of $\yp{\ell}$ does not equal $\sigma$.
The $\ell$th factor of this product can be thought of as ($1$ minus) the contribution of $\yp{\ell}$ to $p_k(\sigma)$.

Our second step is to estimate this product by a \emph{sum}, whose $\ell$th summand is the contribution of $\yp{\ell}$ to this estimate of $p_k(\sigma)$.
The purpose of this estimation is to decouple the contributions of the $\yp{\ell}$, so that we can analyze the overall contribution of $\yp{\ell}$ by summing over $\sigma \in \{\T,\F\}^k$.
We achieve this by truncating the factors in the product at $1 - \f{1}{k\log k}$; any factor smaller than this gets rounded up to $1$.
Because $\f{1}{k\log k} \ll \f{1}{k}$, we can separate the contributions of $\yp{1},\ldots,\yp{k}$ to $p_k(\sigma)$ by the estimate
\[
    1 - (1-\eps_1)(1-\eps_2)\cdots (1-\eps_k)
    \approx
    \eps_1 + \eps_2 + \cdots + \eps_k
\]
up to $1-o_k(1)$ multiplicative error.
Propositions~\ref{prop:impossibility-ogp-decouple-joint-event-aux} and \ref{prop:impossibility-ogp-decouple-joint-event} below carry out this technique.

Finally, our third step is to collect the (now additive) contributions of each $\yp{\ell}$ to the estimated $p_k(\sigma)$ over all $\sigma \in \{\T,\F\}^k$.
Miraculously, we can interpret this sum as a probability of a sum of $k$ i.i.d. random variables, which can be controlled by a Chernoff bound.
This step is carried out in Proposition~\ref{prop:impossibility-ogp-dart-game}.

Formally, for $0\le \ell \le k$ and $i\in [n]$, let $\yp{\le \ell}_i = (\yp{1}_i,\ldots,\yp{\ell}_i)$.
Similarly, for $I\in [n]^k$, let $\yp{\le \ell}[I] = (\yp{1}[I],\ldots,\yp{\ell}[I])$.
Because $\yp{0} = \T^n$, the overlap profile $\pi$ determines the distribution of $\yp{\le k}_i$ over $i \sim \unif([n])$.
Namely, for $\xi \in \{\T,\F\}^k$,
\[
    \P_{i\sim \unif([n])}
    \lt[
        \yp{\le k}_i = \xi
    \rt]
    =
    \pi_{S \cup \{0\},T}
\]
where $S = \{\ell \in [k]: \xi_\ell = \T\}$ and $T = \{\ell \in [k]: \xi_\ell = \F\}$.
Moreover, the distribution of $\yp{\le k}[I]$, where $I \sim \unif([n]^k)$, is the product of $k$ i.i.d. copies of this distribution.
For $1\le \ell \le k$, $b\in \{\T,\F\}$, and $\xi \in \{\T,\F\}^{\ell-1}$, define
\[
    \phi_\ell (b | \xi)
    =
    \P_{i\sim \unif([n])}
    \lt[
        \yp{\ell}_i = b |
        \yp{\le \ell-1}_i = \xi
    \rt].
\]
The probabilities in the aforementioned conditional expansion are products of conditional probabilities $\phi_\ell(b | \xi)$.
Namely, the probability that $\yp{\ell}[I] \neq \sigma$ given $\yp{\le \ell-1}[I]$ is $1-\prod_{r=1}^k \phi_\ell (\sigma_{r} | \yp{\le \ell-1}_{I_r})$.

For $1 \le \ell \le k$, $\sigma \in \{\T,\F\}^k$ and $I\in [n]^k$, further define
\[
    Q_\ell(\sigma, I) =
    \lt(
        \prod_{r=1}^k
        \phi_\ell (\sigma_{r} | \yp{\le \ell-1}_{I_r})
    \rt)
    \ind{
        \prod_{r=1}^k
        \phi_\ell (\sigma_{r} | \yp{\le \ell-1}_{I_r})
        \le
        \f{1}{k\log k}
    }
    \quad
    \text{and}
    \quad
    q_\ell(\sigma)
    =
    \E_{I\sim \unif([n]^k)}
    \lt[
    Q_\ell(\sigma,I)
    \rt].
\]
Thus, $1 - Q_\ell(\sigma, I)$ is a term in the conditional expansion, truncated at $1 - \f{1}{k\log k}$ in the aforementioned sense, and $q_\ell(\sigma)$ is its expectation.

For each $\sigma \in \{\T,\F\}^k$, the following two propositions lower bound $p_k(\sigma)$ in terms of $q_1(\sigma),\ldots,q_k(\sigma)$ by peeling off one of $\yp{1},\ldots,\yp{k}$ at a time.

\begin{proposition}
    \label{prop:impossibility-ogp-decouple-joint-event-aux}
    For each $\sigma \in \{\T,\F\}^k$ and $1\le \ell \le k$, we have that
    \[
        p_{\ell}(\sigma) \ge \lt(1 - \f{1}{k\log k}\rt) p_{\ell-1}(\sigma) + q_{\ell}(\sigma).
    \]
\end{proposition}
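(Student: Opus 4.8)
The plan is to prove the claimed one-step recursion directly, with no induction on $\ell$, using only the set identity $E_\ell(\sigma) = E_{\ell-1}(\sigma)\cup\{I : \yp{\ell}[I]=\sigma\}$ together with a conditioning on the tuple $\yp{\le\ell-1}[I]$. From the set identity,
\[
    p_\ell(\sigma) = p_{\ell-1}(\sigma) + \P_{I\sim\unif([n]^k)}\lt[\yp{\ell}[I]=\sigma,\ I\notin E_{\ell-1}(\sigma)\rt].
\]
To handle the new term I would condition on $\yp{\le\ell-1}[I]$. Since $\yp{0}=\T^n$, the event $E_{\ell-1}(\sigma)$ is a deterministic function of $\yp{\le\ell-1}[I]$, so $\ind{I\notin E_{\ell-1}(\sigma)}$ is $\yp{\le\ell-1}[I]$-measurable; and because $I_1,\dots,I_k$ are i.i.d.\ uniform on $[n]$, the pairs $(\yp{\le\ell-1}_{I_r},\yp{\ell}_{I_r})$ are i.i.d.\ over $r\in[k]$, so conditionally on $\yp{\le\ell-1}[I]$ the bits $\yp{\ell}_{I_1},\dots,\yp{\ell}_{I_k}$ are independent with $\yp{\ell}_{I_r}\sim\phi_\ell(\cdot\mid\yp{\le\ell-1}_{I_r})$. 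Writing $R_\ell(\sigma,I)=\prod_{r=1}^k\phi_\ell(\sigma_r\mid\yp{\le\ell-1}_{I_r})$ for the untruncated product, so that $Q_\ell(\sigma,I)=R_\ell(\sigma,I)\,\ind{R_\ell(\sigma,I)\le\f{1}{k\log k}}$, one gets $\P\lt[\yp{\ell}[I]=\sigma\mid\yp{\le\ell-1}[I]\rt]=R_\ell(\sigma,I)$, and hence by the tower property
\[
    p_\ell(\sigma)-p_{\ell-1}(\sigma)=\E_{I\sim\unif([n]^k)}\lt[R_\ell(\sigma,I)\,\ind{I\notin E_{\ell-1}(\sigma)}\rt].
\]

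The heart of the argument would then be to bound $q_\ell(\sigma)=\E_I[Q_\ell(\sigma,I)]$ by splitting over whether $I\in E_{\ell-1}(\sigma)$. On $\{I\in E_{\ell-1}(\sigma)\}$ I would use the pointwise bound $0\le Q_\ell(\sigma,I)\le\f{1}{k\log k}$, which contributes at most $\f{1}{k\log k}\,p_{\ell-1}(\sigma)$; on the complement I would use the pointwise bound $0\le Q_\ell(\sigma,I)\le R_\ell(\sigma,I)$, which contributes at most $\E_I[R_\ell(\sigma,I)\,\ind{I\notin E_{\ell-1}(\sigma)}]=p_\ell(\sigma)-p_{\ell-1}(\sigma)$. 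Adding the two gives $q_\ell(\sigma)\le\f{1}{k\log k}\,p_{\ell-1}(\sigma)+p_\ell(\sigma)-p_{\ell-1}(\sigma)$, which rearranges to exactly $p_\ell(\sigma)\ge\lt(1-\f{1}{k\log k}\rt)p_{\ell-1}(\sigma)+q_\ell(\sigma)$.

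I do not expect a serious obstacle here; the proof is a short conditioning-and-bookkeeping argument. The main point to handle carefully is the conditional-independence step — that conditioning on $\yp{\le\ell-1}[I]$ decouples the $k$ coordinates of $\yp{\ell}[I]$ into independent $\phi_\ell(\cdot\mid\cdot)$-draws, which is exactly what licenses replacing $\ind{\yp{\ell}[I]=\sigma}$ by $R_\ell(\sigma,I)$ against the $\yp{\le\ell-1}[I]$-measurable indicator. The second thing to get right is the accounting observation that truncating the conditional probabilities at $\f{1}{k\log k}$ loses nothing on $\{I\notin E_{\ell-1}(\sigma)\}$ and at most $\f{1}{k\log k}$ per index on $\{I\in E_{\ell-1}(\sigma)\}$, so the total loss from truncation is at most $\f{1}{k\log k}\,p_{\ell-1}(\sigma)$.
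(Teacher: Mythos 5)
Your proof is correct and is essentially the same argument as in the paper: both condition on $\yp{\le\ell-1}[I]$ to replace $\ind{\yp{\ell}[I]=\sigma}$ by $R_\ell(\sigma,I)=\prod_r \phi_\ell(\sigma_r\mid\yp{\le\ell-1}_{I_r})$ against the $\yp{\le\ell-1}[I]$-measurable indicator of $E_{\ell-1}(\sigma)$, and both use the two pointwise bounds $Q_\ell\le\f{1}{k\log k}$ on $E_{\ell-1}(\sigma)$ and $Q_\ell\le R_\ell$ off it. The only difference is bookkeeping --- you bound $q_\ell$ directly via $p_\ell-p_{\ell-1}=\E[R_\ell\ind{I\notin E_{\ell-1}(\sigma)}]$, whereas the paper expands $1-p_\ell=\E[(1-\ind{E_{\ell-1}(\sigma)})(1-R_\ell)]$ and applies the same two bounds inside the product; these are algebraically equivalent.
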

\begin{proof}
    Note that
    \begin{align*}
        1 - p_{\ell}(\sigma)
        &=
        \P_{I\sim \unif([n]^k)}
        \lt[
            \yp{\ell'}[I] \neq \sigma \text{~for all~} 0\le \ell' \le \ell
        \rt] \\
        &=
        \E_{I\sim \unif([n]^k)}
        \lt[
            \ind{
                \yp{\ell'}[I] \neq \sigma \text{~for all~} 0\le \ell' \le \ell-1
            }
            \lt(
                1 -
                \prod_{r=1}^k
                \phi_\ell (\sigma_{r} | \yp{\le \ell-1}_{I_r})
            \rt)
        \rt].
    \end{align*}
    Here, we use that the event inside the indicator is $\yp{\le \ell-1}[I]$-measurable, and conditioned on $\yp{\le \ell-1}[I]$ the probability that $\yp{\ell}[I] = \sigma$ is $\prod_{r=1}^k \phi_\ell (\sigma_{r} | \yp{\le \ell-1}_{I_r})$.
    Moreover, we have $\prod_{r=1}^k \phi_\ell (\sigma_{r} | \yp{\le \ell-1}_{I_r}) \ge Q_\ell(\sigma, I)$ by definition.
    So,
    \begin{align*}
        1 - p_{\ell}(\sigma)
        &\le
        \E_{I\sim \unif([n]^k)}
        \lt[
            \lt(1 - \ind{I \in E_{\ell-1}(\sigma)}\rt)
            \lt(
                1 -
                Q_\ell(\sigma, I)
            \rt)
        \rt] \\
        &\le
        \E_{I\sim \unif([n]^k)}
        \lt[
            1 - \lt(1 - \f{1}{k\log k}\rt) \ind{I \in E_{\ell-1}(\sigma)} - Q_\ell(\sigma, I)
        \rt] \\
        &=
        1 - \lt(1 - \f{1}{k\log k}\rt) p_{\ell-1}(\sigma) - q_\ell(\sigma).
    \end{align*}
    The second-last line uses the fact that $Q_\ell(\sigma, I) \le \f{1}{k\log k}$ almost surely, and the last line uses the definitions of $p_{\ell-1}(\sigma)$ and $q_\ell(\sigma)$.
    Rearranging yields the desired bound.
\end{proof}

\begin{proposition}
    \label{prop:impossibility-ogp-decouple-joint-event}
    For each $\sigma \in \{\T,\F\}^k$, we have that
    \[
        p_k(\sigma) \ge \lt(1 - \f{1}{\log k}\rt) \sum_{\ell=1}^k q_\ell(\sigma).
    \]
\end{proposition}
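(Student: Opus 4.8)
The plan is to simply unroll the one-step recursion provided by Proposition~\ref{prop:impossibility-ogp-decouple-joint-event-aux}. First I would record the base case: since we have fixed $\yp{0} = \T^n$, the event $E_0(\sigma)$ is just $\{I : \T^k = \sigma\}$, so $p_0(\sigma) = \ind{\sigma = \T^k} \ge 0$. I would also note that $q_\ell(\sigma) = \E_{I\sim \unif([n]^k)}[Q_\ell(\sigma, I)] \ge 0$ for every $\ell$ and $\sigma$, because $Q_\ell(\sigma, I)$ is a product of conditional probabilities $\phi_\ell(\cdot | \cdot)$ multiplied by an indicator, hence nonnegative.

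Writing $a = 1 - \f{1}{k\log k}$, Proposition~\ref{prop:impossibility-ogp-decouple-joint-event-aux} reads $p_\ell(\sigma) \ge a\, p_{\ell-1}(\sigma) + q_\ell(\sigma)$. Iterating this inequality from $\ell = k$ down to $\ell = 0$ yields
\[
    p_k(\sigma)
    \ge
    a^k p_0(\sigma)
    +
    \sum_{\ell=1}^k a^{k-\ell} q_\ell(\sigma)
    \ge
    \sum_{\ell=1}^k a^{k-\ell} q_\ell(\sigma),
\]
where in the last step I dropped the nonnegative term $a^k p_0(\sigma)$. Since each $q_\ell(\sigma) \ge 0$ and $a^{k-\ell} \ge a^k$ for all $0 \le \ell \le k$, it remains to lower bound $a^k$. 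By Bernoulli's inequality, $a^k = \lt(1 - \f{1}{k\log k}\rt)^k \ge 1 - \f{k}{k\log k} = 1 - \f{1}{\log k}$. Substituting this uniform lower bound on $a^{k-\ell}$ gives $p_k(\sigma) \ge \lt(1 - \f{1}{\log k}\rt) \sum_{\ell=1}^k q_\ell(\sigma)$, as desired.

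There is essentially no obstacle here; the only point requiring a moment's care is the sign check that lets us replace $a^{k-\ell}$ by the weaker uniform bound $1 - \f{1}{\log k}$, namely the nonnegativity of each $q_\ell(\sigma)$, which is immediate from the definition of $Q_\ell$.
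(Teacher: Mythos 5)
Your proof is correct and is essentially identical to the paper's: both iterate Proposition~\ref{prop:impossibility-ogp-decouple-joint-event-aux} to obtain $p_k(\sigma) \ge \lt(1 - \f{1}{k\log k}\rt)^k p_0(\sigma) + \sum_{\ell=1}^k \lt(1 - \f{1}{k\log k}\rt)^{k-\ell} q_\ell(\sigma)$, drop the nonnegative $p_0$ term, bound each power by $\lt(1 - \f{1}{k\log k}\rt)^k$, and finish with Bernoulli's inequality. The extra detail you give on $p_0(\sigma)$ and the nonnegativity of $q_\ell(\sigma)$ is sound but not strictly needed beyond what the paper states.
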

\begin{proof}
    By iterating Proposition~\ref{prop:impossibility-ogp-decouple-joint-event-aux}, we get
    \[
        p_k(\sigma)
        \ge
        \lt(1 - \f{1}{k\log k}\rt)^k
        p_0(\sigma)
        +
        \sum_{\ell=1}^k
        \lt(1 - \f{1}{k\log k}\rt)^{k-\ell}
        q_\ell(\sigma)
        \ge
        \lt(1 - \f{1}{k\log k}\rt)^k
        \sum_{\ell=1}^k q_\ell(\sigma).
    \]
    The result follows from the bound $\lt(1 - \f{1}{k\log k}\rt)^k \ge 1 - \f{1}{\log k}$, by Bernoulli's inequality.
\end{proof}

Equation \eqref{eq:impossibility-ogp-energy-term-to-probabilities} and Proposition~\ref{prop:impossibility-ogp-decouple-joint-event} leave the task of lower bounding $\sum_{\sigma \in \{\T,\F\}^k} \sum_{\ell=1}^k q_\ell(\sigma)$.
This is achieved by the following proposition, which reinterprets $\sum_{\sigma \in \{\T,\F\}^k} q_\ell(\sigma)$, the total contribution of $\yp{\ell}$, as a probability.

\begin{proposition}
    \label{prop:impossibility-ogp-dart-game}
    For each $1\le \ell \le k$, we have that
    \[
        \sum_{\sigma \in \{\T,\F\}^k}
        q_\ell(\sigma)
        \ge
        1 - \beta_\ell e^{-(\beta_\ell-1)} - o_k(1).
    \]
\end{proposition}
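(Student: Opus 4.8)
The plan is to recast $\sum_{\sigma} q_\ell(\sigma)$ as the probability that a sum of $k$ i.i.d.\ nonnegative random variables exceeds $\log(k\log k)$, and then to lower bound this probability using the mean of the summand together with a special structural feature of its law (its ``self-information'' form). Fix $I\sim\unif([n]^k)$ and write $\xi^{(r)}=\yp{\le\ell-1}_{I_r}$. Since $\sum_{\sigma\in\{\T,\F\}^k}\prod_{r=1}^k\phi_\ell(\sigma_r\mid\xi^{(r)})$ is the total mass of the product distribution $\bigotimes_{r=1}^k\phi_\ell(\cdot\mid\xi^{(r)})$ on $\{\T,\F\}^k$, summing the definition of $Q_\ell(\sigma,I)$ over $\sigma$ yields
\[
    \sum_{\sigma}Q_\ell(\sigma,I)
    =\P_{\Sigma}\lt[\prod_{r=1}^k\phi_\ell(\Sigma_r\mid\xi^{(r)})\le\f{1}{k\log k}\rt],
    \qquad \Sigma\sim\bigotimes_{r=1}^k\phi_\ell(\cdot\mid\xi^{(r)}).
\]
Taking $\E_I$ and noting that over the joint randomness the pairs $(\xi^{(r)},\Sigma_r)$, $r\in[k]$, are i.i.d.\ with the law of $(\yp{\le\ell-1}_i,\yp{\ell}_i)$ for $i\sim\unif([n])$, we obtain $\sum_\sigma q_\ell(\sigma)=\P[\sum_{r=1}^k W_r\ge\log(k\log k)]$, where $W_r:=-\log\phi_\ell(\Sigma_r\mid\xi^{(r)})\ge 0$ are i.i.d. So it suffices to prove $\P[\sum_{r=1}^k W_r<\log(k\log k)]\le\beta_\ell e^{-(\beta_\ell-1)}+o_k(1)$.

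Next I would pin down the law of $W_r$. Because $\yp{0}=\T^n$, the agreement type of a position $i$ with respect to $\yp{0},\ldots,\yp{\ell-1}$ is in bijection with $\yp{\le\ell-1}_i$, and the conditional overlap profile $\pi_{\cdot\mid S,T}$ is exactly the law of $\yp{\ell}_i$ (a two-point distribution) given that type; hence $\E[W_r]=H(\yp{\ell}_i\mid\yp{\le\ell-1}_i)=H\lt(\pi(\yp{\ell}\mid\yp{0},\ldots,\yp{\ell-1})\rt)=\beta_\ell\f{\log k}{k}$. Moreover $W_r$ is a self-information: conditioned on a type whose two-point law is $(a,1-a)$ with $a\le\f12$, $W_r$ equals $-\log a$ with probability $a$ and $-\log(1-a)$ with probability $1-a$; in particular $H(a)=\E[W_r\mid\text{type}]$, and for every $\tau\ge\log 2$ the number of atoms of mass $\le e^{-\tau}$ of a two-point law is at most $e^{\tau}$, giving the crucial bound $\P(W_r\ge\tau)\le e^{-\tau}$.

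The heart of the argument is then a lower-tail bound for $\sum_r W_r$, using only $\E[W_r]=\beta_\ell\f{\log k}{k}$ and $\P(W_r\ge\tau)\le e^{-\tau}$. The basic mechanism is that $\sum_r W_r$ can only fall below $\log(k\log k)$ if few of the $W_r$ are large: for every $\tau>0$, on the event $\{\sum_r W_r<\log(k\log k)\}$ one has $\#\{r:W_r\ge\tau\}\le\lceil\log(k\log k)/\tau\rceil-1$, so $\P[\sum_r W_r<\log(k\log k)]\le\inf_{\tau}\P[\Bin(k,F(\tau))\le\lceil\log(k\log k)/\tau\rceil-1]$ with $F(\tau)=\P(W_r\ge\tau)$. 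For a given law of $W_r$ one picks $\tau$ just below the scale at which $W_r$'s mass lives: the mean identity together with $H(a)\ge a\log(1/a)$ (and the fact that $a\mapsto a\log(1/a)$ is maximized at $a=1/e$) forces, whenever the relevant jump scale is a constant multiple $c\log k$ of $\log k$, that $kF(\tau)\approx\beta_\ell/c$ while the allowed count is $\approx 1/c$; since $\beta_\ell>1$ the binomial is well above its cutoff, and the binomial--Poisson comparison bounds the probability by $\P[\Pois(\beta_\ell/c)\le\lceil1/c\rceil-1]+o_k(1)$. Taking $c\to1$ this is $(1+\beta_\ell)e^{-\beta_\ell}+o_k(1)$, and for smaller $c$ it is still smaller; in the complementary case where the mass sits at scale $o(\log k)$, $\sum_r W_r$ concentrates about its mean $\beta_\ell\log k>\log k$ and the lower tail is $o_k(1)$ by a Chernoff estimate on a truncation of $\sum_r W_r$. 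In all cases the bound is at most $(1+\beta_\ell)e^{-\beta_\ell}+o_k(1)\le\beta_\ell e^{-(\beta_\ell-1)}+o_k(1)$, the last inequality being the elementary $1+\beta\le e\beta$ for $\beta\ge1$, which holds uniformly for $\beta_\ell\in[\bm,\bp]$ precisely because $\bm>1$.

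The main obstacle is making this lower-tail estimate rigorous and uniform: no single threshold $\tau$ works, since an adversarial law can place its entire entropy budget at jump values just below any fixed $\tau$, rendering $\P[\Bin(k,F(\tau))\le\cdots]=1$; one genuinely has to optimize $\tau$ over all scales and split into a ``concentrated near $\log k$'' regime (controlled by a compound-Poisson/binomial tail for the large jumps) and a ``spread out'' regime (controlled by concentration of the whole sum), and to verify that a single constant covers both over $\beta_\ell\in[\bm,\bp]$. Here the slack between the extremal probability $(1+\beta_\ell)e^{-\beta_\ell}$ and the target $\beta_\ell e^{-(\beta_\ell-1)}$ is what makes a somewhat lossy argument possible, and one must also carefully track the $\log\log k$ term — it comes from the truncation level $\f{1}{k\log k}$ and is exactly what turns the extremal jump count from ``$=0$'' into ``$\le1$'', hence produces the factor $(1+\beta_\ell)$. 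Everything else — the reduction to a sum of i.i.d.'s and the computation of the mean — is routine.
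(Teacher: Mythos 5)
Your reduction of $\sum_\sigma q_\ell(\sigma)$ to the success probability of an i.i.d.\ experiment, and your computation that $\E[W_r] = H\bigl(\pi(\yp{\ell}\mid\yp{0},\ldots,\yp{\ell-1})\bigr) = \beta_\ell\tfrac{\log k}{k}$, match the paper exactly. The gap is in the final step. You propose to bound the lower tail of $\sum_r W_r$ by counting large jumps, i.e.\ by $\inf_\tau \P\bigl[\Bin(k,F(\tau))\le\lceil\log(k\log k)/\tau\rceil-1\bigr]$, and then split into a ``concentrated'' and a ``spread out'' regime. As you yourself note, no single $\tau$ works and the regime split is where the real work lies --- but that split is precisely what you leave unproved. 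The ``spread out'' case is not a corner case one can wave away with ``concentration about the mean'': when the mass of $W_r$ is distributed across many scales between $0$ and $\log k$, neither the binomial count at any one threshold nor a naive mean-concentration bound gives the right constant, and verifying that the optimized $\tau$ achieves a bound no worse than $\beta_\ell e^{-(\beta_\ell-1)}+o_k(1)$ uniformly over all admissible laws and all $\beta_\ell\in[\bm,\bp]$ is exactly the content of the proposition.

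The paper's proof sidesteps this entirely and is worth internalizing because of how cleanly it handles the adversarial law of $W_r$. Instead of reasoning about jump counts, it truncates each variable at $\log k$: set $v_r=\min(u_r,\log k)$. The truncation loss is controlled pointwise in $i$ by the observation that $\E[u_r-v_r]$ is an expectation of $\phi_\ell(b\mid\cdot)\log\tfrac{1}{k\,\phi_\ell(b\mid\cdot)}$ over at most one qualifying $b$, and $x\mapsto x\log\tfrac{1}{kx}$ is maximized at $x=\tfrac{1}{ek}$ with value $\tfrac{1}{ek}$; hence $\E[v_r]\ge\beta_\ell\tfrac{\log k}{k}-\tfrac{1}{ek}$. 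The rescaled variables $v_r/\log k$ are then i.i.d.\ and bounded in $[0,1]$, so a single multiplicative Chernoff bound on $\sum_r v_r/\log k$ falling below $1+\tfrac{\log\log k}{\log k}$ directly gives $\beta_\ell e^{-(\beta_\ell-1)}+o_k(1)$, with no case analysis and no need for the self-information tail bound $\P(W_r\ge\tau)\le e^{-\tau}$. Your jump-counting heuristic is correct in spirit (and in the extremal two-point case would even give the sharper $(1+\beta_\ell)e^{-\beta_\ell}$), but converting it into a rigorous uniform bound is substantially harder than the truncate-and-Chernoff route; I'd recommend replacing your sketch of the ``main obstacle'' with the paper's truncation argument.
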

\begin{proof}
    Using the definition of $q_\ell(\sigma)$, we have
    \begin{align*}
        \sum_{\sigma \in \{\T,\F\}^k}
        q_\ell(\sigma)
        &=
        \E_{I \sim \unif([n]^k)} \lt[
            \sum_{\sigma \in \{\T,\F\}^k}
            \lt(
                \prod_{r=1}^k
                \phi_\ell (\sigma_{r} | \yp{\le \ell-1}_{I_r})
            \rt)
            \ind{
                \prod_{r=1}^k
                \phi_\ell (\sigma_{r} | \yp{\le \ell-1}_{I_r})
                \le
                \f{1}{k\log k}
            }
        \rt] \\
        &=
        \E_{I \sim \unif([n]^k)} \lt[
            \sum_{\sigma \in \{\T,\F\}^k}
            \lt(
                \prod_{r=1}^k
                \phi_\ell (\sigma_{r} | \yp{\le \ell-1}_{I_r})
            \rt)
            \ind{
                -
                \sum_{r=1}^k
                \log \phi_\ell (\sigma_{r} | \yp{\le \ell-1}_{I_r})
                \ge
                \log k + \log \log k
            }
        \rt].
    \end{align*}
    This quantity is the success probability of the following experiment.
    Sample positive random variables $u_1,\ldots,u_k$ by the following procedure, repeated independently for each $r \in [k]$.
    Sample $i\in \unif([n])$; this determines the value of $\yp{\le \ell-1}_i$.
    Then, sample $b \in \{\T,\F\}$ from the measure $\phi_\ell (\cdot | \yp{\le \ell-1}_{i})$.
    Finally, set $u_r = - \log \phi_\ell (b | \yp{\le \ell-1}_{I_r})$.
    The experiment succeeds if $\sum_{r=1}^k u_r \ge \log k + \log \log k$.

    For $r\in [k]$, let $v_r = \min(u_r, \log k)$.
    This is a proxy for $u_r$ with an almost sure upper bound, which allows us to control the experiment's failure probability by a Chernoff bound.
    This failure probability is bounded by
    \[
        \P \lt[\sum_{r=1}^k u_r < \log k + \log \log k\rt]
        \le
        \P \lt[\sum_{r=1}^k v_r < \log k + \log \log k\rt]
        =
        \P \lt[\sum_{r=1}^k \f{v_r}{\log k} < 1 + \f{\log \log k}{\log k}\rt].
    \]
    Note that the $\f{v_r}{\log k}$ are i.i.d. random variables in $[0,1]$ almost surely.
    To bound this last probability by a Chernoff bound, we will lower bound $\E [v_r]$.
    By the definition of $\phi_\ell$,
    \begin{align*}
        \E [u_r]
        &=
        \E_{i\sim \unif([n])}
        \E_{b\sim \phi_\ell (\cdot | \yp{\le \ell-1}_{i})}
        \lt[
            - \log \phi_\ell (b | \yp{\le \ell-1}_{i})
        \rt] \\
        &=
        H(\pi(\yp{\ell} | \yp{0}, \ldots, \yp{\ell-1}))
        =
        \beta_\ell \f{\log k}{k}.
    \end{align*}
    Moreover,
    \begin{align*}
        \E [u_r - v_r]
        &=
        \E \lt[
            (u_r - \log k)\ind{u_r \ge \log k}
        \rt] \\
        &=
        \E_{i\sim \unif([n])}
        \lt[
            \sum_{b\in \{\T,\F\}}
            \phi_\ell (b | \yp{\le \ell-1}_{i})
            \log \f{1}{k \phi_\ell (b | \yp{\le \ell-1}_{i})}
            \ind{\phi_\ell (b | \yp{\le \ell-1}_{i}) \le \f{1}{k}}
        \rt].
    \end{align*}
    For each $i\in [n]$, the quantity inside the last expectation is nonzero for at most one $b\in \{\T,\F\}$ (for $k\ge 3$).
    Moreover, on the interval $[0, \f{1}{k}]$, the function $x \mapsto x \log \f{1}{kx}$ has maximum value $\f{1}{ek}$, attained at $x = \f{1}{ek}$.
    Thus, $\E [u_r-v_r] \le \f{1}{ek}$.
    It follows that $\E [v_r] \ge \beta_\ell \f{\log k}{k} - \f{1}{ek}$.
    So,
    \[
        \E \lt[\sum_{r=1}^k \f{v_r}{\log k}\rt]
        \ge
        \beta_\ell
        - \f{1}{e \log k}.
    \]
    Furthermore, $(1 + \f{\log \log k}{\log k})/(\beta_\ell - \f{1}{e \log k}) = \f{1}{\beta_\ell} + o_k(1)$.
    So, by a Chernoff bound,
    \[
        \P \lt[\sum_{r=1}^k \f{v_r}{\log k} < 1 + \f{\log \log k}{\log k}\rt]
        \le
        \lt(
            \f{e^{-(1 - \f{1}{\beta_\ell} - o_k(1))}}{\lt(\f{1}{\beta_\ell} + o_k(1)\rt)^{\f{1}{\beta_\ell} + o_k(1)}}
        \rt)^{\beta_\ell - o_k(1)}
        =
        \beta_\ell e^{-(\beta_\ell - 1)} + o_k(1).
    \]
    Hence,
    \[
        \P \lt[\sum_{r=1}^k u_r \ge \log k + \log \log k\rt]
        \ge
        1 - \beta_\ell e^{-(\beta_\ell - 1)} - o_k(1),
    \]
    as desired.
\end{proof}

We can now combine these propositions to prove Proposition~\ref{prop:impossibility-ogp-energy-term}.

\begin{proof}[Proof of Proposition~\ref{prop:impossibility-ogp-energy-term}]
    By combining \eqref{eq:impossibility-ogp-energy-term-to-probabilities}, Proposition~\ref{prop:impossibility-ogp-decouple-joint-event}, and Proposition~\ref{prop:impossibility-ogp-dart-game}, we have
    \begin{align*}
        \E_{I\sim \unif([n]^k)} \lt|\lt\{\yp{\ell}[I] : 0\le \ell \le k\rt\}\rt|
        &\ge
        \lt(1 - \f{1}{\log k}\rt)
        \sum_{\ell=1}^k
        \sum_{\sigma \in \{\T,\F\}^k}
        q_\ell(\sigma) \\
        &\ge
        \lt(1 - o_k(1)\rt)
        \sum_{\ell=1}^k
        \lt(1 - \beta_\ell e^{-(\beta_\ell-1)}\rt).
    \end{align*}
\end{proof}

\section{Stability of Low Degree Polynomials}
\label{sec:impossibility-ldp-stability}

In this section, we will prove Proposition~\ref{prop:impossibility-prob-bounds}(\ref{itm:impossibility-prob-bound-consecutive}), which lower bounds the probability that $\xp{t}$ satisfies $\phip{t}$ for all $0\le t\le T$ and the sequence $\xp{t}$ has no large jumps in Hamming distance.

The proof is a mild generalization of the stability analysis in \cite[Subsection 4.1]{GJW20} and \cite[Subsection 2.3]{Wei20} from a biased Boolean hypercube to a product of discrete uniform measures.
Like in these two works, the proof proceeds in two steps.
The interpolation path can be modeled as a walk on a product graph whose vertices are the elements of $\Omega_k(n,m)$, where two vertices are adjacent if they differ by one literal.
An edge $(\Phi, \Phi')$ is bad if the output of our polynomial $f$ has a large jump between inputs $\Phi$ and $\Phi'$.
In the first step, we will use Fourier analysis to upper bound the fraction of bad edges.
In the second step, we translate this bound to a lower bound on the probability that our walk encounters no bad edges.

\subsection{An Upper Bound on the Rate of Bad Steps}
\label{subsec:impossibility-influence}

We begin by formalizing the notion of $c$-badness.
Recall that $N = m \cdot k \cdot 2n$, and each $\Phi \in \Omega_k(n,m)$ is identified with a vector of indicators in $\{0,1\}^N$, which is the input of a low degree polynomial.
\begin{definition}[$c$-badness]
    \label{defn:impossibility-cbad}
    Let $c>0$ and let $f: \bR^N \to \bR^n$ be a deterministic degree-$D$ polynomial.
    A pair of formulas $(\Phi, \Phi')\in \Omega_k(n,m)^2$ is \vocab{$c$-bad} (with respect to $f$) if $\norm{f(\Phi) - f(\Phi')}_2^2 > c\E_{\Phi \sim \Phi_k(n,m)} \norm{f(\Phi)}_2^2$.
\end{definition}

Recall the interpolation path $\phip{0},\phip{1},\ldots,\phip{T}$ defined in Definition~\ref{defn:interpolation}.
We will prove Proposition~\ref{prop:impossibility-prob-bounds}(\ref{itm:impossibility-prob-bound-consecutive}) via the following proposition, which controls the probability that the output of $f$ does not have a large jump between any pair of consecutive assignments in the interpolation path.
\begin{proposition}
    \label{prop:impossibility-no-cbad-edge}
    Let $f: \bR^N \to \bR^n$ be a deterministic degree-$D$ polynomial.
    With probability at least $(2n)^{-4Dk/c}$, $(\phip{t-1}, \phip{t})$ is not $c$-bad with respect to $f$ for any $1\le t\le T$.
\end{proposition}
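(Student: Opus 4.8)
The plan is to follow the two-step strategy announced in the text: first bound the fraction of $c$-bad edges in the relevant product graph using Fourier/influence analysis, then convert that into a lower bound on the probability that the walk $\phip{0},\dots,\phip{T}$ avoids all $c$-bad edges. I will set up the graph $\cG$ on vertex set $\Omega_k(n,m)$, where $\Phi \sim \Phi_k(n,m)$ corresponds to a point in the product space $(\cL)^{km}$ (equivalently, the one-hot encoding in $\{0,1\}^N$), and two vertices are adjacent when they differ in exactly one of the $km$ literal-coordinates. Thus $\cG$ is a Cartesian product of $km$ complete graphs $K_{2n}$, and $\phip{0},\dots,\phip{T}$ is (a truncation of) the natural random walk on $\cG$: each step resamples one coordinate from $\unif(\cL)$, cycling through the coordinates in lexicographic order, for $k$ full passes.

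For the first step, I would expand each coordinate function $f_i:\{0,1\}^N\to\bR$ (which is multilinear after restricting to the one-hot indicator inputs, since indicators are $0/1$) in the appropriate orthogonal Fourier basis for the product measure $\unif(\cL)^{\otimes km}$, and define the influence $\mathrm{Inf}_a(f_i)$ of the $a$th literal-coordinate. The total influence $\sum_{a=1}^{km}\sum_{i=1}^n \mathrm{Inf}_a(f_i)$ is at most $D\cdot \E_\Phi\norm{f(\Phi)}_2^2$ by the standard degree-$D$ bound (each Fourier level $j$ contributes weight counted $j\le D$ times). On the other hand, for a uniformly random edge across coordinate $a$ — i.e. $\Phi$ and $\Phi'$ agree everywhere except coordinate $a$, where both are independent uniform samples from $\cL$ — the expected squared difference $\E\norm{f(\Phi)-f(\Phi')}_2^2$ equals $2\sum_i \mathrm{Inf}_a(f_i)$ (the factor coming from resampling rather than flipping). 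Summing over $a$ and using Markov's inequality on a uniformly random edge $(\Phi,\Phi')$ of $\cG$, the fraction of $c$-bad edges across coordinate $a$ is at most $\frac{2\sum_i \mathrm{Inf}_a(f_i)}{c\,\E_\Phi\norm{f(\Phi)}_2^2}$, and hence the overall fraction of $c$-bad edges is at most $\frac{2D}{c}$ — up to the bookkeeping that each step of the walk only resamples one coordinate, so I really want, for each coordinate $a$, the probability that a single resampling step across coordinate $a$ lands on a $c$-bad edge.

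For the second step, condition on the realized values of $\phip{0}$ and note that at each step $t$ the new literal is drawn uniformly and independently. By the union bound over the $T=k^2 m$ steps, the probability of hitting some $c$-bad edge is at most $\sum_{t=1}^T p_t$, where $p_t$ is the probability that resampling coordinate $\sigma(t)$ from the current formula produces a $c$-bad pair. Here I would use the standard trick from \cite{GJW20, Wei20}: rather than a crude union bound (which would give the wrong, exponentially small, answer), one argues multiplicatively. For a single coordinate $a$, let $q_a$ be the probability that a uniformly random formula, together with an independent resample of coordinate $a$, forms a $c$-bad edge; we showed $\frac{1}{km}\sum_a q_a \le \frac{2D}{c}$ on average, but more usefully $q_a \le \frac{2\sum_i\mathrm{Inf}_a(f_i)}{c\,\E_\Phi\norm f_2^2}$ for each $a$. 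The probability the whole length-$T$ walk avoids every $c$-bad edge is $\prod$ over steps of (one minus the conditional hit probability); bounding each conditional hit probability by the marginal $q_{\sigma(t)}$ and using $1-x\ge e^{-2x}$ for small $x$ gives a lower bound $\exp\!\big(-2\sum_{t=1}^T q_{\sigma(t)}\big) = \exp\!\big(-2k\sum_{a=1}^{km} q_a\big)$ since each coordinate is visited exactly $k$ times. Then $\sum_a q_a \le \frac{2D\,km}{c}\cdot\frac{1}{km}\cdot km$... more carefully, $\sum_a q_a \le \frac{2D}{c}$ is \emph{not} what the influence bound gives directly; the influence bound gives $\sum_a \sum_i \mathrm{Inf}_a(f_i)\le D\E\norm f_2^2$, hence $\sum_a q_a \le \frac{2}{c\,\E\norm f_2^2}\sum_a\sum_i\mathrm{Inf}_a(f_i)\le \frac{2D}{c}$. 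Therefore the avoidance probability is at least $\exp(-8Dk/c)$, and after adjusting constants (the bound $1-x\ge (2n)^{-x/\log(2n)\cdot\text{const}}$, or rather optimizing the elementary inequality to get exactly the stated exponent) one obtains the claimed $(2n)^{-4Dk/c}$.

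I expect the main obstacle to be getting the \emph{constant and the exact form} $(2n)^{-4Dk/c}$ right rather than merely $\exp(-O(Dk/c))$. This is where the factor of $2n$ (the alphabet size $|\cL|$) must enter: a single resampling step across a coordinate only changes the literal with probability $1-\frac{1}{2n}$, and the cleanest route is to use the inequality $1 - x \ge (2n)^{-1}$-type bookkeeping, or to note that conditioned on the literal actually changing, the conditional hit probability is at most $\frac{2n}{2n-1}q_a$, and then apply $\prod(1-y_j)\ge (2n)^{-\sum y_j / \log\frac{2n}{2n-1}}$... In any case, the delicate part is threading the elementary inequalities so that the alphabet-size $2n$ surfaces as the base of the exponent with coefficient exactly $4Dk/c$; the Fourier/influence half of the argument is routine and essentially identical to \cite[Subsection 4.1]{GJW20}, with the only novelty being that the underlying product measure is $\unif(\cL)^{\otimes km}$ rather than a biased hypercube, which does not affect the degree-vs-influence inequality. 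Finally, Proposition~\ref{prop:impossibility-prob-bounds}(\ref{itm:impossibility-prob-bound-consecutive}) follows by applying Proposition~\ref{prop:impossibility-no-cbad-edge} with $c$ chosen so that $c$-badness is incompatible with $\Delta(\xp{t},\xp{t-1})\le \frac{\bp-\bm}{2k}$ — concretely, since $\Delta(x,x')\le\frac14\norm{(\round\circ f)(\Phi)-(\round\circ f)(\Phi')}$-type estimates relate Hamming distance of rounded outputs to $\ell_2$ distance of raw outputs given the $(-1,1)$ deadzone, one takes $c \asymp \frac{\bp-\bm}{k\gamma}$, yielding the $(2n)^{-4\gamma D k^2/(\bp-\bm)}$ in the proposition — together with a union bound costing $(T+1)\delta$ to also land inside $\Svalid$.
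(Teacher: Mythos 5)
Your Step 1 (the influence bound) follows the paper's Proposition~\ref{prop:impossibility-total-influence} in spirit: expand $f_i$ in the orthogonal decomposition for the product measure $\unif(\cL)^{\otimes km}$, use that a degree-$D$ polynomial has its orthogonal mass concentrated on levels $\le D$, and apply Markov to bound the fraction of $c$-bad edges in each direction. (Your constant comes out as $2D/c$ where the paper gets $4D/c$; the paper's edge distribution $\Phi_k(n,m;j)$ forces the resampled literal to differ, so the clean i.i.d.\ identity you use does not apply verbatim and they pay a factor $2$ via $(a-b)^2\le 2a^2+2b^2$. This is a bookkeeping issue, not a conceptual one.)

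Step 2 is where there is a genuine gap. Your argument bounds ``the conditional probability of hitting a $c$-bad edge at step $t$ given no bad edge has been hit so far'' by the \emph{marginal} probability $q_{\sigma(t)}$ and multiplies. This is simply false: conditioning on past avoidance biases the walk's current position away from stationarity, and the conditional hit probability has no a priori relation to $q_{\sigma(t)}$. A concrete failure: take all the influence (hence all the $c$-bad edges) to lie in a single coordinate $a_0$, with every edge across $a_0$ bad, so $q_{a_0}=1$ and $\sum_a q_a = 1$. The walk visits $a_0$ exactly $k$ times and avoids bad edges only if it stays put each visit, i.e.\ with probability $(2n)^{-k}$ — this matches the paper's bound exactly, confirming that Lemma~\ref{lem:impossibility-graphwalk} is tight. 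Your formula would output $\exp(-2k)$ (or $\exp(-8Dk/c)$ in general), a bound independent of $n$, which is impossible. More broadly, $\exp(-8Dk/c)$ cannot be correct because $4Dk/c$ is meant to be $\gg 1$ in the application (with $c=\Theta(1/(\gamma k))$ one has $4Dk/c = \Theta(\gamma D k^2)$), so the ``expected number of bad encounters'' exceeds one and a probability bounded away from zero independently of $n$ would be too strong. What the paper actually does — and what the ``standard trick'' you allude to but do not carry out really is — is the potential-function induction of Lemma~\ref{lem:impossibility-graphwalk}: define $q(v)$ as the avoidance probability starting from $v$, and prove $\E_{v\sim\unif(G)}\log q(v) \ge -\log|\Sigma|\cdot\sum_t\lambda_{\sigma(t)}$ by an induction whose inner step applies Jensen's inequality to compare $\log q$ at a vertex with an average of $\log \tilde q$ over the $(|\Sigma|-1)$-slice. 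The base $|\Sigma|=2n$ of the exponent comes precisely from this log-potential argument, not from the per-step elementary inequalities you gesture at. Without this or an equivalent mechanism for handling the dependence of the walk on its history, your proof does not go through.
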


We will prove this proposition in Subsection~\ref{subsec:impossibility-ldp-stability-proof}.
The objective of this subsection is to prove Proposition~\ref{prop:impossibility-total-influence} below, which upper bounds the fraction of all possible steps that are bad.
To this end, for $1\le j\le km$, define $\Phi_k(n,m;j)$ as the measure of a sample $(\Phi, \Phi') \in \Omega_k(n,m)^2$ obtained by sampling $\Phi\sim \Phi_k(n,m)$, and then obtaining $\Phi'$ from $\Phi$ by resampling the $j$th lexicographic literal $\Phi'_{L(j)}$ from $\unif\lt(\cL \setminus \{\Phi_{L(j)}\} \rt)$.
(Recall the definition of $L(j)$ before Definition~\ref{defn:interpolation}.)
Define
\[
    \lambda_j = \P_{(\Phi,\Phi') \sim \Phi_k(n,m;j)} \lt(\text{$(\Phi, \Phi')$ is $c$-bad with respect to $f$}\rt).
\]
This is the fraction of pairs of formulas in $\Omega_k(n,m)$, differing in exactly the $j$th lexicographic literal, that are $c$-bad with respect to $f$.

\begin{proposition}
    \label{prop:impossibility-total-influence}
    If $f$ is a deterministic degree-$D$ polynomial, then $\sum_{j=1}^{km} \lambda_j \le \f{4D}{c}$.
\end{proposition}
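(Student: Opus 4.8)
The plan is to bound each $\lambda_j$ by a normalized ``expected squared jump'' of $f$ caused by resampling the $j$th literal, and then recognize $\sum_j\lambda_j$ as a total-influence quantity controlled by the degree $D$. Write $V = \E_{\Phi\sim\Phi_k(n,m)}\norm{f(\Phi)}_2^2$; if $V=0$ then $f$ vanishes on all of $\Omega_k(n,m)$, so $\lambda_j=0$ for every $j$ and there is nothing to prove, hence assume $V>0$. Applying Markov's inequality to the nonnegative random variable $\norm{f(\Phi)-f(\Phi')}_2^2$ under $(\Phi,\Phi')\sim\Phi_k(n,m;j)$ gives
\[
    \lambda_j
    \le
    \f{1}{cV}\,
    \E_{(\Phi,\Phi')\sim\Phi_k(n,m;j)}
    \norm{f(\Phi)-f(\Phi')}_2^2 ,
\]
so it suffices to show $\sum_{j=1}^{km}\E_{(\Phi,\Phi')\sim\Phi_k(n,m;j)}\norm{f(\Phi)-f(\Phi')}_2^2 \le 4DV$.

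Next I would pass from the one-hot encoding to the literals themselves. Let $z=(z_1,\ldots,z_{km})\in\cL^{km}$ list the literals of $\Phi$ in lexicographic order, so $z$ has law $\unif(\cL)^{\otimes km}$, and let $g=(g_1,\ldots,g_n):\cL^{km}\to\bR^n$ be the function with $g(z)=f(\Phi)$. The crucial structural point is that each indicator $\Phi_{a,b,s}$ is a function of the single coordinate $z_{k(a-1)+b}$ alone; hence each degree-$\le D$ monomial of $f$, restricted to the one-hot support, is a function of at most $D$ of the coordinates $z_1,\ldots,z_{km}$. Consequently, in the Efron--Stein (orthogonal) decomposition $g_i=\sum_{S\subseteq[km]}g_i^{=S}$ over the product measure $\unif(\cL)^{\otimes km}$, one has $g_i^{=S}=0$ whenever $|S|>D$. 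This is the only place the low-degree hypothesis enters.

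Now I would run the standard influence computation. For $j\in[km]$ let $z^{(j)}$ be obtained from $z$ by replacing $z_j$ with an independent draw from $\unif(\cL)$, and set $I_j = \f12\E\,\norm{g(z)-g(z^{(j)})}_2^2$. Using the identity $\E[(X-X')^2]=2\Var[X]$ for i.i.d.\ $X,X'$ (applied to each coordinate $g_i$ conditionally on $z_{-j}$) together with orthogonality of Efron--Stein components, $I_j = \sum_{i=1}^n\E_z\big[\Var_{z_j}(g_i\mid z_{-j})\big] = \sum_{i=1}^n\sum_{S\ni j}\norm{g_i^{=S}}_2^2$. Summing over $j$, using $g_i^{=S}=0$ for $|S|>D$ and Parseval,
\[
    \sum_{j=1}^{km} I_j
    = \sum_{i=1}^n\sum_{S\subseteq[km]}|S|\,\norm{g_i^{=S}}_2^2
    \le D\sum_{i=1}^n\sum_{S\subseteq[km]}\norm{g_i^{=S}}_2^2
    = D\sum_{i=1}^n\E[g_i^2]
    = D\,\E_\Phi\norm{f(\Phi)}_2^2
    = DV .
\]
Finally, resampling $z_j$ from $\unif(\cL)$ leaves it unchanged with probability $\f{1}{2n}$ and is otherwise distributed as $\unif(\cL\setminus\{z_j\})$, so $\E_{(\Phi,\Phi')\sim\Phi_k(n,m;j)}\norm{f(\Phi)-f(\Phi')}_2^2 = \f{2n}{2n-1}\E\,\norm{g(z)-g(z^{(j)})}_2^2 = \f{4n}{2n-1}I_j \le 4I_j$ since $n\ge 1$. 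Combining with the Markov bound and the previous display yields $\sum_{j=1}^{km}\lambda_j \le \f{1}{cV}\sum_{j=1}^{km}4I_j \le \f{4DV}{cV} = \f{4D}{c}$.

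The one non-routine ingredient is the Efron--Stein degree bound for $g$, i.e.\ that a degree-$D$ polynomial in the one-hot encoding induces, on $\cL^{km}$ with the uniform product measure, a function whose orthogonal decomposition is supported on sets of size at most $D$. This is the analogue over $\unif(\cL)^{\otimes km}$ of the fact used over the biased hypercube in \cite{GJW20, Wei20}, and it is where the restricted computational model does its work; the remaining steps — the influence identities, Parseval, and the elementary $\f{2n}{2n-1}$ accounting — are standard discrete Fourier analysis.
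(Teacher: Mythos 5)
Your proof is correct and takes the same route as the paper's: reparametrize $\Phi_k(n,m)$ as the product measure $\unif(\cL)^{\otimes km}$, observe that a degree-$D$ polynomial in the one-hot indicators induces a function of $(z_1,\ldots,z_{km})$ whose Efron--Stein decomposition is supported on subsets of size at most $D$ (the content of Lemma~\ref{lem:impossibility-reverse-efron-stein}), and sum the resulting per-coordinate influence bound over $j$. The only cosmetic difference is in the numerator: the paper applies the crude bound $(a-b)^2\le 2a^2+2b^2$ to the $\sD_j$-components of $f(\Phi)$ and $f(\Phi')$, whereas you use the exact identity $\E[(X-X')^2]=2\Var X$ for an i.i.d.\ redraw and then absorb the with/without-replacement mismatch via the factor $\tfrac{2n}{2n-1}$; both yield the same constant $4D/c$.
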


We recall the following orthogonal decomposition property of functions on product measures, which can be thought of as a generalization of Fourier analysis on the Boolean cube.
We will give brief self-contained proofs of the relevant facts; a full discussion can be found in \cite[Chapter 8.3]{Odo14}.
Let $(\cX, \P_X)$ be an arbitrary probability space, and let $J$ be a positive integer.
Let $X = (X_1, \ldots, X_J) \in \cX^J$.
For $j\in [J]$, define the operators $\sD_j$ and $\sE_j$ as follows.
For any function $g : \cX^J \to \bR$, $\sE_j g$ is the function satisfying
\[
    \sE_j g(X) = \E_{X_j\sim (\cX, \P_X)} g(X),
\]
where in the right-hand side the coordinate $X_j$ is resampled from $(\cX, \P_X)$.
Let $\sD_j g = g - \sE_j g$.
Note that the operators $\{\sD_j, \sE_j\}_{j\in [J]}$ commute.
For $S\subseteq [J]$, define the functions
\[
    \hg_S = \prod_{j\in S} \sD_j \prod_{j\in [J]\setminus S} \sE_j g.
\]
Note that $g = \sum_{S\subseteq [J]} \hg_S$.
Moreover, $\hg_S$ depends only on the inputs $\{X_j: j\in S\}$.
For any $j$,
\[
    \E_{X_j \sim (\cX, \P_X)} g(X)^2 = \E_{X_j \sim (\cX, \P_X)} \lt[(\sD_j g)(X)^2\rt] + (\sE_j g)(X)^2,
\]
and so by induction
\[
    \E_{X\sim (\cX, \P_X)^{\otimes J}} g(X)^2
    =
    \sum_{S\subseteq [J]}
    \E_{X\sim (\cX, \P_X)^{\otimes J}}
    \hg_S(X)^2.
\]
For $j\in [J]$, define ${\Var}_j g(X) = \E_{X_j\sim (\cX, \P_X)} \lt[(\sD_j g)(X)^2\rt]$.
We begin with the following inequality, which can be considered a converse to the Efron-Stein inequality.

\begin{lemma}
    \label{lem:impossibility-reverse-efron-stein}
    Suppose a function $g: \cX^J \to \bR$ can be written in the form $g(X) = \sum_{i=1}^I g_i(X)$, where each $g_i(X)$ depends on at most $D$ coordinates of $X$. Then,
    \[
        D \Var_{X\sim (\cX, \P_X)^{\otimes J}} g(X)
        \ge
        \sum_{j=1}^J
        \E_{X\sim (\cX, \P_X)^{\otimes J}}
        {\Var}_j g(X).
    \]
\end{lemma}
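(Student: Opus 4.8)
The plan is to diagonalize both sides of the inequality via the orthogonal decomposition $g = \sum_{S\subseteq [J]} \hg_S$ recalled just above the lemma, and to use the hypothesis only through the fact that it forces $\hg_S = 0$ whenever $|S| > D$. This makes the statement a clean ``reverse Efron--Stein'' identity-plus-truncation argument.

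First I would record two consequences of that decomposition. Since the operators $\sD_j, \sE_j$ commute, $\sE_j$ is idempotent, and $\sD_j = I - \sE_j$ (so $\sD_j^2 = \sD_j$ and $\sD_j\sE_j = 0$), one gets $\sD_j \hg_S = \hg_S$ when $j\in S$ and $\sD_j \hg_S = 0$ when $j\notin S$; hence $\sD_j g = \sum_{S\ni j} \hg_S$. Combining this with the orthogonality relation $\E_X \hg_S(X)\hg_{S'}(X) = 0$ for $S\neq S'$ (which underlies the Pythagorean identity stated in the excerpt), and noting $\E_X \Var_j g(X) = \E_X (\sD_j g)(X)^2$, gives
\[
    \E_{X} \Var_j g(X) = \sum_{S\ni j} \E_{X} \hg_S(X)^2 ,
\qquad\text{so}\qquad
    \sum_{j=1}^J \E_{X} \Var_j g(X) = \sum_{S\subseteq [J]} |S|\, \E_{X} \hg_S(X)^2 .
\]
On the other hand, subtracting the constant term $\hg_\emptyset = \E_X g(X)$ from the Pythagorean identity yields $\Var_{X} g(X) = \sum_{S\neq\emptyset} \E_{X} \hg_S(X)^2$.

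It then remains to observe that $\hg_S = 0$ for $|S| > D$. Write $g = \sum_{i=1}^I g_i$ with $g_i$ depending only on coordinates in a set $T_i$ of size at most $D$. For any index $j'$, the average $\sE_{j'} g_i$ still depends only on coordinates in $T_i$ (averaging never introduces dependence on a new coordinate), so $\prod_{j'\notin S} \sE_{j'} g_i$ again depends only on coordinates in $T_i$; applying $\sD_j$ for any $j\notin T_i$ therefore annihilates it. Hence the $S$-component $\widehat{(g_i)}_S$ vanishes unless $S\subseteq T_i$, and in particular whenever $|S| > D$; by linearity $\hg_S = \sum_i \widehat{(g_i)}_S = 0$ for all $|S| > D$. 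Plugging this into the two displays above,
\[
    \sum_{j=1}^J \E_{X} \Var_j g(X) = \sum_{1\le |S|\le D} |S|\, \E_{X} \hg_S(X)^2 \le D \sum_{S\neq\emptyset} \E_{X} \hg_S(X)^2 = D\, \Var_{X} g(X) .
\]

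I do not anticipate a genuine obstacle here: once the orthogonal decomposition of \cite[Chapter 8.3]{Odo14} is in hand, this is a short bookkeeping argument. The one point worth stating carefully is the degree claim — that a function depending on at most $D$ coordinates has all its Fourier--Efron--Stein mass on sets of size at most $D$ — which I would justify exactly as above from the fact that the conditional-expectation operators $\sE_j$ do not create dependence on new coordinates.
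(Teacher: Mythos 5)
Your proof is correct and follows essentially the same route as the paper's: both diagonalize both sides through the Efron--Stein orthogonal decomposition, identify $\sum_j \E\Var_j g = \sum_S |S|\,\E\hg_S^2$ and $\Var g = \sum_{S\neq\emptyset}\E\hg_S^2$, and then observe that $\hg_S \equiv 0$ for $|S|>D$ because $\prod_{j\in S}\sD_j g_i \equiv 0$ whenever some $j\in S$ lies outside the support of $g_i$. Your version just spells out the vanishing step a bit more carefully (via $\widehat{(g_i)}_S = 0$ unless $S\subseteq T_i$), but the underlying argument is identical.
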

\begin{proof}
    By the orthogonal expansion above, we have
    \[
        \Var_{X\sim (\cX, \P_X)^{\otimes J}} g(X)
        =
        \sum_{\substack{S\subseteq [J] \\ S\neq \emptyset}}
        \E_{X\sim (\cX, \P_X)^{\otimes j}}
        \hg_S(X)^2
        \qquad
        \text{and}
        \qquad
        \E_{X\sim (\cX, \P_X)^{\otimes J}}
        {\Var}_j g(X)
        =
        \sum_{\substack{S\subseteq [J] \\ S \ni j}}
        \E_{X\sim (\cX, \P_X)^{\otimes j}}
        \hg_S(X)^2.
    \]
    We claim that for all $S\subseteq [J]$ with $|S|>D$, we have $\hg_S \equiv 0$.
    For each $i\in [I]$, we have $\prod_{j\in S} \sD_j g_i \equiv 0$, because $S$ contains at least one $j$ such that $g_i(X)$ does not depend on $X_j$.
    Thus, $\prod_{j\in S} \sD_j g \equiv 0$, and so $\hg_S \equiv 0$, as desired.
    Hence,
    \[
        \sum_{j=1}^J
        \E_{X\sim (\cX, \P_X)^{\otimes J}}
        {\Var}_j g(X)
        =
        \sum_{\substack{S\subseteq [J] \\ |S| \le D}}
        |S|
        \E_{X\sim (\cX, \P_X)^{\otimes j}}
        \hg_S(X)^2
        \le
        D \Var_{X\sim (\cX, \P_X)^{\otimes J}} g(X).
    \]
\end{proof}

\begin{proof}[Proof of Proposition~\ref{prop:impossibility-total-influence}]
    Note that $\Phi_k(n,m)$ is composed of $km$ i.i.d. literals, and thus can be thought of as the product measure $\unif(\cL)^{\otimes km}$.
    By slight abuse of notation, for $1\le j\le km$, we can define $\sD_j$ and $\sE_j$ as the above operators with respect to the $j$th lexicographic literal $\Phi_{L(j)}$ of $\Phi$.

    For $1\le \ell\le n$, let $f_\ell$ denote the $\ell$th component of $f$.
    By Markov's inequality and the inequality $(a-b)^2 \le 2a^2+2b^2$, we have
    \begin{align*}
        \sum_{j=1}^{km}
        \lambda_j
        &\le
        \sum_{j=1}^{km}
        \f{
            \E_{(\Phi,\Phi') \sim \Phi_k(n,m;j)}
            \norm{f(\Phi) - f(\Phi')}_2^2
        }{
            c\E_{\Phi \sim \Phi_k(n,m)}
            \norm{f(\Phi)}_2^2
        } \\
        &=
        \f{
            \sum_{\ell=1}^n
            \sum_{j=1}^{km}
            \E_{(\Phi,\Phi') \sim \Phi_k(n,m;j)}
            ((\sD_j f_\ell)(\Phi) - (\sD_j f_\ell)(\Phi'))^2
        }{
            c
            \sum_{\ell=1}^n
            \E_{\Phi \sim \Phi_k(n,m)}
            f_\ell(\Phi)^2
        } \\
        &\le
        \f{
            2
            \sum_{\ell=1}^n
            \sum_{j=1}^{km}
            \E_{(\Phi,\Phi') \sim \Phi_k(n,m;j)}
            \lt((\sD_j f_\ell)(\Phi)^2 + (\sD_j f_\ell)(\Phi')^2\rt)
        }{
            c
            \sum_{\ell=1}^n
            \E_{\Phi \sim \Phi_k(n,m)}
            f_\ell(\Phi)^2
        } \\
        &=
        \f{
            4
            \sum_{\ell=1}^n
            \sum_{j=1}^{km}
            \E_{\Phi \sim \Phi_k(n,m)}
            {\Var}_j f_\ell(\Phi)
        }{
            c
            \sum_{\ell=1}^n
            \E_{\Phi \sim \Phi_k(n,m)}
            f_\ell(\Phi)^2
        }.
    \end{align*}
    Now, each $f_\ell$ is a degree-$D$ polynomial in the indicators $\Phi_{i,j,s}$ that $\Phi_{i,j}$ is the $s$th literal in $\cL$.
    So, each monomial of each $f_\ell$ depends on at most $D$ literals of $\Phi$.
    By Lemma~\ref{lem:impossibility-reverse-efron-stein},
    \[
        \sum_{j=1}^{km}
        \E_{\Phi \sim \Phi_k(n,m)}
        {\Var}_j f_\ell(\Phi)
        \le
        D
        \Var_{\Phi\sim \Phi_k(n,m)} f_\ell(\Phi)
        \le
        D
        \E_{\Phi \sim \Phi_k(n,m)}
        f_\ell(\Phi)^2.
    \]
    So, $\sum_{j=1}^{km} \lambda_j \le \f{4D}{c}$.
\end{proof}

\subsection{Bounding the Probability of no Bad Step}
\label{subsec:impossibility-no-bad-step}

Proposition~\ref{prop:impossibility-total-influence} bounds the combined rate of $c$-bad steps.
To derive Proposition~\ref{prop:impossibility-no-cbad-edge}, we must translate this bound on the rate of $c$-bad steps to a bound on the probability that interpolation path never takes a $c$-bad step.
To make the ideas in our argument more clear, we abstract to the following graph theoretic problem, which is interesting in its own right.

Let $\Sigma$ be a set of symbols and $J,T$ be positive integers.
Let $G$ be a graph on $\Sigma^J$, where two nodes are adjacent if their Hamming distance is exactly $1$.
Each edge has a \vocab{direction} $j\in [J]$, the index on which its endpoints disagree.
Let an arbitrary subset of edges be \vocab{bad}; for adjacent vertices $v,w$, let $B(v,w)$ denote the event that the edge $(v,w)$ is bad.
For $j\in [J]$, let $\lambda_j$ denote the fraction of edges in direction $j$ that are bad.
Equivalently, $\lambda_j = \P(B(v,w))$, where $v\sim \unif(G)$ and $w$ is obtained from $v$ by resampling $w_j$ from $\unif\lt(\Sigma \setminus \{v_j\}\rt)$.

Let $\sigma : [T] \to [J]$ be an arbitrary map.
Consider the (lazy) random walk $\vp{0}, \vp{1}, \ldots, \vp{T}$ such that $\vp{0} \sim \unif(G)$ and for $1\le t\le T$, $\vp{t}$ is obtained from $\vp{t-1}$ by resampling $\vp{t}_{\sigma(t)}$ from $\unif(\Sigma)$.

\begin{lemma}
    \label{lem:impossibility-graphwalk}
    With probability at least $|\Sigma|^{-\sum_{t=1}^T \lambda_{\sigma(t)}}$, no step of the random walk $\vp{0}, \vp{1}, \ldots, \vp{T}$ traverses a bad edge.
\end{lemma}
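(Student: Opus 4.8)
The plan is to run a backwards induction over the walk $\vp{0},\ldots,\vp{T}$, combined with Jensen's inequality, so as to reduce the estimate to a per-step convexity inequality. Write $s = |\Sigma|$. For $0 \le t \le T$ and $v \in \Sigma^J$, let $q_t(v)$ be the probability that steps $t+1,\ldots,T$ of the walk traverse no bad edge, conditioned on $\vp{t} = v$; since at each step there is probability $1/s$ of resampling a coordinate to its current value (traversing no edge), $q_t(v) \ge s^{-(T-t)} > 0$, so all logarithms below make sense. Each $\vp{t}$ is marginally $\unif(\Sigma^J)$, so the quantity of interest is $\E_{v\sim\unif(\Sigma^J)}\lt[q_0(v)\rt]$, while $q_T \equiv 1$. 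By Jensen's inequality and a telescoping identity,
\[
    \log \E_{v}\lt[q_0(v)\rt] \ge \E_v\lt[\log q_0(v)\rt] = \sum_{t=1}^T \lt(\E_v\lt[\log q_{t-1}(v)\rt] - \E_v\lt[\log q_t(v)\rt]\rt),
\]
all expectations over $v\sim\unif(\Sigma^J)$, and $\E_v[\log q_T(v)] = 0$. Hence it suffices to prove, for each $1 \le t \le T$, that $\E_v[\log q_{t-1}(v)] - \E_v[\log q_t(v)] \ge -\lambda_{\sigma(t)}\log s$.

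Fix $t$ and put $j = \sigma(t)$. Conditioning on $\vp{t-1} = v$ and using that step $t$ resamples coordinate $j$ uniformly, $q_{t-1}(v) = \f1s\sum_{a\in\Sigma}\ind{(v,v^a)\text{ is not bad}}\,q_t(v^a)$, where $v^a$ is $v$ with its $j$th coordinate reset to $a$ (and $(v,v)$, not being an edge, is declared not bad). Partition $\Sigma^J$ by the base $u \in \Sigma^{J-1}$ obtained by deleting coordinate $j$; above a base $u$ sit the $s$ vertices $u\cdot a$ and the $\binom{s}{2}$ direction-$j$ edges $\{u\cdot a,\, u\cdot b\}$. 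Let $G_u$ be the graph on vertex set $\Sigma$ whose edges are the non-bad direction-$j$ edges above $u$, write $r_a = q_t(u\cdot a) > 0$, and let $N[b]$ be the closed $G_u$-neighbourhood of $b$ (which always contains $b$). Then $q_{t-1}(u\cdot b) = \f1s\sum_{a\in N[b]}r_a$, and if $\lambda(u) = 1 - |E(G_u)|/\binom{s}{2}$ is the fraction of direction-$j$ edges above $u$ that are bad, then $\lambda_j = \E_{u\sim\unif(\Sigma^{J-1})}\lt[\lambda(u)\rt]$. Since $v\sim\unif(\Sigma^J)$ factors as $u\sim\unif(\Sigma^{J-1})$ and $b\sim\unif(\Sigma)$, averaging over $u$ shows it suffices to prove, for every base $u$ (abbreviating $G = G_u$, $\lambda = \lambda(u)$),
\[
    \f1s\sum_{b\in\Sigma}\log\lt(\sum_{a\in N[b]}r_a\rt) - \f1s\sum_{a\in\Sigma}\log r_a \ \ge\ (1-\lambda)\log s .
\]
Because $\sum_{a}\deg_G(a) = 2|E(G)| = s(s-1)(1-\lambda)$, multiplying through by $s$ recasts this as the purely graph-theoretic inequality
\[
    \sum_{b\in\Sigma}\log\lt(\sum_{a\in N[b]}r_a\rt) \ \ge\ \sum_{a\in\Sigma}\log r_a \ +\ \f{\log s}{s-1}\sum_{a\in\Sigma}\deg_G(a),
\]
valid for every graph $G$ on $s$ vertices and every $r\in\bR_{>0}^{\Sigma}$.

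The last inequality is where the real work lies, and I expect it to be the main obstacle. For $s = 2$ it is immediate: $G_u$ is either edgeless (equality) or a single edge, in which case it reads $(r_0+r_1)^2 \ge 4r_0r_1$; this recovers the arguments of \cite{GJW20, Wei20}. For general $s$, the left-hand side, viewed as a function of $(\log r_a)_{a\in\Sigma}\in\bR^{\Sigma}$, is convex (a sum of log-sum-exp functions minus a linear term) and invariant under adding a constant to every $\log r_a$, so it attains a minimum; the elementary bound $\log(1+d) \ge \f{d}{s-1}\log s$ for $0 \le d \le s-1$ (concavity of $d\mapsto\log(1+d)$ against the chord through $(0,0)$ and $(s-1,\log s)$) disposes of the case $r\equiv\text{const}$, which settles it for regular $G$. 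For irregular $G$ the all-ones/tangent estimate is too lossy, and one must instead analyse the minimiser, where the self-consistency relation $r_c\sum_{b\in N[c]}\big(\sum_{a\in N[b]}r_a\big)^{-1} = 1$ holds for all $c$; this makes $P_{cb} = r_c/\sum_{a\in N[b]}r_a$ (for $b\in N[c]$) doubly stochastic, the minimum value equals $-\sum_b\log P_{bb}$, and the target reduces to showing $\prod_b P_{bb} \le s^{-2|E(G)|/(s-1)}$ for this (maximum-entropy) doubly stochastic matrix. Granting the graph inequality — hence the per-step bound — summing over $t = 1,\ldots,T$ and exponentiating gives $\E_v\lt[q_0(v)\rt] \ge s^{-\sum_{t=1}^T\lambda_{\sigma(t)}} = |\Sigma|^{-\sum_{t=1}^T\lambda_{\sigma(t)}}$, as desired.
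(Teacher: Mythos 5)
Your outer framework --- backward induction, passing from $\log\E$ to $\E\log$ via Jensen, telescoping, and reducing to a per-base inequality --- matches the paper's proof exactly, and your reformulation of the per-base goal as the graph-theoretic inequality
\[
    \sum_{b\in\Sigma}\log\Bigl(\sum_{a\in N[b]}r_a\Bigr) \;\ge\; \sum_{a\in\Sigma}\log r_a \;+\; \f{\log s}{s-1}\sum_{a\in\Sigma}\deg_G(a)
\]
(for every graph $G$ on vertex set $\Sigma$ with $s=|\Sigma|$, and every $r\in\bR_{>0}^{\Sigma}$) is a correct and clean equivalent of the per-step estimate the paper actually proves. However, you do not prove this inequality. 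You handle $s=2$ and the case of regular $G$ (where $r\equiv\text{const}$ is a stationary point of the convex objective, so evaluating there suffices), but for irregular $G$ you only reduce to an unproved claim --- that the maximum-entropy doubly-stochastic matrix $P$ supported on the closed-neighbourhood pattern satisfies $\prod_b P_{bb} \le s^{-2|E(G)|/(s-1)}$ --- and then explicitly grant it. This is a genuine gap, and it is precisely the heart of the lemma; the proof cannot be considered complete without it.

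The paper proves the same graph inequality but avoids analysing the minimiser altogether, using two nested applications of Jensen that work edge by edge. Rewrite
\[
    \sum_{a\in N[b]}r_a \;=\; \sum_{a\in\Sigma\setminus\{b\}}\f{1}{s-1}\Bigl(r_b + (s-1)\,\ind{\{a,b\}\in E(G)}\,r_a\Bigr),
\]
apply Jensen to push $\log$ inside the average over $a\neq b$, then sum over $b$ and pair up the $(a,b)$ and $(b,a)$ contributions for each unordered pair $\{a,b\}$. A non-edge pair contributes $\log r_a + \log r_b$ exactly. For an edge pair, a second Jensen step (writing $r_b + (s-1)r_a = \f1s(sr_b) + \f{s-1}{s}(sr_a)$ and similarly for the symmetric term) gives
\[
    \log\bigl(r_b + (s-1)r_a\bigr) + \log\bigl(r_a + (s-1)r_b\bigr) \;\ge\; \log r_a + \log r_b + 2\log s.
\]
Summing yields the right-hand side $\sum_a\log r_a + \f{2|E(G)|}{s-1}\log s$ with no regularity hypothesis. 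You should either verify your doubly-stochastic-matrix claim or, preferably, replace the minimiser analysis with this direct estimate.
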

Note that at each step, the random walk either traverses an edge or does not move; we say that the steps that do not move do not traverse a bad edge.
The lemma is sharp, for example, when all the $\lambda_j$ are $0$ or $1$: in this case, the random walk does not traverse a bad edge if it does not move at all times $t$ with $\lambda_{\sigma(t)} = 1$.
\begin{proof}
    For $v\in \Sigma^J$, let $q(v)$ be the probability that the random walk $\vp{0}, \vp{1}, \ldots, \vp{T}$ does not traverse a bad edge, starting from $\vp{0} = v$.
    We will prove by induction on $T$ that
    \[
        \E_{v\sim \unif(G)}
        \log q(v)
        \ge
        - \log |\Sigma| \cdot \sum_{t=1}^T \lambda_{\sigma(t)}.
    \]
    The lemma then follows from Jensen's inequality, because $\log \E q(v) \ge \E \log q(v)$.

    The base case of the claim, $T=0$, follows trivially.
    For the inductive step, let $\tq(v)$ be the probability that the random walk $\vp{1}, \vp{2}, \ldots, \vp{T}$ does not traverse a bad edge, starting from $\vp{1} = v$.
    Let $j = \sigma(1)$.
    Let $v_{\sim j}\in \Sigma^{J-1}$ denote an element of $\Sigma^J$ with the $j$th coordinate left blank.
    For $s\in \Sigma$, let $v_{\sim j}[s]\in \Sigma^{J}$ denote $v_{\sim j}$ with the $j$th coordinate set to $s$.

    For now, fix some $v_{\sim j} \in \Sigma^{J-1}$.
    For $s\in \Sigma$, we have that
    \begin{align*}
        q(v_{\sim j}[s])
        &=
        \sum_{s'\in \Sigma}
        \f{1}{|\Sigma|}
        \ind{\text{$s'=s$ or $B\lt(v_{\sim j}[s], v_{\sim j}[s']\rt)^c$}}
        \tq(v_{\sim j}[s']) \\
        &=
        \sum_{s' \in \Sigma \setminus \{s\} }
        \f{1}{|\Sigma|-1}
        \lt(
            \f{1}{|\Sigma|}
            \tq(v_{\sim j}[s])
            +
            \f{|\Sigma|-1}{|\Sigma|}
            \ind{B\lt(v_{\sim j}[s], v_{\sim j}[s']\rt)^c}
            \tq(v_{\sim j}[s'])
        \rt).
    \end{align*}
    By Jensen's inequality, this implies
    \[
        \log q(v_{\sim j}[s])
        \ge
        \sum_{s' \in \Sigma \setminus \{s\} }
        \f{1}{|\Sigma|-1}
        \log
        \lt(
            \f{1}{|\Sigma|}
            \tq(v_{\sim j}[s])
            +
            \f{|\Sigma|-1}{|\Sigma|}
            \ind{B\lt(v_{\sim j}[s], v_{\sim j}[s']\rt)^c}
            \tq(v_{\sim j}[s'])
        \rt).
    \]
    Taking an expectation over $s \sim \unif(\Sigma)$, we have
    \begin{align}
        \notag
        \E_{s\sim \unif(\Sigma)}
        \log q(v_{\sim j}[s])
        &\ge
        \sum_{\substack{s, s' \in \Sigma \\ s \neq s'}}
        \f{1}{|\Sigma|(|\Sigma|-1)} \lt[
            \log
            \lt(
                \f{1}{|\Sigma|}
                \tq(v_{\sim j}[s])
                +
                \f{|\Sigma|-1}{|\Sigma|}
                \ind{B\lt(v_{\sim j}[s], v_{\sim j}[s']\rt)^c}
                \tq(v_{\sim j}[s'])
            \rt)
        \rt] \\
        &=
        \label{eq:impossibility-graphwalk-slice-ev}
        \sum_{\substack{s, s' \in \Sigma \\ s \neq s'}}
        \f{1}{2|\Sigma|(|\Sigma|-1)}
        \xi(v_{\sim j}, s, s'),
    \end{align}
    where for $s\neq s'$,
    \begin{align*}
        \xi(v_{\sim j}, s, s')
        &=
        \log
        \lt(
            \f{1}{|\Sigma|}
            \tq(v_{\sim j}[s])
            +
            \f{|\Sigma|-1}{|\Sigma|}
            \ind{B\lt(v_{\sim j}[s], v_{\sim j}[s']\rt)^c}
            \tq(v_{\sim j}[s'])
        \rt) \\
        &\qquad +
        \log
        \lt(
            \f{1}{|\Sigma|}
            \tq(v_{\sim j}[s'])
            +
            \f{|\Sigma|-1}{|\Sigma|}
            \ind{B\lt(v_{\sim j}[s], v_{\sim j}[s']\rt)^c}
            \tq(v_{\sim j}[s])
        \rt).
    \end{align*}
    If $B\lt(v_{\sim j}[s], v_{\sim j}[s']\rt)$ holds, then $\xi(v_{\sim j}, s, s') = \log \tq(v_{\sim j}[s]) + \log \tq(v_{\sim j}[s']) - 2 \log |\Sigma|$.
    Otherwise, by Jensen's inequality we have
    \[
        \log
        \lt(
            \f{1}{|\Sigma|}
            \tq(v_{\sim j}[s])
            +
            \f{|\Sigma|-1}{|\Sigma|}
            \ind{B\lt(v_{\sim j}[s], v_{\sim j}[s']\rt)^c}
            \tq(v_{\sim j}[s'])
        \rt)
        \ge
        \f{1}{|\Sigma|}
        \log \tq(v_{\sim j}[s]) +
        \f{|\Sigma|-1}{|\Sigma|}
        \log \tq(v_{\sim j}[s'])
    \]
    and similarly for the other term of $\xi(v_{\sim j}, s, s')$.
    In this case, $\xi(v_{\sim j}, s, s') \ge \log \tq(v_{\sim j}[s]) + \log \tq(v_{\sim j}[s'])$.
    So, in all cases
    \[
        \xi(v_{\sim j}, s, s')
        \ge
        \log \tq(v_{\sim j}[s])
        + \log \tq(v_{\sim j}[s'])
        - 2 \ind{B\lt(v_{\sim j}[s], v_{\sim j}[s']\rt)} \log |\Sigma|.
    \]
    Substituting into \eqref{eq:impossibility-graphwalk-slice-ev}, we have
    \[
        \E_{s\sim \unif(\Sigma)}
        \log q(v_{\sim j}[s])
        \ge
        \E_{s\sim \unif(\Sigma)}
        \log \tq(v_{\sim j}[s])
        -
        \log |\Sigma| \cdot
        \sum_{\substack{s, s' \in \Sigma \\ s \neq s'}}
        \f{\ind{B\lt(v_{\sim j}[s], v_{\sim j}[s']\rt)}}{|\Sigma|(|\Sigma|-1)}.
    \]
    Taking an expectation over $v_{\sim j}$ yields
    \[
        \E_{v\sim \unif(G)} \log q(v)
        \ge
        \E_{v\sim \unif(G)} \log \tq(v)
        -
        \log |\Sigma| \cdot \lambda_j.
    \]
    By induction, we have
    \[
        \E_{v\sim \unif(G)} \log \tq(v)
        \ge
        -\log |\Sigma| \cdot
        \sum_{t=2}^T
        \lambda_{\sigma(t)},
    \]
    and the result follows.
\end{proof}

\subsection{Completing the Proof of Stability}
\label{subsec:impossibility-ldp-stability-proof}

\begin{proof}[Proof of Proposition~\ref{prop:impossibility-no-cbad-edge}]
    Our interpolation scheme can be modeled as the random walk in Subsection~\ref{subsec:impossibility-no-bad-step}, with $\Sigma = \cL$, $J=km$, $T = k^2m$, $\sigma(t)$ defined in Definition~\ref{defn:interpolation}, and where the bad edges are the $c$-bad edges.
    This correspondence is consistent because the steps in the interpolation path where a formula transitions to itself are never $c$-bad.

    Since $\sigma$ maps to every value in $[km]$ $k$ times and $|\cL| = 2n$, Proposition~\ref{prop:impossibility-total-influence} and Lemma~\ref{lem:impossibility-graphwalk} imply that the probability of never traversing a $c$-bad edge is at least $(2n)^{-4Dk/c}$.
\end{proof}

\begin{proof}[Proof of Proposition~\ref{prop:impossibility-prob-bounds}(\ref{itm:impossibility-prob-bound-consecutive})]
    Set $c = \f{\bp - \bm}{\gamma k}$.
    Let $\Snobad$ be the event that for all $1\le t\le T$, $(\phip{t-1}, \phip{t})$ is not $c$-bad with respect to $f$.
    By Proposition~\ref{prop:impossibility-no-cbad-edge}, $\P(\Snobad) \ge (2n)^{-4Dk^2\gamma/(\bp - \bm)}$.

    By a union bound, $\P(\Svalid) \ge 1 - (T+1) \delta$.
    Thus, $\P(\Svalid \cap \Snobad) \ge (2n)^{-4Dk^2\gamma/(\bp - \bm)} - (T+1) \delta$.
    We claim that on $\Svalid \cap \Snobad$, the event $\Sconsec$ also occurs.

    Suppose for sake of contradiction that $\Svalid \cap \Snobad$ holds and for some $1\le t \le T$, we have that $\Delta(\xp{t-1}, \xp{t}) > \f{\bp - \bm}{2k}$.
    Because $(\phip{t-1}, \phip{t})$ is not $c$-bad, we have
    \[
        \norm{f(\phip{t-1}) - f(\phip{t})}_2^2
        \le
        c\E_{\Phi \sim \Phi_k(n,m)} \norm{f(\Phi)}_2^2
        \le
        c\gamma n
        =
        \f{\bp - \bm}{k} n.
    \]
    Let $I = \{i\in [n]: \xp{t-1}_i \neq \xp{t}_i\}$, so $|I| > \f{\bp - \bm}{2k} n$.
    Define
    \[
        B^{(t)}
        =
        \lt\{
            i\in [n]:
            \xp{t}_i \neq (\round \circ f)(\phip{t})_i
        \rt\}.
    \]
    Because $\xp{t}_i = \cA(\phip{t})_i$ and the assistance subroutine $\cB$ in $\cA$ can edit only an $\eta$ fraction of bits of the assignment, $|B^{(t)}| \le \eta n = \f{\bp - \bm}{8k}n$.
    For similarly defined $B^{(t-1)}$, we likewise have $|B^{(t-1)}|\le \f{\bp - \bm}{8k}n$.

    Let $J = I \setminus (B^{(t-1)} \cup B^{(t)})$, so $|J| > \f{\bp - \bm}{4k} n$.
    For all $i \in J$, one of $f_i(\phip{t-1})$ and $f_i(\phip{t})$ is at least $1$ and the other is at most $-1$, so $|f_i(\phip{t-1}) - f_i(\phip{t})| \ge 2$.
    So,
    \[
        \norm{f(\phip{t-1}) - f(\phip{t})}_2^2
        \ge
        \sum_{i\in J}
        |f_i(\phip{t-1}) - f_i(\phip{t})|^2
        >
        \f{\bp - \bm}{k} n.
    \]
    This is a contradiction.
    Therefore $\Sconsec \supseteq \Svalid \cap \Snobad$, and so
    \[
        \P(\Svalid \cap \Sconsec)
        \ge
        \P(\Svalid \cap \Snobad)
        \ge
        (2n)^{-4Dk^2\gamma/(\bp - \bm)} - (T+1) \delta.
    \]
\end{proof}

\section{Proof of Impossibility for Local Algorithms}
\label{sec:local-algs}

This section proves our impossibility result for local algorithms, Theorem~\ref{thm:impossibility-local-algs}.
Throughout, fix $\kappa > \kappast$ and $r\in \bN$.
Fix a probability space $(\Omega, \P_\omega)$, and let $\cA$ be an $r$-local algorithm that, on input $\Phi\in \Omega_k(n,m)$ with factor graph $(G,\rho)$, samples internal randomness $\varphi \sim (\Omega, \P_\omega)^{\otimes (V_G \cup E_G)}$.

Let $\varphi_V = \varphi\big|_{V_G}$ and $\varphi_E = \varphi\big|_{E_G}$.
We will actually prove Theorem~\ref{thm:impossibility-local-algs} conditioned on any realization of $\varphi_V$.
Fix once and for all a realization of $\varphi_V$; all probabilities and expectations in this section will implicitly be conditioned on this realization.

\subsection{A Different Interpolation}
\label{subsec:local-algs-interpolation}

Instead of the interpolation path of $k^2m$ problem instances used in the proof of Theorem~\ref{thm:impossibility}, we now use an interpolation structured as $k$ paths of length $km$ originating at a common point.
We also couple to this interpolation the internal randomness $\varphi_E$ of $\cA$ run on these problem instances.

\begin{definition}[Interplation structure]
    Let $\oT=km$.
    We will sample $\phip{0}$ and $\phip{\ell,t} \in \Omega_k(n,m)$ for $1\le \ell \le k$, $1\le t\le \oT$.
    Let the factor graphs of these $k$-SAT instances be $(\Gp{0},\rhop{0})$ and $(\Gp{\ell,t},\rhop{\ell,t})$.
    We also sample maps $\varphiep{0} : E_{\Gp{0}} \to \Omega$ and $\varphiep{\ell,t} : E_{\Gp{\ell,t}} \to \Omega$.

    We sample $\phip{0} \sim \Phi_k(n,m)$ and $\varphiep{0} \sim (\Omega, \P_\omega)^{\otimes E_{\Gp{0}}}$.
    For $1\le \ell \le k$, $1\le t\le \oT$, we obtain $\phip{\ell,t}$ from $\phip{\ell,t-1}$ (take $\phip{\ell,0} = \phip{0}$ for all $\ell$) by resampling $\phip{\ell,t}_{L(t)}$ from $\unif(\cL)$.
    (Recall that $L(t)$ is the $t$th pair $(a,b) \in [m]\times [k]$ in lexicographic order.)

    This resampling deletes an edge $e$ from $\Gp{\ell,t-1}$, adds an edge $e'$ to $\Gp{\ell,t}$ (possibly in the same location), and samples $\rhop{\ell,t}(e) \sim \unif(\{\T,\F\})$.
    We obtain $\varphiep{\ell,t}$ from $\varphiep{\ell,t-1}$ by deleting the entry for $e'$ and sampling $\varphiep{\ell, t}(e) \sim (\Omega, \P_\omega)$.
\end{definition}

In other words, starting from a random $k$-SAT instance we sample $k$ interpolation paths, where in each path we resample the literals one by one in the same order.
Resampling a literal resamples an edge of the factor graph, and we also resample the output of $\varphi_E$ on that edge.

Note that each $(\phip{\ell, t}, \varphiep{\ell, t})$ is marginally distributed as $(\Phi, \varphi_E)$ where $\Phi \sim \Phi_k(n,m)$ and $\varphi_E \sim (\Omega, \P_\omega)^{\otimes E_G}$, where $(G,\rho)$ is the factor graph of $\Phi$.
Moreover, $(\phip{\ell, \oT}, \varphiep{\ell, \oT})$ is independent of $(\phip{0}, \varphiep{0})$ and $(\phip{\ell', t}, \varphiep{\ell', t})$ for all $\ell'\neq \ell$.

\subsection{Selecting Problem Instances Yielding the Forbidden Structure}

We set $\bm, \bp, \eps$ as in Subsection~\ref{subsec:impossibility-outline}.
Recall that these numbers depend only on $\kappa$, and $\f{\beta+\eps}{1-\beta e^{-(\beta-1)}} \le \kappa$ for all $\beta \in [\bm, \bp]$.
Moreover, set $\bz = \f{\bm+\bp}{2}$.

In the proof of Theorem~\ref{thm:impossibility}, we selected the problem instances $\phip{t_0}, \phip{t_1}, \ldots, \phip{t_k}$ where the algorithm outputs form a forbidden structure after observing the entire interpolation.
Here, because we can leverage concentration properties of local algorithms (instead of stability properties of low degree polynomials), we know in advance which problem instances to choose.
This allows us to immediately restrict our attention to $k+1$ problem instances, instead of the full interpolation structure.

We will choose the problem instances $\phip{0}$ and $\phip{\ell, t_\ell}$ for $1\le \ell \le k$, for indices $t_1,\ldots,t_k$ we now determine.
Consider random variables (which depend on the $t_\ell$)
\[
    \xp{0} = \cA(\phip{0}, \varphiep{0})
    \qquad
    \text{and}
    \qquad
    \xp{\ell} = \cA(\phip{\ell, t_\ell}, \varphiep{\ell, t_{\ell}})
\]
for $1\le \ell \le k$.
Here, $\cA(\Phi, \varphi_E)$ denotes $\cA$ run with input $\Phi$ and internal randomness $\varphi_E$ (we suppress the dependence on $\varphi_V$, which is fixed).

We inductively define $t_\ell$ in terms of $t_1,\ldots,t_{\ell-1}$ as the smallest number satisfying $1\le t_\ell \le \oT$ and
\[
    \E H\lt(\pi(\xp{\ell} | \xp{0}, \ldots, \xp{\ell-1})\rt)
    \ge
    \bz \f{\log k}{k}.
\]
If no such $t_\ell$ exists, set $t_\ell = \oT$ and say $\ell$ is \emph{deficient}.
Note that the $t_\ell$ are a deterministic function of $\cA$ and $\varphi_V$.
For $1\le \ell \le k$, define $\phip{\ell} = \phip{\ell, t_\ell}$ and $\varphiep{\ell} = \varphiep{\ell, t_\ell}$.
Let $(\Gp{\ell},\rhop{\ell})$ be the factor graph of $\phip{\ell}$.

The following lemma follows from the stability of the expected conditional overlap entropy when we resample one literal of $\phip{\ell, t}$.
We defer its proof to Subsection~\ref{subsec:impossibility-local-bdd-differences}.
The exponent $-\f12$ can be replaced by any constant larger than $-1$.
\begin{lemma}
    \label{lem:local-alg-e-bd}
    If $\ell \in [k]$ is not deficient, then
    \[
        \bz \f{\log k}{k}
        \le
        \E H\lt(\pi(\xp{\ell} | \xp{0}, \ldots, \xp{\ell-1})\rt)
        \le
        \bz \f{\log k}{k} + O(n^{-1/2}).
    \]
\end{lemma}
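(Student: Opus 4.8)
The lower bound requires no work: when $\ell$ is not deficient, $t_\ell$ is by construction the smallest index in $[\oT]$ at which $\E H\lt(\pi(\xp{\ell}| \xp{0},\ldots,\xp{\ell-1})\rt)\ge\bz\f{\log k}{k}$, so that inequality is exactly the lower bound. For the upper bound, write, for $0\le t\le\oT$,
\[
  h(t)=\E H\lt(\pi\lt(\cA(\phip{\ell,t},\varphiep{\ell,t})| \xp{0},\ldots,\xp{\ell-1}\rt)\rt),
\]
with the convention $\phip{\ell,0}=\phip{0}$ and $\varphiep{\ell,0}=\varphiep{0}$. Then $h$ is a \emph{deterministic} sequence and $t_\ell$ is the smallest $t\ge1$ with $h(t)\ge\bz\f{\log k}{k}$, so by minimality $h(t_\ell-1)<\bz\f{\log k}{k}$; this is valid even when $t_\ell=1$, since $\cA(\phip{0},\varphiep{0})=\xp{0}$ gives $h(0)=0$ by Fact~\ref{fac:overlap-profile-properties}(\ref{itm:overlap-profile-property-duplication}). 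It therefore suffices to establish the increment bound $|h(t)-h(t-1)|=O(n^{-1/2})$ for every $1\le t\le\oT$, as this yields $h(t_\ell)\le h(t_\ell-1)+O(n^{-1/2})<\bz\f{\log k}{k}+O(n^{-1/2})$.

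To bound the increment, fix $t$ and couple $\phip{\ell,t-1}$ and $\phip{\ell,t}$ through the interpolation structure: they agree except that one literal, incident in the factor graph to a fixed clause vertex $c_a$, has been resampled, along with the polarity $\rho$ and the randomness $\varphi$ on the affected edge. Put $x=\cA(\phip{\ell,t-1},\varphiep{\ell,t-1})$ and $x'=\cA(\phip{\ell,t},\varphiep{\ell,t})$. The essential point is $r$-locality: the decorated $r$-neighborhood of a variable vertex $v_a$ can change between the two instances only if $v_a$ lies within graph-distance $r+1$ of $c_a$ in one of the two factor graphs; letting $Z$ be the number of such $a$, we get $\Delta(x,x')\le Z/n$. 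Since a conditional overlap entropy always lies in $[0,\log2]$, applying Lemma~\ref{lem:impossibility-hamming-to-entropy} to $(x,\xp{0},\ldots,\xp{\ell-1})$ versus $(x',\xp{0},\ldots,\xp{\ell-1})$ and taking expectations gives
\[
  |h(t)-h(t-1)|\;\le\;\E\lt[H\lt(\min\{Z/n,1/2\}\rt)\rt]\;+\;(\log2)\,\P\lt(Z>n/2\rt).
\]

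The two terms are controlled through the geometry of the random factor graph. Marginally $\phip{\ell,t-1},\phip{\ell,t}\sim\Phi_k(n,m)$, so by exchangeability of clauses $Z$ is at most the sum of the sizes of the radius-$(r+1)$ balls around $c_a$ in two factor graphs each distributed as that of a fresh sample; a routine first-moment (branching-process) estimate bounds each such expected size by a constant $K=K(k,r,\kappa)$ independent of $n$, so $\E Z\le2K$. As $\beta\mapsto H(\min\{\beta,1/2\})$ is concave and nondecreasing, Jensen's inequality gives $\E[H(\min\{Z/n,1/2\})]\le H(2K/n)\le\f{2K}{n}\log\f{en}{2K}=O(n^{-1}\log n)$. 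For the tail, each variable degree in $\Phi_k(n,m)$ is $\Bin(km,1/n)$, so for a large enough constant $C$ every variable has degree at most $C\log n$ outside an event of probability $n^{-\omega(1)}$; on the complementary event $Z\le2(kC\log n+1)^{r+1}=\polylog(n)<n/2$ once $n$ is large, whence $\P(Z>n/2)=n^{-\omega(1)}$. Hence $|h(t)-h(t-1)|=O(n^{-1}\log n)=O(n^{-1/2})$, which together with $h(t_\ell-1)<\bz\f{\log k}{k}$ finishes the upper bound. (The exponent $-1/2$ is not tight; the argument in fact gives $O(n^{-1}\log n)$, hence $O(n^{-c})$ for any $c<1$, matching the remark after the lemma.)

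The only step that is more than bookkeeping is the localization in the second paragraph: one must argue carefully that $r$-locality of $\cA$ confines the effect of a single resampled literal to a bounded neighborhood of $c_a$, and that this neighborhood is small with overwhelming probability even though the factor graph has unbounded — though only logarithmic — maximum variable degree. With that localization in place, the reduction via the minimality of $t_\ell$, the passage from a Hamming-distance bound to an overlap-entropy bound through Lemma~\ref{lem:impossibility-hamming-to-entropy}, and the concavity/Jensen estimate are all standard.
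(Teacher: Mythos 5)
Your proof is correct and follows the same overall strategy as the paper's: the lower bound is immediate from the definition of $t_\ell$, and the upper bound reduces to showing $|h(t)-h(t-1)| = O(n^{-1/2})$ via the minimality of $t_\ell$ (hence $h(t_\ell-1) < \bz\f{\log k}{k}$) and Lemma~\ref{lem:impossibility-hamming-to-entropy}, with the key step being that resampling a single literal, by $r$-locality of $\cA$, can only change the output near the affected clause vertex.

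Where you diverge is in how you control the increment. The paper conditions on the deterministic event $S$ that both $\phip{\ell,t-1}$ and $\phip{\ell,t}$ are $r$-locally small (all $r$-balls of size at most $n^{1/3}$), invoking Fact~\ref{fac:r-locally-small} to get $\P(S^c) \le \exp(-\Omega(n^{1/3}))$; it then bounds the entropy change by $H(O(n^{1/3})/n) = O(n^{-2/3}\log n)$ on $S$ and by $\log 2$ off $S$. You instead bound $\Delta(x,x') \le Z/n$ for a random $Z$ (the number of affected variables), apply Jensen to the concave function $\beta \mapsto H(\min\{\beta/n, 1/2\})$ using a first-moment bound $\E Z = O(1)$, and handle $Z > n/2$ with a separate degree tail estimate. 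Both are valid; yours is more self-contained (it does not depend on Fact~\ref{fac:r-locally-small} or the neighborhood-growth machinery behind it), computes a sharper exponent $O(n^{-1}\log n)$, and avoids the paper's mild imprecision of invoking the $r$-locally-small condition when what is really needed is a bound on the $(r+1)$-ball around the changed clause vertex. The paper's route has the minor advantage of reusing machinery it has already set up for Proposition~\ref{prop:impossibility-local-prob-bounds}(\ref{itm:impossibility-local-prob-bound-conc}).
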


\subsection{Outline of the Proof}

For the rest of this proof, take $\eta = \f{\bp-\bm}{32k^2}$ and $\nu = \f{1}{k^2 2^k}$.
We now define events $\Svalid, \Sconc, \Sindep, \Sogp$, which are measurable in the $(\phip{\ell}, \varphiep{\ell})$ for $0\le \ell \le k$.

Let $\Svalid$ be the event that for all $0\le \ell \le k$, $\xp{\ell}$ $(\eta, \nu)$-satisfies $\phip{\ell}$.
Let $\Sconc$ be the event that for all $1\le \ell \le k$,
\[
    \lt|
        H\lt(\pi(\xp{\ell} | \xp{0}, \ldots, \xp{\ell-1})\rt)
        -
        \E H\lt(\pi(\xp{\ell} | \xp{0}, \ldots, \xp{\ell-1})\rt)
    \rt|
    \le
    \f{1}{\log n}.
\]
For $1\le \ell \le k$, let $\Sindep^\ell$ be the event that there does not exist an assignment $y\in \{\T,\F\}^n$ such that
\begin{enumerate}[label=(IND-\Alph*), ref=IND-\Alph*, leftmargin=50pt]
    \item \label{itm:def-sindep-local-satisfy} $y$ $\nu$-satisfies $\phip{\ell}$;
    \item \label{itm:def-sindep-local-entropy} $H\lt(\pi(y | \xp{0}, \ldots,\xp{\ell-1}) \rt) \le \bp \f{\log k}{k}$.
\end{enumerate}
Let $\Sogp$ be the event that there does not exist assignments $\yp{0},\ldots,\yp{k}\in \{\T,\F\}^n$ such that
\begin{enumerate}[label=(OGP-\Alph*), ref=OGP-\Alph*, leftmargin=50pt]
    \item \label{itm:def-sogp-local-satisfy} For all $0\le \ell\le k$, $\xp{\ell}$ $\nu$-satisfies $\phip{\ell}$;
    \item \label{itm:def-sogp-local-entropy} For all $1\le \ell\le k$, $H\lt(\pi(\yp{\ell} | \yp{0}, \ldots,\yp{\ell-1}) \rt) \in \lt[\bm \f{\log k}{k}, \bp \f{\log k}{k}\rt]$.
\end{enumerate}

Finally, define
\[
    p = \P \lt[
        \text{$\cA(\Phi, \varphi_E)$ $(\eta, \nu)$-satisfies $\Phi$}
    \rt],
\]
where the probability is over the randomness of $\Phi \sim \Phi_k(n,m)$ and $\varphi_E \sim (\Omega, \P_\omega)^{\otimes E_G}$, where $(G,\rho)$ is the factor graph of $\Phi$.
This is the probability upper bounded by Theorem~\ref{thm:impossibility-local-algs}.

We will derive Theorem~\ref{thm:impossibility-local-algs} from the following two propositions.

\begin{proposition}
    \label{prop:impossibility-local-algs-events-relation}
    For all sufficiently large $k$ and $n$, the following relations hold.
    \begin{enumerate}[label=(\alph*), ref=\alph*]
        \item \label{itm:impossibility-local-algs-deficient}
        If $\ell\in [k]$ is deficient, then $\Svalid \cap \Sconsec \cap \Sindep^\ell = \emptyset$.
        \item \label{itm:impossibility-local-algs-no-deficient}
        If no $\ell \in [k]$ is deficient, then $\Svalid \cap \Sconsec \cap \Sogp = \emptyset$.
    \end{enumerate}
\end{proposition}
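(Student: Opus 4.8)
The plan is to mirror the proof of Proposition~\ref{prop:impossibility-events-relation}, with the concentration event $\Sconc$ (the event written $\Sconsec$ in the proposition statement) and Lemma~\ref{lem:local-alg-e-bd} playing the role that the ``walk to the threshold'' argument played there. The one genuinely new ingredient I need — and the step I expect to be the main, if mild, obstacle — is a stability estimate for the \emph{conditional} overlap entropy when \emph{all} of its arguments are perturbed slightly in Hamming distance, not merely the first one handled by Lemma~\ref{lem:impossibility-hamming-to-entropy}. This follows by combining Lemma~\ref{lem:impossibility-hamming-to-entropy}, the chain rule Fact~\ref{fac:overlap-profile-properties}(\ref{itm:overlap-profile-property-chainrule}), and the observation that the overlap entropy $H(\pi(\cdot))$ is invariant under permuting its arguments (the multiset of values $\{\pi_{S,T}\}$ is). Concretely: if $\Delta(\zp{j},\yp{j})\le\eta\le\f{1}{2}$ for $0\le j\le\ell$, then changing one argument at a time and, at each step, reordering so the changed argument is last, Lemma~\ref{lem:impossibility-hamming-to-entropy} shows each swap moves $H(\pi(\cdot))$ by at most $H(\eta)$; hence $|H(\pi(\zp{0},\dots,\zp{\ell}))-H(\pi(\yp{0},\dots,\yp{\ell}))|\le(\ell+1)H(\eta)$, and via the chain rule $|H(\pi(\zp{\ell}|\zp{0},\dots,\zp{\ell-1}))-H(\pi(\yp{\ell}|\yp{0},\dots,\yp{\ell-1}))|\le(2\ell+1)H(\eta)\le(2k+1)H(\eta)$. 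Using $H(\eta)\le\eta\log(e/\eta)$ with $\eta=\f{\bp-\bm}{32k^2}$ gives $(2k+1)H(\eta)<\f{\bp-\bm}{4}\cdot\f{\log k}{k}$ for all large $k$; making $(2k+1)H(\eta)$ a small fraction of the margins $\f{\bp-\bm}{2}\cdot\f{\log k}{k}$ separating $\bz$ from $\bm$ and $\bp$ is exactly what the choice of $\eta$ buys, and is the point one must be careful about. (The other two error scales, $1/\log n$ from $\Sconc$ and $n^{-1/2}$ from Lemma~\ref{lem:local-alg-e-bd}, are harmless, as they vanish when $n\to\infty$ with $k$ fixed.)

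For part~(\ref{itm:impossibility-local-algs-deficient}) I argue by contradiction. Suppose $\Svalid\cap\Sconc\cap\Sindep^\ell$ holds with $\ell$ deficient. By $\Svalid$ there is $y'\in\{\T,\F\}^n$ with $\Delta(\xp{\ell},y')\le\eta$ that $\nu$-satisfies $\phip{\ell}$; since $\Sindep^\ell$ holds, $y'$ must violate~(\ref{itm:def-sindep-local-entropy}), so $H(\pi(y'|\xp{0},\dots,\xp{\ell-1}))>\bp\f{\log k}{k}$, whence Lemma~\ref{lem:impossibility-hamming-to-entropy} gives $H(\pi(\xp{\ell}|\xp{0},\dots,\xp{\ell-1}))>\bp\f{\log k}{k}-H(\eta)$. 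On the other hand, $\ell$ being deficient means $t_\ell=\oT$ and $\E H(\pi(\xp{\ell}|\xp{0},\dots,\xp{\ell-1}))<\bz\f{\log k}{k}$, so on $\Sconc$ we get $H(\pi(\xp{\ell}|\xp{0},\dots,\xp{\ell-1}))<\bz\f{\log k}{k}+\f{1}{\log n}$. Combining forces $(\bp-\bz)\f{\log k}{k}<H(\eta)+\f{1}{\log n}$; but $\bp-\bz=\f{\bp-\bm}{2}$, while the right-hand side is below $\f{\bp-\bm}{4}\cdot\f{\log k}{k}+\f{1}{\log n}<\f{\bp-\bm}{2}\cdot\f{\log k}{k}$ once $k$ and then $n$ are large --- a contradiction.

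For part~(\ref{itm:impossibility-local-algs-no-deficient}) I again argue by contradiction. Suppose $\Svalid\cap\Sconc\cap\Sogp$ holds and no $\ell\in[k]$ is deficient. By $\Svalid$, choose for each $0\le\ell\le k$ an assignment $\yp{\ell}\in\{\T,\F\}^n$ with $\Delta(\xp{\ell},\yp{\ell})\le\eta$ that $\nu$-satisfies $\phip{\ell}$; then $(\yp{0},\dots,\yp{k})$ meets~(\ref{itm:def-sogp-local-satisfy}). For~(\ref{itm:def-sogp-local-entropy}): since no $\ell$ is deficient, Lemma~\ref{lem:local-alg-e-bd} gives $\bz\f{\log k}{k}\le\E H(\pi(\xp{\ell}|\xp{0},\dots,\xp{\ell-1}))\le\bz\f{\log k}{k}+O(n^{-1/2})$ for every $\ell\in[k]$, so on $\Sconc$ the value $H(\pi(\xp{\ell}|\xp{0},\dots,\xp{\ell-1}))$ is within $\f{1}{\log n}+O(n^{-1/2})$ of $\bz\f{\log k}{k}$; by the stability estimate of the first paragraph, $H(\pi(\yp{\ell}|\yp{0},\dots,\yp{\ell-1}))$ is within $(2k+1)H(\eta)+\f{1}{\log n}+O(n^{-1/2})$ of $\bz\f{\log k}{k}$. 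Since $(2k+1)H(\eta)<\f{\bp-\bm}{4}\cdot\f{\log k}{k}$ for large $k$ and the remaining error is $o(1)$ in $n$, for $k$ and then $n$ large the total deviation is $<\f{\bp-\bm}{2}\cdot\f{\log k}{k}$; as $\bz=\f{\bm+\bp}{2}$, this places $H(\pi(\yp{\ell}|\yp{0},\dots,\yp{\ell-1}))$ inside $[\bm\f{\log k}{k},\bp\f{\log k}{k}]$, i.e.\ $(\yp{0},\dots,\yp{k})$ meets~(\ref{itm:def-sogp-local-entropy}) too. Thus $(\yp{0},\dots,\yp{k})$ is a structure forbidden by $\Sogp$ --- a contradiction. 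Both parts then hold for all sufficiently large $k$ and $n$.
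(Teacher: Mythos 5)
Your proof is correct and follows essentially the same route as the paper's. The one piece you rederive from scratch — that a small Hamming perturbation of \emph{all} the arguments moves the conditional overlap entropy by at most about $(2k+1)H(\eta)$, via permutation-invariance of the joint overlap entropy, Lemma~\ref{lem:impossibility-hamming-to-entropy}, and the chain rule — is exactly what the paper packages as Corollary~\ref{cor:entropy-stability} and then "applies repeatedly"; the paper gets the bound $2(k+1)H(\eta)$, you get $(2k+1)H(\eta)$, an immaterial discrepancy. Your contradiction framing is logically equivalent to the paper's direct construction of the forbidden structure, and you correctly note that in part~(\ref{itm:impossibility-local-algs-deficient}) only the last argument of the conditional entropy changes, so only a single application of Lemma~\ref{lem:impossibility-hamming-to-entropy} is needed there, whereas the full stability estimate is needed in part~(\ref{itm:impossibility-local-algs-no-deficient}). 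You also correctly flag that the proposition statement's $\Sconsec$ must mean $\Sconc$.
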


\begin{proposition}
    \label{prop:impossibility-local-prob-bounds}
    For all sufficiently large $k$ and $n$, the following inequalities hold.
    \begin{enumerate}[label=(\alph*), ref=\alph*]
        \item \label{itm:impossibility-local-prob-bound-valid}
        $\P(\Svalid) \ge p^{k+1}$.
        \item \label{itm:impossibility-local-prob-bound-conc}
        $\P(\Sconc) \ge 1 - \exp(-\tOmega(n^{1/3}))$.
        \item \label{itm:impossibility-local-prob-bound-indep}
        If $\ell \in [k]$ is deficient, then $\P(\Sindep^\ell) \ge 1 - \exp(-\Omega(n))$.
        \item \label{itm:impossibility-local-prob-bound-ogp}
        If no $\ell \in [k]$ is deficient, then $\P(\Sogp) \ge 1 - \exp(-\Omega(n))$.
    \end{enumerate}
\end{proposition}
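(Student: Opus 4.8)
The plan is to handle the four parts separately. Part (\ref{itm:impossibility-local-prob-bound-valid}) is a correlation inequality exploiting the common root $\phip{0}$ of the interpolation; parts (\ref{itm:impossibility-local-prob-bound-indep}) and (\ref{itm:impossibility-local-prob-bound-ogp}) are essentially verbatim repetitions of the first-moment arguments of Subsections~\ref{subsec:impossibility-ogp-independent} and Section~\ref{sec:impossibility-multi-ogp}; and part (\ref{itm:impossibility-local-prob-bound-conc}) is a concentration estimate for local algorithms, which I expect to be the main obstacle.

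\textbf{Part (\ref{itm:impossibility-local-prob-bound-valid}).} Encode a $k$-SAT instance together with its edge decorations as a vector of $km$ i.i.d.\ coordinates, one per clause-literal slot, recording the literal drawn from $\unif(\cL)$ and the decoration $\varphiep{}$; let $\psi$ be the $\{0,1\}$-valued function of this vector that indicates whether $\cA$ (run with the fixed $\varphi_V$) $(\eta,\nu)$-satisfies the instance, so that $\E\psi = p$. Let $Z$ be the coordinate vector of $\phip{0}$. By construction $\phip{\ell}$ agrees with $\phip{0}$ on the slots $L(j)$ with $j > t_\ell$ and has fresh, mutually independent coordinates on the slots $L(j)$ with $j\le t_\ell$; since the $t_\ell$ are deterministic, after relabeling the paths we may assume $t_1\le\cdots\le t_k$. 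Condition on $Z$: then $\{\xp{0}\text{ }(\eta,\nu)\text{-satisfies }\phip{0}\}$ is determined, the events $\{\xp{\ell}\text{ }(\eta,\nu)\text{-satisfies }\phip{\ell}\}$ for $\ell\ge1$ are conditionally independent, and $\P(\xp{\ell}\text{ }(\eta,\nu)\text{-satisfies }\phip{\ell}\mid Z) = M_\ell$, where $M_\ell := \E[\psi\mid\cF_{>t_\ell}]$, $\cF_{>t_\ell}$ is the $\sigma$-algebra generated by the coordinates of $Z$ at slots $L(j)$ with $j>t_\ell$, and $M_0:=\psi$. Hence $\P(\Svalid) = \E\lt[\prod_{\ell=0}^{k} M_\ell\rt]$. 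Since $\cF_{>t_0}\supseteq\cF_{>t_1}\supseteq\cdots\supseteq\cF_{>t_k}$, the sequence $(M_\ell)$ is a reverse martingale, so conditioning successively on $\cF_{>t_1},\cF_{>t_2},\dots$ and applying Jensen's inequality to $x\mapsto x^{j+1}$ telescopes $\E\lt[\prod_{\ell=0}^{k}M_\ell\rt] = \E\lt[M_1^2\prod_{\ell=2}^k M_\ell\rt] \ge \E\lt[M_2^3\prod_{\ell=3}^k M_\ell\rt] \ge \cdots$, whence
\[
    \P(\Svalid)
    =
    \E\lt[\prod_{\ell=0}^{k} M_\ell\rt]
    \ge
    \E\lt[M_k^{k+1}\rt]
    \ge
    \lt(\E M_k\rt)^{k+1}
    =
    p^{k+1}.
\]

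\textbf{Part (\ref{itm:impossibility-local-prob-bound-conc}).} For each fixed $\ell$, $H\lt(\pi(\xp{\ell}\mid\xp{0},\dots,\xp{\ell-1})\rt)$ is a function of the $\Theta(n)$ i.i.d.\ coordinates defining $\phip{0}$ and the resampling randomness of paths $1,\dots,\ell$. On the event that the factor graphs of all these instances are locally sparse (all $r$-neighborhoods of size at most $n^{\eps'}$ for a small constant $\eps'$), changing one coordinate changes only $n^{o(1)}$ output bits, hence moves this $\tfrac1n$-normalized conditional entropy by $\tO(n^{-1+o(1)})$; a bounded-differences inequality then gives deviation $\tfrac1{\log n}$ with probability $\exp(-n^{1-o(1)})$ on the sparse event, while the complementary event of an atypically dense neighborhood has probability $\exp(-n^{\Omega(1)})$, which is the term of order $\exp(-\tOmega(n^{1/3}))$ after balancing (and after a union bound over $\ell\in[k]$). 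The window $\tfrac1{\log n}$ is chosen so that it exceeds the $O(n^{-1/2})$ slack in Lemma~\ref{lem:local-alg-e-bd} yet is $o(\tfrac{\log k}{k})$. This is the most delicate step; the precise trade-off is carried out in Subsection~\ref{subsec:impossibility-local-bdd-differences}, and the exponent $n^{1/3}$ is what the argument yields (any exponent less than $1$ would suffice for the theorem).

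\textbf{Parts (\ref{itm:impossibility-local-prob-bound-indep}) and (\ref{itm:impossibility-local-prob-bound-ogp}).} If $\ell$ is deficient then $t_\ell=\oT$, so $\phip{\ell}$ has every literal resampled and is independent of $\xp{0},\dots,\xp{\ell-1}$; bounding $\P((\Sindep^\ell)^c)$ by the first moment of the number of $y$ satisfying \ref{itm:def-sindep-local-satisfy}--\ref{itm:def-sindep-local-entropy} and conditioning on $\xp{0},\dots,\xp{\ell-1}$, the computation is exactly that of Proposition~\ref{prop:impossibility-prob-bounds}(\ref{itm:impossibility-prob-bound-independent}) ($\le n^{2^k}$ overlap profiles, each consistent with $e^{n\bp\log k/k+o(n)}$ assignments $y$, each $\nu$-satisfying the independent $\phip{\ell}$ with probability $e^{-n(\kappa-o_k(1))\log k/k+o(n)}$, and $\kappa-\bp\ge\eps$), giving $\P(\Sindep^\ell)\ge1-\exp(-\Omega(n))$. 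For (\ref{itm:impossibility-local-prob-bound-ogp}), bound $\P(\Sogp^c)$ by a first moment over $(\yp{0},\dots,\yp{k})$ as in Section~\ref{sec:impossibility-multi-ogp}; the only instance-dependent input is $\P[\yp{\ell}\text{ }\nu\text{-satisfies }\phip{\ell}\text{ for all }\ell]$, and discarding the $O(k^2)$ interrupted clauses, for each remaining clause index $i$ the clauses $\phip{0}_i,\dots,\phip{k}_i$ partition into equivalence classes that are equal within and mutually independent across (the ``$=\phip{0}_i$'' class plus one singleton per path that has finished resampling clause $i$). Thus the estimate of Lemma~\ref{lem:impossibility-sogp-exp-rate-probability-term} applies unchanged, the analogue of Lemma~\ref{lem:impossibility-sogp-exp-rate} bounds $\tfrac1n\log\P(\Sogp^c)$ by the right-hand side of \eqref{eq:impossibility-sogp-exp-rate}, and Proposition~\ref{prop:impossibility-ogp-energy-term} together with the choice of $\bm,\bp,\eps$ makes this at most $\log 2 - (1-o_k(1))\eps\log k + o_k(1)+o(1) < 0$ for $k,n$ large, giving $\P(\Sogp)\ge1-\exp(-\Omega(n))$.
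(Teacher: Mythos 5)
Your proposal is correct and follows essentially the same approach as the paper for all four parts. For part (\ref{itm:impossibility-local-prob-bound-valid}) your reverse-martingale phrasing with $M_\ell = \E[\psi\mid\cF_{>t_\ell}]$ is a cleaner way to organize the same inductive Jensen argument that the paper carries out via the block variables $\xi_i^{(\ell-i)_+}$ and the quantities $P_d$; the telescoping $\E[M_{\ell-1}^{\ell}\prod_{j\ge\ell}M_j]\ge\E[M_\ell^{\ell+1}\prod_{j>\ell}M_j]$ is exactly the inequality $P_{d}\ge P_{d+1}$. For part (\ref{itm:impossibility-local-prob-bound-conc}), the paper does not take $\eps'$ small but sets the local-smallness cutoff at $n^{1/3}$ precisely so that the per-coordinate bounded difference is $O(n^{-2/3}\log n)$ and both error terms in the McDiarmid variant (Lemma~\ref{lem:mcdiarmid-with-bad}) land at $\exp(-\tOmega(n^{1/3}))$; your ``$n^{o(1)}$ output bits'' phrasing is not literally consistent with that choice, but your ``after balancing'' remark and the stated final exponent show you have the right picture. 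Parts (\ref{itm:impossibility-local-prob-bound-indep}) and (\ref{itm:impossibility-local-prob-bound-ogp}) match the paper's explicit remark that these are the same first-moment computations as Proposition~\ref{prop:impossibility-prob-bounds}(\ref{itm:impossibility-prob-bound-independent},\ref{itm:impossibility-prob-bound-ogp}) without the union bound over $t_\ell$, and your description of the per-clause equivalence-class structure (the ``$=\phip{0}_i$'' class plus one singleton per finished path) is exactly what makes Lemma~\ref{lem:impossibility-sogp-exp-rate-probability-term} apply unchanged.
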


We will prove Proposition~\ref{prop:impossibility-local-algs-events-relation} in Subsection~\ref{subsec:impossibility-local-algs-events-relation}, Proposition~\ref{prop:impossibility-local-prob-bounds}(\ref{itm:impossibility-local-prob-bound-valid}) in Subsection~\ref{subsec:impossibility-local-prob-bound-valid}, and Proposition~\ref{prop:impossibility-local-prob-bounds}(\ref{itm:impossibility-local-prob-bound-conc}) in Subsection~\ref{subsec:impossibility-local-bdd-differences}.
Proposition~\ref{prop:impossibility-local-prob-bounds}(\ref{itm:impossibility-local-prob-bound-indep},\ref{itm:impossibility-local-prob-bound-ogp}) are analogous to Proposition~\ref{prop:impossibility-prob-bounds}(\ref{itm:impossibility-prob-bound-independent},\ref{itm:impossibility-prob-bound-ogp}).
The proofs are exactly the same, except we no longer need to union bound over all possible choices of the $t_\ell$.

First, let us see how these bounds imply Theorem~\ref{thm:impossibility-local-algs}.
\begin{proof}[Proof of Theorem~\ref{thm:impossibility-local-algs}]
    Set $\kst$ such that for all $k\ge \kst$, Propositions~\ref{prop:impossibility-local-algs-events-relation} and \ref{prop:impossibility-local-prob-bounds} both hold.

    Suppose some $\ell \in [k]$ is deficient.
    By Proposition~\ref{prop:impossibility-local-algs-events-relation}(\ref{itm:impossibility-local-algs-deficient}), $\P(\Svalid) + \P(\Sconc) + \P(\Sindep^\ell) \le 2$.
    By Proposition~\ref{prop:impossibility-local-prob-bounds}, this implies
    \[
        p^{k+1} \le \exp(-\tOmega(n^{1/3})) + \exp(-\Omega(n)),
    \]
    whence $p \le \exp(-\tOmega(n^{1/3}))$.
    If no $\ell \in [k]$ is deficient, then by Proposition~\ref{prop:impossibility-local-algs-events-relation}(\ref{itm:impossibility-local-algs-no-deficient}), $\P(\Svalid) + \P(\Sconc) + \P(\Sogp) \le 2$.
    By Proposition~\ref{prop:impossibility-local-prob-bounds}, we get the same conclusion.
    So, $p \le \exp(-\tOmega(n^{1/3}))$ conditionally on any realization of $\varphi_V$.
\end{proof}

\subsection{Successful Algorithm Outputs Contradict $\Sindep^\ell$ or $\Sogp$}
\label{subsec:impossibility-local-algs-events-relation}

The following corollary to Lemma~\ref{lem:impossibility-hamming-to-entropy} is obvious.
\begin{corollary}
    \label{cor:entropy-stability}
    Let $\ell \in \bN$ be arbitrary and let $x,x',\yp{0},\ldots,\yp{\ell-1} \in \{\T,\F\}^n$. If $\Delta(x,x') \le \f12$, then
    \[
        \lt|
            H\lt(\pi(x, \yp{0},\ldots,\yp{\ell-1})\rt)
            - H\lt(\pi(x', \yp{0},\ldots,\yp{\ell-1})\rt)
        \rt|
        \le
        H(\Delta(x,x')).
    \]
\end{corollary}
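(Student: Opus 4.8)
The plan is to deduce Corollary~\ref{cor:entropy-stability} from Lemma~\ref{lem:impossibility-hamming-to-entropy} by splitting the joint overlap entropy into a part that does not involve $x$ (or $x'$) plus the conditional overlap entropy of $x$ (resp. $x'$), and then cancelling the common part. The tool for this is Fact~\ref{fac:overlap-profile-properties}(\ref{itm:overlap-profile-property-chainrule}), together with the observation that the overlap entropy $H(\pi(\cdot))$ is invariant under reordering its arguments: permuting $\yp{0},\ldots,\yp{\ell-1}$ only relabels the index sets $S,T$ within each partition $\{S,T\}\in\cPt(\ell)$ and hence leaves the collection of values $\{\pi_{S,T}\}$, and therefore $H(\pi)=-\sum_{\{S,T\}}\pi_{S,T}\log\pi_{S,T}$, unchanged (this is immediate from Definition~\ref{defn:impossibility-overlap-profile}). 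Combining the two, one obtains the symmetric form of the chain rule: for any $z,\yp{0},\ldots,\yp{\ell-1}\in\{\T,\F\}^n$,
\[
    H\lt(\pi(z, \yp{0},\ldots,\yp{\ell-1})\rt)
    =
    H\lt(\pi(\yp{0},\ldots,\yp{\ell-1})\rt)
    +
    H\lt(\pi(z \mid \yp{0},\ldots,\yp{\ell-1})\rt).
\]

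Next I would apply this identity once with $z=x$ and once with $z=x'$; the term $H(\pi(\yp{0},\ldots,\yp{\ell-1}))$ is common to both and cancels in the difference, giving
\[
    H\lt(\pi(x, \yp{0},\ldots,\yp{\ell-1})\rt) - H\lt(\pi(x', \yp{0},\ldots,\yp{\ell-1})\rt)
    =
    H\lt(\pi(x \mid \yp{0},\ldots,\yp{\ell-1})\rt) - H\lt(\pi(x' \mid \yp{0},\ldots,\yp{\ell-1})\rt).
\]
Taking absolute values and invoking Lemma~\ref{lem:impossibility-hamming-to-entropy}, whose hypothesis $\Delta(x,x')\le\f12$ is exactly the hypothesis of the corollary, bounds the right-hand side by $H(\Delta(x,x'))$, which is the claimed inequality.

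There is essentially no obstacle: the argument is a one-line cancellation once the chain rule is in hand, and the only point that deserves an explicit word is the symmetry of $H(\pi(\cdot))$ under reordering of assignments (needed to move $z$ to the last coordinate before applying Fact~\ref{fac:overlap-profile-properties}(\ref{itm:overlap-profile-property-chainrule})). An alternative would be to re-derive the chain rule directly in the order $(\yp{0},\ldots,\yp{\ell-1},z)$ from the definitions of overlap profile and conditional overlap profile, but piggybacking on the already-stated chain rule plus the symmetry remark is cleaner and keeps the proof to two lines.
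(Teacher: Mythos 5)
Your proof is correct and follows the same route as the paper, whose entire proof of this corollary is the one-liner ``Use Fact~\ref{fac:overlap-profile-properties}(\ref{itm:overlap-profile-property-chainrule}).'' You have simply spelled out what that invocation means — reorder to put $x$ (resp.\ $x'$) last (overlap entropy is symmetric under permuting the assignments), apply the chain rule so the common term $H(\pi(\yp{0},\ldots,\yp{\ell-1}))$ cancels, and then invoke Lemma~\ref{lem:impossibility-hamming-to-entropy} on the remaining conditional-entropy difference.
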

\begin{proof}
    Use Fact~\ref{fac:overlap-profile-properties}(\ref{itm:overlap-profile-property-chainrule}).
\end{proof}

\begin{proof}[Proof of Proposition~\ref{prop:impossibility-local-algs-events-relation}]
    Suppose $\Svalid$ and $\Sconc$ both hold.
    We will construct an example of the structure forbidden by $\Sindep^\ell$ or $\Sogp$.
    Since $\Svalid$ holds, there exists $\yp{0}, \ldots, \yp{\ell}$ such that for all $0\le \ell \le k$, $
    \Delta(\xp{\ell}, \yp{\ell}) \le \eta$ and $\yp{\ell}$ $\nu$-satisfies $\phip{\ell}$.

    Suppose $\ell \in [k]$ is deficient.
    Then, $\E H\lt(\pi(\xp{\ell} | \xp{0}, \ldots, \xp{\ell-1})\rt) \le \bz \f{\log k}{k}$ by definition.
    By Lemma~\ref{lem:impossibility-hamming-to-entropy},
    \[
        \lt|
            H\lt(\pi(\yp{\ell} | \xp{0}, \ldots, \xp{\ell-1})\rt)
            - H\lt(\pi(\xp{\ell} | \xp{0}, \ldots, \xp{\ell-1})\rt)
        \rt|
        \le
        H(\eta)
        \le
        \f{\bp-\bm}{8} \cdot \f{\log k}{k^2}
    \]
    for sufficiently large $k$, using the bound $H(x) \le x\log \f{e}{x}$.
    In tandem with $\Sconc$, this implies
    \[
        H\lt(\pi(\yp{\ell} | \xp{0}, \ldots, \xp{\ell-1})\rt)
        \le
        \bz \f{\log k}{k}
        +\f{\bp-\bm}{8} \cdot \f{\log k}{k^2}
        +\f{1}{\log n}
        \le
        \bp \f{\log k}{k}
    \]
    for sufficiently large $n$.
    This is an example of the structure forbidden by $\Sindep^\ell$.
    This proves part (\ref{itm:impossibility-local-algs-deficient}).

    Otherwise, suppose no $\ell \in [k]$ is deficient.
    Writing (by Fact~\ref{fac:overlap-profile-properties}(\ref{itm:overlap-profile-property-chainrule}))
    \[
        H\lt(\pi(\xp{\ell} | \xp{0}, \ldots, \xp{\ell-1})\rt)
        =
        H\lt(\pi(\xp{0}, \ldots, \xp{\ell})\rt) - H\lt(\pi(\xp{0}, \ldots, \xp{\ell-1})\rt)
    \]
    and applying Corollary~\ref{cor:entropy-stability} repeatedly, we have, for all $1\le \ell \le k$,
    \[
        \label{eq:overlap-entropy-small-move}
        \lt|
            H\lt(\pi(\xp{\ell} | \xp{0}, \ldots, \xp{\ell-1})\rt)
            - H\lt(\pi(\yp{\ell} | \yp{0}, \ldots, \yp{\ell-1})\rt)
        \rt|
        \le
        2(k+1) H(\eta)
        \le
        \f{\bp-\bm}{4} \cdot \f{\log k}{k}
    \]
    for sufficiently large $k$.
    By Lemma~\ref{lem:local-alg-e-bd} and $\Sconc$, this implies
    \[
        \lt|
            H\lt(\pi(\yp{\ell} | \yp{0}, \ldots, \yp{\ell-1})\rt)
            - \bz \f{\log k}{k}
        \rt|
        \le
        \f{\bp-\bm}{4} \cdot \f{\log k}{k} + \f{1}{\log n} + O(n^{-1/2})
        \le
        \f{\bp-\bm}{2} \cdot \f{\log k}{k}
    \]
    for sufficiently large $n$.
    So, $H\lt(\pi(\yp{\ell} | \yp{0}, \ldots, \yp{\ell-1})\rt) \in \lt[\bm \f{\log k}{k}, \bp \f{\log k}{k}\rt]$ for all $1\le \ell \le k$.
    This is an example of the structure forbidden by $\Sogp$.
    This proves part (\ref{itm:impossibility-local-algs-no-deficient}).
\end{proof}

\subsection{Lower Bound on the All-Success Probability}
\label{subsec:impossibility-local-prob-bound-valid}

Consider the random variable $\Psi = (\Phi, \varphi_E)$, for $\Phi \sim \Phi_k(n,m)$ and $\varphi_E \sim (\Omega, \P_\omega)^{\otimes E_G}$ where $(G,\rho)$ is the factor graph of $\Phi$.
In this and the next subsection, the following representation of $\Psi$ as a sequence of $km$ i.i.d. random variables will be useful.
We can reformat $\Psi = (\psi_j)_{1\le j\le km}$, where $\psi_j = (\Phi_{L(j)}, \varphi_E(e))$ and $e$ is the edge in $G$ corresponding to $\Phi_{L(j)}$.
Each $\psi_j$ is an i.i.d. sample from $\Upsilon = \unif(\cL) \times (\Omega, \P_\omega)$.

For $0\le \ell \le k$, let $\Psip{\ell} = (\phip{\ell}, \varphiep{\ell})$, which is marginally distributed as $\Psi$.
We similarly can reformat $\Psip{\ell} = (\psip{\ell}_j)_{1\le j\le km}$.

\begin{proof}[Proof of Proposition~\ref{prop:impossibility-local-prob-bounds}(\ref{itm:impossibility-local-prob-bound-valid})]
    For $\Psi$ as above, let
    \[
        f(\Psi) = \ind{\text{$\cA(\Phi, \varphi_E)$ $(\eta, \nu)$-satisfies $\Phi$}}.
    \]
    Note that $\E f(\Psi) = p$ by definition, and
    \[
        \P(\Svalid) = \E \lt[\prod_{\ell=0}^k f(\Psip{\ell})\rt].
    \]
    We wish to show this expectation is at least $p^{k+1}$.

    The constituent random variables $\psip{\ell}_j$ of $\Psip{\ell} = (\psip{\ell}_j)_{1\le j\le km}$ have the following stochastic structure.
    For each $1\le \ell \le k$, the last $km-t_\ell$ variables $(\psip{\ell}_j)_{t_\ell < j \le km}$ in $\Psip{\ell}$ are identical to the corresponding variables in $\Psip{0}$, and the first $t_\ell$ variables $(\psip{\ell}_j)_{1\le j\le t_\ell}$ are fresh i.i.d. draws from $\Upsilon$.

    Let $\tau : [k] \to [k]$ be a permutation such that $t_{\tau(1)} \le t_{\tau(2)} \le \cdots \le t_{\tau(k)}$, and let $s_\ell = t_{\tau(\ell)}$.
    Then, $\Psip{0},\ldots,\Psip{k}$ all share their last $km - s_k$ variables $\psip{\ell}_j$; all but $\Psip{\tau(k)}$ share the next $s_k - s_{k-1}$ variables; all but $\Psip{\tau(k)}$ and $\Psip{\tau(k-1)}$ share the next $s_{k-1}-s_{k-2}$ variables, and so on.

    For $0\le i \le k$, let $\xi_i$ be a sequence of $s_{i+1} - s_i$ i.i.d. draws from $\Upsilon$, where $s_0 = 0$ and $s_{k+1} = km$.
    Let $\xi_i^0, \xi_i^1, \ldots$ be a sequence of i.i.d. copies of $\xi_i$.
    By the above discussion, we can generate $\Psip{0}, \ldots, \Psip{k}$ by generating $\Psip{0} = (\xi_i^0)_{i=0}^{k}$, and for $1\le \ell \le k$, generating $\Psip{\tau(\ell)} = (\xi_i^{(\ell - i)_+})_{i=0}^k$.
    For example, when $k=3$,
    \begin{align*}
        \Psip{0}       &= (\xi_0^0, \xi_1^0, \xi_2^0, \xi_3^0), \\
        \Psip{\tau(1)} &= (\xi_0^1, \xi_1^0, \xi_2^0, \xi_3^0), \\
        \Psip{\tau(2)} &= (\xi_0^2, \xi_1^1, \xi_2^0, \xi_3^0), \\
        \Psip{\tau(3)} &= (\xi_0^3, \xi_1^2, \xi_2^1, \xi_3^0).
    \end{align*}
    Let $f(\xi_0,\ldots,\xi_k)$ denote $f(\Psi)$, for the $\Psi$ that can be formatted (by the above discussion) as $(\xi_0,\ldots,\xi_k)$.
    Let $f_0 = f$, and for $0\le d \le k$, define
    \[
        f_{d+1}(\xi_{d+1}, \ldots, \xi_k)
        =
        \E_{\xi_d}
        f_{d}(\xi_{d}, \ldots, \xi_k).
    \]
    Note that $f_{k+1}$ takes no inputs and outputs $p$.
    Further, for $0\le d \le k+1$ define
    \[
        P_d = \E \lt[
            \prod_{\ell=0}^k
            f_d\lt((\xi_i^{(\ell-i)_+})_{i=d}^k\rt)
        \rt].
    \]
    In particular $P_0 = \P(\Svalid)$ and $P_{k+1} = p^{k+1}$.
    To finish the proof we will show that $P_d \ge P_{d+1}$ for all $0\le d \le k$.
    By Jensen's inequality,
    \begin{align*}
        P_d
        &=
        \E \lt[
            f_d\lt((\xi_i^0)_{i=d}^k\rt)^{d+1}
            \prod_{\ell=d+1}^k
            f_d\lt((\xi_i^{(\ell-i)_+})_{i=d}^k\rt)
        \rt] \\
        &=
        \E \lt[
            \E_{\xi_d^0}\lt[
                f_d\lt((\xi_i^0)_{i=d}^k\rt)^{d+1}
            \rt]
            \prod_{\ell=d+1}^k
            \E_{\xi_d^{\ell-d}}\lt[
                f_d\lt((\xi_i^{(\ell-i)_+})_{i=d}^k\rt)
            \rt]
        \rt] \\
        &\ge
        \E \lt[
            \E_{\xi_d^0}\lt[
                f_d\lt((\xi_i^0)_{i=d}^k\rt)
            \rt]^{d+1}
            \prod_{\ell=d+1}^k
            \E_{\xi_d^{\ell-d}}\lt[
                f_d\lt((\xi_i^{(\ell-i)_+})_{i=d}^k\rt)
            \rt]
        \rt] \\
        &=
        \E \lt[
            f_{d+1}\lt((\xi_i^0)_{i=d+1}^k\rt)^{d+1}
            \prod_{\ell=d+1}^k
            f_{d+1}\lt((\xi_i^{(\ell-i)_+})_{i=d+1}^k\rt)
        \rt] \\
        &= P_{d+1}.
    \end{align*}
\end{proof}

\subsection{Bounded Differences and Concentration of Local Algorithms}
\label{subsec:impossibility-local-bdd-differences}

We will use the following variant of McDiarmid's inequality, which allows a bad event on which bounded differences are large.
\begin{lemma}[{\cite[Theorem 3.3]{Kut02}}]
    \label{lem:mcdiarmid-with-bad}
    Let $M\in \bN$.
    Let $\Omega_1,\ldots,\Omega_M$ be probability spaces and $\Omega = \prod_{i=1}^M \Omega_i$.
    Let $S\subset \Omega$ and $f:\Omega \to \bR$ have the following properties.
    \begin{enumerate}[label=(\roman*), ref=\roman*]
        \item If $X, X'\in S$ differ in coordinate $i$, then $|f(X) - f(X')| \le c_i$.
        \item If $X, X'\in \Omega$ differ in coordinate $i$, then $|f(X) - f(X')| \le b_i$.
    \end{enumerate}
    Then,
    \[
        \P_{X\sim \Omega}
        \lt[
            |f(X) - \E f(X)| \ge t
        \rt]
        \le
        2 \exp\lt(
            -\f{t^2}{8\sum_{i=1}^M c_i^2}
        \rt)
        + 2\P(S^c) \sum_{i=1}^M \f{b_i}{c_i}.
    \]
\end{lemma}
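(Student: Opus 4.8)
The plan is to run the martingale argument behind ordinary McDiarmid's inequality, but on a \emph{truncated} version of $f$ that has genuinely $c_i$-bounded differences everywhere, and then to pay for the truncation with the error term $2\P(S^c)\sum_i b_i/c_i$. Since the statement is quoted from \cite{Kut02}, I describe the moving parts rather than every estimate.

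First I would set up the Doob martingale. Write $\mathcal F_i = \sigma(X_1,\dots,X_i)$, let $Z_i = \E[f(X)\mid \mathcal F_i]$ so that $Z_0 = \E f(X)$ and $Z_M = f(X)$, and let $D_i = Z_i - Z_{i-1}$, so $f(X) - \E f(X) = \sum_{i=1}^M D_i$. Property (ii) gives $|D_i|\le b_i$ on all of $\Omega$, while property (i) says precisely that $|D_i|\le c_i$ \emph{provided we stay inside $S$}; the entire difficulty is that $S$ is not a product set, so this $c_i$-bound need not survive conditioning on a prefix $(X_1,\dots,X_{i-1})$ that can still be completed into $S^c$.

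Second --- the crux --- I would replace $f$ by a surrogate $g\colon\Omega\to\bR$ that is $c_i$-Lipschitz in coordinate $i$ for every $i$, everywhere, and that agrees with $f$ off a small ``halo'' of $S^c$. The natural construction is a McShane-type extension: roughly $g(X) = \inf_{Y\in S}\bigl(f(Y) + \sum_{i:\,X_i\ne Y_i} c_i\bigr)$. One checks that $g$ is $c_i$-Lipschitz per coordinate by the triangle inequality for the weighted Hamming cost, and that $g\le f$ everywhere with $g=f$ on the ``core'' $S^\circ$ of points $X\in S$ from which $f$ does not decrease faster than this weighted distance allows. The key point is that if $X\in S\setminus S^\circ$, then every monotone Hamming path realizing the relevant cost must use an edge with an endpoint in $S^c$ --- otherwise property (i) would forbid the fast decrease --- so $S\setminus S^\circ$ lies in a bounded neighborhood of $S^c$, and its measure can be charged to $\P(S^c)$ weighted by the ratios $b_i/c_i$ (intuitively, a coordinate whose change is only $b_i$-controlled can be ``spent'' at most $b_i/c_i$ times before exhausting a $c_i$-budget). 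I expect this measure-accounting --- showing $\P(X\notin S^\circ)\lesssim \P(S^c)\sum_i b_i/c_i$ and, in parallel, that $|\E f - \E g|$ is similarly small --- to be the main obstacle; it is the step special to this inequality, and it is the content of \cite{Kut02}. A morally equivalent alternative avoids the global surrogate: for each $i$ one splits $D_i = D_i\mathbf 1_{A_i} + D_i\mathbf 1_{A_i^c}$ for a well-chosen event $A_i$ on which the conditional law of $D_i$ given $\mathcal F_{i-1}$ is supported in $[-c_i,c_i]$, and bounds $\sum_i \P(A_i^c)\,b_i/c_i$ by $\P(S^c)$; the bookkeeping is the same.

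Third, I would finish by applying the ordinary bounded-differences inequality to $g$: its Doob martingale differences are bounded by $c_i$, so $\P[\,|g(X)-\E g(X)|\ge s\,]\le 2\exp\!\bigl(-s^2/(2\sum_i c_i^2)\bigr)$. Then transfer back: since $f=g$ on $S^\circ$, the event $\{|f-\E f|\ge t\}$ is contained in $\{|g-\E g|\ge t - |\E f-\E g|\}\cup\{X\notin S^\circ\}$; choosing $s = t-|\E f - \E g|$ and exploiting the generous factor $8$ (rather than $2$) in the stated denominator to absorb the shift $|\E f - \E g|$ and the constant loss, one arrives at $2\exp(-t^2/(8\sum_i c_i^2)) + 2\P(S^c)\sum_i b_i/c_i$. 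The Lipschitz verification for $g$ and the final constant-chasing are the only routine parts; the conceptual content is entirely in isolating the core $S^\circ$ and charging its complement to $\P(S^c)\sum_i b_i/c_i$.
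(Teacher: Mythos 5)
The paper does not prove this lemma; it is quoted directly from Kutin \cite{Kut02}, so there is no in-paper proof to compare against. Evaluating your sketch on its own merits, I think the central step is unjustified and likely wrong as stated.

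Your plan hinges on the McShane-type surrogate $g(X)=\inf_{Y\in S}\bigl(f(Y)+\sum_{i: X_i\ne Y_i}c_i\bigr)$ and the claim that the ``core'' $S^\circ$ on which $g=f$ satisfies $\P(S\setminus S^\circ)\lesssim \P(S^c)\sum_i b_i/c_i$, with $|\E f-\E g|$ controlled similarly. The obstruction is that $S^\circ$ is defined by a \emph{global reachability} property: $X$ fails to be in $S^\circ$ exactly when there is some $Y\in S$, possibly far away, for which every monotone weighted-Hamming path from $X$ to $Y$ that the infimum ``sees'' must be obstructed by $S^c$. This is a statement about whether $S^c$ separates $X$ from the rest of $S$, and the measure of such $X$ is not controlled by $\P(S^c)$ times a local quantity like $\sum_i b_i/c_i$. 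Your ``budget'' heuristic (``a coordinate can be spent at most $b_i/c_i$ times'') counts bad edges \emph{along one path}, but it gives no accounting for how many distinct $X$'s can be excluded from $S^\circ$ by a single piece of $S^c$; many $X$ can share the same blocking set. A second, independent gap is the transfer step: even granting $\P(S\setminus S^\circ)$ small, on $S\setminus S^\circ$ the defect $f(X)-g(X)$ is unbounded in terms of the $c_i$'s (it can be as large as the range of $f$), so $|\E f - \E g|$ is not obviously small and cannot simply be absorbed by replacing the $2$ with an $8$ in the exponent. If $|\E f - \E g|$ were of order $t$, the exponential term would be destroyed entirely, not merely weakened by a constant.

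The actual proof in \cite{Kut02} avoids both difficulties by staying conditional: it works directly with the Doob martingale increments $D_i=\E[f\mid\mathcal F_i]-\E[f\mid\mathcal F_{i-1}]$, writes $D_i$ as an average of single-coordinate differences of $f$ over a resampled $X_i$, and observes that $|D_i|\le c_i + b_i\,\P(\text{bad pair}\mid\mathcal F_{i-1})$. A Markov-type argument then bounds $\P\bigl(\P(\text{bad pair}\mid\mathcal F_{i-1}) > c_i/b_i\bigr)$ by $O(\P(S^c)\,b_i/c_i)$, after which Azuma's inequality is applied on the event that all increments stay bounded by $O(c_i)$. This is purely local in the filtration and never needs a global surrogate function, a core set $S^\circ$, or any reachability estimate. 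Your approach is therefore a genuinely different route, and I do not see how to repair the measure bound on $S\setminus S^\circ$ without essentially reinventing the conditional argument; in particular I would not present the McShane construction as a proof of this lemma without filling that gap explicitly.
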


The following definition gives the complement of the bad event we will use.
The exponent $\f13$ is chosen to minimize the failure probability in Lemma~\ref{lem:mcdiarmid-with-bad} by balancing the two terms.
\begin{definition}
    \label{defn:r-locally-small}
    A $k$-SAT formula $\Phi\in \Omega_k(n,m)$ is \emph{$r$-locally small} if, for $(G,\rho)$ the factor graph of $\Phi$, $|N_r(v,G)|\le n^{1/3}$ for all $v\in \Va_G$.
\end{definition}

\begin{fact}
    \label{fac:r-locally-small}
    If $\Phi\sim \Phi_k(n,m)$ and $r\in \bN$ is constant, $\Phi$ is $r$-locally small with probability $1-\exp(-\Omega(n^{1/3}))$.
\end{fact}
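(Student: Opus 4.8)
The plan is to explore the factor graph $(G,\rho)$ by breadth-first search from a fixed variable node $v\in\Va_G$, dominate the exploration by a branching-type process whose offspring means are constants, and extract a tail bound on $|N_r(v,G)|$ strong enough to survive a union bound over the $n$ variables. It is essential here that $k$ and $r$ are constants in this section, so that $\alpha=\kappa 2^k\log k/k$ and all branching ratios below are independent of $n$. Note that the cruder route of bounding the maximum variable degree yields only a polynomially small failure probability, whereas $|N_r(v,G)|$ has constant-order mean and the claimed rate $\exp(-\Omega(n^{1/3}))$ must come from a large-deviation bound on this size.

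Write $b_i$ for the number of vertices of $G$ at graph distance exactly $i$ from $v$; since $v$ is a variable, even levels consist of variables and odd levels of clauses, each clause has degree $k$, and $|N_r(v,G)|\le (k+1)\sum_{i=0}^{r}b_i$ (this bounds vertices, and also edges on length-$\le r$ paths). If level $i$ consists of clauses then $b_{i+1}\le k\,b_i$. If level $i$ consists of variables, reveal the incidences one clause at a time in BFS order: when a variable $x$ at level $i$ is expanded, the number of previously undiscovered clauses incident to $x$ is at most the number of still-unrevealed incidences equal to $x$; an undiscovered clause has all its incidences unrevealed, each such incidence is conditionally uniform on $[n]$ independently of the past, and $m=\lfloor\alpha n\rfloor$, so conditionally on the exploration so far this count is stochastically dominated by $\Bin(mk,1/n)$, hence by $\Pois(\mu)$ with $\mu=2\alpha k$ a constant. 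Peeling off the $b_i$ variables of level $i$ one at a time by iterated conditioning gives, for every $\lambda>0$, $\E[e^{\lambda b_{i+1}}\,|\,\mathcal F_i]\le e^{b_i\,\mu(e^{\lambda}-1)}$, where $\mathcal F_i$ records the exploration through level $i$.

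I then run an exponential-moment induction along the levels. Set $\lambda_0=\log 2$ and $\lambda_{i+1}=\min\{\lambda_i/k,\ \log(1+\lambda_i/\mu)\}$. The clause step gives $\E e^{\lambda_{i+1}b_{i+1}}\le\E e^{\lambda_{i+1}k b_i}=\E e^{\lambda_i b_i}$, and the variable step gives $\E e^{\lambda_{i+1}b_{i+1}}\le\E e^{b_i\mu(e^{\lambda_{i+1}}-1)}\le\E e^{\lambda_i b_i}$ since $\mu(e^{\lambda_{i+1}}-1)\le\lambda_i$; as $\E e^{\lambda_0 b_0}=e^{\lambda_0}=2$, induction yields $\E e^{\lambda_i b_i}\le 2$ for all $0\le i\le r$. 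Because $\log(1+x)\ge x/2$ for $x\in[0,1]$ and $\lambda_i\le\lambda_0<\mu$, each $\lambda_{i+1}\ge\lambda_i/(2\mu)$, so $\lambda_*:=\min_{0\le i\le r}\lambda_i\ge\lambda_0(2\mu)^{-r}>0$ is a constant. Markov's inequality then gives $\P[b_i>t]\le 2e^{-\lambda_* t}$, hence $\P[|N_r(v,G)|>n^{1/3}]\le\sum_{i=0}^r\P[b_i>n^{1/3}/((k+1)(r+1))]\le 2(r+1)e^{-\lambda_* n^{1/3}/((k+1)(r+1))}$. Union bounding over the $n$ choices of $v\in\Va_G$ and using $n=e^{o(n^{1/3})}$ gives $\P[\Phi\text{ is not }r\text{-locally small}]\le\exp(-\Omega(n^{1/3}))$, as claimed.

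The one step requiring care is the stochastic-domination argument inside the BFS: one must reveal the randomness of $\Phi$ in an order such that, at the moment a variable is expanded, the number of new incident clauses is conditionally dominated — uniformly over the history — by $\Bin(mk,1/n)$. This is the standard exploration argument, and it is exactly what lets the iterated conditioning above factor into a product of one-vertex exponential moments; everything else is a routine subcritical-branching estimate, made quantitative at the level of moment generating functions rather than first moments.
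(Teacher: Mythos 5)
Your proof is correct and takes essentially the same route as the paper: the paper's proof simply invokes Lemma~\ref{lem:simulation-dfg-nbd-growth-rate}, which is itself a branching-process tail estimate on $|N_{2r}(v,G)|$ (stated without proof, adapted from \cite{BCN20}) of exactly the kind you establish from scratch via BFS exploration, Poisson domination, and exponential-moment induction, followed by the same union bound over $v\in\Va_G$. The only difference is that you carry out the exploration argument explicitly rather than citing it as a black box, which makes your write-up self-contained but not a genuinely different argument.
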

\begin{proof}
    This follows from Lemma~\ref{lem:simulation-dfg-nbd-growth-rate} and a union bound.
    Note that the $r$-neighborhood of any $v\in \Cl_G$ is contained in the $(r+1)$-neighborhood of any of its neighbors.
\end{proof}

\begin{proof}[Proof of Proposition~\ref{prop:impossibility-local-prob-bounds}(\ref{itm:impossibility-local-prob-bound-conc})]
    We present the argument for $\ell = k$; showing concentration for the other conditional overlap entropies is similar.
    The random variable $Y = H\lt(\pi(\xp{k} | \xp{0}, \ldots, \xp{k-1})\rt)$ is measurable in $\Psip{0},\ldots,\Psip{k}$.
    For each $0\le \ell \le k$, we can write $\Psip{\ell} = (\psip{\ell}_j)_{1\le j\le km}$.
    The constituent random variables $\psip{\ell}_j$ can be partitioned into equivalence classes, where variables in the same equivalence class are identical and different equivalence classes are mutually independent.
    Let $\zeta = (\psi_1,\ldots,\psi_M)$ contain one representative from each equivalence class.
    Note that $\psi_1,\ldots,\psi_M$ i.i.d. samples from $\Upsilon$ and $M \le k^2m$.

    All the $\phip{\ell}$ $(\Gp{\ell},\rhop{\ell},\varphiep{\ell})$, $\xp{\ell}$, overlap profiles of the $\xp{\ell}$, and $Y$ are $\zeta$-measurable.
    We will use $Y(\zeta)$ to denote the $Y$ given by this realization of $\zeta$, and similarly for the remaining random variables.

    Let $S\subseteq \Upsilon^M$ be the event that $\phip{\ell}(\zeta)$ is $r$-locally small for all $0\le \ell \le k$.
    By a union bound on Fact~\ref{fac:r-locally-small}, $\P(S^c) \le \exp(-\Omega(n^{1/3}))$.

    Suppose $\zeta,\zeta'\in S$ differ in only one coordinate $\psi_i$.
    For each $0\le \ell \le k$, the decorated factor graphs $(\Gp{\ell},\rhop{\ell},\varphiep{\ell})(\zeta)$ and $(\Gp{\ell},\rhop{\ell},\varphiep{\ell})(\zeta')$ differ in at most one edge.
    Because $\zeta,\zeta'\in S$ and $\cA$ is local, $\xp{\ell}(\zeta)$ and $\xp{\ell}(\zeta')$ differ in $O(n^{1/3})$ bits.
    So, corresponding entries in $\pi(\xp{0}, \ldots, \xp{k})(\zeta)$ and $\pi(\xp{0}, \ldots, \xp{k})(\zeta')$ differ by $O(n^{-2/3})$.
    Thus, $|Y(\zeta)-Y(\zeta')| \le O(n^{-2/3} \log n)$.

    Moreover, for any $\zeta,\zeta'\in \Upsilon^M$, we have $|Y(\zeta)-Y(\zeta')| \le \log 2$ because the conditional overlap entropy attains values in $[0, \log 2]$.
    By Lemma~\ref{lem:mcdiarmid-with-bad},
    \begin{align*}
        \P \lt[|Y-\E Y| \ge \f{1}{\log n}\rt]
        &\le
        2 \exp\lt(-\f{1/\log^2 n}{8k^2m O((n^{-2/3} \log n)^2)}\rt)
        +
        \exp(-\Omega(n^{1/3}))
        O\lt(\f{2k^2m\log 2}{n^{-2/3} \log n}\rt) \\
        &\le
        \exp(-\tOmega(n^{1/3})).
    \end{align*}
\end{proof}

Similar ideas prove Lemma~\ref{lem:local-alg-e-bd}.

\begin{proof}[Proof of Lemma~\ref{lem:local-alg-e-bd}]
    The lower bound follows from the definition of deficient.
    For $1\le t\le T$, let $\xp{\ell, t} = \cA(\phip{\ell, t}, \varphiep{\ell, t})$.
    We will show that
    \[
        \lt|
            \E H(\pi(\xp{\ell, t} | \xp{0}, \ldots, \xp{\ell-1}))
            -\E H(\pi(\xp{\ell, t-1} | \xp{0}, \ldots, \xp{\ell-1}))
        \rt|
        \le
        O(n^{-1/2}).
    \]
    Since $\E H(\pi(\xp{\ell, t_\ell-1} | \xp{0}, \ldots, \xp{\ell-1})) < \bz \f{\log k}{k}$, the above inequality implies the result.

    Let $S$ be the event that $\phip{\ell, t-1}$ and $\phip{\ell, t}$ are both $r$-locally small.
    By Fact~\ref{fac:r-locally-small} and a union bound, $\P(S^c) \le \exp(-\Omega(n^{1/3}))$.
    The decorated factor graphs $(\Gp{\ell,t-1}, \rhop{\ell,t-1}, \varphiep{\ell,t-1})$ and $(\Gp{\ell,t}, \rhop{\ell,t}, \varphiep{\ell,t})$ differ in at most one edge.
    On the event $S$, $\xp{\ell, t}$ and $\xp{\ell, t-1}$ differ in at most $O(n^{1/3})$ bits, and by Lemma~\ref{lem:impossibility-hamming-to-entropy},
    \[
        \lt|
            H(\pi(\xp{\ell, t} | \xp{0}, \ldots, \xp{\ell-1}))
            -H(\pi(\xp{\ell, t-1} | \xp{0}, \ldots, \xp{\ell-1}))
        \rt|
        \le
        H(\Delta(\xp{\ell, t}, \xp{\ell, t-1}))
        \le
        O(n^{-2/3} \log n).
    \]
    Moreover, this difference is always at most $\log 2$.
    Thus
    \begin{align*}
        \lt|
            \E H(\pi(\xp{\ell, t} | \xp{0}, \ldots, \xp{\ell-1}))
            -\E H(\pi(\xp{\ell, t-1} | \xp{0}, \ldots, \xp{\ell-1}))
        \rt|
        &\le
        O(n^{-2/3} \log n)
        +\P(S^c) \log 2 \\
        &\le
        O(n^{-1/2}).
    \end{align*}
\end{proof}

\section{Simulation of Local Memory Algorithms}
\label{sec:simulation}

In this section, we introduce the class of \emph{local memory algorithms}.
These algorithms are a natural generalization of local algorithms, which make local decisions in series (in a random vertex order) and allow earlier decisions to leave local information that later decisions can see.
This class includes the first phase of \verb|Fix|, as well as the sequential local algorithms considered in \cite{GS17}.
We show, somewhat surprisingly, that any local memory algorithm can be simulated by a local algorithm of larger radius.
We then show that any local algorithm can be simulated by a constant degree polynomial.

The main results of this section are the following two propositions.
Throughout this section, fix arbitrary $\alpha = \alpha(k)$ independent of $n$ and let $m = \lfloor \alpha n \rfloor$.
\begin{proposition}[Local algorithms simulate local memory algorithms]
    \label{prop:local-memory-to-local}
    Suppose $\alpha k, k-1 \ge 2$ and $\eta > 0$.
    Let $\cA$ be an $r$-local memory algorithm (defined in Definition~\ref{defn:local-memory}) with output in $\{\T,\F\}^n$.
    There exists $R\in \bN$ depending on $\alpha, k, r, \eta$ and an $R$-local algorithm $\cA'$ such that, for some coupling of the internal randomnesses of $\cA, \cA'$,
    \[
        \P\lt[
            \Delta(\cA(\Phi), \cA'(\Phi)) \ge \eta
        \rt]
        \le \exp(-\Omega(n^{1/3})),
    \]
    where the probability is over $\Phi \sim \Phi_k(n,m)$ and the randomnesses of $\cA, \cA'$.
\end{proposition}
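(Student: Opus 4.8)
The plan is to have $\cA'$, at each variable $v$, faithfully simulate $\cA$ on a bounded ball around $v$, and to show that for all but an $\eta$-fraction of $v$ this simulation recovers $\cA(\Phi)_v$. First I would couple the two algorithms: include in the decoration $\varphi$ an i.i.d.\ $\unif([0,1])$ label $\ell(w)$ on every vertex $w$, let $\cA$ process vertices in increasing order of $\ell$ (reading any other internal randomness off $\varphi$), and let $\cA'$ use the same $\varphi$. Fix $R = 2rJ+r$, where $J = J(\alpha,k,r,\eta)$ is chosen below. The simulator $\cA'$, at $v$, reads $N_R(v,G,\rho,\varphi)$, runs $\cA$ on the sub-instance induced by the vertices of $N_{R-r}(v)$ in the order given by $\ell$ (with memory initialized as $\cA$ would), and outputs the value $\cA$ assigns to $v$; this is visibly $R$-local.

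Next I would set up the causal structure. For a variable $v$, let $H(v)$ be the smallest set of vertices containing $v$ such that, for every $u \in H(v)$, every vertex whose memory write is read while $\cA$ processes $u$ also lies in $H(v)$. Since a decision at $u$ reads and writes only inside $N_r(u)$ (Definition~\ref{defn:local-memory}), consecutive elements of a causal chain lie within factor-graph distance $2r$, and each is processed strictly before its predecessor in the chain; in particular such chains are simple. An induction on processing order shows that if $H(v) \subseteq N_{R-r}(v)$ then the truncated run inside $N_{R-r}(v)$ assigns $v$ the same value as $\cA(\Phi)$, because the earliest-processed element of $H(v)$ reads only initial memory and each later element reads only writes already reproduced by the induction (here one must rule out interference from writes of vertices outside $H(v)$). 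Conversely, if $H(v) \not\subseteq N_{R-r}(v)$ there is a causal chain from $v$ of length $\ge J+1$, whose first $J$ steps form a simple path $v = p_0,\ldots,p_J$ with $d_G(p_i,p_{i+1}) \le 2r$, $\ell(p_0) > \cdots > \ell(p_J)$, and $p_0,\ldots,p_J \in N_R(v)$. Calling $B_v$ the event that such a path exists — measurable in the structure and labels of $N_R(v)$ — we obtain $\#\{v : \cA(\Phi)_v \neq \cA'(\Phi)_v\} \le \#\{v : B_v\}$.

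It then remains to control $\#\{v : B_v\}$. A first-moment bound gives $\P[B_v] \le \E_\Phi[\#\{\text{simple distance-}2r\text{ paths of length }J\text{ from }v\}]/(J+1)!$, since for fixed $\Phi$ each candidate path has decreasing labels with probability $1/(J+1)!$. Splitting each distance-$2r$ step into a shortest path bounds the numerator by $O_{r,J}(1) \cdot \E_\Phi[\#\{\text{walks of length} \le 2rJ \text{ from } v\}]$, and a routine first-moment computation on the random factor graph (of the kind underlying Lemma~\ref{lem:simulation-dfg-nbd-growth-rate}, where the hypotheses $\alpha k, k-1 \ge 2$ enter) shows $\E_\Phi[\#\{\text{walks of length } \ell \text{ from } v\}] \le C_0(k,\alpha)^\ell$ uniformly in $n$. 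Hence $\P[B_v] \le C(k,\alpha,r)^J/(J+1)!$, and I would choose $J$ large enough, depending on $k,\alpha,r,\eta$, that this is $< \eta/4$; then $\E\,\#\{v : B_v\} < \eta n/4$. Finally, $\#\{v : B_v\}$ is a function of the $km$ literals and the vertex labels; with $S = \{\Phi \text{ is } R\text{-locally small}\}$ (so $\P(S^c) \le \exp(-\Omega(n^{1/3}))$ by Fact~\ref{fac:r-locally-small}), a one-coordinate change alters $B_v$ for only $O(n^{1/3})$ variables $v$ on $S$, and for at most $n$ variables always. Applying Lemma~\ref{lem:mcdiarmid-with-bad} with $t = \eta n/2$ (there are $O(n)$ coordinates, $\sum c_i^2 = O(n^{5/3})$, $\sum b_i/c_i = O(n^{5/3})$) gives
\[
    \P\big[\Delta(\cA(\Phi),\cA'(\Phi)) \ge \eta\big]
    \le
    \P\big[\#\{v : B_v\} \ge \eta n\big]
    \le
    \exp(-\tOmega(n^{1/3})).
\]

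The hard part will be the causal-history argument in the second paragraph: extracting from the exact read/write semantics of Definition~\ref{defn:local-memory} that the truncated simulation reproduces $\cA(\Phi)_v$ precisely on $H(v)$ — the induction must handle writes made by vertices outside $H(v)$ — and pinning down the radius $R = 2rJ+r$. The first-moment neighborhood-growth estimate and the McDiarmid-with-bad-event concentration are standard and mirror the arguments in Section~\ref{sec:local-algs}.
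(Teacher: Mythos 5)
Your proposal is correct in its conclusion and shares the skeleton of the paper's argument --- couple $\cA,\cA'$ via a common $\psi$, simulate $\cA$ on a ball, observe that disagreement at $v$ forces a decreasing-$\psi$ chain of $\le 2r$-hops escaping that ball --- but you take a genuinely different route for the probability estimate. The paper argues in a \emph{quenched} way, mostly on the Galton--Watson limit: it fixes the tree shape, bounds $\max_{w \in N_R(o,T)}|N_{2r}(w,T)|$ by $O(R/\log^{(r+1)}R)$ via Lemma~\ref{lem:simulation-max-nbd}, counts chains as $(\max |N_{2r}|)^{t}/t!$, then transfers to the factor graph via Lemma~\ref{lem:simulation-dfg-to-dgw} and concentrates via Corollary~\ref{cor:simulation-dfg-concentration}. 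The iterated-log gain over the crude $\max|N_{2r}|=\Theta(R)$ is what makes the $t!$ win, and is the main technical work. You instead take expectations \emph{first}, directly on the random factor graph: $\P[B_v]\le \E_\Phi[\#\text{chains}]/(J+1)!$, and the annealed walk bound $\E_\Phi[\#\text{walks of length }\ell]\le C^\ell$ gives $\P[B_v]\le C'^J/(J+1)!\to 0$ as $J\to\infty$. This bypasses both Lemma~\ref{lem:simulation-max-nbd} and the GW-tree transfer, arguably a cleaner route, at the cost of needing the expected-walk estimate, which is standard but not literally stated in the paper (Lemma~\ref{lem:simulation-dfg-nbd-growth-rate} controls neighborhood \emph{sizes} with exponential tails, from which the walk bound follows by the same first-moment computation, but it would need to be recorded). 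Your concentration step via Lemma~\ref{lem:mcdiarmid-with-bad} with the $R$-locally-small bad event mirrors the paper's own use of that lemma in Section~\ref{sec:local-algs}.

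There is one genuine gap, which you flag but understate. The causal-history set $H(v)$, defined by the reads of the \emph{full} run, is not the right object. The claim ``$H(v)\subseteq N_{R-r}(v)\Rightarrow$ truncated run agrees at $v$'' is false: a vertex $w'\notin H(v)$ whose own truncated-run computation has already diverged can write to a memory location $x$ that some $u\in H(v)$ later reads, changing the identity of the last writer $u$ sees. Since $H(v)$ is computed from full-run reads, $w'$ is invisible to it, and the interference cannot be ruled out as your parenthetical hopes. The fix is to induct on the \emph{discrepancy set} $D=\{w:\text{truncated and full computations at }w\text{ differ}\}$: if $w\in D$, then for some $x\in N_r(w)$ the last writer differs or wrote differently, forcing a $\le 2r$-hop, smaller-$\psi$ neighbor that is either in $D$ or outside the simulated ball; iterating from $v$ produces exactly a decreasing-$\psi$ $2r$-hop chain escaping the ball, i.e.\ the paper's $(r,R,\psi)$-insulation criterion. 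Fortunately the event $B_v$ you count is precisely the existence of such a chain, so once the qualitative implication is replaced by this discrepancy-propagation induction, your quantitative bound goes through unchanged.
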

We parse the outputs of a low degree polynomial with the function $\strictround : \bR \to \{\T,\F,\Q\}$, defined by
\[
    \strictround(x) =
    \begin{cases}
        \T & x = 1, \\
        \F & x = -1, \\
        \Q & \text{otherwise}.
    \end{cases}
\]
When applied to a real-valued vector, $\strictround$ is applied coordinate-wise.
Note that this is a more stringent parsing scheme than $\round$.
Let $N = m\cdot k \cdot 2n$.
Recall that each $\Phi \in \Omega_k(n,m)$ can be identified with a vector in $\{0,1\}^N$, as described below Definition~\ref{defn:ldp}.
\begin{proposition}[Low degree polynomials simulate local algorithms]
    \label{prop:local-to-ldp}
    Suppose $\alpha k, k-1 \ge 2$ and $\eta > 0$.
    Let $\cA$ be an $r$-local algorithm with output in $\{\T,\F\}^n$.
    There exist $D, \gamma > 0$ depending on $\alpha, k, r, \eta$ and a (random) degree-$D$ polynomial $f : \bR^N \times \Omega \to \bR^n$ such that the following holds.
    Let $\cA' = \strictround \circ f$.
    For some coupling of the internal randomnesses of $\cA$ and $f$,
    \[
        \P\lt[
            \Delta(\cA(\Phi), \cA'(\Phi)) \ge \eta
        \rt]
        \le \exp(-\Omega(n^{1/3})),
    \]
    where the probability is over $\Phi \sim \Phi_k(n,m)$ and the randomnesses of $\cA, \cA'$.
    Moreover, $\E_{\Phi, \omega} \norm{f(\Phi, \omega)}_2^2 \le \gamma n$.
\end{proposition}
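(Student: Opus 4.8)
The plan is to approximate the $r$-local function $f_0$ underlying $\cA$ by a bounded-degree polynomial in the one-hot encoding of $\Phi$, coordinate by coordinate. Fix a vertex $v_i \in \Va_G$. The output $\cA(\Phi)_i = f_0(v_i, G, \rho, \varphi)$ depends only on the isomorphism class of the $r$-neighborhood $N_r(v_i, G, \rho, \varphi)$. On the event that this neighborhood is a tree of bounded size — which I would control using the locally-small estimate (Fact~\ref{fac:r-locally-small}, via Lemma~\ref{lem:simulation-dfg-nbd-growth-rate}) — there are only finitely many isomorphism classes, each determined by a bounded amount of local data: which clauses contain $v_i$, which variables those clauses contain, the polarities $\rho$ on the relevant edges, and the decorations $\varphi$ on the $O(1)$ vertices and edges involved. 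I would first discretize the decoration space: since $\cA$ is $r$-local with codomain $\{\T,\F\}$ and there are finitely many topological shapes, one can partition $\Omega$ into finitely many cells (up to negligible probability) so that $f_0$ is constant on each combination of shape, polarities, and decoration-cell. After this reduction, sampling fresh randomness $\omega$ for the polynomial's coefficients lets us fix which decoration-cell each vertex/edge falls into, so it remains to express, as a low-degree polynomial in the indicator variables $\Phi_{i,j,s}$, the indicator that $N_r(v_i, G)$ has a prescribed tree shape with prescribed polarities.

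The key observation is that this indicator is itself a polynomial of degree $O(1)$ (depending on $k$, $r$ — equivalently on the number of edges in a radius-$r$ tree neighborhood, which is at most $(k m / n)^{O(r)}\cdot k^{O(r)}$, a constant since $m = \lfloor\alpha n\rfloor$): a radius-$r$ tree neighborhood is witnessed by a specific set of literal-slots taking specific values, i.e. a product of $O(1)$ of the indicators $\Phi_{i,j,s}$, summed over the $\mathrm{poly}(n)$ choices of which clauses and which slots realize the shape. So I would set
\[
    f_i(\Phi, \omega) = \sum_{\text{shapes } \tau} \ \val_\tau(\omega) \cdot \mathds{1}\{N_r(v_i,G) \cong \tau\},
\]
where $\val_\tau(\omega) \in \{+1,-1\}$ is the value $f_0$ assigns on shape $\tau$ given the decoration-cells fixed by $\omega$, and the indicator is the explicit $O(1)$-degree polynomial above. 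On the good event (all neighborhoods tree-like and small, all decorations in the ``generic'' part of their cells) we get $f_i(\Phi,\omega) = +1$ if $\cA(\Phi)_i = \T$ and $-1$ if $\cA(\Phi)_i = \F$, so $\strictround \circ f$ agrees with $\cA$ exactly; off the good event the Hamming distance is controlled by the $\exp(-\Omega(n^{1/3}))$ failure probability from Fact~\ref{fac:r-locally-small}. The degree $D$ and the number of monomials are $O_k(1)$, so setting $\gamma$ to be a constant times the (finite) number of shapes squared gives $\E\|f(\Phi,\omega)\|_2^2 \le \gamma n$: each $f_i$ is a bounded sum of $\pm 1$ indicators of disjoint events, hence $f_i(\Phi,\omega)^2 = O_k(1)$ almost surely.

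The main obstacle I anticipate is the handling of the continuous decorations $\varphi$, since a genuine polynomial cannot read an arbitrary element of $\Omega$; the resolution is exactly the ``random coefficients'' freedom in Definition~\ref{defn:ldp} — we absorb the decoration randomness into $\omega$, precompute the decoration-cell of every relevant vertex and edge from $\omega$ (independently of $\Phi$, which is legitimate since $\varphi \indep \Phi$ in a local algorithm), and only the discrete combinatorial structure of $G,\rho$ remains to be extracted polynomially. A secondary technical point is ensuring that when a neighborhood is \emph{not} tree-like (has a cycle) or is large, the polynomial's value stays bounded so the $L^2$ normalization still holds; I would handle this by truncating each $f_i$ to $\pm(1 + o(1))$ via the fact that the competing shape-indicators are supported on disjoint configurations, so at most one fires on any given $\Phi$ except on a lower-order event, and on that event the contribution to $\E\|f\|_2^2$ is again absorbed into $\gamma$ using $\binom{a}{b}\le(ea/b)^b$-type counting of the $\mathrm{poly}(n)$ many monomials times the tiny probability each is nonzero. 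The simulation error bound $\exp(-\Omega(n^{1/3}))$ then follows by a union bound over the $n$ coordinates against Fact~\ref{fac:r-locally-small}.
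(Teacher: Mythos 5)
Your high-level plan matches the paper's: build a polynomial that simulates the local function on the event that the radius-$r$ neighborhood is a small tree, control the failure probability by comparing to the Galton-Watson tree via Lemmas~\ref{lem:simulation-dgw-nbd-growth-rate} and \ref{lem:simulation-dfg-nbd-growth-rate}, and control $\E\|f\|_2^2$ by integration against the exponential tail on neighborhood size. But the central step of your construction has a genuine gap.

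You write $f_i(\Phi,\omega) = \sum_\tau \val_\tau(\omega)\cdot\mathds{1}\{N_r(v_i,G)\cong\tau\}$ and claim that the indicator $\mathds{1}\{N_r(v_i,G)\cong\tau\}$ is a polynomial of degree $O_{k,r}(1)$ in the one-hot coordinates $\Phi_{i,j,s}$, namely a sum over witnesses of products of positive indicators. This is false. A product $\prod_{(i,j,s)\in S}\Phi_{i,j,s}$ tests that the literals in $S$ are \emph{present}, so the sum over witnesses computes the indicator that $\tau$ is a \emph{subshape} of $N_r(v_i,G)$ — it cannot detect that the neighborhood equals $\tau$ and contains nothing more. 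Detecting the absence of every possible extra edge incident to the neighborhood requires monomials involving $\Theta(n)$ additional factors of $(1-\Phi_{i,j,s})$, which destroys the constant degree bound. Your subsequent remark that ``the competing shape-indicators are supported on disjoint configurations, so at most one fires'' is likewise incorrect: whenever $N_r(v_i,G)$ is a tree on $s$ edges, every one of its $\binom{s}{\le D}$ subshapes of size at most $D$ fires simultaneously.

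The paper resolves exactly this problem by \emph{not} attempting to compute an ``exactly-shape-$\tau$'' indicator. Instead it defines coefficients $h(v,G(S),\rho,\varphi)$ by the M\"obius-style recursion \eqref{eq:ldp-def-h}, so that for any realized tree neighborhood with edge set $S^*$ of size at most $D$, the telescoping sum $\sum_{S\subseteq S^*} h(v,G(S),\rho,\varphi)$ collapses to $(\strictround^{-1}\circ g)(v,G(S^*),\cdot)$ (Fact~\ref{fac:deg-D-sim-good}). The monomials are still products of positive indicators testing inclusion, so the degree stays bounded; the correctness is encoded in the coefficients rather than in the monomials' support. Your construction can be repaired precisely by replacing $\val_\tau(\omega)$ with these inclusion-exclusion coefficients, but as written the key combinatorial idea that makes the degree bound work is missing, and the claimed disjointness of the shape events is what hides the gap. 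Your handling of the decoration randomness $\varphi$ via the random-coefficient freedom is fine (and the discretization into cells is unnecessary — the coefficients can depend on $\varphi$ directly, which is what the paper does).
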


Both simulation results incur an error tolerance $\eta$ independent of $n$ which can be made arbitrarily small in $k$ and fail with probability only $\exp(-\Omega(n^{1/3}))$.

These simulation results imply that our hardness theorems, Theorems~\ref{thm:impossibility} and \ref{thm:impossibility-local-algs}, apply to any local memory algorithm.
We will also use these results in Section~\ref{sec:achievability} with the fact that the first phase of \verb|Fix| is a local memory algorithm to show that local algorithms and low degree polynomials solve random $k$-SAT at clause density $\alpha = (1-\eps) 2^k \log k / k$.

This section is structured as follows.
In Subsection~\ref{subsec:simulation-local-algs} we review properties of local algorithms and the $k$-SAT factor graph.
In Subsection~\ref{subsec:simulation-local-memory-algs} we define local memory algorithms.
In Subsection~\ref{subsec:simulation-local-memory-to-local} we prove Proposition~\ref{prop:local-memory-to-local}, and in Subsection~\ref{subsec:simulation-local-to-ldp} we prove Proposition~\ref{prop:local-to-ldp}.
Subsection~\ref{subsec:simulation-deferred} contains deferred proofs.

\subsection{Properties of Local Algorithms and the $k$-SAT Factor Graph}
\label{subsec:simulation-local-algs}

Throughout this section, fix a probability space $(\Omega, \P_\omega)$.
Let $\DFG(n,m,k,(\Omega, \P_\omega))$ denote the law of the decorated random $k$-SAT factor graph $(G, \rho, \varphi)$, where $\Phi \sim \Phi_k(n,m)$, $(G,\rho)$ is the factor graph of $\Phi$, and $\varphi \sim (\Omega, \P_\omega)^{\otimes (V_G \cup E_G)}$.
We write this as $\DFG(n,m,k)$ when $(\Omega, \P_\omega)$ is unambiguous.

Equivalently, $(G, \rho, \varphi) \sim \DFG(n,m,k)$ can be sampled as follows.
$\Va_G = \{v_1,\ldots,v_n\}$ and $\Cl_G = \{c_1,\ldots,c_m\}$ are fixed.
$E_G$ consists of $k$ edges from each $c\in \Cl_G$ to i.i.d. uniformly random vertices in $\Va_G$, and $\rho, \varphi$ are sampled by $\rho \sim \unif(\{\T,\F\})^{\otimes E_G}$, $\varphi \sim (\Omega, \P_\omega)^{\otimes (V_G\cup E_G)}$.

A (possibly infinite) graph is \emph{locally finite} if every vertex has finite degree.
The formalism in Definitions~\ref{defn:rooted-decorated-bipartite-graph}, \ref{defn:r-nbd}, and \ref{defn:local-fn} applies verbatim to locally finite $G$.
The local geometry of a sample from $\DFG(n,m,k)$ can be understood in analogy to the following locally finite tree.

\begin{definition}[Decorated Alternating Galton-Watson Tree]
    Let $d_1 > 0$, $d_2 \in \bN$.
    Let $\DGW(d_1, d_2, (\Omega, \P_\omega))$ denote the law of the following rooted decorated tree $(o, T, \rho, \varphi)$.
    The rooted tree $(o, T)$ is sampled by the following procedure.
    \begin{itemize}
        \item Start with a root vertex $o$ in layer $0$.
        \item For $\ell \ge 1$:
        \begin{itemize}
             \item If $\ell$ is even, each vertex in layer $\ell$ independently spawns $\Pois(d_1)$ children in layer $\ell+1$.
             \item If $\ell$ is odd, each vertex in layer $\ell$ spawns $d_2$ children in layer $\ell+1$.
        \end{itemize}
        Each non-root vertex is connected to its parent by an edge.
    \end{itemize}
    Let $\Va_T$ and $\Cl_T$ be the sets of even and odd depth vertices of $T$.
    Further, let $V_T = \Va_T \cup \Cl_T$ and let $E_T$ be the edge set of $T$.
    Sample $\rho \sim \unif(\{\T,\F\})^{\otimes E_T}$ and $\varphi \sim (\Omega, \P_\omega)^{\otimes (V_T\cup E_T)}$.
\end{definition}

When $(\Omega, \P_\omega)$ is unambiguous, we write this as $\DGW(d_1,d_2)$.
The significance of this tree is that as $n\to\infty$, local neighborhoods $N_r(v, G, \rho, \varphi)$ of a sample $(G,\rho,\varphi) \sim \DFG(n,m,k)$, where $v\in \Va_G$ is fixed, converge weakly to local neighborhoods of the root of $\DGW(\alpha k, k-1)$.
This is analogous to the fact that local neighborhoods of the sparse Erd\H{o}s-R\'enyi graph $G(n, d/n)$ converge weakly to local neighborhoods of the root of the Poisson Galton-Watson tree $\PGW(d)$.

We now state several lemmas pertaining to local geometry of samples from $\DGW(d_1,d_2)$ and $\DFG(n,m,k)$.
Lemmas~\ref{lem:simulation-dgw-nbd-growth-rate} and \ref{lem:simulation-dfg-nbd-growth-rate} control the local neighborhood sizes of the root of $\DGW(d_1,d_2)$ and of a  left-vertex in $\DFG(n,m,k)$.
Lemma~\ref{lem:simulation-dfg-to-dgw} makes precise the sense in which local neighborhoods of left-vertices of $\DFG(n, m, k)$ converge to local neighborhoods of the root of $\DGW(\alpha k, k-1)$.
Lemma~\ref{lem:simulation-dfg-concentration} shows concentration for the sum of a local function.
These lemmas are analogous to \cite[Lemma 11.1, Lemma 11.2, Lemma 12.4, Proposition 12.3]{BCN20}, which give the analogous results with $\cG(n,m,k)$ and $\DGW(\alpha k, k-1)$ replaced by $G(n,d/n)$ and $\PGW(d)$ (and without the decorations $\rho, \varphi$, which do not affect the results).
We omit their proofs, which are easily adapted from the corresponding proofs of \cite{BCN20}.

\begin{lemma}
    \label{lem:simulation-dgw-nbd-growth-rate}
    Let $d_1, d_2 \ge 2$ and $(o, T, \rho, \varphi) \sim \DGW(d_1, d_2)$.
    There are universal constants $c_0, c_1 > 0$ such that for all $\lambda > 0$,
    \[
        \P \lt[
            \text{$|N_{2r}(o,T)| \le \lambda (d_1d_2)^r$ for all positive integers $r$}
        \rt]
        \ge
        1 - c_1 e^{-c_0 \lambda}.
    \]
\end{lemma}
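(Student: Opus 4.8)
The plan is to prove the tail bound on neighborhood sizes of a Decorated Alternating Galton--Watson tree by reducing to a branching process comparison. Since the decorations $\rho, \varphi$ play no role in the cardinality $|N_{2r}(o,T)|$, I would work purely with the underlying tree $(o,T)$. The key observation is that if we ``collapse'' the two-step structure --- pairing each odd layer with the even layer below it --- we obtain an ordinary Galton--Watson tree whose offspring distribution is a random sum: a vertex at an even depth has a number of ``grandchildren'' equal to $\sum_{i=1}^{\Pois(d_1)} d_2 Z_i$ where... actually more precisely, a vertex at even depth spawns $\Pois(d_1)$ odd-layer children, each of which spawns exactly $d_2$ even-layer grandchildren, so the grandchild count is $d_2 \cdot \Pois(d_1)$, which has mean $d_1 d_2$. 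Call this collapsed tree $\widehat T$; then $|N_{2r}(o,T)| \le (1 + d_2)\,|N_r^{\widehat T}(o)|$ crudely (each collapsed vertex carries at most $d_2$ intermediate odd vertices plus itself, and one can be more careful), where $|N_r^{\widehat T}(o)|$ counts vertices of $\widehat T$ within distance $r$. So it suffices to prove: for a Galton--Watson tree with offspring distribution $d_2 \cdot \Pois(d_1)$ of mean $\mu := d_1 d_2 \ge 4$, the total population $W_r$ in the first $r$ generations satisfies $\P[W_r \le \lambda \mu^r \text{ for all } r] \ge 1 - c_1 e^{-c_0\lambda}$ for universal $c_0, c_1$.

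First I would handle a single generation. Let $S_r = \sum_{j} X_j^{(r)}$ be the size of generation $r$ of $\widehat T$, where given generation $r-1$ has size $S_{r-1}$, $S_r$ is a sum of $S_{r-1}$ i.i.d. copies of $d_2 \cdot \Pois(d_1)$. The martingale $M_r = S_r / \mu^r$ is nonnegative with $\E M_r = 1$. A standard approach is to control $\E[e^{\theta M_r}]$ uniformly in $r$ for some fixed small $\theta > 0$: because the offspring distribution $d_2\cdot\Pois(d_1)$ has all exponential moments and mean $\mu$, one can show via the recursion $\E[e^{\theta M_r} \mid \mathcal F_{r-1}] = \big(\phi(\theta/\mu^r)\big)^{S_{r-1}}$, where $\phi$ is the offspring m.g.f., and an inequality of the form $\phi(s) \le \exp(\mu s + C s^2)$ valid for $s$ in a neighborhood of $0$, that $\E[\exp(\theta M_r)] \le \E[\exp(\theta' M_{r-1})]$ for a slightly larger $\theta' = \theta(1 + C\theta/\mu^r \cdot(\dots))$; since $\sum_r \mu^{-r} < \infty$ (using $\mu \ge 4 > 1$), the product of these corrections converges, so $\sup_r \E[\exp(\theta M_r)] \le K$ for a universal constant $K$ and a universal $\theta > 0$ (here is where $d_1, d_2 \ge 2$, hence $\mu \ge 4$, enters: it gives a uniform geometric decay rate for the error terms). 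Then Markov gives $\P[M_r \ge t] \le K e^{-\theta t}$ for each $r$, and a union bound over $r$ against a geometrically growing threshold --- i.e., bounding $\P[W_r > \lambda\mu^r \text{ for some } r]$ by $\sum_r \P[S_r > \tfrac{\lambda}{2}\mu^r \cdot 2^{-?}]$... more cleanly, noting $W_r = \sum_{i\le r} S_i \le 2 S_r$ when the tree is growing, or just summing $\P[S_r > \lambda \mu^r \cdot \rho^r]$ over $r$ for a suitable $\rho<1$ with $\rho\mu \ge $ something --- yields $\P[\,\exists r: W_r > \lambda\mu^r\,] \le \sum_{r\ge 1} K e^{-\theta \lambda c^r} \le c_1 e^{-c_0\lambda}$ for universal $c_0, c_1$. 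Translating back through $|N_{2r}(o,T)| \le (d_2+1)|N_r^{\widehat T}(o)| \le (d_2+1) W_r$ and absorbing the $d_2+1$ factor (replacing $\lambda$ by $\lambda/(d_2+1)$ would reintroduce $d_2$-dependence, so instead I would be more careful: since $d_2 \le \mu = d_1 d_2$, we have $(d_2+1) W_r \le 2\mu W_r \le 2 W_{r+1}/(\text{growth})$... or simply reindex $|N_{2r}| \le |N_{2r+1}^{(1,2,\dots)}|$ and bound the latter by $c\, W_{r+1}$; the cleanest is to absorb $(d_2+1)$ into the per-generation m.g.f. bound since one extra generation multiplies the population by a factor with the same exponential tail) gives the claimed bound $\P[|N_{2r}(o,T)| \le \lambda(d_1d_2)^r \ \forall r] \ge 1 - c_1 e^{-c_0\lambda}$.

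The main obstacle I anticipate is getting the constants $c_0, c_1$ to be genuinely \emph{universal} --- independent of $d_1, d_2$ --- rather than depending on $\mu$. The naive supermartingale/m.g.f. argument produces a $\theta$ and $K$ that a priori degrade as $\mu \to \infty$ or as the offspring variance grows. The resolution is to exploit the scaling structure: work with the m.g.f. of $M_r = S_r/\mu^r$ evaluated at arguments $s = \theta/\mu^r$ that are themselves shrinking, so the relevant local expansion of the Poisson m.g.f. is always near $0$ with error controlled by $\mu \cdot s^2 = \theta^2/(\mu^{2r-1}) \cdot d_2$; the factor $d_2$ here is the danger, but since $d_2 \le \mu$ and we have the decay $\mu^{-(2r-1)}$, for $r \ge 1$ this is at most $\theta^2 \mu^{-(2r-2)} \le \theta^2$, bounded uniformly. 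One must be careful that the $r=1$ term alone (the offspring of the root) already has a $d_2$-heavy tail --- $d_2 \cdot \Pois(d_1)$ with $d_2$ possibly large --- but this is a single $\Pois(d_1)$ scaled by $d_2$, and after dividing by $\mu = d_1 d_2$ it is $\Pois(d_1)/d_1$, whose exponential moments at a fixed $\theta$ are bounded uniformly in $d_1 \ge 2$ (indeed $\E[\exp(\theta \Pois(d_1)/d_1)] = \exp(d_1(e^{\theta/d_1}-1)) \le \exp(\theta + \theta^2)$ for $\theta \le 1$, say). So the universality does go through, but it requires this careful tracking of where the $d$-dependence sits; I would flag this as the step deserving the most attention. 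Given that the lemma is stated as analogous to \cite[Lemma 11.1]{BCN20}, I would ultimately cite that proof structure and only indicate the modifications needed for the alternating two-layer offspring law.
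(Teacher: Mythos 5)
The paper omits this proof entirely, citing \cite{BCN20} for the analogous statement about $\PGW(d)$ and noting the adaptation to $\DGW(\alpha k, k-1)$ is routine. Your proposal is essentially the expected adaptation --- collapsing the alternating tree to an ordinary GW tree with offspring distribution $d_2 \cdot \Pois(d_1)$ of mean $\mu = d_1 d_2 \ge 4$, then controlling the m.g.f. of the normalized martingale $M_r = S_r/\mu^r$ --- and you correctly identify the central issue (keeping $\theta$ and $K$ universal despite the offspring variance $d_1 d_2^2 = \mu d_2$ growing with $d_2$) and correctly resolve it via the scaling $d_2/\mu^{2r-1} \le 1/d_1 \le 1/2$.

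Two things to tighten. First, the translation factor from $|N_{2r}(o,T)|$ to the collapsed population $W_r = \sum_{i\le r} S_i$ is $(1 + 1/d_2) \le 3/2$, not $(1+d_2)$: each odd-layer vertex has exactly $d_2$ even-layer children, so the odd-layer vertices in $N_{2r}$ number exactly $(W_r - 1)/d_2$. Your factor $(1+d_2)$ overcounts, and the subsequent workaround you sketch for the purported $d_2$-dependence is unnecessary once the factor is corrected. Second, the ``union bound over $r$ against a geometrically growing threshold'' step as written does not converge --- $\P[M_r > t]$ is bounded uniformly in $r$, so $\sum_r \P[M_r > t]$ diverges, and the candidate fixes you float (e.g.\ ``$W_r \le 2S_r$ when the tree is growing'') either fail on the event that the tree temporarily shrinks or do not actually produce a summable series. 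The clean route is a supermartingale maximal inequality: choose a decreasing sequence $\theta_r \downarrow \theta_\infty > 0$ (with $\theta_0 - \theta_\infty$ absorbing the geometrically-decaying correction terms) so that $\exp(\theta_r M_r)$ is a supermartingale, and apply Doob to get $\P[\sup_r M_r > s] \le e^{\theta_0} e^{-\theta_\infty s}$; then $|N_{2r}| \le \tfrac{3}{2}\sum_{i\le r} M_i \mu^i \le \tfrac{3}{2}\cdot\tfrac{\mu}{\mu-1}(\sup_i M_i)\,\mu^r \le 2(\sup_i M_i)\,\mu^r$ since $\mu \ge 4$, and the lemma follows with universal $c_0, c_1$.
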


\begin{lemma}
    \label{lem:simulation-dfg-nbd-growth-rate}
    Let $\alpha k, k-1 \ge 2$.
    Let $(G, \rho, \varphi) \sim \DFG(n,m,k)$, and let $v\in \Va_G$ be fixed.
    There are universal constants $c_0, c_1 > 0$ such that for all $\lambda > 0$,
    \[
        \P \lt[
            \text{$|N_{2r}(v,G)| \le \lambda (\alpha k(k-1))^r$ for all positive integers $r$}
        \rt]
        \ge
        1 - c_1 e^{-c_0 \lambda}.
    \]
\end{lemma}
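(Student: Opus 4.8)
The plan is to deduce this lemma from Lemma~\ref{lem:simulation-dgw-nbd-growth-rate} by the standard coupling between local neighborhoods of $\DFG(n,m,k)$ and $\DGW(\alpha k, k-1)$, essentially mimicking the proof of \cite[Lemma~11.2]{BCN20} in the Erd\H{o}s-R\'enyi setting. Concretely, I would run a breadth-first exploration of the neighborhood of the fixed left-vertex $v$ in $(G,\rho,\varphi)\sim\DFG(n,m,k)$, revealing edges of the factor graph one vertex at a time. The key structural observation is that as long as the explored portion has size at most $n^{\Theta(1)}$ (in particular, at most the $\lambda(\alpha k(k-1))^r$ we are tracking, which is $o(n)$), the exploration can be coupled with the exploration of $\DGW(\alpha k, k-1)$ so that the $\DFG$ neighborhood is \emph{dominated} by the $\DGW$ neighborhood: each newly discovered variable-vertex $v_j$ is incident to $\Bin(m', k)\preceq \Pois(\alpha k)$ further clauses (where $m'$ is the number of as-yet-unexplored clauses, and $\alpha k$ dominates since $m\le \alpha n$ and each of $k$ slots lands on $v_j$ with probability $1/n$), and each clause-vertex spawns exactly $k-1$ further variable-vertices (some of which may collide with already-explored vertices, which only shrinks the neighborhood). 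A cleaner route, matching \cite{BCN20} exactly, is to couple the exploration to an \emph{unconditional} branching process where variable-vertices spawn $\Pois(\alpha k)$ clause-children and clause-vertices spawn $k-1$ variable-children, with the coupling succeeding as long as no collisions occur; on the event that the $\DGW$-side process stays below the stated bound, the number of variable-vertices explored is $O(n^{\text{const}})=o(n)$, so collisions occur with probability $o(1)$ — but since we want an exponential-in-$\lambda$ tail uniformly, it is cleanest to observe that the $\DFG$ neighborhood is \emph{stochastically dominated} by the $\DGW$ one regardless, because (i) $\Bin(m',k)\preceq\Bin(m,k)\preceq\Pois(\alpha k)$ pointwise in distribution via a standard binomial-to-Poisson domination (using $m\le\alpha n$ and $n$ large), and (ii) collisions and the self-avoiding nature of $G$ can only decrease neighborhood sizes relative to the tree.

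Given this stochastic domination, the event $\{|N_{2r}(v,G)|\le \lambda(\alpha k(k-1))^r \text{ for all } r\}$ contains the event $\{|N_{2r}(o,T)|\le \lambda(\alpha k(k-1))^r \text{ for all } r\}$ for the coupled $\DGW(\alpha k, k-1)$ tree $(o,T,\rho,\varphi)$. Applying Lemma~\ref{lem:simulation-dgw-nbd-growth-rate} with $d_1=\alpha k$, $d_2=k-1$ (both $\ge 2$ by hypothesis, since $\alpha k \ge 2$ and $k-1\ge 2$) and the same $\lambda$, we get that this event has probability at least $1 - c_1 e^{-c_0\lambda}$ for universal constants $c_0, c_1$, which is exactly the claimed bound. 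One small point to handle: the domination argument above is in distribution, so I should phrase it via an explicit monotone coupling of the two breadth-first explorations layer by layer, using that $\Bin(m,k)$ is dominated by $\Pois(\alpha k)$ and that merging of tree-vertices into a single graph-vertex is a monotone (size-decreasing) operation on neighborhoods.

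The main obstacle is making the domination rigorous at the level of \emph{all radii simultaneously}, since the naive union bound over $r$ would lose the clean exponential tail. This is exactly the issue that \cite[Lemma~11.2]{BCN20} handles, and the fix is that the monotone coupling gives a pointwise inequality $|N_{2r}(v,G)| \le |N_{2r}(o,T)|$ for every $r$ on a single coupling event of probability $1$ (the coupling never fails because domination, not equality, is all we need), so the ``for all $r$'' quantifier passes through for free and we simply quote Lemma~\ref{lem:simulation-dgw-nbd-growth-rate}. Since the excerpt explicitly says the proof is ``easily adapted from the corresponding proof of \cite{BCN20},'' I would keep this brief: state the monotone coupling, cite the binomial--Poisson domination, invoke Lemma~\ref{lem:simulation-dgw-nbd-growth-rate}, and be done.
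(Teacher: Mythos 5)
The paper does not actually prove this lemma: it explicitly omits the proof and refers to \cite{BCN20}. So there is no paper argument to compare against; the question is whether your sketch is sound. The overall strategy (breadth-first exploration, domination by the Galton–Watson tree, invoke Lemma~\ref{lem:simulation-dgw-nbd-growth-rate}) is the intended one, and the observation that collisions only shrink neighborhood sizes is correct. However, the domination step as stated has a genuine gap. For any $N\ge 1$ and $p\in(0,1)$ one has $\P[\Bin(N,p)\ge 1]=1-(1-p)^N > 1-e^{-Np}=\P[\Pois(Np)\ge 1]$, so a binomial is \emph{never} stochastically dominated by the Poisson with the same mean; the tight Poisson majorant is $\Pois(-N\log(1-p))$, and $-N\log(1-p)>Np$. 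In your setting the number of clause-children of a newly discovered variable vertex is (up to collisions) $\Bin(m',1-(1-1/n)^k)$, which is dominated only by $\Pois(\alpha'k)$ with $\alpha'=\alpha(1+\Theta(k/n))$, not by $\Pois(\alpha k)$. Pushing this through Lemma~\ref{lem:simulation-dgw-nbd-growth-rate} yields a bound with $(\alpha'k(k-1))^r=(1+\Theta(k/n))^r(\alpha k(k-1))^r$, and the $r$-dependent correction prevents the universal-constants conclusion from following as stated. (Also, a notational point: $\Bin(m',k)$ should read $\Bin(m',1-(1-1/n)^k)$ or similar — the second argument of a binomial must be a probability.)

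The cleanest repair is to drop pathwise stochastic domination and use the moment-generating-function inequality $\E e^{s\Bin(N,p)}=(1+p(e^s-1))^N\le e^{Np(e^s-1)}=\E e^{s\Pois(Np)}$, valid for all $s\in\bR$. This is exactly what the Chernoff argument underlying Lemma~\ref{lem:simulation-dgw-nbd-growth-rate} (cf.\ the proof of Lemma~\ref{lem:simulation-dgw-better-growth-rate}) actually needs, so one should re-run that proof directly on the $\DFG$ exploration: bound each layer size via Chernoff applied to the binomial offspring, then use the above MGF domination to transfer the Poisson Chernoff estimates, keeping the collision observation to ensure the layer sizes are only overcounted. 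Alternatively, one can keep your domination by $\Pois(\alpha'k)$ and observe that for the only $r$ that matter (those with $\lambda(\alpha k(k-1))^r\le n+m$, else the bound is vacuous) the correction factor is $1+o(1)$, so the claimed bound holds for all $n$ large enough given $\alpha,k$ — which is the regime the paper uses — but one should say so rather than asserting the domination holds exactly.
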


Recall that $\Lambda$ is the set of (possibly infinite, locally finite) rooted decorated bipartite graphs.

\begin{lemma}
    \label{lem:simulation-dfg-to-dgw}
    Let $\alpha k, k-1 \ge 2$.
    Let $(o, T, \rho, \varphi) \sim \DGW(\alpha k, k-1)$, $(G, \rho', \varphi') \sim \DFG(n,m,k)$, and let $v\in \Va_G$ be fixed.
    Let $f : \Lambda \to [-1,1]$ be a $2r$-local function.
    There exists $c>0$ (depending on $\alpha,k,r$) such that for all $n$,
    \[
        \lt|
            \E f(o,T,\rho,\varphi)
            -\E f(v,G,\rho',\varphi')
        \rt|
        \le
        \f{c\log n}{n^{1/2}}.
    \]
\end{lemma}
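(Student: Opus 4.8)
The plan is to couple a breadth-first exploration of the depth-$2r$ neighborhood of the root in each model, so that the two rooted decorated bipartite graphs $N_{2r}(o,T,\rho,\varphi)$ and $N_{2r}(v,G,\rho',\varphi')$ are isomorphic with high probability; since $f$ is $2r$-local and $|f|\le 1$, the two expectations then differ by at most twice the probability that the coupling fails. First I would apply Lemma~\ref{lem:simulation-dfg-nbd-growth-rate} with $\lambda = \Theta(\log n)$ chosen so that the failure probability $c_1 e^{-c_0\lambda}$ is at most $n^{-1/2}$; together with its $\DGW(\alpha k,k-1)$ analogue Lemma~\ref{lem:simulation-dgw-nbd-growth-rate}, this restricts attention to the event that both explored neighborhoods have at most $L := \lambda\,(\alpha k(k-1))^r = \Theta(\log n)$ vertices (recall $\alpha,k,r$ are constants in $n$), at a total cost of $O(n^{-1/2})$ in probability.

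I would then run the two explorations in lockstep. In $\DGW(\alpha k, k-1)$ a vertex at even depth spawns $\Pois(\alpha k)$ clause-children and a vertex at odd depth spawns exactly $k-1$ variable-children. In $\DFG(n,m,k)$ a clause reached through one of its $k$ incident slots has exactly $k-1$ remaining variable-slots, matching the odd-depth rule exactly; and a newly reached variable $v'$ has a number of incident $(\text{clause},\text{slot})$ pairs that, conditioned on the exploration so far, is $\Bin(M',1/n)$ with $M' = mk - O(L)$, which I would couple to $\Pois(\alpha k)$ at total-variation cost $O(1/n)$ — using Le Cam's bound $\TV(\Bin(M,1/n),\Pois(M/n))\le M/n^2$ together with $|mk/n - \alpha k| = O(1/n)$ and the $O(L/n)$ shift in the Poisson mean coming from the already-used slots. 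The coupling of the \emph{topologies} fails only if one of these $O(L)$ Binomial--Poisson pairs disagrees, or if a collision occurs, i.e.\ the explored subgraph of $\DFG$ fails to be a tree; the latter has probability $O(L^2/n)$ by the usual birthday estimate, since each fresh sampling step hits an already-discovered vertex with probability $O(L/n)$ and there are $O(L)$ such steps. On the complementary event the explored topologies are isomorphic, and feeding both explorations from a common i.i.d.\ source for $\rho,\varphi$ makes the decorations agree as well, so $N_{2r}(o,T,\rho,\varphi)\cong N_{2r}(v,G,\rho',\varphi')$ and $f$ takes the same value on both.

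Combining the two steps, the coupling fails with probability $O(n^{-1/2}) + O(L^2/n) = O(\log^2 n / n + n^{-1/2})$, which is at most $c\log n/n^{1/2}$ for a suitable $c = c(\alpha,k,r)$; since $|f|\le 1$ this bounds $|\E f(o,T,\rho,\varphi) - \E f(v,G,\rho',\varphi')|$, as claimed. (The argument in fact yields the stronger rate $O(\log^2 n/n)$, but the weaker stated bound is all that is needed elsewhere.) The one delicate point is the bookkeeping in the second step: one must track precisely which $(\text{clause},\text{slot})$ pairs remain available as the BFS proceeds and verify that the conditional law of a new variable's degree is $\Bin(M',1/n)$ with $M'$ within $O(L)$ of $mk$. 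Because the claimed rate leaves a polynomial gap above the truth, crude $O(L^2/n)$ bounds on each error term suffice and none of this needs to be sharp; this is essentially the argument of the corresponding lemma in \cite{BCN20}, adapted from the Erd\H{o}s--R\'enyi graph to the $k$-SAT factor graph.
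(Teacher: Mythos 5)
The paper does not give a self-contained proof of this lemma, deferring to the analogous results for $G(n,d/n)$ and $\PGW(d)$ in \cite{BCN20}; your breadth-first exploration coupling using Le Cam's binomial--Poisson approximation together with a birthday-type bound on collisions is precisely the standard adaptation the paper alludes to, and the estimates are sound. The only minor slip is that the per-variable binomial--Poisson total variation cost is $O(L/n) = O(\log n / n)$ rather than $O(1/n)$ once the $O(L/n)$ drift in the Poisson mean is folded in, but this is absorbed into your aggregate $O(L^2/n)$ estimate and does not affect the conclusion.
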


\begin{lemma}
    \label{lem:simulation-dfg-concentration}
    Let $\alpha k, k-1 \ge 2$, and let $(G, \rho, \varphi) \sim \DFG(n,m,k)$.
    Let $f : \Lambda \to [-1, 1]$ be a $2r$-local function.
    There exists $c>0$ (depending on $\alpha,k,r$) such that for all $p\ge 2$,
    \[
        \E \lt[\lt|
            \sum_{v\in \Va_G}
            f(v,G,\rho,\varphi)
            -
            \E
            \sum_{v\in \Va_G}
            f(v,G,\rho,\varphi)
        \rt|^p\rt]
        \le
        \lt(c n^{1/2} p^{3/2}\rt)^p.
    \]
\end{lemma}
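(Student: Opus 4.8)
The plan is to prove Lemma~\ref{lem:simulation-dfg-concentration} by the same route used for \cite[Proposition 12.3]{BCN20}, namely a moment bound obtained by combining a bounded-differences/Efron--Stein-type inequality with the neighborhood-growth control of Lemma~\ref{lem:simulation-dfg-nbd-growth-rate}. Write $F = \sum_{v \in \Va_G} f(v,G,\rho,\varphi)$. The randomness defining $(G,\rho,\varphi) \sim \DFG(n,m,k)$ is a family of independent coordinates: for each clause $c \in \Cl_G$ the $k$ endpoints of its edges (i.i.d.\ uniform in $\Va_G$) together with the polarities $\rho$ on those edges, plus the vertex decorations $\varphi$ on $\Va_G \cup \Cl_G$. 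First I would organize all of this into $O(n)$ independent blocks: one block per clause $c$ (its endpoints, edge polarities, and edge decorations), and one block per vertex $v \in \Va_G$ (its decoration). Resampling the block attached to a single clause $c$ can only change the value of $f(v,G,\rho,\varphi)$ for those $v$ whose $2r$-neighborhood touches $c$, i.e.\ for $v \in N_{2r}(c,G) \cup N_{2r}(c,G')$ where $G'$ is the resampled graph; since $f$ is $[-1,1]$-valued, the change in $F$ is at most $2(|N_{2r}(c,G)| + |N_{2r}(c,G')|)$, and each $N_{2r}(c,\cdot)$ is contained in the $(2r{+}1)$-neighborhood of any neighbor of $c$. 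Resampling a vertex decoration block at $v$ changes $F$ by at most $2|N_{2r}(v,G)|$.

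The key step is then to feed these (random) bounded differences into a polynomial Efron--Stein / Boucheron--Bousquet--Lugosi--Massart moment inequality: for $p \ge 2$,
\[
    \E |F - \E F|^p
    \le
    (Cp)^{p/2}\,
    \E\Bigl[\Bigl(\sum_{i} \Delta_i^2\Bigr)^{p/2}\Bigr],
\]
where the sum is over the $O(n)$ blocks and $\Delta_i$ is (an upper bound on) the absolute change in $F$ from resampling block $i$. By the previous paragraph, $\sum_i \Delta_i^2 \le 4\sum_{v \in \Va_G} |N_{2r+1}(v,G)|^2 \cdot O(1)$, up to the obvious re-indexing of clause blocks as vertex blocks of slightly larger radius. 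Thus it remains to control $\E\bigl[\bigl(\sum_{v} |N_{2r+1}(v,G)|^2\bigr)^{p/2}\bigr]$. Here Lemma~\ref{lem:simulation-dfg-nbd-growth-rate} gives, for each fixed $v$, a sub-exponential tail $\P[|N_{2(r+1)}(v,G)| > \lambda (\alpha k(k-1))^{r+1}] \le c_1 e^{-c_0\lambda}$; so $|N_{2r+1}(v,G)|$ has all moments bounded by constants depending only on $\alpha,k,r$, and the same holds for $|N_{2r+1}(v,G)|^2$. A union bound plus the sub-exponential tail shows that with probability $1 - n\,c_1 e^{-c_0\lambda}$ every $v$ satisfies $|N_{2r+1}(v,G)| \le \lambda' := O(\log n)$ (a constant multiple of $(\alpha k(k-1))^{r+1}\log n$ suffices), so on this event $\sum_v |N_{2r+1}(v,G)|^2 \le n (\lambda')^2 = O(n\log^2 n)$; off this event we use the trivial bound $\sum_v |N_{2r+1}(v,G)|^2 \le n \cdot n^2 = n^3$ and absorb its contribution using the tail, which beats $n^3$ for $\lambda' = \Theta(\log n)$. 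Putting this together gives $\E\bigl[\bigl(\sum_v |N_{2r+1}(v,G)|^2\bigr)^{p/2}\bigr] \le (c\, n \log^2 n)^{p/2}$ for a constant $c = c(\alpha,k,r)$, hence
\[
    \E |F - \E F|^p \le (Cp)^{p/2} (c\, n\log^2 n)^{p/2}
    = \bigl(c' n\log^2 n\, p\bigr)^{p/2}.
\]

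This is slightly weaker than the claimed $(cn^{1/2}p^{3/2})^p$; the remaining work is to remove the $\log^2 n$ by the standard two-regime sharpening. Rather than bounding the tail of $\sum_v |N_{2r+1}(v,G)|^2$ by a single high-probability event, one splits the vertices into those with small neighborhoods (where $|N_{2r+1}(v,G)| = O(1)$ w.h.p., contributing $O(n)$ to the sum) and the rare vertices with large neighborhoods, and sums the sub-exponential tails geometrically; this replaces $n\log^2 n$ by $O(n)$ at the cost of enlarging the constant $c$. Equivalently, one applies the moment inequality in the form with self-bounding/vertex-wise differences and invokes Lemma~\ref{lem:simulation-dfg-nbd-growth-rate} directly inside the expectation, exactly as in the proof of \cite[Proposition 12.3]{BCN20}. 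After this sharpening we obtain $\E|F - \E F|^p \le (c\, n^{1/2} p)^p$; noting $p \le p^{3/2}$ for $p \ge 1$ gives the stated bound $(c n^{1/2} p^{3/2})^p$. The main obstacle is precisely this last point: the random bounded differences $\Delta_v$ are heavy enough that a naive high-probability truncation loses logarithmic factors, so one must carry the neighborhood-size tail through the moment computation carefully (the two-regime argument) to land the clean $n^{1/2}p^{3/2}$ scaling. Everything else---identifying the independent blocks, bounding the per-block influence by a local neighborhood size, and quoting the growth-rate lemma---is routine and adapts verbatim from \cite{BCN20}.
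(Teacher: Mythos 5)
The paper omits a proof of this lemma, citing \cite[Proposition 12.3]{BCN20}, and your plan follows the same general route: decompose $(G,\rho,\varphi)$ into $O(n)$ independent blocks (one per clause, one per vertex decoration), feed the random per-block differences into a Boucheron--Bousquet--Lugosi--Massart polynomial Efron--Stein inequality $\E|F-\E F|^p \le (Cp)^{p/2}\,\E[V_+^{p/2}]$, and control $V_+$ via the sub-exponential neighborhood tails of Lemma~\ref{lem:simulation-dfg-nbd-growth-rate}. This skeleton is the right one.

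The execution of the key step, bounding $\E[V_+^{p/2}]$, has two gaps. First, the ``obvious re-indexing of clause blocks as vertex blocks'' is not obvious and, done naively, worsens the $p$-exponent: while $|N_{2r}(c,G)| \le |N_{2r+1}(v,G)|$ for any neighbor $v$ of $c$, a single $v$ plays this role for up to $\deg(v) \le |N_{2r+1}(v,G)|$ clauses, so the re-indexing gives $\sum_c |N_{2r}(c,G)|^2 \le \sum_v |N_{2r+1}(v,G)|^3$; the cube costs an extra half power of $p$ in the moment bound, landing at $p^2$ rather than $p^{3/2}$. The fix is simply not to re-index---keep the clause sum over its own $m=O(n)$ terms. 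Second, the truncation/two-regime discussion and the claimed intermediate bound $(cn^{1/2}p)^p$ are both unjustified and internally inconsistent: if removing the $\log^2 n$ really left $\E[(\sum_i\Delta_i^2)^{p/2}]\le (cn)^{p/2}$, combining with the $(Cp)^{p/2}$ factor would give $p^{1/2}$, not $p$; and in fact $(cn)^{p/2}$ is false because the heavy tails of $|N_{2r}(v,G)|^2$ make $\sum_i\Delta_i^2$ strictly sub-exponential in the $p/2$-th moment. The clean route, which lands exactly on the stated $p^{3/2}$ without any truncation, is the power-mean inequality: for each of the two sums (over $m$ clause blocks and $n$ vertex blocks), with $M=O(n)$ terms,
\[
    \E\Bigl[\Bigl(\sum_{i=1}^M \Delta_i^2\Bigr)^{p/2}\Bigr]
    \le
    M^{p/2-1}\sum_{i=1}^M \E\bigl[\Delta_i^p\bigr]
    \le
    M^{p/2}\max_i \E\bigl[\Delta_i^p\bigr],
\]
and integrating the sub-exponential tail of Lemma~\ref{lem:simulation-dfg-nbd-growth-rate} gives $\E[|N_{2r}(v,G)|^p]\le (Cp)^p$, hence $\E[V_+^{p/2}]\le (cn)^{p/2}(Cp)^p$; multiplying by $(Cp)^{p/2}$ yields $(c' n^{1/2}p^{3/2})^p$.
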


We can translate Lemma~\ref{lem:simulation-dfg-concentration}, into the following tail bound for sums of local functions.
\begin{corollary}
    \label{cor:simulation-dfg-concentration}
    Let $\alpha k, k-1 \ge 2$, and let $(G, \rho, \varphi) \sim \DFG(n,m,k)$.
    Let $f : \Lambda \to [-1, 1]$ be a $2r$-local function.
    There exists $c>0$ (depending on $\alpha,k,r$) such that for all $t \ge cn^{1/2}$,
    \[
        \P \lt[ \lt|
            \sum_{v\in \Va_G}
            f(v,G,\rho,\varphi)
            -
            \E
            \sum_{v\in \Va_G}
            f(v,G,\rho,\varphi)
        \rt| \ge t \rt]
        \le
        \exp\lt(
            -\f{t^{2/3}}{cn^{1/3}}
        \rt).
    \]
\end{corollary}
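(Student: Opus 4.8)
# Proof Proposal for Corollary~\ref{cor:simulation-dfg-concentration}

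The plan is to derive the stated tail bound from the moment bound in Lemma~\ref{lem:simulation-dfg-concentration} by a routine optimization over $p$, the same way polynomial moment bounds are standardly converted into stretched-exponential tail bounds. Write $X = \sum_{v\in \Va_G} f(v,G,\rho,\varphi) - \E \sum_{v\in \Va_G} f(v,G,\rho,\varphi)$, so that Lemma~\ref{lem:simulation-dfg-concentration} gives $\E|X|^p \le (cn^{1/2}p^{3/2})^p$ for all $p\ge 2$, with the same constant $c$. By Markov's inequality applied to $|X|^p$, for any $t>0$ and any $p\ge 2$,
\[
    \P[|X|\ge t] \le \f{\E|X|^p}{t^p} \le \lt(\f{cn^{1/2}p^{3/2}}{t}\rt)^p.
\]

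The next step is to choose $p$ to make the right-hand side small. The natural choice is to pick $p$ so that $cn^{1/2}p^{3/2}/t$ is bounded away from $1$, say equal to $1/e$; this means taking $p = \lt(\f{t}{ecn^{1/2}}\rt)^{2/3} = \f{t^{2/3}}{(ec)^{2/3} n^{1/3}}$. With this choice the bound becomes $\P[|X|\ge t]\le e^{-p} = \exp\lt(-\f{t^{2/3}}{(ec)^{2/3}n^{1/3}}\rt)$. Replacing $c$ by a slightly larger universal-in-$n$ constant $c'$ (absorbing the $(ec)^{2/3}$ factor and adjusting so the exponent reads $t^{2/3}/(c'n^{1/3})$) gives exactly the claimed form. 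The one caveat is that Lemma~\ref{lem:simulation-dfg-concentration} requires $p\ge 2$; this is precisely why the corollary is only asserted for $t\ge cn^{1/2}$ (possibly after enlarging $c$): the constraint $p\ge 2$ translates to $t \ge 2^{3/2} ec\, n^{1/2}$, i.e. $t\gtrsim n^{1/2}$, so we simply take the constant in the corollary large enough that $t\ge cn^{1/2}$ forces $p\ge 2$. In the regime $t\ge cn^{1/2}$ we then also have $p\ge 2$ bounded below, so there is no further issue.

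I do not expect any real obstacle here — this is a standard ``moments to tails'' argument and all the work is in Lemma~\ref{lem:simulation-dfg-concentration}, which we are permitted to assume. The only mild care needed is bookkeeping of constants: one must verify that a single constant $c$ (depending on $\alpha,k,r$ but not $n$) can simultaneously serve as the moment constant, the threshold constant $t\ge cn^{1/2}$, and the constant in the exponent $t^{2/3}/(cn^{1/3})$, which is achieved by taking the maximum of the finitely many constants produced along the way. One should also note that $X$ is a bounded random variable (since $|f|\le 1$ and there are $n$ terms, $|X|\le 2n$), so all moments are finite and Markov's inequality applies without integrability concerns.
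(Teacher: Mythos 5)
Your proof is correct and follows exactly the paper's argument: apply Markov's inequality to $|X|^p$ with the moment bound from Lemma~\ref{lem:simulation-dfg-concentration}, choose $p = (t/(ecn^{1/2}))^{2/3}$ so the base is $1/e$, note that $t\ge 2^{3/2}ecn^{1/2}$ ensures $p\ge2$, and absorb constants. There is no substantive difference from the paper's proof.
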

\begin{proof}
    Let $c$ be as in Lemma~\ref{lem:simulation-dfg-concentration}, and suppose $t \ge 2^{3/2}ecn^{1/2}$.
    Set $p = \lt(\f{t}{ecn^{1/2}}\rt)^{2/3} \ge 2$, so by Lemma~\ref{lem:simulation-dfg-concentration},
    \begin{align*}
        \P \lt[ \lt|
            \sum_{v\in \Va_G}
            f(v,G,\rho,\varphi)
            -
            \E
            \sum_{v\in \Va_G}
            f(v,G,\rho,\varphi)
        \rt| \ge t \rt]
        &\le
        t^{-p}
        \E \lt[\lt|
            \sum_{v\in \Va_G}
            f(v,G,\rho,\varphi)
            -
            \E
            \sum_{v\in \Va_G}
            f(v,G,\rho,\varphi)
        \rt|^p\rt] \\
        &\le
        \lt(\f{cn^{1/2}p^{3/2}}{t}\rt)^p = \exp(-p) \\
        &=
        \exp\lt(
            -\f{t^{2/3}}{(ec)^{2/3}n^{1/3}}
        \rt).
    \end{align*}
    The result follows by adjusting the constant $c$.
\end{proof}

\subsection{Local Memory Algorithms}
\label{subsec:simulation-local-memory-algs}

We now define local memory algorithms.
In addition to the usual features of a local algorithm, these algorithms have access to a mutable memory map $\mu : V_G \to \bZ_{\ge 0}$, which we think of as an unlimited notepad on each variable.
The algorithm processes vertices $v\in V_G$ (both variables and clauses) in a uniformly random order.
Each step, the algorithm accesses the $r$-local neighborhood of a vertex and can overwrite the data written on any vertex in that neighborhood.
In the end, each variable $v\in \Va_G$ decides to be true or false depending on the final value $\mu(v)$ on its notepad.

To formalize this algorithm class, we will define memory-augmented versions of Definitions~\ref{defn:rooted-decorated-bipartite-graph}, \ref{defn:r-nbd}, and \ref{defn:local-fn}.
\begin{definition}[Rooted memory-augmented decorated bipartite graph]
    A \emph{memory-augmented decorated bipartite graph} is a tuple $(G, \rho, \varphi, \mu)$, where $(G, \rho, \varphi)$ is a decorated bipartite graph and $\mu$ is a function $\mu : V_G \to \bZ_{\ge 0}$.
    A \emph{rooted memory-augmented decorated bipartite graph} is a tuple $(v, G, \rho, \varphi, \mu)$, where $(G, \rho, \varphi, \mu)$ is a memory-augmented decorated bipartite graph and $v\in V_G$.
\end{definition}

Let $\tLambda$ denote the set of rooted memory-augmented decorated bipartite graphs.
Two such graphs are isomorphic of there exists a bijection between them preserving $v, \Va_G, \Cl_G, E_G, \rho, \varphi, \mu$.

\begin{definition}[$r$-neighborhood]
    Let $(v, G, \rho, \varphi, \mu) \in \tLambda$ and $r\in \bN$.
    Define $N_r(v, G, \rho, \varphi, \mu)$ to be $(v, G', \rho', \varphi', \mu') \in \tLambda$, where $(v, G', \rho', \varphi') = N_r(v, G, \rho, \varphi)$ and $\mu' = \mu \big|_{G'}$ is the restriction of $\mu$ to $G'$.
\end{definition}

\begin{definition}[$r$-local subroutine]
    An algorithm $f$ with input space $\tLambda$ is an \emph{$r$-local subroutine} if the execution of $f(v, G, \rho, \varphi)$ depends only on the isomorphism class of $N_r(v, G, \rho, \varphi, \mu) = (v, G', \rho', \varphi', \mu')$, and $f$ interacts with its input by editing the outputs of $\mu'$.
\end{definition}

We are now ready to define a local memory algorithm.
In the following definition, $\psi$ is an auxiliary random variable on each vertex that determines the order in which vertices are processed.

\begin{definition}[$r$-local memory algorithm]
    \label{defn:local-memory}
    Let $f_1$ be an $r$-local subroutine and $f_2 : \bZ_{\ge 0} \to \{\T,\F\}$ be a function.
    The $r$-local memory algorithm based on $(f_1, f_2)$, denoted $\cA_{f_1, f_2}$, runs as follows on input $\Phi \in \Omega_k(n,m)$ with factor graph $(G, \rho)$.
    \begin{enumerate}[label=(\arabic*), ref=\arabic*]
        \item Initialize $\mu : V_G \to \bZ_{\ge 0}$ to the all-$0$ map.
        Sample $\varphi \sim (\Omega, \P_\omega)^{\otimes (V_G \cup E_G)}$ and $\psi \sim \unif([0,1])^{\otimes V_G}$.
        \item Loop through vertices $v\in V_G$ (both variables and clauses) in increasing order of $\psi(v)$.
        For each $v$, run $f_1(v, G, \rho, \varphi, \mu)$.
        \item Output $x\in \{\T,\F\}^n$ where $x_i = f_2(\mu(v_i))$.
    \end{enumerate}
\end{definition}

We will see (Fact~\ref{fac:fix-local-memory}) that the first phase of \verb|Fix| is in this class.
The following variant of the sequential local algorithms in \cite{GS17} is also in this class.
\begin{definition}[Sequential $r$-local algorithm]
    \label{defn:seq-local}
    Let $f : \Lambda \to [0,1]$ be an $r$-local function.
    The sequential $r$-local algorithm based on $f$, denoted $\cB_f$, runs as follows on input $\Phi \in \Omega_k(n,m)$ with factor graph $(G, \rho)$.
    \begin{enumerate}[label=(\arabic*), ref=\arabic*]
        \item Sample $\varphi \sim (\Omega, \P_\omega)^{\otimes (V_G \cup E_G)}$ and $\psi \sim \unif([0,1])^{\otimes \Va_G}$.
        \item Loop through $v\in \Va_G$ in increasing order of $\psi(v)$.
        For each $v = v_i$:
        \begin{enumerate}[label=(\alph*), ref=\alph*]
            \item Compute $p = f(v, G, \rho, \varphi)$. Set $x_i = \T$ with probability $p$, and otherwise $x_i = \F$.
            \item Simplify $\Phi$ by deleting clauses satisfied by this setting of $x_i$ and appearances of $x_i$ in clauses not satisfied by this setting. Furthermore, delete any clauses that become empty (thus not satisfied) as a result of the latter operation.
            \item Let $G'$ be the corresponding simplification of $G$, and let $\rho' = \rho \big|_{G'}$ and $\varphi' = \varphi \big|_{G'}$.
            \item Set $(G, \rho, \varphi) \leftarrow (G', \rho', \varphi')$.
        \end{enumerate}
        \item Output $(x_1,\ldots,x_n) \in \{\T,\F\}^n$.
    \end{enumerate}
\end{definition}

\begin{fact}
    \label{fac:seq-local-local-memory}
    For any $r\in \bN$, a sequential $r$-local algorithm can simulated by a $\max(r,2)$-local memory algorithm.
\end{fact}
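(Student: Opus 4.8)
The plan is to exhibit, for a given sequential $r$-local algorithm $\cB_f$ (Definition~\ref{defn:seq-local}), a $\max(r,2)$-local memory algorithm $\cA_{f_1,f_2}$ (Definition~\ref{defn:local-memory}) together with a coupling of the two internal randomnesses under which $\cA_{f_1,f_2}(\Phi)=\cB_f(\Phi)$ almost surely. The memory map $\mu:V_G\to\bZ_{\ge 0}$ will record, for each variable $v$, whether $v$ is still unassigned or has been set to $\T$ or $\F$, and for each clause $c$, whether $c$ is still present in the formula to which $\cB_f$ has simplified $\Phi$; a countable codomain comfortably encodes these finitely many states. We enlarge $(\Omega,\P_\omega)$ so that each vertex decoration $\varphi(v)$ carries both the datum read by $f$ and an independent $\unif([0,1])$ coin; we couple the $f$-part of $\varphi$ to $\cB_f$'s decorations and use the coins to drive the Bernoulli choices. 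We couple the two visiting orders by letting $\cB_f$'s order on $\Va_G$ be the restriction to $\Va_G$ of $\cA_{f_1,f_2}$'s order on $V_G$; when $\cA_{f_1,f_2}$ visits a clause vertex, the subroutine does nothing.

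When $\cA_{f_1,f_2}$ visits a variable $v$, the subroutine must reproduce the step $\cB_f$ takes at $v$. Let $(G',\rho',\varphi')$ be the decorated factor graph of the simplified formula $\cB_f$ holds at that moment; since $\cB_f$ only deletes clauses and edges, $G'$ is a subgraph of $G$. The combinatorial heart of the argument is that this subgraph is determined locally by $\mu$: a clause $c$ is present in $G'$ iff no assigned variable of $c$ satisfies $c$ through an incident literal and $c$ is not yet empty, and an edge $(c,v')$ is present iff $c$ is present and $v'$ is unassigned. Running breadth-first search from $v$ in $G'$ to depth $r$, every clause or variable whose $\mu$-value must be consulted lies at $G$-distance at most $r$ from $v$ — this is precisely why the clause status is cached in $\mu$ rather than recomputed from the assignments of the clause's variables, which would push the needed radius to $r+1$. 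Hence the rooted decorated graph $N_r(v,G',\rho',\varphi')$ is a function of the isomorphism class of $N_{\max(r,2)}(v,G,\rho,\varphi,\mu)$. The subroutine reconstructs it, evaluates $p=f(N_r(v,G',\rho',\varphi'))$, reads the coin $u$ in $\varphi(v)$, records ``assigned $\T$'' in $\mu(v)$ if $u\le p$ and ``assigned $\F$'' otherwise, and then refreshes $\mu(c)$ for every clause $c$ adjacent to $v$: such a $c$ is marked deleted if $v$'s new value satisfies it via an incident literal, or if all variables of $c$ have now been assigned without satisfying it — a test that inspects the variables of $c$, which lie at $G$-distance at most $2$ from $v$. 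Thus all reads and writes remain inside the $\max(r,2)$-neighborhood of $v$, and $f_2$ maps the value recorded in $\mu(v)$ to the matching symbol in $\{\T,\F\}$.

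Correctness is by induction on the visiting order. The set of variable assignments visible to $\cB_f$ when it processes $v$ is exactly the set of $v'$ with smaller order value, which is the same set whose assignments are recorded in $\mu$ when $\cA_{f_1,f_2}$ reaches $v$; hence the cached variable and clause data agree with $\cB_f$'s current simplified formula, the reconstructed neighborhood equals $N_r(v,G',\rho',\varphi')$, the computed probability $p$ matches, and the shared coin makes $\mu(v)$ record exactly the value $\cB_f$ assigns to that variable. It follows that $\cA_{f_1,f_2}(\Phi)=\cB_f(\Phi)$ under the coupling, proving the Fact; in particular the output distributions coincide.

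I expect the main obstacle to be the radius accounting in the middle paragraph: verifying that caching clause presence in $\mu$ keeps the neighborhood reconstruction strictly within radius $r$, while the maintenance of that cache — in particular the ``all variables of $c$ assigned'' determination — genuinely needs radius $2$, together yielding the $\max(r,2)$ bound and not more. A minor additional point is confirming that the edits $f_1$ makes to $\mu$ are well defined on isomorphism classes of rooted memory-augmented neighborhoods, which is routine in the local-algorithm formalism. One may also note that the deletion of empty clauses in $\cB_f$ is cosmetic — an empty clause is isolated in $G'$ and so never lies in any variable's $r$-neighborhood — so tracking emptiness is optional, though including it makes $\mu$ mirror $G'$ exactly.
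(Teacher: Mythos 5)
Your proof is correct and takes essentially the same approach as the paper: maintain $\mu$ to cache variable assignments and clause deletions, extend the decoration to carry $\unif([0,1])$ coins, couple the visit order and coins, and reconstruct the simplified $r$-neighborhood from $N_{\max(r,2)}(v,G,\rho,\varphi,\mu)$. You work out the radius accounting considerably more explicitly than the paper does, and your closing observation that tracking empty (as opposed to satisfied) clauses is cosmetic---so that the bound could be sharpened below $\max(r,2)$---is a correct and nice extra point the paper does not make.
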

\begin{proof}
    Let $\cB_f$ be a sequential $r$-local algorithm, whose randomness is sampled i.i.d. from $(\Omega, \P_\omega)$.
    We will construct an $r$-local memory algorithm $\cA_{f_1, f_2}$ simulating $\cB_f$.

    This algorithm maintains the invariant that for $v = v_i\in \Va_G$, $\mu(v)=0$ if $x_i$ is not yet set, $1$ if $x_i$ is set true, and $2$ if $x_i$ is set false.
    For clause vertices $c \in \Cl_G$, $\mu(c)=1$ if the clause corresponding to $c$ has been deleted in the simplification, and otherwise $\mu(c)=0$.

    Thus, $\cA_{f_1, f_2}$ uses randomness sampled from $(\Omega, \P_\omega) \times \unif([0,1])$.
    That is, its internal randomness is $\varphist = (\varphi, q)$, which is sampled by $\varphi \sim (\Omega, \P_\omega)^{\otimes (V_G \cup E_G)}$ and $q \sim \unif([0,1])^{\otimes (V_G \cup E_G)}$.

    The $r$-local subroutine $f_1$ runs as follows on input $(v, G, \rho, \varphist, \mu)$.
    If $v\in \Cl_G$, do nothing.
    Note that the remaining loop over $v\in \Va_G$ runs over these vertices in a uniformly random order, as desired.
    If $v=v_i\in \Va_G$, let $G'$ be the simplification of $G$ determined by the information recorded in $\mu$, and let $\rho' = \rho \big|_{G'}$, $\varphi' = \varphi\big|_{G'}$.
    We can simulate the computation of $p = f(v, G', \rho', \varphi')$ because simplification only deletes vertices and edges, so any $r$-local decision in the simplified factor graph is still $r$-local in the simulation.
    We then set $x_i = \T$ if $p < q(v)$, and otherwise $x_i = \F$.
    We update $\mu$ to record this value of $x_i$ and any clause simplifications that result (which is a $2$-local operation).

    At the end of the algorithm, $\mu(v) \in \{1,2\}$ for all $v\in \Va_G$.
    Let $f_2(x) = \T$ if $x=1$ and $\F$ if $x=2$.
\end{proof}

Definition~\ref{defn:seq-local} differs slightly from the presentation in \cite{GS17} in the following way.
\cite{GS17} studies NAE-$k$-SAT, in which a clause is satisfied if it contains at least one true and false literal.
In partially simplfied formulas of this problem, clauses can exist in four states: ``removed," ``already contains true," ``already contains false," and ``contains neither true nor false," and the sequential local algorithms of \cite{GS17} track this information.
Of course, we can just as well simulate this by a local memory algorithm by having $\mu$ track these clause states.

\subsection{Local Algorithms Simulate Local Memory Algorithms}
\label{subsec:simulation-local-memory-to-local}

In this subsection, we prove Proposition~\ref{prop:local-memory-to-local}, that any local memory algorithm can be simulated by a local algorithm of larger (but still constant) radius.

The simulation is the natural one: we expand $\varphi$ to also generate the auxiliary randomness $\psi$ determining the vertex order, and then determine the output at each $v\in \Va_G$ by simulating the local memory algorithm on the $R$-neighborhood of $v$.
Formally, we expand $\varphi$ to $\varphist$, whose outputs are sampled from $(\Omega, \P_\omega) \times \unif([0,1])$.
We collect the first coordinates of the outputs into $\varphi$ and the second coordinates into $\psi$.
(This generates $\psi : V_G \cup E_G \to [0,1]$, and we ignore $\psi \big|_{E_G}$.)

Because sequentiality usually does not create long dependence chains, this simulation will often faithfully capture the local memory algorithm's behavior.

\begin{definition}[$R$-local simulation]
    \label{defn:local-sim}
    Let $\cA_{f_1, f_2}$ be an $r$-local memory algorithm, with i.i.d. internal randomness from $(\Omega, \P_\omega)$.
    For $R\in \bN$, the $R$-local simulation of $\cA_{f_1, f_2}$ is the $R$-local algorithm $\cA_f$ that runs as follows on input $\Phi \in \Omega_k(n,m)$ with factor graph $(G, \rho)$.
    \begin{enumerate}[label=(\arabic*), ref=\arabic*]
        \item Sample $\varphist = (\varphi, \psi)$, where $\varphi \sim (\Omega, \P_\omega)^{\otimes (V_G \cup E_G)}$ and $\psi \sim \unif([0,1])^{\otimes (V_G \cup E_G)}$.
        \item For each $v = v_i \in \Va_G$, set $x_i = f(v,G,\rho,\varphist)$.
        Here $f(v,G,\rho,\varphist)$ is the following $R$-local function.
        \begin{enumerate}[label=(\alph*), ref=\alph*]
            \item Let $N_R(v, G, \rho, \varphist) = (v, G', \rho', \varphist')$.
            Let $\varphist' = (\varphi', \psi')$, where $\varphi' = \varphi \big|_{G'}$ and $\psi' = \psi \big|_{G'}$.
            \item Initialize $\mu : V_{G'} \to \bZ_{\ge 0}$ to the all-0 map.
            \item For $u\in V_{G'}$ in increasing order of $\psi'(u)$, run $f_1(u, G', \rho', \varphi', \mu)$.
            \item Output $f(v,G,\rho,\varphist) = f_2(\mu(v))$.
        \end{enumerate}
        \item Output $(x_1,\ldots,x_n)$.
    \end{enumerate}
\end{definition}

The main idea of the proof of Proposition~\ref{prop:local-memory-to-local} is that dependencies caused by sequentiality all arise from the following structure.

\begin{definition}[$r$-hop $\psi$-dependence chain]
    Let $G$ be a locally finite graph and $\psi : V_G \to [0,1]$ be a function.
    Let $r\in \bN$.
    A sequence $v_1,v_2,\ldots,v_s \in V_G$ is an \emph{$r$-hop $\psi$-dependence chain} if consecutive vertices in the sequence are at most distance $r$ apart and $\psi(v_1),\psi(v_2),\ldots,\psi(v_s)$ is decreasing.
\end{definition}

We can now define a notion of insulation in terms of these dependence chains.
The key point of the following definition is that if in the $R$-local simulation in Definition~\ref{defn:local-sim}, $v\in \Va_G$ is $(r,R,\psi)$-insulated, then the $R$-local simulation's output at $v$ must match that of the local memory algorithm run with the same $\varphi, \psi$.

\begin{definition}[$(r,R,\psi)$-insulated]
    Let $G$ be a locally finite graph and $\psi : V_G \to [0,1]$ be a function.
    Let $v\in V_G$ and $r,R\in \bN$ with $R\ge 2r$.
    $v$ is \emph{$(r,R,\psi)$-insulated} if there is no $2r$-hop $\psi$-dependence chain $v_1,v_2,\ldots,v_s\in V_G$ with $v=v_1$ and $v_s \in N_R(v,G) \setminus N_{R-2r}(v,G)$.
\end{definition}

In Definition~\ref{defn:local-sim}, if $\Phi \sim \Phi_k(n,m)$, then $(G, \rho, \varphist)$ is a sample from the decorated $k$-SAT factor graph $\DFG(n,m,k,(\Omega,\P_\omega)\times \unif([0,1]))$.
To prove Proposition~\ref{prop:local-memory-to-local}, it suffices to upper bound the fraction of $v\in \Va_G$ that are not $(r,R,\psi)$-insulated.
To achieve this, we will control the probability that the root of $\DGW(d_1,d_2,(\Omega,\P_\omega)\times \unif([0,1]))$ is not $(r,R,\psi)$-insulated.
Then, because $(r,R,\psi)$-insulatedness is an $R$-local property, we can translate this bound to the $k$-SAT factor graph by the machinery of Lemma~\ref{lem:simulation-dfg-to-dgw} and Corollary~\ref{cor:simulation-dfg-concentration}.

\begin{proposition}
    \label{prop:dgw-root-insulated}
    Let $d_1,d_2\ge 2$, $r\in \bN$, and $\eta\in (0,1)$.
    Let $(o,T,\rho,\varphist) \sim \DGW(d_1,d_2,(\Omega,\P_\omega)\times \unif([0,1]))$, and write $\varphist = (\varphi, \psi)$ for $\varphi : V_T \cup E_T \to \Omega$ and $\psi : V_T \cup E_T \to [0,1]$.
    There exists $R$ dependent on $d_1,d_2,r,\eta$ such that
    \[
        \P\lt[\text{$o$ is $(r,R,\psi)$-insulated in $T$}\rt] \ge 1-\eta.
    \]
\end{proposition}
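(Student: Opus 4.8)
The plan is to show that, except with probability $\eta$, the set $\mathcal{R}(o)$ of endpoints of $2r$-hop $\psi$-dependence chains starting at $o$ does not reach far into $T$, and then to choose $R$ large enough that reaching the annulus $N_R(o)\setminus N_{R-2r}(o)$ already counts as reaching far. Concretely, writing $\mathrm{reach}(v):=\max_{w\in\mathcal{R}(v)}d(v,w)$ for the graph distance in $T$, the definition of $(r,R,\psi)$-insulated together with the fact that each hop of a chain moves distance at most $2r$ shows that $o$ fails to be insulated only if $\mathrm{reach}(o)>R-2r$. So it suffices to prove $\P[\mathrm{reach}(o)>R-2r]\le\eta$ for suitably large $R$.

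The core of my approach would be an integral recursion that tracks the $\psi$-value of the current root. Let $h(d,t):=\sup\P[\mathrm{reach}(\text{root})\ge d\mid\psi(\text{root})=t]$, the supremum over the finitely many isomorphism types of rooted subtree that occur in $\DGW(d_1,d_2)$. If $\mathrm{reach}(\text{root})\ge d\ge 1$, the chain witnessing this makes a first hop to some $v'\in N_{2r}(\text{root})$ with $\psi(v')<\psi(\text{root})$, and its remainder is a chain from $v'$ reaching distance $\ge d-2r$ from $v'$; unioning over $v'$ and then conditioning on the tree near the root (so that each $\psi(v')$ is a fresh uniform, of which the fraction below $t$ is $t$, with conditional law uniform on $[0,t]$) yields
\[
  h(d,t)\ \le\ \E\!\Big[\textstyle\sum_{v'\in N_{2r}(\text{root})}\ind{\psi(v')<t}\,\ind{\mathrm{reach}(v')\ge d-2r}\Big]\ \le\ M\int_0^t h(d-2r,s)\,ds,
\]
where $M:=\sup_v\E\,|N_{2r}(v)|$ is a finite constant $C(d_1,d_2,r)$ (bounded by a crude count of balls in $\DGW(d_1,d_2)$, or via Lemma~\ref{lem:simulation-dgw-nbd-growth-rate}). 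Iterating this inequality $k:=\lceil d/(2r)\rceil$ times, the nested integrals produce a factor $t^k/k!$ and the base term $h(d-2rk,\cdot)\le 1$ is trivial, giving $h(d,t)\le M^k t^k/k!$. Hence $\P[\mathrm{reach}(o)>R-2r]\le h(R-2r+1,1)\le M^k/k!$ with $k=\lceil(R-2r+1)/(2r)\rceil$; since $M^k/k!\le(eM/k)^k\to 0$ as $R\to\infty$, choosing $R=R(d_1,d_2,r,\eta)$ large enough that this is below $\eta$ completes the argument.

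The step I expect to be the main obstacle is justifying, inside the recursion, that $\mathrm{reach}(v')$ may be treated as the reach of the root of a fresh $\DGW$ tree carrying $\psi(v')$, i.e.\ that $\P[\mathrm{reach}(v')\ge d-2r\mid\psi(v')=s]\le h(d-2r,s)$. The subtlety is that a chain continuing from $v'$ can wander back into the already-examined region $N_{2r}(\text{root})$, so $T$ re-rooted at $v'$ is not literally a $\DGW$ tree. I would resolve this by a stochastic domination: any $\DGW(d_1,d_2)$ tree re-rooted at a vertex embeds into a branching structure all of whose $2r$-ball sizes still have mean $O((d_1d_2)^r)$, so defining $h$ as a supremum over these dominating types keeps the displayed recursion valid; alternatively, one can argue that a chain reaching distance $>R-2r$ must, after its final visit to $N_{4r}(o)$, continue entirely within the fresh subtree hanging off the boundary of $N_{2r}(o)$, at the cost of an $O(r)$ shift in the distance bookkeeping. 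I note that a direct first-moment bound — estimating $\P[\mathrm{reach}(o)>R-2r]$ by the expected number of $\psi$-decreasing $2r$-hop chains of length $\gtrsim R/(2r)$ — runs into trouble, because the expected number of $2r$-hop walks of length $s$ in $\DGW(d_1,d_2)$ grows super-exponentially in $s$ (it is dominated by atypically high-degree vertices), so the $1/s!$ gained from $\psi$-monotonicity does not obviously beat it; the recursion above is precisely the device that keeps the cost per step equal to the constant $M$ rather than a growing ball size.
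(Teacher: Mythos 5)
Your proposal takes a genuinely different route from the paper, and it contains a gap that you acknowledge but do not close. Let me explain both.

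The paper also runs a first-moment count over $\psi$-decreasing chains, but resolves the super-exponential walk-count obstruction you correctly flag by \emph{first conditioning on the tree}. Lemma~\ref{lem:simulation-max-nbd} shows that with probability $1-\eta/2$ every vertex $v\in N_R(o,T)$ has $|N_{2r}(v,T)| \le CR/\log^{(r+1)}R$ — note the iterated logarithm, which is essential and is strictly sharper than what a naive union bound of Lemma~\ref{lem:simulation-dgw-nbd-growth-rate} over the exponentially many vertices of $N_R(o,T)$ would give. Having fixed such a realization of $(T,\rho,\varphi)$, the number of $2r$-hop walks of length $t=\lceil R/2r\rceil$ from $o$ is \emph{deterministically} at most $(CR/\log^{(r+1)}R)^t$, and only then does the $1/t!$ from $\psi$-monotonicity enter; this gives $(2eCr/\log^{(r+1)}R)^t\le\eta/2$ for $R$ large. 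Your objection — that $\E_T[\#\text{walks}]$ is dominated by rare high-degree vertices so the unconditional first moment is delicate — is legitimate, but you missed that conditioning on the local geometry first makes the subsequent count over $\psi$ alone clean. In particular, you do not realize that an iterated-log bound on ball sizes (rather than the constant $M$ or a linear-in-$R$ bound) is what lets the conditioned first moment close.

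Your alternative — the integral recursion $h(d,t)\le M\int_0^t h(d-2r,s)\,ds$ — is a nice idea that would, if valid, charge each step to the \emph{mean} ball size $M$ rather than a max, avoiding the need for the iterated-log refinement. But the step you flag as the ``main obstacle'' is a real gap, and neither of your proposed fixes is carried out. The issue is not merely that $T$ re-rooted at $v'$ is not a GW tree; it is that after conditioning on the tree near the root (which you need to make $\psi(v')$ a fresh uniform and to evaluate the expected number of terms in the sum), the structure seen from $v'$ has a deterministic revealed part plus fresh GW subtrees hanging off its boundary, and the indicator $\ind{v'\in N_{2r}(\text{root})}$ is itself correlated with having a dense neighborhood, hence with $\text{reach}(v')$. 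Your proposed stochastic-domination fix (``define $h$ as a supremum over dominating types'') is plausible in spirit but you would need to specify the dominating class, show it is closed under the conditioning performed at each level of the recursion, and verify that $\E|N_{2r}|$ remains uniformly bounded over that class; none of this is done. The ``final visit to $N_{4r}(o)$'' alternative is similarly a sketch: the suffix after the last visit does lie in a fresh subtree, but its starting $\psi$-value and starting position are random, and accounting for the (possibly long) prefix and the boundary geometry is not just an ``$O(r)$ shift in bookkeeping.'' As it stands, the recursion inequality is unproved, so the proposal does not constitute a proof.

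One further small point: your objection that a first-moment bound ``does not obviously beat'' the super-exponential walk count is slightly overstated in one direction — the relevant first moment in the paper is \emph{conditional} on a high-probability event about $T$, so there is nothing to beat; the walk count becomes deterministic. But your instinct that something must be done beyond the naive unconditional Markov bound is correct, and the something the paper does is Lemma~\ref{lem:simulation-max-nbd}.
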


The proof of this proposition relies on the following technical lemma, whose proof we defer to Subsection~\ref{subsec:simulation-deferred}.
\begin{lemma}
    \label{lem:simulation-max-nbd}
    Let $d_1, d_2 \ge 2$ and $(o, T, \rho, \varphi) \sim \DGW(d_1, d_2)$.
    For any $r\in \bN$ and $\eta \in (0,1)$, there exist $C, \Rst>0$ depending on $d_1, d_2, r, \eta$ such that for all integers $R\ge
    \Rst$,
    \[
        \max_{v\in N_{R}(o, T)}
        |N_{2r}(v, T)|
        \le
        \f{CR}{\log^{(r+1)}R}
    \]
    with probability at least $1-\eta$.
    Here, $\log^{(r+1)}$ denotes the $(r+1)$th iterate of $\log$.
\end{lemma}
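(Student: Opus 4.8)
The plan is to reduce the quantity $\max_{v\in N_R(o,T)}|N_{2r}(v,T)|$ to a maximum of rooted, depth‑truncated subtree sizes, and then to control the latter by a first‑moment union bound driven by a sharp (super‑exponential) upper‑tail estimate for a single such subtree.

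First I would set up the reduction. For a vertex $w$ of $T$, let $S_{2r}(w)$ denote the subtree of $T$ rooted at $w$ and truncated at depth $2r$. If $u\in N_{2r}(v,T)$, the path from $v$ to $u$ passes through their common ancestor $a$, which is an ancestor of $v$ with distance $\le 2r$ from $v$, and $u$ is a descendant of $a$ with distance $\le 2r$ from $a$; hence $u\in S_{2r}(a)$. There are at most $2r+1$ ancestors $a$ of $v$ (including $v$ itself) within distance $2r$, and each lies in $N_R(o,T)$ whenever $v$ does, so
\[
    \max_{v\in N_R(o,T)}|N_{2r}(v,T)|\le (2r+1)\max_{w\in N_R(o,T)}|S_{2r}(w)|.
\]
Thus it suffices to choose a small constant $c$ (depending on $d_1,d_2,r$) and $\Rst$ so that, for $t=\f{cR}{\log^{(r+1)}R}$ and all $R\ge\Rst$, $\P\lt[\max_{w\in N_R(o,T)}|S_{2r}(w)|>t\rt]\le\eta$.

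The core of the argument is a tail bound for one subtree: for every vertex $w$ and every large $s$,
\[
    \P\lt[|S_{2r}(w)|>s\rt]\le\exp\lt(-\Omega\lt(s\,\log^{(r+1)}s\rt)\rt),
\]
proved by induction on $r$. For $r=0$ it is vacuous since $|S_0(w)|=1$. For the inductive step, a Va‑rooted $S_{2r}$ satisfies $|S_{2r}|\le 2K+\sum_{m=1}^{d_2K}|S^{(m)}_{2r-2}|$, where $K\sim\Pois(d_1)$ counts the root's children and the $S^{(m)}_{2r-2}$ are i.i.d.\ depth‑$(2r-2)$ subtrees, all independent; a Cl‑rooted $S_{2r}$ is bounded analogously through its $d_2$ deterministic children. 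Passing to moment generating functions and using the Poisson formula,
\[
    \E\lt[e^{\theta|S_{2r}|}\rt]\le\exp\lt(d_1\lt(e^{2\theta}\,\E\lt[e^{\theta|S_{2r-2}|}\rt]^{d_2}-1\rt)\rt).
\]
The induction hypothesis shows $\E[e^{\theta|S_{2r-2}|}]$ is finite for all $\theta\ge0$ — its tail is lighter than any exponential — with $\log\E[e^{\theta|S_{2r-2}|}]\le\exp^{(r)}(C\theta)$ for large $\theta$; substituting gives $\log\E[e^{\theta|S_{2r}|}]\le\exp^{(r+1)}(C'\theta)$. Optimizing the Chernoff bound $\P[|S_{2r}|>s]\le\exp(-\theta s+\exp^{(r+1)}(C'\theta))$ at $\theta\asymp\log^{(r+1)}s$ (so that the two terms are comparable) yields the stated tail. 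I expect this step — threading the induction through iterated exponentials, checking that iterated logarithms are stable under the scalings that arise (e.g.\ $\log^{(j)}(\sqrt s)\asymp\log^{(j)}s$), and verifying finiteness of the MGFs — to be the main obstacle; note in particular that Lemma~\ref{lem:simulation-dgw-nbd-growth-rate} only yields a plain exponential tail $e^{-\Omega(s/(d_1d_2)^r)}$ for a single subtree, which after the union bound would force $t=\Theta(R)$ rather than $t=O(R/\log^{(r+1)}R)$, so the sharper tail is genuinely needed.

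Finally I would assemble the union bound. By the branching property, conditionally on the number $W_\ell$ of depth‑$\ell$ vertices, the subtrees hanging from them are i.i.d., so $\E\#\{w:\mathrm{depth}(w)=\ell,\ |S_{2r}(w)|>t\}=\E[W_\ell]\,\P[|S_{2r}|>t]\le(d_1d_2)^{\ell}\exp(-\Omega(t\log^{(r+1)}t))$. Summing over $0\le\ell\le R$ and applying Markov's inequality,
\[
    \P\lt[\max_{w\in N_R(o,T)}|S_{2r}(w)|>t\rt]\le (R+1)(d_1d_2)^{R}\,\exp\lt(-\Omega\lt(t\,\log^{(r+1)}t\rt)\rt).
\]
Since $t=\f{cR}{\log^{(r+1)}R}$ gives $\log^{(r+1)}t=(1-o(1))\log^{(r+1)}R$ and hence $t\log^{(r+1)}t=(1-o(1))cR$, choosing $c$ large enough (relative to $\log(d_1d_2)$ and the implied constants) makes the right‑hand side $\exp(-\Omega(R))$, which is below $\eta$ once $R\ge\Rst$. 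Combined with the reduction above, this proves the lemma.
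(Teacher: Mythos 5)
Your proposal is correct and follows the same high-level structure as the paper's proof: reduce $\max_{v\in N_R(o,T)}|N_{2r}(v,T)|$ to a maximum of rooted descendant subtrees over ancestors (your LCA decomposition is equivalent to the paper's $N_{2r}(v)\subseteq\bigcup_j N_{2r}^{\downarrow}(\gr^j(v))$ trick), obtain a super-exponential tail for a single truncated subtree, then union bound over $N_R(o,T)$. The genuine difference is in the single-subtree tail estimate. The paper proves it (Lemma~\ref{lem:simulation-dgw-better-growth-rate}) by a layer-by-layer Poisson Chernoff argument, tracking $S_\ell$ = number of vertices at depth $\ell$ through nested events $E_1\supseteq\cdots\supseteq E_r$ where $S_{2i}$ is controlled to within a factor $1/\log^{(i)}t$, and chaining the conditional Chernoff bounds; this yields $\P[|N_{2r}(o,T)|>\frac{2t}{\log^{(r)}t}(d_1d_2)^r]\le e^{-t}$, i.e., a tail of order $\exp(-\Omega(s\log^{(r)}s))$. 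You instead derive the MGF recursion $\log\E[e^{\theta|S_{2r}|}] = \theta + d_1(e^{\theta}\,e^{d_2\log\E[e^{\theta|S_{2r-2}|}]}-1)$ and iterate to get an $\exp^{(r)}$-type growth, then Chernoff. The MGF route is arguably cleaner and makes the iterated-exponential structure explicit; your stated tail $\exp(-\Omega(s\log^{(r+1)}s))$ is in fact one iterated log weaker than the optimal $\exp(-\Omega(s\log^{(r)}s))$ the recursion supports (and which the paper's layer-by-layer argument attains), but since the lemma only requires $CR/\log^{(r+1)}R$, the looser tail still suffices after the union bound, as you observe. A few small technical imprecisions worth fixing if writing this up: the bound $|S_{2r}|\le 2K+\sum_{m\le d_2K}|S^{(m)}_{2r-2}|$ fails at $K=0$ (missing the $+1$ for the root; use $1+K+\sum$ and carry an extra $e^\theta$ factor); $\E W_\ell$ is $(d_1d_2)^{\ell/2}$ up to constants, not $(d_1d_2)^{\ell}$, but your larger bound is still a valid upper bound and only affects constants; and the paper handles the even/odd parity of the root of $S_{2r}(a)$ by an off-by-one renaming that your direct LCA argument avoids, which is a mild simplification.
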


\begin{proof}[Proof of Proposition~\ref{prop:dgw-root-insulated}]
    Let $R\in \bN$ be a number we will determine later.
    Lemma~\ref{lem:simulation-max-nbd} gives $\Rst$ such that if $R \ge \Rst$, then the conclusion of Lemma~\ref{lem:simulation-max-nbd} holds with probability at least $1-\eta/2$.
    Consider a realization of $T, \rho, \varphi$ such that this event holds.
    We will control the probability over $\psi$ that $o$ is not $(r,R,\psi)$-insulated in $T$.

    If $o$ is not $(r,R,\psi)$-insulated, there exists a $2r$-hop $\psi$-dependence chain $o=v_1, v_2, \ldots, v_s \in V_T$ where $v_s \in N_R(o,T) \setminus N_{R-2r}(o,T)$.
    By taking an initial subsequence, we get a $2r$-hop $\psi$-dependence chain $o=v_1, v_2, \ldots, v_t \in V_T$ of length $t = \lceil \f{R}{2r}\rceil$.
    By Markov's inequality,
    \[
        \P \lt[\text{$o$ is not $(r,R,\psi)$-insulated in $T$}\rt]
        \le
        \E \# \lt(\text{$2r$-hop $\psi$-dependence chains $o=v_1, v_2, \ldots, v_t \in V_T$}\rt).
    \]
    The last expectation is bounded as follows.
    By Lemma~\ref{lem:simulation-max-nbd}, there are at most $\lt(\f{CR}{\log^{(r+1)}R}\rt)^t$ sequences $o=v_1,v_2,\ldots,v_t$ with consecutive vertices at most distance $2r$ apart, and for each one, $\psi(v_1),\psi(v_2),\ldots,\psi(v_t)$ is decreasing with probability $\f{1}{t!}$.
    So (using $t! \ge (t/e)^t$) the last expectation is at most
    \begin{equation}
        \label{eq:super-inefficient-power-tower}
        \f{1}{t!} \lt(\f{CR}{\log^{(r+1)}R}\rt)^t
        \le
        \lt(\f{eCR}{t \log^{(r+1)}R}\rt)^t
        \le
        \lt(\f{2eCr}{\log^{(r+1)}R}\rt)^t
        \le \eta/2
    \end{equation}
    for a large enough choice of $R$.
    Thus, over the randomness of $\psi$,
    \[
        \P \lt[\text{$o$ is $(r,R,\psi)$-insulated in $T$}\rt]
        \ge
        1-\eta/2.
    \]
    The result follows by a union bound.
\end{proof}

Unfortunately, due to the last inequality in \eqref{eq:super-inefficient-power-tower}, the $R$ needed to make this proposition hold is approximately the power tower $\exp^{(r+1)} 2eCr$.
This is the $R$ we will need to simulate an $r$-local memory algorithm by an $R$-local algorithm.
While this $R$ is a constant for any constant $r$, it would of course be nice to improve this dependence.

Finally, we can prove Proposition~\ref{prop:local-memory-to-local}.
\begin{proof}[Proof of Proposition~\ref{prop:local-memory-to-local}]
    We let $\cA'$ be the $R$-local simulation of $\cA$, for $R$ to be determined.
    We couple the runs of $\cA, \cA'$ to use the same $\varphi, \psi$.
    If $(G,\rho)$ is the factor graph of $\Phi$, then
    \[
        \Delta(\cA(\Phi), \cA'(\Phi))
        \le
        \f{1}{n}
        \sum_{v\in \Va_G}
        \ind{\text{$v$ is $(r,R,\psi)$-insulated in $G$}}.
    \]
    Recall that $(G,\rho,(\varphi,\psi)) \sim \DFG(n,m,k,(\Omega,\P_\omega)\times \unif([0,1]))$.
    The last indicator is an $R$-local function taking values in $[-1, 1]$.
    By Corollary~\ref{cor:simulation-dfg-concentration} with $t=\eta n/3$,
    \[
        \f{1}{n}
        \sum_{v\in \Va_G}
        \ind{\text{$v$ is $(r,R,\psi)$-insulated in $G$}}
        \le
        \eta/3 +
        \E
        \ind{\text{$v$ is $(r,R,\psi)$-insulated in $G$}}
    \]
    with probability $1-\exp(-\Omega(n^{1/3}))$.
    Let $(o,T,\rho,(\varphi,\psi))\sim \DGW(\alpha k, k-1, (\Omega,\P_\omega)\times \unif([0,1]))$.
    By Lemma~\ref{lem:simulation-dfg-to-dgw},
    \[
        \E
        \ind{\text{$v$ is $(r,R,\psi)$-insulated in $G$}}
        \le
        \f{c\log n}{n^{1/2}} +
        \E
        \ind{\text{$o$ is $(r,R,\psi)$-insulated in $T$}}.
    \]
    By Proposition~\ref{prop:dgw-root-insulated}, for sufficiently large $R$ depending on $\alpha, k, r, \eta$,
    \[
        \E
        \ind{\text{$o$ is $(r,R,\psi)$-insulated in $T$}}
        \le
        \eta/3.
    \]
    Putting this all together, with probability $1-\exp(-\Omega(n^{1/3}))$,
    \[
        \Delta(\cA(\Phi), \cA'(\Phi))
        \le
        2\eta/3 + \f{c\log n}{n^{1/2}}
        \le \eta
    \]
    for sufficiently large $n$.
\end{proof}

\subsection{Low Degree Polynomials Simulate Local Algorithms}
\label{subsec:simulation-local-to-ldp}

In this subsection, we prove Proposition~\ref{prop:local-to-ldp}, that any local algorithm can be simulated by a constant degree polynomial.
The proof closely resembles the proof of \cite[Theorem 1.4]{Wei20}.
The main idea is to construct a low degree polynomial by inclusion-exclusion that simulates the behavior of the local algorithm on any $r$-neighborhood that is a tree without too many edges.
We now define this simulation.

Consider $\Phi \in \Omega_k(n,m)$ with factor graph $(G, \rho)$.
Recall that $\Phi$ is encoded by indicators $\Phi_{i,j,s}$ ($i\in [m]$, $j\in [k]$, $s\in [2n]$) that $\Phi_{i,j}$ is the $s$th literal of $\cL$.
For each $s\in [2n]$, let $v(s) \in [n]$ be the index of the underlying variable of the $s$th literal of $\cL$.
Each triple $(i,j,s)$ is naturally associated with the edge $e = (v_{v(s)}, c_i)$ of the factor graph.
For a set $S\subseteq [m]\times [k]\times [2n]$, let $e(S)$ be the (multi-)set of edges associated in this manner to triples $(i,j,s)\in S$.
For $D\in \bN$ and $v\in \Va_G$, let $\cG_{v,r,D}$ be the collection of sets $S \subseteq [m]\times [k] \times [2n]$ such that
\begin{enumerate}[label=(\alph*), ref=\alph*]
    \item The bipartite graph $G(S) = (\Va_G, \Cl_G, e(S))$ is a tree in which every non-isolated vertex has a path to $v$ of length at most $r$.
    (This includes that $G(S)$ does not have multiple edges.)
    \item $|S|\le D$.
\end{enumerate}
Equivalently, $\cG_{v,r,D}$ is the collection of sets of $(i,j,s)$ corresponding to all possible tree shaped $r$-neighborhoods of $v$ in $G$ of size at most $D$.

\begin{definition}[Degree-$D$ simulation]
    Let $\cA_g$ be an $r$-local algorithm, with i.i.d. internal randomness from $(\Omega, \P_\omega)$.
    For $D\in \bN$, the degree-$D$ simulation of $\cA_g$ is the random polynomial that runs as follows on input $\Phi \in \Omega_k(n,m)$ with factor graph $(G,\rho)$.
    \begin{enumerate}[label=(\arabic*), ref=\arabic*]
        \item Sample $\varphi \sim (\Omega, \P_\omega)^{\otimes (V_G\cup E_G)}$.
        \item For each $v=v_i\in \Va_G$, set
        \begin{equation}
            \label{eq:ldp-def-f}
            f_i(\Phi, \varphi)
            =
            \sum_{S \in \cG_{v,r,D}}
            h(v, G(S), \rho, \varphi)
            \prod_{(i,j,s)\in S}
            \Phi_{(i,j,s)},
        \end{equation}
        where the coefficients $h(v, G(S), \rho, \varphi)$ are given recursively by
        \begin{equation}
            \label{eq:ldp-def-h}
            h(v, G(S), \rho, \varphi)
            =
            (\strictround^{-1} \circ g)(v, G(S), \rho\big|_{G(S)}, \varphi\big|_{G(S)})
            -
            \sum_{S' \subsetneq S}
            h(v, G(S'), \rho, \varphi).
        \end{equation}
    \end{enumerate}
\end{definition}

The internal randomness of $f$ is the map $\varphi$.
It is clear that this $f$ is a degree-$D$ polynomial.
We will analyze the performance of the degree-$D$ simulation by analogy to the following local function.
\begin{definition}[$D$-truncation]
    If $g: \Lambda \to \{\T,\F\}$ is an $r$-local function, the \emph{$D$-truncation} $g_{\le D} : \Lambda \to \{\T,\F,\Q\}$ is defined by
    \[
        g_{\le D}(v,G,\rho,\varphi)
        =
        \begin{cases}
            g(v,G,\rho,\varphi) & \text{$N_r(v,G)$ is a tree and $|N_r(v,G)| \le D$}, \\
            \Q & \text{otherwise}.
        \end{cases}
    \]
\end{definition}

By inclusion-exclusion, \eqref{eq:ldp-def-f} and \eqref{eq:ldp-def-h} immediately imply the following fact.
\begin{fact}
    \label{fac:deg-D-sim-good}
    For all $v=v_i\in \Va_G$ where $g_{\le D}(v,G,\rho,\varphi) \neq \Q$,
    \[
        (\strictround \circ f_i)(\Phi,\varphi)
        =
        g(v,G,\rho,\varphi)
        =
        g_{\le D}(v,G,\rho,\varphi).
    \]
\end{fact}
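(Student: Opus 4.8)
The plan is an inclusion--exclusion (Möbius inversion) argument on the Boolean lattice of subsets of the index set $[m]\times[k]\times[2n]$. First I would extend the recursion \eqref{eq:ldp-def-h} defining the coefficients $h$ to \emph{all} finite subsets $S$ and record the telescoping identity it forces: by induction on $|S|$,
\[
  \sum_{S'\subseteq S} h(v,G(S'),\rho,\varphi)
  = (\strictround^{-1}\circ g)(v,G(S),\rho\big|_{G(S)},\varphi\big|_{G(S)})
  =: V(S),
\]
equivalently $h(v,G(S),\rho,\varphi)=\sum_{S'\subseteq S}(-1)^{|S\setminus S'|}V(S')$, so $h$ is the Möbius transform of the ``target value'' $V$.

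Next I would identify which monomials of $f_i$ survive at input $\Phi$. For each pair $(i',j')$ there is a unique $s$ with $\Phi_{(i',j',s)}=1$, so $\prod_{(i',j',s)\in S}\Phi_{(i',j',s)}$ equals $1$ if every triple of $S$ matches $\Phi$ and $0$ otherwise. Fix $v=v_i$ with $g_{\le D}(v,G,\rho,\varphi)\neq\Q$, so $N_r(v,G)$ is a simple tree with $|N_r(v,G)|\le D$, and let $S^\star$ be the set of triples corresponding to the (at most $D-1$) edges of $N_r(v,G)$ --- a bijection precisely because $N_r(v,G)$ has no repeated edges. For $S\in\cG_{v,r,D}$, all triples of $S$ match $\Phi$ iff $G(S)\subseteq N_r(v,G)$ iff $S\subseteq S^\star$, and by the definition of $\cG_{v,r,D}$ every nonempty such $S$ is a subtree of $N_r(v,G)$ containing $v$. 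Hence $f_i(\Phi,\varphi)$ is the sum of $h(v,G(S),\rho,\varphi)$ over $S=\emptyset$ and over the $v$-rooted subtrees $S\subseteq S^\star$.

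The crux --- and the step I expect to cost the most bookkeeping, since it reconciles the restricted index set $\cG_{v,r,D}$ with the full Boolean-lattice telescoping --- is to show this restricted sum already equals $V(S^\star)$, i.e. that $h(v,G(S'),\rho,\varphi)=0$ for every $S'\subseteq S^\star$ that is neither $\emptyset$ nor a $v$-rooted subtree. Writing $S'=S'_C\sqcup S'_O$ with $S'_C$ the edges of $S'$ lying in the connected component of $v$ in $G(S')$ and $S'_O$ the rest, one checks that the $r$-ball of $v$ inside $G(T)$ depends only on $T\cap S'_C$ for every $T\subseteq S'$, so $V(T)=W(T\cap S'_C)$ for some $W$, and the Möbius formula factors:
\[
  h(v,G(S'),\rho,\varphi)
  = \Bigl(\sum_{T\subseteq S'_C}(-1)^{|S'_C\setminus T|}W(T)\Bigr)\Bigl(\sum_{T\subseteq S'_O}(-1)^{|S'_O\setminus T|}\Bigr),
\]
whose second factor is $(1-1)^{|S'_O|}=0$ as soon as $S'_O\neq\emptyset$ (the subcase where $v$ is isolated in $G(S')$ is this with $S'_C=\emptyset$).

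Adding back these vanishing terms gives $f_i(\Phi,\varphi)=\sum_{S'\subseteq S^\star}h(v,G(S'),\rho,\varphi)=V(S^\star)$; since $N_r(v,G(S^\star))$ is isomorphic to $N_r(v,G,\rho,\varphi)$ as a decorated rooted graph, $r$-locality of $g$ gives $V(S^\star)=\strictround^{-1}(g(v,G,\rho,\varphi))\in\{-1,1\}$, and applying $\strictround$ together with $g_{\le D}(v,G,\rho,\varphi)=g(v,G,\rho,\varphi)$ (which holds under the hypothesis) yields the stated chain of equalities. The degenerate case $S^\star=\emptyset$ ($v$ isolated in $N_r(v,G)$) is immediate: $f_i(\Phi,\varphi)=h(v,G(\emptyset),\rho,\varphi)=V(\emptyset)=\strictround^{-1}(g(v,G,\rho,\varphi))$.
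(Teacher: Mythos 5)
The proposal is correct and fills in the details of the paper's one-line ``by inclusion-exclusion'' assertion. You read the recursion \eqref{eq:ldp-def-h} as a M\"obius inversion on the \emph{full} Boolean lattice of subsets, and must therefore establish the factorization argument showing $h$ vanishes on every $S'\subseteq S^\star$ that is not a $v$-rooted subtree; that factorization (writing $S'=S'_C\sqcup S'_O$ and observing that $V(T)$ depends only on $T\cap S'_C$, so the $S'_O$-marginal contributes $(1-1)^{|S'_O|}=0$) is correct and is the genuine content of your proof. A slightly shorter route is available under the alternative reading where the sum in \eqref{eq:ldp-def-h} ranges only over $S'\in\cG_{v,r,D}$ with $S'\subsetneq S$: once you observe that $S^\star\in\cG_{v,r,D}$ (a $v$-rooted tree with at most $D-1$ edges) and that exactly the monomials with $S\subseteq S^\star$ survive at input $\Phi$, the recursion at $S=S^\star$ literally \emph{is} the identity $f_i(\Phi,\varphi)=V(S^\star)$, with no vanishing lemma required. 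Both readings of \eqref{eq:ldp-def-h} lead to correct proofs; your version is a careful realization of what the paper calls ``immediate,'' and in either case the final appeal to $r$-locality of $g$ to identify $V(S^\star)$ with $\strictround^{-1}(g(v,G,\rho,\varphi))$ is the right finishing step.
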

In other words, when $\cA_g$, $\cA_{g_{\le D}}$, and the degree-$D$ simulation $f$ of $\cA_g$ are run with the same $\varphi$, $\strictround \circ f$ correctly simulates any output of $\cA_g$ that $\cA_{g_{\le D}}$ correctly simulates.
Therefore, we can upper bound the fraction of variables where the simulation $f$ fails by bounding the fracton of variables where $\cA_{g_{\le D}}$ fails.
We achieve this by controlling the corresponding probability in $\DGW(d_1,d_2)$, and then translating this bound to the $k$-SAT factor graph by the machinery of Lemma~\ref{lem:simulation-dfg-to-dgw} and Corollary~\ref{cor:simulation-dfg-concentration}.

\begin{lemma}
    \label{lem:local-to-truncation}
    Suppose $\alpha k, k-1 \ge 2$ and $\eta > 0$.
    Let $\cA_g$ be an $r$-local algorithm with output in $\{\T,\F\}^n$.
    There exists $D > 0$ depending on $\alpha, k, r, \eta$ such that if $\cA_g$ and $\cA_{g_{\le D}}$ are run with the same $\varphi$, then
    \[
        \P\lt[
            \Delta(\cA_g(\Phi), \cA_{g_{\le D}}(\Phi)) \ge \eta
        \rt]
        \le \exp(-\Omega(n^{1/3})),
    \]
    where the probability is over the randomness of $\Phi \sim \Phi_k(n,m)$ and $\varphi$.
\end{lemma}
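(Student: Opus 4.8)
The plan is to observe that $\cA_g(\Phi)$ and $\cA_{g_{\le D}}(\Phi)$, run with the same internal randomness $\varphi$, can disagree at a variable $v=v_i\in\Va_G$ only when $g_{\le D}(v,G,\rho,\varphi)=\Q$, which by definition happens exactly when $N_r(v,G)$ is not a tree or $|N_r(v,G)|>D$. Writing $F_D(v,G,\rho,\varphi)$ for the indicator of this event, we therefore have
\[
\Delta\lt(\cA_g(\Phi),\cA_{g_{\le D}}(\Phi)\rt)\le\f1n\sum_{v\in\Va_G}F_D(v,G,\rho,\varphi).
\]
Since whether $N_r(v,G)$ is a tree and the value of $|N_r(v,G)|$ are determined by the isomorphism class of $N_r(v,G,\rho,\varphi)$, the function $F_D$ is $r$-local, hence $2r$-local, and it takes values in $\{0,1\}\subseteq[-1,1]$. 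So both Corollary~\ref{cor:simulation-dfg-concentration} and Lemma~\ref{lem:simulation-dfg-to-dgw} apply to it.

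Next I would pin down $D$. Sample $(o,T,\rho,\varphi)\sim\DGW(\alpha k,k-1)$; since $T$ is a tree, $N_r(o,T)$ is always a tree and $F_D(o,T,\rho,\varphi)=\ind{|N_r(o,T)|>D}$. Because $N_r(o,T)\subseteq N_{2r}(o,T)$, Lemma~\ref{lem:simulation-dgw-nbd-growth-rate} (with $d_1=\alpha k\ge2$, $d_2=k-1\ge2$) gives $\P[\,|N_r(o,T)|>\lambda(\alpha k(k-1))^r\,]\le c_1e^{-c_0\lambda}$ for all $\lambda>0$. Choosing $\lambda$ large enough that $c_1e^{-c_0\lambda}\le\eta/3$ and setting $D=\lceil\lambda(\alpha k(k-1))^r\rceil$ (which depends only on $\alpha,k,r,\eta$) yields $\E F_D(o,T,\rho,\varphi)\le\eta/3$.

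The rest is assembling the three error terms. Fix $v\in\Va_G$ where $(G,\rho,\varphi)\sim\DFG(n,m,k)$. By Lemma~\ref{lem:simulation-dfg-to-dgw}, $\E F_D(v,G,\rho,\varphi)\le\E F_D(o,T,\rho,\varphi)+c\log n/n^{1/2}\le\eta/3+o(1)$, and by vertex symmetry $\f1n\E\sum_{v}F_D(v,G,\rho,\varphi)=\E F_D(v,G,\rho,\varphi)$. Applying Corollary~\ref{cor:simulation-dfg-concentration} with $t=\eta n/3\ge cn^{1/2}$ (valid for large $n$), with probability $1-\exp(-\tOmega(n^{1/3}))$ we get $\f1n\sum_{v}F_D(v,G,\rho,\varphi)\le\eta/3+\E F_D(v,G,\rho,\varphi)\le2\eta/3+o(1)\le\eta$ for $n$ large enough. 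Chaining these bounds gives $\Delta(\cA_g(\Phi),\cA_{g_{\le D}}(\Phi))\le\eta$ on this event, which is the claim.

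\textbf{Main obstacle.} There is no serious obstacle here: the lemma is essentially bookkeeping on top of the local-geometry machinery (Lemmas~\ref{lem:simulation-dgw-nbd-growth-rate} and \ref{lem:simulation-dfg-to-dgw}, and Corollary~\ref{cor:simulation-dfg-concentration}) already established. The one point that needs a moment's care is checking that $F_D$ qualifies as a bounded $2r$-local function so that those three results genuinely apply, and noting that folding the ``$N_r(v,G)$ is not a tree'' event into $F_D$ is harmless — on the $\DGW$ limit it contributes nothing, and on $\DFG$ it is already absorbed into the $\eta/3$ slack coming from the $\DFG$-to-$\DGW$ comparison.
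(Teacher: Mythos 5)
Your proof is correct and takes essentially the same route as the paper's: the same reduction to the sum of indicators that $N_r(v,G)$ is not a tree of size at most $D$, the same three lemmas (DGW growth bound to fix $D$, DFG-to-DGW comparison, and the concentration corollary with $t=\eta n/3$), combined in a slightly different order but with identical content. The minor points you flag (that an $r$-local function is in particular $2r$-local, and that $N_r(o,T)\subseteq N_{2r}(o,T)$) are left implicit in the paper but handled the same way.
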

\begin{proof}
    By Corollary~\ref{cor:simulation-dfg-concentration} with $t=\eta n/3$,
    \begin{align*}
        \Delta(\cA_g(\Phi), \cA_{g_{\le D}}(\Phi))
        &=\f{1}{n}
        \sum_{v\in \Va_G}
        \ind{\text{$N_r(v,G)$ is not a tree or $|N_r(v,G)|>D$}} \\
        &\le
        \eta/3 +
        \E \ind{\text{$N_r(v,G)$ is not a tree or $|N_r(v,G)|>D$}}
    \end{align*}
    with probability $1-\exp(-\Omega(n^{1/3}))$, because the indicator is an $r$-local function taking values in $[-1, 1]$.
    Let $(o,T,\rho,\varphi) \sim \DGW(\alpha k, k-1)$.
    By Lemma~\ref{lem:simulation-dfg-to-dgw},
    \begin{align*}
        \E \ind{\text{$N_r(v,G)$ is not a tree or $|N_r(v,G)|>D$}}
        &\le
        \f{c\log n}{n^{1/2}} +
        \E \ind{|N_r(o,T)|>D},
    \end{align*}
    where we use that $N_r(o,T)$ is always a tree.
    By Lemma~\ref{lem:simulation-dgw-nbd-growth-rate}, we can pick $D$ large enough (depending on $\alpha,k,r,\eta$) that
    \[
        \E \ind{|N_r(o,T)|>D} \le \eta/3.
    \]
    Putting this all together, with probability $1-\exp(-\Omega(n^{1/3}))$,
    \[
        \Delta(\cA_g(\Phi), \cA_{g_{\le D}}(\Phi))
        \le
        2\eta/3 + \f{c\log n}{n^{1/2}}
        \le \eta
    \]
    for sufficiently large $n$.
\end{proof}

We get the second conclusion of Proposition~\ref{prop:local-to-ldp} from the following lemma.
\begin{lemma}
    \label{lem:deg-D-sim-2mt}
    If $\cA_g$ is an $r$-local algorithm and $f$ is its degree-$D$ simulation, then there exists $\gamma$ depending on $\alpha, k, r, D$ such that
    \[
        \E \norm{f(\Phi, \varphi)}_2^2
        \le
        \gamma n.
    \]
\end{lemma}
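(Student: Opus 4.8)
The plan is to bound $\E\norm{f(\Phi,\varphi)}_2^2 = \sum_{i=1}^n \E f_i(\Phi,\varphi)^2$ by a pointwise bound on $|f_i(\Phi,\varphi)|$ in terms of the size and topology of $N_r(v_i, G)$. Recalling the definition of the degree-$D$ simulation,
\[
    f_i(\Phi,\varphi) = \sum_{S\in\cG_{v_i,r,D}} h(v_i, G(S),\rho,\varphi) \prod_{(a,b,s)\in S}\Phi_{(a,b,s)},
\]
and that for a fixed $\Phi$ the only sets $S$ with $\prod_{(a,b,s)\in S}\Phi_{(a,b,s)}=1$ are those whose associated edge set $e(S)$ is actually a subset of $E_G$ (lying within the $r$-neighborhood of $v_i$), the sum defining $f_i(\Phi,\varphi)$ has at most $2^{|E_{r}(v_i)|}$ nonzero terms, where $E_r(v_i)$ denotes the edge set of $N_r(v_i, G)$. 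Each coefficient $h(v_i, G(S),\rho,\varphi)$ is, by the recursion \eqref{eq:ldp-def-h}, an alternating sum of $(\strictround^{-1}\circ g)$-values over subsets $S'\subseteq S$; since $g$ takes values in $\{\T,\F\}$, $\strictround^{-1}\circ g$ takes values in $\{-1,1\}$, so $|h(v_i, G(S),\rho,\varphi)| \le 2^{|S|} \le 2^D$. Therefore, on the event that $N_r(v_i, G)$ is a tree with $|E_r(v_i)|$ edges, $|f_i(\Phi,\varphi)| \le 2^{|E_r(v_i)|}\cdot 2^D$, and more crudely $|f_i(\Phi,\varphi)|\le 2^{|N_r(v_i,G)|}\cdot 2^D$ always (when $N_r(v_i,G)$ is not a tree one checks the terms still vanish appropriately, or one simply notes $|\cG_{v_i,r,D}|$ only counts tree-shaped $S$, so $f_i$ is controlled by the number of tree-shaped subgraphs, which is at most $2^{|N_r(v_i,G)|}$ times $2^D$ anyway).

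With this pointwise bound in hand, it remains to show $\E\,4^{|N_r(v_i,G)|}$ is bounded by a constant depending only on $\alpha,k,r$. First I would reduce to the Galton–Watson tree: by Lemma~\ref{lem:simulation-dfg-to-dgw} (or a direct coupling argument, since for fixed $v_i$ the neighborhood $N_r(v_i,G)$ stochastically dominates nothing larger than in $\DGW(\alpha k, k-1)$ up to lower-order corrections), it suffices to bound $\E\,4^{|N_r(o,T)|}$ for $(o,T,\rho,\varphi)\sim\DGW(\alpha k,k-1)$. Actually, rather than invoke Lemma~\ref{lem:simulation-dfg-to-dgw} — which is stated for bounded local functions and $4^{|N_r|}$ is unbounded — I would instead use Lemma~\ref{lem:simulation-dfg-nbd-growth-rate} directly: it gives $\P[|N_{2r}(v_i,G)| > \lambda(\alpha k(k-1))^r]\le c_1 e^{-c_0\lambda}$ for all $\lambda$. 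This exponential tail, being much faster than the exponential growth of $4^{(\cdot)}$ — wait, it is not faster, since both are exponential in the size. So the naive bound fails.

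The main obstacle, then, is precisely that $4^{|N_r(v_i,G)|}$ has no finite expectation under a tail that only decays like $e^{-c_0\lambda}$ in the size $\lambda$. The fix is to not bound $|f_i|$ by $2^{|N_r(v_i,G)|}$ but to truncate at the level $D$: observe that every term in \eqref{eq:ldp-def-f} has $|S|\le D$, so the number of nonzero terms is at most the number of tree-shaped edge-subsets of $N_r(v_i,G)$ of size at most $D$, which is at most $\binom{|E_r(v_i)|}{\le D}\le (|E_r(v_i)|+1)^D \le (|N_r(v_i,G)|+1)^D$ — a \emph{polynomial} in the neighborhood size. Hence $|f_i(\Phi,\varphi)|\le 2^D(|N_r(v_i,G)|+1)^D$ pointwise, and so $\E f_i(\Phi,\varphi)^2 \le 4^D\,\E(|N_r(v_i,G)|+1)^{2D}$. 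Now Lemma~\ref{lem:simulation-dfg-nbd-growth-rate} suffices: the exponential tail $c_1 e^{-c_0\lambda}$ integrates against the polynomial $\lambda^{2D}$ to give a finite bound $C(\alpha,k,r,D)$, uniformly in $n$ and in $i$. Summing over $i\in[n]$ yields $\E\norm{f(\Phi,\varphi)}_2^2 \le 4^D C(\alpha,k,r,D)\, n =: \gamma n$, as desired. I would write the last step by splitting the expectation into dyadic shells $\{|N_{2r}(v_i,G)|\in(\lambda_j(\alpha k(k-1))^r, \lambda_{j+1}(\alpha k(k-1))^r]\}$ with $\lambda_j = 2^j$, bounding the contribution of shell $j$ by $4^D(\lambda_{j+1}(\alpha k(k-1))^r+1)^{2D}\cdot c_1 e^{-c_0\lambda_j}$, and summing the convergent series.
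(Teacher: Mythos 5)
Your proposal follows the same approach as the paper's proof: bound each coefficient $\lt|h(v,G(S),\rho,\varphi)\rt|$ by a constant depending only on $D$, observe that the nonzero monomials in \eqref{eq:ldp-def-f} number at most polynomially (degree $D$) in $|N_r(v_i,G)|$, and integrate the resulting polynomial pointwise bound on $f_i^2$ against the exponential tail of Lemma~\ref{lem:simulation-dfg-nbd-growth-rate}. One small correction to your count: a given edge subset of size $d$ can correspond to up to $k^d$ distinct index sets $S$ (the edge determines the clause index $i$ and the literal, but not the slot $j\in[k]$ within the clause), so the number of nonzero terms should carry an extra $k^D$ factor as in the paper's $k^D\sum_{d\le D}\binom{X}{d}$; this is harmless since $k$ is a constant and simply enters $\gamma$.
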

\begin{proof}
    We will upper bound each $\E_{\Phi, \varphi}[f_i(\Phi, \varphi)^2]$ by a constant depending only on $\alpha,k,r,\eta$.
    Fix $i\in [n]$.
    Let $v=v_i \in \Va_G$ and define the random variable $X = |N_r(v,G)|$.
    In the expansion \eqref{eq:ldp-def-f}, the monomial indexed by $S\in \cG_{v,r,D}$ is only nonzero if $e(S)$ is a subset of the edges of $N_r(v,G)$.
    So, the number of nonzero monomials is at most
    \[
        k^D \sum_{d=0}^D \binom{X}{d} \le k^D(X+1)^D.
    \]
    Moreover, by \eqref{eq:ldp-def-h}, each of the coefficients $h(v, G(S), \rho, \varphi)$ is upper bounded by a constant $a$ dependent on $\alpha,k,r,D$.
    Thus
    \[
        f_i(\Phi, \varphi)^2 \le a^2k^{2D}(X+1)^{2D}
    \]
    pointwise, and so
    \[
        \E \lt[f_i(\Phi, \varphi)^2\rt]
        \le
        a^2k^{2D}
        \E \lt[(X+1)^{2D}\rt].
    \]
    Lemma~\ref{lem:simulation-dfg-nbd-growth-rate} gives an exponential bound on the tail probability of $X$.
    Integration by tails gives the result.
\end{proof}

\begin{proof}[Proof of Proposition~\ref{prop:local-to-ldp}]
    Set $D$ such that Lemma~\ref{lem:local-to-truncation} holds, and let $f$ be the degree-$D$ simulation of $\cA$.
    We couple $f, \cA_g, \cA_{g_{\le D}}$ to all use the same $\varphi$.
    Fact~\ref{fac:deg-D-sim-good} and Lemma~\ref{lem:local-to-truncation} imply the first conclusion.
    Since $D$ depends on only $\alpha,k,r,\eta$, so does the $\gamma$ given by Lemma~\ref{lem:deg-D-sim-2mt}.
    This implies the second conclusion.
\end{proof}

\subsection{Deferred Proofs}
\label{subsec:simulation-deferred}

In this subsection, we give the deferred proof of Lemma~\ref{lem:simulation-max-nbd}.
We first prove a sharper version of Lemma~\ref{lem:simulation-dgw-nbd-growth-rate} for a \emph{specific} $r$, where the bound is improved by an $r$-iterated logarithmic factor.

\begin{lemma}
    \label{lem:simulation-dgw-better-growth-rate}
    Let $r\in \bN$, $d_1, d_2 \ge 2$ and $(o, T, \rho, \varphi) \sim \DGW(d_1, d_2)$.
    There exists $\tst$ (depending on $r, d_1, d_2$) such that for all $t\ge \tst$,
    \[
        \P \lt[
            |N_{2r}(o,T)|
            \le
            \f{2t}{\log^{(r)}t}(d_1d_2)^{r}
        \rt] \ge 1 - e^{-t}.
    \]
\end{lemma}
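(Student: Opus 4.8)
The plan is to prove the estimate by induction on $r$, peeling off two levels of the tree at a time and controlling the growth with Poisson tail bounds that are sharp enough to expose the $\log^{(r)}$ improvement over Lemma~\ref{lem:simulation-dgw-nbd-growth-rate}. Write $X_j$ for the number of vertices of $T$ at depth $j$ and set $A_r := |N_{2r}(o,T)| = \sum_{j=0}^{2r}X_j$. By the definition of $\DGW(d_1,d_2)$, given the first $2r-2$ levels the number of depth-$(2r-1)$ vertices is $\Pois(d_1 X_{2r-2})$, and then each of them deterministically has $d_2$ children, so
\[
A_r = A_{r-1} + (1+d_2)\,X_{2r-1}, \qquad X_{2r-1}\mid\mathcal{F}_{r-1}\sim\Pois(d_1 X_{2r-2}), \qquad X_{2r-2}\le A_{r-1},
\]
where $\mathcal{F}_{r-1}$ is generated by the first $2r-2$ levels. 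The inductive claim is: for each $r\ge 0$ there is a finite $\tst_r=\tst_r(d_1,d_2)$ with $\P\!\left[A_r\ge \frac{2t}{\log^{(r)}t}(d_1d_2)^r\right]\le e^{-t}$ for all $t\ge\tst_r$. The base case $r=0$ is trivial since $A_0=1<2=\frac{2t}{\log^{(0)}t}$ (with $\log^{(0)}(x)=x$).

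For the inductive step, fix $r\ge 1$, put $C:=\frac{2t}{\log^{(r)}t}(d_1d_2)^r$ and $B:=\frac{2(t+1)}{\log^{(r-1)}(t+1)}(d_1d_2)^{r-1}$, and apply the induction hypothesis at parameter $s=t+1$ (legitimate once $\tst_r\ge\tst_{r-1}-1$) to get $\P[A_{r-1}\ge B]\le e^{-(t+1)}$. On the event $\{A_{r-1}\le B\}$ we have $X_{2r-2}\le B$, and $\{A_r\ge C\}\cap\{A_{r-1}\le B\}\subseteq\{X_{2r-1}\ge m\}$ with $m:=\frac{C-B}{1+d_2}$. Since $X_{2r-2}$ is $\mathcal{F}_{r-1}$-measurable and the Poisson tail is monotone in its rate, the conditional probability of $\{X_{2r-1}\ge m\}$ on this event is at most $\P[\Pois(d_1 B)\ge m]$ pointwise, so
\[
\P[A_r\ge C]\le e^{-(t+1)}+\P[\Pois(d_1 B)\ge m]\le e^{-(t+1)}+e^{-d_1 B}\left(\frac{e d_1 B}{m}\right)^{m}.
\]
It then suffices to show the last term is $\le(1-e^{-1})e^{-t}$ for $t$ large. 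Because $\log^{(r-1)}t/\log^{(r)}t\to\infty$ one gets $C/B\to\infty$, hence $m\ge \frac{C}{2(1+d_2)}$ for $t$ large; moreover $\rho:=m/(d_1 B)=\Theta\!\left(\log^{(r-1)}t/\log^{(r)}t\right)\to\infty$ with $\log\rho=\log^{(r)}t-\log^{(r+1)}t+O(1)=(1-o(1))\log^{(r)}t$, using $\log^{(r+1)}t=o(\log^{(r)}t)$. Therefore
\[
e^{-d_1 B}\left(\frac{e d_1 B}{m}\right)^{m}\le e^{-m(\log\rho-1)}\le\exp\!\left(-\frac{C}{2(1+d_2)}(1-o(1))\log^{(r)}t\right)=\exp\!\left(-\frac{(d_1d_2)^r}{1+d_2}(1-o(1))\,t\right),
\]
and since $d_1,d_2\ge 2$ and $r\ge 1$ give $\frac{(d_1d_2)^r}{1+d_2}\ge\frac{2d_2}{1+d_2}\ge\frac43>1$, this is $\le(1-e^{-1})e^{-t}$ once $t\ge\tst_r$ for $\tst_r$ large enough. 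Summing, $\P[A_r\ge C]\le e^{-(t+1)}+(1-e^{-1})e^{-t}=e^{-t}$, which closes the induction; specializing to the given $r$ proves the lemma.

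The main obstacle is that the constant $2$ in the statement is essentially tight, so crude Poisson estimates will not do. Three points make the bookkeeping go through: (i) one must use the sharp Chernoff exponent $m\log(m/\lambda)$ for $\Pois(\lambda)$ rather than a constant multiple of $m$; (ii) invoking the induction hypothesis at the shifted parameter $t+1$ reserves an $e^{-1}$ fraction of the failure budget for the freshly added layer; and (iii) the inequality $\frac{(d_1d_2)^r}{1+d_2}\ge\frac43>1$ (valid for $d_1,d_2\ge2$, $r\ge1$) ensures the new layer contributes only $o(e^{-t})$ to the failure probability. The remaining asymptotics — $C/B\to\infty$, $\log\rho=(1-o(1))\log^{(r)}t$, and the recursive choice $\tst_r\ge\tst_{r-1}-1$ — are routine, as is checking that the Chernoff bound is valid for the (possibly non-integer) threshold $m\ge d_1B$.
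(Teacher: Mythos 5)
Your proof is correct and follows essentially the same strategy as the paper's: control the tree's growth level by level, exploit that a layer of Poisson offspring concentrates so sharply that the cumulative bound at radius $2r$ can be taken a factor $\log^{(r-1)}t/\log^{(r)}t$ tighter than the rate $d_1d_2$ times the radius-$(2r-2)$ bound, and chain the failure probabilities. The bookkeeping differs only mildly: the paper fixes $r$ events $E_i$ controlling the per-depth counts $S_{2i-1}$ and union-bounds the conditional failure probabilities $\P(E_i^c\mid E_{i-1})\le\exp(-\tfrac43 it)$, whereas you run an induction on $r$ tracking the cumulative size $A_r$, invoking the inductive hypothesis at the shifted parameter $t+1$ to reserve an $e^{-1}$ fraction of the budget for the fresh layer; both aggregations rest on the same Chernoff tail for $\Pois(\lambda)$ and the same observation that $(d_1d_2)^r/(1+d_2)>1$ makes the new layer's contribution subdominant.
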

\begin{proof}
    For $0\le \ell \le 2r$, let $S_\ell$ denote the number of vertices in $\DGW(d_1, d_2)$ at depth $\ell$.
    The $S_\ell$ have the following distribution.
    First, $S_0 = 1$.
    For $\ell \ge 1$, $S_\ell$ is the sum of $S_{\ell-1}$ i.i.d. copies of $\Pois(d_1)$ if $\ell$ is odd, and $S_\ell = d_2 S_{\ell-1}$ if $\ell$ is even.

    For $i\in [r]$, define the event
    \[
        E_i = \lt\{
            S_{2i-1}
            \le
            \f{t}{\log^{(i)} t}
            d_1^i d_2^{i-1}
        \rt\}.
    \]
    This is equivalent to the event that $S_{2i} \le \f{t}{\log^{(i)} t} d_1^id_2^i$.
    For convenience, also define $E_0 = \{S_0 = 1\}$, which holds almost surely.
    On $\bigcap_{i=0}^r E_i$, we have
    \[
        |N_{2r}(o,T)|
        =
        \sum_{\ell=0}^{2r} S_\ell
        \le
        \f{t}{\log^{(r)} t}
        \cdot
        \f{d_1^rd_2^{r-1} + d_1^rd_2^r}{1-(d_1d_2)^{-1}}
        \le
        \f{2t}{\log^{(r)} t}
        (d_1d_2)^r.
    \]
    So, it remains to show that $\P \lt[\bigcap_{i=0}^r E_i\rt] \ge 1- e^{-t}$.

    Consider $i\in [r]$; we will upper bound $\P(E_i^c | E_{i-1})$.
    Let $N = \f{t}{\log^{(i-1)}t} d_1^{i-1}d_2^{i-1}$ (where $\log^{(0)} t = t$).
    Conditioned on $E_{i-1}$, we have $S_{2i-2}\le N$, so $S_{2i-1}$ is stochastically dominated by $\sum_{j=1}^N \xi_i$, where the $\xi_i$ are i.i.d. samples from $\Pois(d_1)$.
    By a standard Chernoff bound,
    \begin{align*}
        \P(E_i^c | E_{i-1})
        &\le
        \P\lt[
            \sum_{j=1}^N \xi_i
            \ge
            \f{\log^{(i-1)}t}{\log^{(i)}t}
            d_1 N
        \rt] \\
        &\le
        \lt[
            \inf_{s>0}
            \E \exp(s\xi_1)
            \cdot
            \exp \lt(
                -\f{\log^{(i-1)}t}{\log^{(i)}t}
                sd_1
            \rt)
        \rt]^N \\
        &=
        \lt[
            \inf_{s>0}
            \exp\lt(
                (e^s-1)d_1
                -
                \f{\log^{(i-1)}t}{\log^{(i)}t} sd_1
            \rt)
        \rt]^N \\
        &=
        \exp\lt(
            -Nd_1 \gamma\lt(
                \f{\log^{(i-1)}t}{\log^{(i)}t}
            \rt)
        \rt),
    \end{align*}
    where $\gamma(x) = x\log x - x + 1$.
    For large enough $t$,
    \[
        \gamma\lt(
            \f{\log^{(i-1)}t}{\log^{(i)}t}
        \rt)
        \ge
        \f34
        \f{\log^{(i-1)}t}{\log^{(i)}t}
        \log \f{\log^{(i-1)}t}{\log^{(i)}t}
        \ge
        \f23 \log^{(i-1)}t,
    \]
    while (as $d_1, d_2 \ge 2$ implies $d_1^i d_2^{i-1} \ge 2^{2i-1} \ge 2i$ for $i\ge 1$)
    \[
        Nd_1
        =
        \f{t}{\log^{(i-1)}t} d_1^id_2^{i-1}
        \ge
        2i \cdot \f{t}{\log^{(i-1)}t}.
    \]
    Thus, for large enough $t$, $\P(E_i^c | E_{i-1}) \le \exp(-\f43 it)$.
    So,
    \[
        \P \lt[
            \bigcap_{i=0}^r E_i
        \rt]
        \ge
        1 -
        \sum_{i=1}^r
        \P(E_i^c | E_{i-1})
        \ge
        1 - \f{\exp(-\f43 t)}{1-\exp(-\f43 t)}
        \ge
        1 - e^{-t}
    \]
    for sufficiently large $t$.
\end{proof}

\begin{proof}[Proof of Lemma~\ref{lem:simulation-max-nbd}]
    Set $\lambda > 0$ such that the conclusion of Lemma~\ref{lem:simulation-dgw-nbd-growth-rate} holds with probability $1 - \eta/2$.
    Denote this event $S$; on this event, $|N_{R}(o,T)| \le \lambda (d_1d_2)^{\lceil R/2\rceil}$ for all $R\in \bN$.

    For $R\in \bN$, let $t(R)$ be the smallest positive integer such that $\lambda (d_1d_2)^{\lceil R/2\rceil} e^{-t(R)} \le \eta / 2$; note that $t(R) = \Theta(R)$ for $d_1, d_2, r, \eta$ fixed.
    Set $\Rst$ such that $t(\Rst) \ge \tst$ for the $\tst$ in Lemma~\ref{lem:simulation-dgw-better-growth-rate}.
    Henceforth let $R\ge \Rst$ and $t = t(R) \ge \tst$.

    For $v\in \Va_T$, let $N^{\downarrow}_{2r}(v,T)$ denote the subset of $N_{2r}(v,T)$ in the descendant subtree of $v$.
    Note that the descendant subtree of $v$ has distribution $\DGW(d_1,d_2)$, so $|N^{\downarrow}_{2r}(v,T)| =_d |N_{2r}(o,T)|$.
    By Lemma~\ref{lem:simulation-dgw-better-growth-rate}, for each $v\in \Va_T$,
    \[
        \P \lt[
            |N^{\downarrow}_{2r}(v,T)|
            \le
            \f{2t}{\log^{(r)}t}(d_1d_2)^{r}
        \rt] \ge 1 - e^{-t}.
    \]
    By a union bound,
    \[
        \P \lt[
            S
            \text{~and~}
            \max_{v\in \Va_T \cap N_{R}(o,T)}
            |N^{\downarrow}_{2r}(v,T)|
            \le
            \f{2t}{\log^{(r)}t}(d_1d_2)^{r}
        \rt]
        \ge
        1 - \f{\eta}{2} - \lambda (d_1d_2)^{\lceil R/2\rceil} e^{-t}
        \ge
        1-\eta.
    \]
    Let $S'$ be the event in this probability.
    Note that for $v\in \Va_T$,
    \[
        N_{2r}(v,T)
        \subseteq
        N^{\downarrow}_{2r}(v,T)
        \cup
        N^{\downarrow}_{2r}(\gr(v),T)
        \cup
        \cdots
        \cup
        N^{\downarrow}_{2r}(\gr^r(v),T),
    \]
    where $\gr(v)$ denotes the grandparent of $v$.
    Thus, on the event $S'$, we have
    \[
        \max_{v\in \Va_T \cap N_{R}(o,T)}
        |N_{2r}(v,T)|
        \le
        \f{2t(r+1)}{\log^{(r)}t}(d_1d_2)^{r}.
    \]
    For $v\in \Cl_T \cap N_{R}(o,T)$, simply note that $N_{2(r-1)}(v,T) \subseteq N_{2r}(\pa(v),T)$, where $\pa(v)$ denotes the parent of $v$.
    It follows that on $S'$,
    \[
        \max_{v\in N_{R}(o,T)}
        |N_{2(r-1)}(v,T)|
        \le
        \f{2t(r+1)}{\log^{(r)}t}(d_1d_2)^{r}
        \le
        \f{CR}{\log^{(r)}R},
    \]
    using that $t = t(R)$ and $t(R) = \Theta(R)$.
    The result follows by renaming $r$ to $r+1$.
\end{proof}

\section{Proof of Achievability}
\label{sec:achievability}

Throughout this section, let $\eps > 0$, $\alpha = (1-\eps) 2^k \log k / k$, and $m = \lfloor \alpha n \rfloor$.
In this section we will prove Theorem~\ref{thm:achievability}, that local algorithms and low degree polynomials can solve random $k$-SAT at this clause density $\alpha$.

We will prove this theorem by simulating the first phase of \verb|Fix|, which we denote \verb|Fix1|, by these two computation classes.
Parts (\ref{itm:achievability-local},\ref{itm:achievability-ldp}) of Theorem~\ref{thm:achievability} follow immediately from guarantees on \verb|Fix1| in \cite{Coj10} and our simulation results, Propositions~\ref{prop:local-memory-to-local} and \ref{prop:local-to-ldp}.
To prove parts (\ref{itm:achievability-local-conc},\ref{itm:achievability-ldp-conc}), we use the fact that \verb|Fix1| is simulated by a local algorithm to argue concentration of the number of clauses satisfied, in order to prove the stronger bound on the failure probability.

This section is structured as follows.
In Subsection~\ref{subsec:achievability-fix} we define \verb|Fix1| and introduce its guarantees.
This immediately implies Theorem~\ref{thm:achievability}(\ref{itm:achievability-local},\ref{itm:achievability-ldp}).
In Subsection~\ref{subsec:achievability-concentration} we show concentration of the number of clauses satisfied and prove Theorem~\ref{thm:achievability}(\ref{itm:achievability-local-conc},\ref{itm:achievability-ldp-conc}).

\subsection{Review of Fix}
\label{subsec:achievability-fix}

At clause density $\alpha = (1-\eps) 2^k \log k / k$, \verb|Fix| produces a (exactly) satisfying assignment with high probability.
At a high level, \verb|Fix| runs in three phases.
In the first phase, it produces a almost-satisfying assignment.
In the second phase, it modifies this assignment in a small fraction of variables, at most $k^{-12}$ with high probability, to ``don't know."
This is done in such a way that the remaining problem of assigning truth values to the ``don't know" variables is equivalent to a very subcritical random $3$-SAT instance.
The third phase solves the remaining problem with a maxflow algorithm.

We will only show that local algorithms and low degree polynomials simulate the first phase \verb|Fix1|.
Because the the rest of \verb|Fix| changes at most a $k^{-12}$ fraction of variables with high probability, simulating \verb|Fix1| within normalized Hamming distance $\eta' > 0$ simulates \verb|Fix| within error $\eta = k^{-12} + \eta'$.
This is why Theorem~\ref{thm:achievability} requires $\eta > k^{-12}$.
Let us record the guarantees on \verb|Fix1| proved in \cite{Coj10}.
\begin{theorem}[Implicit in {\cite[Section 3]{Coj10}}]
    \label{thm:fix1-guarantee}
    Let \verb|Fix1| be defined in Algorithm~\ref{alg:fix1} below. Then,
    \[
        \P \lt[
            \verb|Fix1|(\Phi)~\text{$(k^{-12}, 0)$-satisfies $\Phi$}
        \rt]
        \ge 1-o(1).
    \]
    The probability is over $\Phi \sim \Phi_k(n,m)$ and the (independent) internal randomness of \verb|Fix1|.
\end{theorem}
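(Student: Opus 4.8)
The plan is to take Algorithm~\ref{alg:fix1} to be exactly the first phase of the algorithm \verb|Fix| of Coja-Oghlan~\cite{Coj10}, reproduced essentially verbatim (and arranged, as promised before Fact~\ref{fac:fix-local-memory}, so that it processes variables and clauses in a random order using only local information and per-vertex memory). Recall that \verb|Fix| runs in three phases: \verb|Fix1| builds an assignment $\sigma_1$ together with a set $Z$ of ``marked'' variables; \verb|Fix2| possibly enlarges $Z$; and \verb|Fix3| solves the residual instance on $Z$ by a max-flow computation, leaving the values on $[n]\setminus Z$ untouched. We let \verb|Fix1| output the vector in $\{\T,\F,\Q\}^n$ equal to $\sigma_1$ outside $Z$ and to $\Q$ on $Z$. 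Given this, Theorem~\ref{thm:fix1-guarantee} is essentially a repackaging of the analysis in \cite[Section 3]{Coj10}, and the ``proof'' amounts to quoting the right statements.

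Concretely, first I would invoke two facts established in \cite[Section 3]{Coj10}. The first is that at clause density $\alpha=(1-\eps)2^k\log k/k$, the full algorithm \verb|Fix| succeeds with probability $1-o(1)$: it outputs an assignment $\sigma^\ast\in\{\T,\F\}^n$ that satisfies every clause of $\Phi$, i.e.\ $\sigma^\ast$ $0$-satisfies $\Phi$. The second is that with probability $1-o(1)$ the marked set has size $|Z|\le k^{-12}n$, and that \verb|Fix2| and \verb|Fix3| modify the assignment produced by \verb|Fix1| only on coordinates in $Z$. On the high-probability event where both hold, $\sigma^\ast$ agrees with the non-$\Q$ part of $\verb|Fix1|(\Phi)$ on all of $[n]\setminus Z$, so $\Delta(\verb|Fix1|(\Phi),\sigma^\ast)\le |Z|/n\le k^{-12}$.

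Combining these proves the theorem: on this event, taking $y=\sigma^\ast$ in the definition of $(k^{-12},0)$-satisfy shows that $\verb|Fix1|(\Phi)$ $(k^{-12},0)$-satisfies $\Phi$, and the event has probability $1-o(1)$. One routine point to record is that \cite{Coj10} works with the model in which the $m$ clauses are drawn uniformly without replacement (or with a Poisson number of clauses); by the footnote after Definition~\ref{defn:random-ksat} this is immaterial in the large-$n$ limit, so the bound transfers to $\Phi_k(n,m)$.

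I expect the only real obstacle to be bookkeeping rather than mathematics: \cite{Coj10} neither isolates \verb|Fix1| nor phrases its guarantee in the $(\eta,\nu)$-satisfaction language, so the work is to identify precisely which lemmas of \cite[Section 3]{Coj10} bound $|Z|$ by $k^{-12}n$ with high probability and certify that Phases~2 and~3 act only within $Z$, and to reconcile the error terms there with the clean $1-o(1)$ statement above. I also want to make sure Algorithm~\ref{alg:fix1} is written so that it is both faithful to \cite{Coj10} and directly recognizable as a local memory algorithm in the sense of Definition~\ref{defn:local-memory}, since Theorem~\ref{thm:achievability} will feed it through Propositions~\ref{prop:local-memory-to-local} and~\ref{prop:local-to-ldp}.
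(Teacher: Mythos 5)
Your proposal correctly identifies that Theorem~\ref{thm:fix1-guarantee} is a repackaging of the guarantees on \texttt{Fix} from \cite[Section 3]{Coj10}: with probability $1-o(1)$, \texttt{Fix} returns a satisfying assignment $\sigma^*$ that differs from the phase-1 output on at most $k^{-12}n$ coordinates, so taking $y=\sigma^*$ in the definition of $(\eta,\nu)$-satisfy gives the claim. The model-transfer point (with versus without replacement) is dispatched as you say. This is exactly the route the ``implicit in \cite{Coj10}'' citation intends, so the proposal and the paper are in agreement.

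There is, however, one concrete mismatch you should fix before this is usable. You propose that \verb|Fix1| output $\Q$ on $Z$, but Algorithm~\ref{alg:fix1} outputs $x\in\{\T,\F\}^n$ with $x_i=\F$ for $i\in Z$ and $x_i=\T$ otherwise. This is not cosmetic: Fact~\ref{fac:fix-local-memory} certifies \verb|Fix1| as a $3$-local memory algorithm, and Definition~\ref{defn:local-memory} requires the terminal map $f_2:\bZ_{\ge 0}\to\{\T,\F\}$, so a $\Q$-emitting variant would fall outside the class and could not be fed through Propositions~\ref{prop:local-memory-to-local} and~\ref{prop:local-to-ldp} as Theorem~\ref{thm:achievability} demands. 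Fortunately the distance argument is unchanged by outputting $\F$ on $Z$ instead of $\Q$: $\sigma^*$ agrees with the phase-1 output outside the (possibly enlarged) set of coordinates that phases~2 and~3 touch, and that set has size at most $k^{-12}n$ with high probability, so $\Delta(\verb|Fix1|(\Phi),\sigma^*)\le k^{-12}$ either way. While you are at it, pin down that the $k^{-12}n$ bound from \cite{Coj10} applies to that enlarged set (the coordinates phases~2 and~3 may alter), not to the phase-1 set $Z$ itself; your notation for $Z$ drifts between the two meanings.
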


We now define \verb|Fix1|.
This phase starts from the all-true assignment $x=\T^n$ and selects a set of indices $Z\subseteq [n]$ such that if $\{x_i : i\in Z\}$ are set false, most clauses are satisfied.
To do this, it scans through the clauses of the input formula $\Phi$.
When it encounters an all-negative clause that does not contain any variable from $Z$, it tries to find a true variable $x_i$ from this clause that when made false does not create more unsatisfied clauses.
It adds this $i$ to $Z$.
Formalizing this idea, we say $x_i$ (for $i \in [n]\setminus Z$) is \vocab{$Z$-safe} if, when we set $\{x_{i'} : i'\in [n]\setminus Z\}$ to true and $\{x_{i'} : i'\in Z\}$ to false, $x_i$ is not the sole true literal in any clause.

\begin{algorithm}[Fix, Phase 1; \cite{Coj10}]
    \label{alg:fix1}
    On input $\Phi \in \Omega_k(n,m)$, \verb|Fix1| runs as follows.
    \begin{enumerate}[label=(\arabic*), ref=\arabic*]
        \item Set $Z = \emptyset$.
        \item Relabel the clauses $\{\Phi_i : i\in [m]\}$ in a uniformly random order. Also, for each $i\in [m]$, relabel the literals $\{\Phi_{i,j} : j\in [k]\}$ in a uniformly random order.
        \item For $i\in [m]$ in increasing order:
        \begin{enumerate}[label=(\alph*), ref=\alph*]
            \item If $\Phi_i$ is all-negative and contains no variable from $\{x_i : i\in Z\}$:
            \begin{enumerate}[label=(\roman*), ref=\roman*]
                \item If there is $1\le j< \lceil k/2\rceil$ such that the underlying variable of $\Phi_{i,j}$ is $Z$-safe, pick the smallest such $j$ and add the underlying variable of $\Phi_{i,j}$ to $Z$.
                \item Otherwise, add the underlying variable of $\Phi_{i, \lceil k/2\rceil}$ to $Z$.
            \end{enumerate}
        \end{enumerate}
        \item Output $x\in \{\T,\F\}^n$ where $x_i = \F$ if $i\in Z$ and otherwise $x_i = \T$.
    \end{enumerate}
\end{algorithm}
The presentation of \verb|Fix1| in \cite{Coj10} does not rerandomize the clause and literal orders, but of course this makes no difference.
We add this rerandomization so that the algorithm is a local memory algorithm in the sense we define.
For technical reasons having to do with the analysis in \cite{Coj10}, \verb|Fix1| only considers flipping variables $\Phi_{i,j}$ where $j\le \lceil k/2\rceil$.

\begin{fact}
    \label{fac:fix-local-memory}
    \verb|Fix1| is a $3$-local memory algorithm.
\end{fact}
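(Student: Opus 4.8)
The plan is to verify that \verb|Fix1| fits Definition~\ref{defn:local-memory} with radius $r=3$, by exhibiting an $r$-local subroutine $f_1$ and a decoding function $f_2$ whose associated local memory algorithm reproduces \verb|Fix1|. The key observation is that \verb|Fix1|'s only stateful decision is maintaining the set $Z$, which is naturally encoded by the memory map $\mu : V_G \to \bZ_{\ge 0}$: we set $\mu(v_i) = 1$ to mark $i\in Z$ and $\mu(v_i) = 0$ otherwise, and we can use an additional value (say $\mu(c) = 1$) on clause vertices to record that a clause has already been processed, though in fact processing clauses in a random order is already built into the local memory model. The random clause and literal relabelings in step (2) of Algorithm~\ref{alg:fix1} are exactly what the auxiliary randomness $\psi \sim \unif([0,1])^{\otimes V_G}$ provides: processing vertices in increasing order of $\psi$ induces a uniformly random order on $\Cl_G$, and the i.i.d.\ decorations $\varphi$ on the edges incident to each clause can be used to order the literals within that clause.

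The main step is to check that each decision \verb|Fix1| makes when processing a clause vertex $c_i$ is determined by the $3$-neighborhood $N_3(c_i, G, \rho, \varphi, \mu)$. When we process $c_i$: (a) whether $\Phi_i$ is all-negative is visible from the polarities $\rho$ on the $k$ edges incident to $c_i$ (distance $1$); (b) whether $\Phi_i$ contains a variable already in $Z$ is visible from $\mu$ on the variable-neighbors of $c_i$ (distance $1$); (c) for each candidate variable $v_j$ among the first $\lceil k/2\rceil$ literals, deciding whether $v_j$ is $Z$-safe requires checking, for every clause $c'$ containing $v_j$, whether $v_j$ would become the sole true literal of $c'$ — this depends on the polarity of $v_j$ in $c'$, on the polarities of the other literals of $c'$, and on whether those other underlying variables lie in $Z$ (i.e.\ their $\mu$-values). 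The clause $c'$ is at distance $2$ from $c_i$ (via $v_j$), its other literal-variables are at distance $3$, and their $\mu$-values are therefore within $N_3(c_i, G)$. Having identified the variable to add to $Z$, $f_1$ overwrites its $\mu$-value to $1$; this is a write to a vertex at distance $1$ from the root, well within radius $3$. Finally $f_2(0) = \T$ and $f_2(1) = \F$ recovers the output in step (4). I would also remark (paralleling the discussion after Fact~\ref{fac:seq-local-local-memory}) that when $f_1$ is invoked on a variable vertex it simply does nothing, since \verb|Fix1| only acts when scanning clauses.

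I would present this as a short proof: define $f_1$ by the case analysis above, define $f_2$, and then argue that for any realization of $\varphi,\psi$ the resulting local memory algorithm produces the same distribution over outputs as \verb|Fix1|, using that the induced random order on clauses is uniform and that the within-clause literal order derived from $\varphi$ is uniform and independent. The only subtlety worth flagging — though not a genuine obstacle — is the claim that $Z$-safety at the moment \verb|Fix1| processes clause $c_i$ is faithfully computed from the \emph{current} $\mu$: since $\mu$ is updated immediately each time a variable is added to $Z$, and since vertices are processed in the $\psi$-order that matches \verb|Fix1|'s clause scan, the value of $\mu$ seen by $f_1$ when it processes $c_i$ is precisely the set $Z$ as it stands at that point in \verb|Fix1|'s execution; hence the radius-$3$ read captures exactly the information \verb|Fix1| uses. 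This is the one place where one must be careful that the local memory model's sequential semantics align with \verb|Fix1|'s, but it follows directly from the definitions.
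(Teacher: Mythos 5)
Your proposal is correct and takes essentially the same approach as the paper: encode $Z$-membership in $\mu$ on variable vertices, let $\psi$ induce the random clause order, let the edge decorations $\varphi$ induce the within-clause literal order, have $f_1$ do nothing on variable vertices and run the inner loop of \texttt{Fix1} on clause vertices, and decode via $f_2(0)=\T$, $f_2(1)=\F$. The paper's proof is terser --- it simply asserts the inner-loop logic is $3$-local --- whereas you explicitly trace the radius-$3$ reads ($c'$ at distance $2$, its other variables and their $\mu$-values at distance $3$) needed for the $Z$-safety check, which is a correct and useful elaboration of the same argument.
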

\begin{proof}
    We will construct a $3$-local subroutine $f_1$ and a function $f_2 : \bZ_{\ge 0} \to \{\T,\F\}$ such that $\cA_{f_1,f_2}$ simulates \verb|Fix1|.
    Let $(G,\rho)$ be the factor graph of $\Phi$, and let the i.i.d. randomness of $\varphi$ be sampled from $\unif([0,1])$.

    We will maintain the invariant that for each $v=v_i\in \Va_G$, $\mu(v)=1$ if $i\in Z$, and otherwise $\mu(v)=0$.

    The subroutine $f_1$ runs as follows on $(v,G,\rho,\varphi,\mu)$.
    If $v\in \Va_G$, do nothing.
    The remaining loop over $v\in \Cl_G$ runs over these vertices in a uniformly random order, as desired.
    If $v = c_i\in \Cl_G$, $f_1$ orders the edges $e$ incident to $c_i$ in increasing order of $\varphi(e)$.
    It runs the logic inside the for loop of \verb|Fix1|, with the corresponding literals $\{\Phi_{i,j} :j\in [k]\}$ relabeled in this order, and records the outcome on $\mu$.
    Note that the literals $\{\Phi_{i,j} :j\in [k]\}$ are relabeled in a uniformly random order, and that the logic inside the for loop is $3$-local.
\end{proof}

The proof of Theorem~\ref{thm:achievability}(\ref{itm:achievability-local},\ref{itm:achievability-ldp}) follows immediately from Fact~\ref{fac:fix-local-memory} and our simulation results.
\begin{proof}[Proof of Theorem~\ref{thm:achievability}(\ref{itm:achievability-local},\ref{itm:achievability-ldp})]
    Since $\eta > k^{-12}$, we can find $\eta' > 0$ such that $\eta = k^{-12} + 2\eta'$.
    Theorem~\ref{thm:fix1-guarantee}, Fact~\ref{fac:fix-local-memory} and Proposition~\ref{prop:local-memory-to-local} give $r>0$ and an $r$-local algorithm $\cA$ such that
    \begin{equation}
        \label{eq:achievability-local-alg}
        \P \lt[
            \text{$\cA(\Phi)$ $(k^{-12}+\eta', 0)$-satisfies $\Phi$}
        \rt]
        \ge
        1 - \delta(n)
    \end{equation}
    for $\delta(n) = o(1) + \exp(-\Omega(n^{1/3})) = o(1)$.
    Since $k^{-12}+\eta' < \eta$, this proves part (\ref{itm:achievability-local}).

    Proposition~\ref{prop:local-to-ldp} gives $D,\gamma$ and a random degree-$D$ polynomial that $(\delta(n), \gamma, \eta, 0)$-solves $\Phi_k(n,m)$, for $\delta(n)$ with a larger $\exp(-\Omega(n^{1/3}))$ term.
    Here we use that $\eta = k^{-12} + 2\eta'$.
    Finally, Lemma~\ref{lem:impossibility-assume-deterministic} gives a deterministic degree-$D$ polynomial that $(3\delta(n), 3\gamma, \eta, 0)$-solves $\Phi_k(n,m)$.
    This proves part (\ref{itm:achievability-ldp}).
\end{proof}

\subsection{Concentration of Clauses Satisfied}
\label{subsec:achievability-concentration}

For $x\in \{\T,\F\}^n$, $\Phi \in \Omega_k(n,m)$, and $\eta \in (0,1)$, define the objective
\begin{equation}
    \label{eq:def-sat}
    \Sat_\eta(x,\Phi)
    =
    \max_{y\in \{\T,\F\}^n : \Delta(x,y) \le \eta}
    \lt(
        \text{$\#$ clauses of $\Phi$ satisfied by $y$}
    \rt).
\end{equation}
To prove Theorem~\ref{thm:achievability}(\ref{itm:achievability-local-conc}), we will show that the objective attained by any local algorithm concentrates.
For an assignment $x\in \{\T,\F\}^n$ and a partial assignment $z\in \{\T,\F\}^B$, where $B\subseteq [n]$, it will be useful to define the replacement operator $\Rep(x,z)$ by
\begin{equation}
    \label{eq:def-rep}
    \Rep(x,z) = y \in \{\T,\F\}^n
    \qquad
    \text{where}
    \qquad
    y_i =
    \begin{cases}
        x_i & i\not\in B, \\
        z_i & i\in B.
    \end{cases}
\end{equation}

\begin{proposition}
    \label{prop:local-alg-conc}
    Let $\cA$ be an $r$-local algorithm with internal randomness $\varphi$.
    Let $\eta \in (0,1)$ and $Y = \Sat_\eta(\cA(\Phi, \varphi),\Phi)$.
    Then,
    \[
        \P \lt[|Y-\E Y| \ge \f{m}{\log n}\rt]
        \le
        \exp(-\tOmega(n^{1/3})).
    \]
\end{proposition}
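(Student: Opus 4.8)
The plan is to realize $Y=\Sat_\eta(\cA(\Phi,\varphi),\Phi)$ as a function of $\Theta(n)$ mutually independent coordinates and apply the bad-event McDiarmid inequality, Lemma~\ref{lem:mcdiarmid-with-bad}. The pair $(\Phi,\varphi)$ decomposes into $M=km+(n+m)=\Theta(n)$ independent coordinates: for each of the $km$ literal slots $(i,j)$ a pair consisting of the literal $\Phi_{i,j}$ and the decoration $\varphi_E$ of the corresponding factor-graph edge (drawn from $\unif(\cL)\times(\Omega,\P_\omega)$), and for each vertex $v$ of the factor graph the decoration $\varphi_V(v)$ (drawn from $(\Omega,\P_\omega)$); note $\rho$ is determined by the literals. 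Then $Y$ is a deterministic function of these $M$ coordinates, and I will bound its coordinatewise Lipschitz constants.

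The good event $S$ will be that $\Phi$ is $R$-locally small (Definition~\ref{defn:r-locally-small}) for a sufficiently large constant $R$ depending only on $r$; by Fact~\ref{fac:r-locally-small}, $\P(S^c)\le\exp(-\Omega(n^{1/3}))$. Suppose two configurations differ in a single coordinate. The decorated factor graph then changes in at most one edge (or, for a vertex coordinate, one vertex label) and $\Phi$ changes in at most one clause. Since $\cA$ is $r$-local, its output changes only at variable vertices lying within graph-distance $r$ of the altered edge or vertex; call this set $W$. On $S$, $W$ is contained in an $r$-neighborhood of a single vertex in either the old or the new factor graph, so $|W|\le n^{1/3}$; and the set of clauses containing a variable of $W$ lies within distance $r+1$ of that same vertex, hence also has size $\le n^{1/3}$ on $S$. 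To bound the variation of $Y$, let $y^*$ attain the maximum in \eqref{eq:def-sat} for $(\cA(\Phi,\varphi),\Phi)$, write $x'=\cA(\Phi',\varphi')$ for the other instance's output, and set $y'=\Rep(y^*,x'|_W)$ using \eqref{eq:def-rep}. Overwriting $y^*$ by $x'$ only on $W$ removes disagreements with $x'$ (since $x$ and $x'$ agree outside $W$), so $\Delta(x',y')\le\Delta(\cA(\Phi,\varphi),y^*)\le\eta$, making $y'$ feasible for the other instance; and $y'$ satisfies every clause that $y^*$ satisfies except possibly those touching $W$ and the single altered clause, i.e.\ all but $O(n^{1/3})$ of them. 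Symmetrizing gives $|Y-Y'|\le c:=O(n^{1/3})$ whenever both configurations lie in $S$, while unconditionally $Y\in[0,m]$ forces $|Y-Y'|\le b:=m$.

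Applying Lemma~\ref{lem:mcdiarmid-with-bad} with $c_i\equiv c=O(n^{1/3})$, $b_i\equiv b=m$, and $t=m/\log n=\Theta(n/\log n)$: since $\sum_i c_i^2=\Theta(n)\cdot O(n^{2/3})=O(n^{5/3})$, the Gaussian term is $2\exp\!\big(-t^2/(8\sum_i c_i^2)\big)=\exp\!\big(-\Omega(n^{1/3}/\log^2 n)\big)=\exp(-\tOmega(n^{1/3}))$; and since $\sum_i b_i/c_i=\Theta(n)\cdot O(n^{2/3})=O(n^{5/3})$ while $\P(S^c)\le\exp(-\Omega(n^{1/3}))$, the bad-event term is $2\P(S^c)\sum_i b_i/c_i=\exp(-\Omega(n^{1/3}))\cdot O(n^{5/3})=\exp(-\Omega(n^{1/3}))$ because $n^{1/3}\gg\log n$. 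Adding the two contributions yields $\P[|Y-\E Y|\ge m/\log n]\le\exp(-\tOmega(n^{1/3}))$.

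The main obstacle is the bounded-difference estimate for $\Sat_\eta$. The crude bound ``changing the assignment at $|W|$ positions changes the number of satisfied clauses by at most $|W|\cdot\max_v\deg_G(v)$'' gives only $c=O(n^{2/3})$, which makes the Gaussian term vacuous at the scale $t=m/\log n$. The key point is instead to bound the number of clauses whose satisfaction status can change \emph{directly} by the size of a bounded-radius neighborhood of the single altered edge/vertex, which is $O(n^{1/3})$ on $S$ and avoids multiplying by the maximum degree; and to handle the max-over-Hamming-ball form of $\Sat_\eta$ via the $\Rep$ substitution, checking carefully that correcting the optimal assignment on $W$ keeps it inside the $\eta$-ball around the new algorithm output so that it remains a feasible competitor.
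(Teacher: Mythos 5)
Your proposal is correct and follows essentially the same route as the paper: decompose $(\Phi,\varphi)$ into $\Theta(n)$ independent coordinates, take the good event to be local smallness of the factor graph, show an $O(n^{1/3})$ bounded-difference on the good event by constructing a feasible competitor for the perturbed instance via a local edit, and conclude with the bad-event McDiarmid inequality. The only cosmetic difference is in how the competitor is built — the paper applies the optimal partial overwrite $z(\zeta)$ to $\cA(\zeta')$ and invokes optimality of $z(\zeta')$, whereas you patch the optimal assignment $y^*$ on $W$ to match the new algorithm output — but these are two equivalent ways of exhibiting a competitor within the $\eta$-ball, and both yield the same $O(n^{1/3})$ bound on the good event.
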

\begin{proof}
    We will show $Y$ has bounded differences with high probability, which implies concentration by Lemma~\ref{lem:mcdiarmid-with-bad}.

    Throughout this section, we will write $\cA(\Phi, \varphi)$ for $\cA$ run with input $\Phi$ and internal randomness $\varphi$.
    Define $z=z(\Phi, \varphi) \in \{\T,\F\}^B$ as the partial assignment maximizing
    \[
        \Sat_0\lt(\Rep(\cA(\Phi, \varphi), z), \Phi\rt),
    \]
    where the maximization is over all $B\subseteq [n]$ with $|B|\le \eta n$.
    We break ties arbitrarily but deterministically.
    By definition of $Y$, $\Rep(\cA(\Phi, \varphi), z)$ satisfies $Y$ clauses of $\Phi$.

    Let $(G, \rho)$ be the factor graph of $\Phi$.
    Let the vertex sets of $G$ be $\Va = \{v_1,\ldots,v_n\}$, $\Cl = \{c_1,\ldots,c_m\}$, and $V = \Va \cup \Cl$, which are fixed across all realizations of $G$.

    All the above random variables are $(\Phi, \varphi)$-measurable.
    We can reformat $(\Phi, \varphi)$ into $km+n+m$ independent parts $\zeta = (\psi_1,\ldots,\psi_{km+n+m})$: for $1\le j\le km$, $\psi_j = (\Phi_{L(j)}, \varphi(e))$ where $e$ is the edge in $G$ corresponding to $\Phi_{L(j)}$, and for $km+1\le j\le km+m+n$, $\psi_j = \varphi(v)$ ranges over $v\in V$.
    We will henceforth write $Y(\zeta)$ to denote the $Y$ corresponding to this realization of $\zeta$, and similarly for other random variables, which are all $\zeta$-measurable.

    Let $S$ denote the set of $\zeta$ such that $\Phi(\zeta)$ is $(r+1)$-locally small (recall Definition~\ref{defn:r-locally-small}).
    By Fact~\ref{fac:r-locally-small}, $\P(S^c) \le \exp(-\Omega(n^{1/3}))$.
    Suppose $\zeta, \zeta'\in S$ differ in only one coordinate.
    We will upper bound $|Y(\zeta) - Y(\zeta')|$.
    For now, assume the differing coordinate is $\psi_j$ for $1\le j\le km$; thus the factor graphs $G(\zeta), G(\zeta')$ differ in one edge.
    Let $c\in \Cl$ be the common endpoint of this edge.

    Assume without loss of generality that $Y(\zeta) \ge Y(\zeta')$.
    Then,
    \begin{align*}
        Y(\zeta) - Y(\zeta')
        &=
        \Sat_0(\Rep(\cA(\zeta), z(\zeta)), \Phi(\zeta))
        -
        \Sat_0(\Rep(\cA(\zeta'), z(\zeta')), \Phi(\zeta')) \\
        &\le
        \Sat_0(\Rep(\cA(\zeta), z(\zeta)), \Phi(\zeta))
        -
        \Sat_0(\Rep(\cA(\zeta'), z(\zeta)), \Phi(\zeta'))
    \end{align*}
    The last inequality holds because $z(\zeta')$ maximizes the number of clauses of $\Phi(\zeta')$ satisfied by $\Rep(\cA(\zeta'), z(\zeta'))$.

    Note that $\cA(\zeta)$ and $\cA(\zeta')$ only differ in coordinates $i\in [n]$ where $v_i \in N_r(c, G(\zeta)) \cup N_r(c, G(\zeta'))$.
    Thus $\Rep(\cA(\zeta), z(\zeta))$ and $\Rep(\cA(\zeta'), z(\zeta))$ differ in only these coordinates.
    So, if
    \[
        \ind{\text{$\Rep(\cA(\zeta), z(\zeta)))$ satisfies clause $\Phi(\zeta)_i$}}
        \neq
        \ind{\text{$\Rep(\cA(\zeta'), z(\zeta)))$ satisfies clause $\Phi(\zeta')_i$}},
    \]
    then $c_i \in N_{r+1}(c, G(\zeta)) \cup N_{r+1}(c, G(\zeta'))$.
    Because $\zeta, \zeta'\in S$, this implies $|Y(\zeta) - Y(\zeta')| \le O(n^{1/3})$.

    We can analogously show the same bounded difference inequality when $\zeta, \zeta'\in S$ differ in coordinate $\psi_j$ for $km+1\le j\le km+m+n$, corresponding to a vertex of the factor graphs.
    Moreover, for all $\zeta, \zeta'$, clearly $|Y(\zeta) - Y(\zeta')| \le m$.
    By Lemma~\ref{lem:mcdiarmid-with-bad},
    \begin{align*}
        \P\lt[|Y - \E Y| \ge \f{m}{\log n} \rt]
        &\le
        2 \exp\lt(\f{m^2/\log^2 n}{8(km+m+n)O(n^{2/3})}\rt)
        +
        \exp(-\Omega(n^{1/3}))
        O\lt(\f{2(km+m+n)m}{n^{1/3}}\rt) \\
        &\le
        \exp(-\tOmega(n^{1/3})).
    \end{align*}
\end{proof}

\begin{proof}[Proof of Theorem~\ref{thm:achievability}(\ref{itm:achievability-local-conc})]
    Equation \eqref{eq:achievability-local-alg} gives an $r$-local algorithm $\cA$ such that $\cA(\Phi,\varphi)$ $(\eta, 0)$-satisfies $\Phi$ with probability $1-o(1)$.
    If $Y = \Sat_\eta(\cA(\Phi, \varphi),\Phi)$, this implies
    \[
        \E Y = (1-o(1))m.
    \]
    Proposition~\ref{prop:local-alg-conc} proves the result with $\nu(n) = \f{1}{\log n} + o(1) = o(1)$.
\end{proof}

Recall that the proof of Proposition~\ref{prop:local-to-ldp} simulates a local algorithm by its $D$-truncation, which can be implemented by a low degree polynomial.
We will prove Theorem~\ref{thm:achievability}(\ref{itm:achievability-ldp-conc}) by showing a concentration result analogous to Proposition~\ref{prop:local-alg-conc} for $D$-truncations of local algorithms.

To formulate this result, we first extend the definition \eqref{eq:def-sat} of $\Sat_\eta$ to allow $x\in \{\T,\F,\Q\}^n$.
Note that the $y$ in the maximum of \eqref{eq:def-sat} must differ from $x$ in all positions where the entry of $x$ is $\Q$.
We define $\Sat_\eta(x,\Phi) = 0$ if $x$ has more than $\eta n$ entries equal to $\Q$.
(In particular, $\Sat_0(x,\Phi)=0$ if $x$ has any entry equal to $\Q$.)
We similarly extend the definition \eqref{eq:def-rep} of $\Rep$ to allow $x\in \{\T,\F,\Q\}^n$.

\begin{proposition}
    \label{prop:local-alg-truncated-conc}
    Let $\cA_g$ be an $r$-local algorithm with internal randomness $\varphi$, where $g : \Lambda \to \{\T,\F\}$ is an $r$-local function, and let $g_{\le D}$ be the $D$-truncation of $g$.
    Let $\eta \in (0,1)$.
    Let $D$ be large enough that
    \begin{equation}
        \label{eq:truncated-conc-D-assumption}
        \P \lt[
            \Delta(
                \cA_g(\Phi, \varphi),
                \cA_{g_{\le D}}(\Phi,\varphi)
            )
            \ge \eta
        \rt]
        \le
        \exp(-\Omega(n^{1/3})).
    \end{equation}
    (Such $D$ exists by Lemma~\ref{lem:local-to-truncation}.)
    Let $Y = \Sat_\eta(\cA_{g_{\le D}}(\Phi, \varphi),\Phi)$.
    Then,
    \[
        \P \lt[|Y-\E Y| \ge \f{m}{\log n}\rt]
        \le
        \exp(-\tOmega(n^{1/5})).
    \]
\end{proposition}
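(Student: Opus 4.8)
The plan is to run the same bounded-differences argument as in the proof of Proposition~\ref{prop:local-alg-conc}, via the McDiarmid-with-bad-event inequality Lemma~\ref{lem:mcdiarmid-with-bad}, but now on a \emph{stronger} ``locally small'' event, since the $\Q$-entries of the $D$-truncation introduce an extra loss. Reformat $(\Phi,\varphi)$ as a sequence $\zeta=(\psi_1,\dots,\psi_M)$ of $M=km+m+n=O(n)$ independent coordinates (one per literal-plus-edge-decoration, one per vertex decoration), exactly as in that proof, so that $Y=Y(\zeta)$ together with the factor graph $(G,\rho)$, the output $x=\cA_{g_{\le D}}(\Phi,\varphi)\in\{\T,\F,\Q\}^n$, and the set $Q=Q(\zeta)\subseteq[n]$ of coordinates $i$ with $x_i=\Q$ are all $\zeta$-measurable. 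Let $L:=n^{1/5}$ and let $S\subseteq\prod_i\Omega_i$ be the event that (a) $|N_{r+1}(v,G)|\le L$ for every $v\in\Va_G$, and (b) $|Q(\zeta)|\le\eta n-L$. By the argument of Fact~\ref{fac:r-locally-small} with threshold $L$ in place of $n^{1/3}$ (that is, Lemma~\ref{lem:simulation-dfg-nbd-growth-rate} with $\lambda$ of order $n^{1/5}$ and a union bound over $\Va_G$), event (a) fails with probability $\exp(-\Omega(n^{1/5}))$. For (b), note $|Q(\zeta)|=\sum_{v\in\Va_G}\ind{N_r(v,G)\text{ is not a tree or }|N_r(v,G)|>D}$ is the sum of an $r$-local $[-1,1]$-valued function, hence concentrates about its mean by Corollary~\ref{cor:simulation-dfg-concentration}; since $D$ may be taken (enlarging $D$ only shrinks $\Delta(\cA_g(\Phi),\cA_{g_{\le D}}(\Phi))$ pointwise, hence preserves \eqref{eq:truncated-conc-D-assumption}) large enough that this mean is at most $\eta n/2$ — which is exactly the estimate underlying Lemma~\ref{lem:local-to-truncation} — event (b) fails with probability $\exp(-\Omega(n^{1/3}))$. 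Thus $\P(S^c)\le\exp(-\Omega(n^{1/5}))$.

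The core of the proof is the bounded difference: the key claim is that if $\zeta,\zeta'\in S$ differ in one coordinate then $|Y(\zeta)-Y(\zeta')|=O(L^2)$, whereas trivially $|Y(\zeta)-Y(\zeta')|\le m$ for all $\zeta,\zeta'$. To prove the first bound, write $x=\cA_{g_{\le D}}(\zeta)$, $x'=\cA_{g_{\le D}}(\zeta')$ and assume without loss of generality $Y(\zeta)\ge Y(\zeta')$. Since $\cA_{g_{\le D}}$ is $r$-local as a $\{\T,\F,\Q\}^n$-valued function of the decorated factor graph (Fact~\ref{fac:deg-D-sim-good}) and changing one coordinate changes the decorated factor graph only within an $(r{+}1)$-neighborhood of a single vertex, on $S$ the assignments $x,x'$ agree off a set of size $K=O(L)$ and $\Phi(\zeta),\Phi(\zeta')$ differ in at most one clause. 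Let $y\in\{\T,\F\}^n$ attain $Y(\zeta)=\Sat_\eta(x,\Phi(\zeta))$; then $\Delta(x',y)\le\eta+K/n$ and $y$ satisfies at least $Y(\zeta)-1$ clauses of $\Phi(\zeta')$, so $Y(\zeta)\le\Sat_{\eta+K/n}(x',\Phi(\zeta'))+1$. Finally, relaxing the error budget of $\Sat$ from $\eta n$ to $\eta n+K$ raises the optimum by at most $K$ times the maximum variable degree: an optimal flip-set $B$ for budget $\eta n+K$ contains $Q(\zeta')$ (which has size $\le\eta n$ on $S$), hence has at least $|B|-\eta n\le K$ non-$\Q$ flips, and deleting that many of them meets the budget $\eta n$ while changing the satisfaction status of at most $\max_v\deg(v)\le L$ clauses per deletion. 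Therefore $\Sat_{\eta+K/n}(x',\Phi(\zeta'))-\Sat_\eta(x',\Phi(\zeta'))\le KL=O(L^2)$, and combining, $Y(\zeta)-Y(\zeta')\le O(L^2)+1$. (On $S$ one also checks $Y(\zeta')>0$, so the degenerate case of $Y$ jumping to $0$ cannot occur.)

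Plugging into Lemma~\ref{lem:mcdiarmid-with-bad} with $c_i=O(n^{2/5})$, $b_i=m=O(n)$, $M=O(n)$, $\P(S^c)\le\exp(-\Omega(n^{1/5}))$ and $t=m/\log n=\Theta(n/\log n)$: the Gaussian term is $2\exp\!\big(-t^2/(8\sum_ic_i^2)\big)=2\exp\!\big(-\Theta(n^{1/5}/\log^2 n)\big)=\exp(-\tOmega(n^{1/5}))$, and the bad-event term is $2\P(S^c)\sum_ib_i/c_i=\exp(-\Omega(n^{1/5}))\cdot O(n^{8/5})=\exp(-\tOmega(n^{1/5}))$, which is the claim. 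I expect the crux to be the bounded-difference step above: unlike in Proposition~\ref{prop:local-alg-conc}, where a single coordinate change perturbs $Y$ by $O(|N_{r+1}(\cdot)|)$ clauses, here the $\Q$-entries force a comparison between $\Sat_\eta$ and $\Sat_{\eta+K/n}$ whose gap scales like (neighborhood size)$\times$(degree) rather than just (neighborhood size), and this quadratic dependence is precisely what pushes the locally-small threshold down to $L=n^{1/5}$ and hence the failure probability up to $\exp(-\tOmega(n^{1/5}))$. A secondary point requiring care, reflected in condition (b) of $S$, is keeping the count of $\Q$-entries bounded away from the budget $\eta n$, so that no single coordinate flip can drop $Y$ from a positive value to $0$; this is where concentration of the local-function count of $\Q$-entries and the smallness of its mean enter.
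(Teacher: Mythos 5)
Your proposal is correct and follows the same high-level strategy as the paper: bounded differences on a high-probability ``locally small'' event, then McDiarmid-with-bad-event (Lemma~\ref{lem:mcdiarmid-with-bad}). The core bounded-difference computation is organized differently, though, and this is worth noting. The paper introduces the explicit partial assignment $z(\Phi,\varphi)\in\{\T,\F\}^B$ maximizing $\Sat_0(\Rep(\cA_{g_{\le D}},z),\Phi)$, then, after a single coordinate flip, repairs $z$ to a $z'$ with symmetric difference at most $2|U|$ so that $\Rep(\cA_{g_{\le D}}(\zeta'),z')$ is $\Q$-free, and counts the clauses whose status can change as those adjacent either to the perturbed neighborhood or to one of the $2|U|$ variables where $z,z'$ disagree. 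You avoid the auxiliary $z$ entirely: you lift the budget from $\eta n$ to $\eta n + K$, observe that the optimal $y$ for $\Sat_\eta(x,\Phi(\zeta))$ is a valid candidate at budget $\eta+K/n$ for $x'$, and then pay to contract the budget back, costing at most $K$ deletions times $\max_v\deg(v)\le L$ clauses per deletion. Both routes give the same $O(n^{2/5})$ increment. Your version is arguably cleaner since it avoids juggling two partial assignments; the paper's is arguably more explicit about which clauses can be affected.

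One small imprecision: you define $S$ by the condition $|Q(\zeta)|\le\eta n - L$ and justify it by enlarging $D$ so that $\E|Q|\le\eta n/2$, then appeal to concentration of the local-function count. That justification is not quite licensed by the statement, which fixes a $D$ satisfying \eqref{eq:truncated-conc-D-assumption}; you do not get to pick a larger $D$. Fortunately this is moot, because your bounded-difference argument only ever uses $|Q(\zeta')|\le\eta n$ (to ensure $\Sat_\eta(x',\Phi(\zeta'))$ is not degenerately $0$, and to ensure $B^*$ contains $Q(\zeta')$ with at most $\eta n$ of $B^*$'s elements being $\Q$-positions). That weaker bound follows directly from \eqref{eq:truncated-conc-D-assumption} via $\Delta(\cA_g,\cA_{g_{\le D}})=|Q|/n$, which is exactly how the paper phrases condition (i) in its definition of $S$. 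If you replace your condition (b) with $\Delta(\cA_g(\zeta),\cA_{g_{\le D}}(\zeta))\le\eta$ the detour through Corollary~\ref{cor:simulation-dfg-concentration} disappears and the proof is tight.
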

\begin{proof}
    We will again show $Y$ has bounded differences with high probability and use Lemma~\ref{lem:mcdiarmid-with-bad}.
    Define $z=z(\Phi, \varphi) \in \{\T,\F\}^B$ as the partial assignment maximizing
    \[
        \Sat_0(\Rep(\cA_{g_{\le D}}(\Phi, \varphi),z), \Phi).
    \]
    Let $(G,\rho)$ be the factor graph of $\Phi$, with vertex sets $\Va = \{v_1,\ldots,v_n\}$, $\Cl = \{c_1,\ldots,c_m\}$, and $V = \Va \cup \Cl$.

    Define $\zeta$ as in the proof of Proposition~\ref{prop:local-alg-conc}.
    Let $S$ denote the set of $\zeta$ such that:
    \begin{enumerate}[label=(\roman*), ref=\roman*]
        \item \label{itm:truncation-conc-error}
        $\Delta(\cA_g(\zeta),\cA_{g_{\le D}}(\zeta)) \le \eta$ and
        \item \label{itm:truncation-conc-nbd}
        For all $v\in V_G$, $|N_{r+1}(v,G)| \le n^{1/5}$.
    \end{enumerate}
    By the assumption \eqref{eq:truncated-conc-D-assumption} and Lemma~\ref{lem:simulation-dfg-nbd-growth-rate}, $\P(S^c) \le \exp(-\Omega(n^{1/5}))$.
    Note that for $\zeta \in S$, (\ref{itm:truncation-conc-error}) implies that $\Rep(\cA_{g_{\le D}}(\zeta),z(\zeta))$ has no $\Q$ symbols.

    Suppose $\zeta, \zeta' \in S$ differ in only one coordinate.
    We will upper bound $|Y(\zeta) - Y(\zeta')|$.
    Assume the differing coordinate is $\psi_j$ for some $1\le j\le km$.
    (The case $km+1\le j\le km+m+n$ is analogous.)
    Then, the factor graphs $G(\zeta), G(\zeta')$ differ in one edge.
    Let $c \in \Cl$ be the common endpoint of this edge.

    Let $U = N_r(c, G(\zeta)) \cup N_r(c, G(\zeta'))$.
    Because $\zeta,\zeta' \in S$, (\ref{itm:truncation-conc-nbd}) implies $|U| \le 2n^{1/5}$.
    Note that $\cA_{g_{\le D}}(\zeta)$ and $\cA_{g_{\le D}}(\zeta')$ only differ in coordinates $i\in [n]$ where $v_i \in U$.

    Assume without loss of generality that $Y(\zeta) \ge Y(\zeta')$.
    Unlike in the proof of Proposition~\ref{prop:local-alg-conc}, the estimate
    \[
        \Sat_0(\Rep(\cA_{g_{\le D}}(\zeta'), z(\zeta')))
        \ge
        \Sat_0(\Rep(\cA_{g_{\le D}}(\zeta'), z(\zeta)))
    \]
    is not helpful because the right-hand side is $0$ when $\Rep(\cA_{g_{\le D}}(\zeta'), z(\zeta))$ has $\Q$ symbols.
    Instead we note that, because $\cA_{g_{\le D}}(\zeta)$ and $\cA_{g_{\le D}}(\zeta')$ differ in at most $|U|$ positions, there exists $z'$ differing from $z(\zeta)$ in at most $2|U|$ positions ($|U|$ entries in $z(\zeta)$ not in $z'$ and vice versa) such that $\Rep(\cA_{g_{\le D}}(\zeta'), z')$ has no $\Q$ symbols.
    We use the estimate
    \begin{align*}
        Y(\zeta) - Y(\zeta')
        &=
        \Sat_0(\Rep(\cA_{g_{\le D}}(\zeta), z(\zeta)), \Phi(\zeta))
        -
        \Sat_0(\Rep(\cA_{g_{\le D}}(\zeta'), z(\zeta')), \Phi(\zeta')) \\
        &\le
        \Sat_0(\Rep(\cA_{g_{\le D}}(\zeta), z(\zeta)), \Phi(\zeta))
        -
        \Sat_0(\Rep(\cA_{g_{\le D}}(\zeta'), z'), \Phi(\zeta')).
    \end{align*}
    Now, if
    \[
        \ind{\text{$\Rep(\cA_{g_{\le D}}(\zeta), z(\zeta)))$ satisfies clause $\Phi(\zeta)_i$}}
        \neq
        \ind{\text{$\Rep(\cA_{g_{\le D}}(\zeta'), z'))$ satisfies clause $\Phi(\zeta')_i$}},
    \]
    either $c_i \in N_{r+1} (c,G(\zeta)) \cup N_{r+1} (c, G(\zeta'))$ or $c_i$ is adjacent to one of the (at most) $2|U|$ variables where $z(\zeta)$ and $z'$ disagree.
    By definition of $S$, there are at most $2n^{1/5}$ clauses in the former case, and $2|U|n^{1/5}$ clauses in the latter case.
    Thus $|Y(\zeta) - Y(\zeta')| \le O(n^{2/5})$.

    For general $\zeta, \zeta'$, we have $|Y(\zeta) - Y(\zeta')| \le m$.
    By Lemma~\ref{lem:mcdiarmid-with-bad},
    \begin{align*}
        \P\lt[|Y - \E Y| \ge \f{m}{\log n} \rt]
        &\le
        2 \exp\lt(\f{m^2/\log^2 n}{8(km+m+n)O((n^{2/5})^2)}\rt)
        +
        \exp(-\Omega(n^{1/5}))
        O\lt(\f{2(km+m+n)m}{n^{2/5}}\rt) \\
        &\le
        \exp(-\tOmega(n^{1/5})).
    \end{align*}
\end{proof}

\begin{proof}[Proof of Theorem~\ref{thm:achievability}(\ref{itm:achievability-ldp-conc})]
    Set $\eta' > 0$ such that $\eta = k^{-12} + 2\eta'$.
    Let $\cA = \cA_g$ be the $r$-local algorithm achieving \eqref{eq:achievability-local-alg}.
    By Lemma~\ref{lem:local-to-truncation}, there exists $D$ dependent on $\eps, k, \eta$ such that
    \[
        \P \lt[
            \Delta(
                \cA_g(\Phi, \varphi),
                \cA_{g_{\le D}}(\Phi,\varphi)
            )
            \ge \eta'
        \rt]
        \le \exp(-\Omega(n^{1/3})).
    \]
    With \eqref{eq:achievability-local-alg}, this implies
    \[
        \P \lt[
            \text{$\cA_{g_{\le D}}(\Phi,\varphi)$ $(\eta, 0)$-satisfies $\Phi$}
        \rt]
        \ge
        1-o(1).
    \]
    Thus,
    \[
        \E \Sat_\eta(\cA_{g_{\le D}}(\Phi, \varphi),\Phi) = (1-o(1)) m.
    \]
    For $\nu(n) = o(1) + \f{1}{\log n} = o(1)$, Proposition~\ref{prop:local-alg-truncated-conc} implies that
    \[
        \P \lt[
            \Sat_\eta(\cA_{g_{\le D}}(\Phi, \varphi),\Phi) \ge (1-\nu(n)) m
        \rt]
        \ge 1-\exp(-\tOmega(n^{1/5})).
    \]
    Fact~\ref{fac:deg-D-sim-good} implies that the degree-$D$ simulation $f$ of $\cA_g$ satisfies
    \[
        \P \lt[
            \Sat_\eta((\strictround \circ f)(\Phi, \varphi),\Phi) \ge (1-\nu(n)) m
        \rt]
        \ge 1-\exp(-\tOmega(n^{1/5})).
    \]
    In other words,
    \[
        \P \lt[
            \text{$(\strictround \circ f)(\Phi, \varphi)$ $(\eta, \nu(n))$-satisfies $\Phi$}
        \rt]
        \ge 1-\exp(-\tOmega(n^{1/5})).
    \]
    Lemma~\ref{lem:deg-D-sim-2mt} gives $\gamma$ such that
    \[
        \E \norm{f(\Phi, \varphi)}_2^2 \le \gamma n.
    \]
    Thus, $f$ is a degree-$D$ polynomial that $(\exp(-\tOmega(n^{1/5})),\gamma,\eta,\nu(n))$-solves $\Phi_k(n,m)$.
    Finally, Lemma~\ref{lem:impossibility-assume-deterministic} gives a deterministic degree-$D$ polynomial that $(\exp(-\tOmega(n^{1/5})),3\gamma,\eta,\nu(n))$-solves $\Phi_k(n,m)$.
\end{proof}

\section{Discussion}
\label{sec:discussion}

In this paper we proved that degree $D = o(n/\log n)$ polynomials do not solve random $k$-SAT at clause density $(1+o_k(1)) \kappast 2^k \log k / k$ with success probability $1-\exp(-\Omega(D \log n))$.
We proved that local algorithms cannot solve random $k$-SAT even with success probability $\exp(-\tOmega(n^{1/3}))$ at this clause density, and that at clause density $(1-o_k(1)) 2^k \log k / k$ both computation classes succeed with very high probability.
We now discuss related OGP work, future directions, and some open problems that remain.

\paragraph{The constant factor gap.}
The main open problem is to close the constant factor gap remaining between the clause densities of the positive and negative results.
Because the negative free entropy chaining technique stalls at a clause density lower bounded by $1.716 \cdot 2^k \log k / k$ (see Appendix~\ref{appsec:impossibility-kappast-discussion}), further ideas will be necessary to close this gap.

Two innovations in multi-OGPs appeared recently that may be useful for this task.
\cite{GK21} constructs a large interpolation and uses Ramsey theory to show the existence of a constellation of solutions with pairwise overlaps all approximately equal to a prescribed value.
This gives a finer control on the overlap structure constructed from the algorithm outputs than our approach, which only uses that some algorithm output falls into each moat we construct.
\cite{HS21} uses a \emph{branching OGP}, where the multi-OGP's forbidden structure is an arbitrarily complex ultrametric tree of solutions.
The branching OGP allows their argument to navigate the rich replica symmetry breaking structure of spin glasses and may be useful here.

More speculatively, we expect the limiting clause density for efficient algorithms to coincide with the clustering threshold, even in lower order terms.
\cite[Equation 6]{KMRSZ07} gives the more precise expression $\f{2^k}{k}\lt(\log k + \log \log k + 1 + O(\f{\log \log k}{\log k})\rt)$ for the clustering threshold.
On the algorithmic side, we expect this clause density to be attained (in the large-radius limit) by suitable refinements of \verb|Fix|, where the radius of the neighborhood used to make each decision grows from $3$ to a large constant.

\paragraph{Local Markov chains.}
Proposition~\ref{prop:local-memory-to-local}, our simulation result, can be lightly modified to show that local algorithms (and therefore low degree polynomials, by Proposition~\ref{prop:local-to-ldp}) simulate the following class of local Markov chains run for $O(n)$ time.
Start at a uniformly random initialization $x\in \{\T,\F\}^n$.
At each step, choose a uniformly random vertex $v\in V_G$ of the factor graph and, based on its $r$-neighborhood and the restriction of $x$ to this neighborhood, choose (possibly randomly) to toggle the bits of $x$ in this neighborhood.
This model includes the Glauber dynamics, which corresponds to making a $2$-local decision when $v\in \Va_G$ and doing nothing when $v\in \Cl_G$.
It also includes a lazy version of walksat, where instead of maintaining a list of unsatisfied clauses we choose vertices randomly and do nothing on any $v\in \Va_G$ or any $v\in \Cl_G$ whose clause is already satisfied.
We note that at clause density $O_k(2^k / k^2)$, where walksat is known to succeed, it does succeed in $O(n)$ time \cite{CFFKV09}; the aforementioned lazy version of walksat incurs overhead from laziness, but nonetheless finds a $\nu$-satisfying assignment in $O(n)$ time for any $\nu$ independent of $n$.

This model can be implemented as a variant of an $r$-local memory algorithm where vertices are sampled uniformly and i.i.d. instead of by a random permutation.
The number of iterations may increase from $n$ to any constant multiple of $n$.
The simulation result is proved analogously to Proposition~\ref{prop:local-memory-to-local}, by choosing a simulation radius $R$ such that it is unlikely for $r$-hop dependence chains to escape an $R$-neighborhood.
Consequently, our hardness results also apply to this Markov chain run for $O(n)$ time.
Unfortunately it is much harder to reason about time scales longer than $O(n)$, which is the time scale needed to aggregate global information.
At long time scales the best result is still \cite{CHH17}, which shows walksat fails at clause density $O_k(2^k \log^2 k / k)$.
Showing hardness for local Markov chains at a tighter clause density is another open problem.

\paragraph{Limitations of OGP for low degree hardness.}
Current OGP techniques to show low degree hardness only rule out quite large success probabilities.
This limitation arises because these arguments use that low degree polynomials are stable, which occurs with small but nontrivial probability.
In contrast, OGP techniques to show hardness for (for example) local algorithms leverage these algorithms' concentration properties, which occur with high probability; this allows us to show these algorithms cannot succeed with even small probability.
To see this difference, compare Proposition~\ref{prop:impossibility-prob-bounds}(\ref{itm:impossibility-prob-bound-consecutive}) with Proposition~\ref{prop:impossibility-local-prob-bounds}(\ref{itm:impossibility-local-prob-bound-conc}).
It would be nice to lower the success probability that low degree hardness results rule out, perhaps by leveraging a property stronger than stability.

Concentration style OGPs also allow the construction of more complex forbidden structures such as the branching OGP of \cite{HS21}, which appears difficult to replicate by stability style OGPs.
Allowing the use of these structures is another potential benefit of leveraging a property stronger than stability.

It would also be interesting to prove a low degree hardness result that does not exclude the interval $(-1, 1)$ in the rounding scheme (and thus, does not reference the normalization parameter $\gamma$).
Such a hardness result would be based on the inherent stability of polynomial threshold functions, rather than the stability imposed by a variance condition in $\gamma$.
Note that a generalization of the Gotsman-Linial conjecture \cite{GL92} to non-binary product spaces, plugged in modularly in place of Proposition~\ref{prop:impossibility-total-influence}, would yield a version of Theorem~\ref{thm:impossibility} in this setting at $\delta = \exp(-C D \sqrt{n} \log n)$.
One could hope to devise a different OGP argument that improves this probability.

\paragraph{Other random CSPs.}
Random $k$-SAT is one example of a random constraint satisfaction problem.
\cite{AC08}, the seminal paper linking clustering to algorithmic hardness, predicted that this connection holds in substantial generality for random CSPs.
We believe that recent developments in multi-OGP methodology make it possible to show similar hardness results in other CSPs.
Showing a general hardness result of this type for random CSPs would be a significant advancement of the field.

\paragraph{When clustering does not imply hardness.}
Recent work on the symmetric Ising perceptron \cite{ALS21, PX21} showed that clustering (in the sense that is linked to hardness in random CSPs) does not always imply hardness.
In particular, at any positive constraint density in the symmetric Ising perceptron, all but an exponentially small fraction of solutions are isolated, forming clusters \emph{of size one}, even though efficient algorithms that find a solution exist at some positive constraint densities \cite{BS19}.
Forthcoming work \cite{GK21b} shows that at a constraint density only slightly above where efficient algorithms exist, a multi-OGP rules out stable algorithms.

Thus, while the rigorous connection between multi-OGP and the failure of stable algorithms still holds, the heuristic that clustering implies hardness breaks down.
It would be interesting to clarify this heuristic and identify a refined notion of clustering that does match the limits of algorithms for this problem.

\paragraph{The refutation problem.}
Closely related to the problem of finding a satisfying assignment is the problem of certifying that there is no satisfying assignment.
A well-studied problem is to identify the \emph{refutation threshold}, the clause density above the satisfiability threshold where it is possible to certify the lack of a satisfying assignment with high probability.

It is known \cite{Sch08} that the SOS framework cannot efficiently refute satisfiability of random $k$-SAT with $m = O(n^{k/2 - \eps})$ clauses.
For the DPLL-based Resolution framework, the same fact is known for $k=3$ \cite{BW99}.
On the positive side, \cite{AOW15} showed that refutation is possible with $m=\tOmega(n^{k/2})$ clauses; see \cite[Table 1]{AOW15} for a history.
Thus there is strong evidence that the refutation threshold is at the scale $m=\tTheta(n^{k/2})$.
An open problem is to furnish rigorous evidence for this threshold for more general models of computation.
It would be interesting to identify a signature for refutation hardness at the level of the problem's energy landscape.
It would also be interesting to show hardness for other certification tasks, such as certifying an upper bound on the number of satisfying assignments, see e.g. \cite{HMX21}.

\paragraph{Planted problems.}
Another variant of OGP has been used to study the computational hardness of estimation problems and problems with planted structure \cite{GZ17, GZ19, GJS19, BWZ20}, see also \cite{CM19}.
This notion of OGP tracks the overlap between a single solution and the planted truth, instead of between two or more solutions.
OGP occurs if the best loss attained by a solution at some overlap with the planted truth, as a function of the overlap, is nonmonotone with one minimum at high overlap and another at low overlap.
If this OGP occurs, gradient descent or any local Markov chain with worst-case initialization will be unable to efficiently find the planted truth.
An open problem is to extend these hardness results for planted problems from local Markov chains to arbitrary stable algorithms.

\bibliographystyle{alpha}
\bibliography{./bib}

\appendix

\section{On Improving the Constant $\kappast$}
\label{appsec:impossibility-kappast-discussion}

In this section, we discuss how the constant $\kappast$ in Theorem~\ref{thm:impossibility} can be improved.
We define a constant $\kappastst$ as the solution to a maximin problem.
We will show that $\kappastst \le \kappast$ and sketch how our proof of Theorem~\ref{thm:impossibility} can be lightly modified to improve the constant $\kappast$ to $\kappastst$.
We heuristically argue that $\kappastst < \kappast$, so that this modification is an improvement.
We also prove that $\kappastst$ is bounded below by a constant larger than $1$, approximately $1.716$.
Further ideas will be needed to prove Theorem~\ref{thm:impossibility} for any $\kappa$ smaller than $\kappastst$.
Because $\kappastst$ remains bounded away from $1$, and we believe $1$ is the optimal constant in Theorem~\ref{thm:impossibility}, we did not attempt to rigorously evaluate $\kappastst$ or optimize $\kappast$.

\subsection{A Maximin Problem}

Let $(\Xi, P_\xi)$ be an arbitrary probability space and let $\cQ$ be the space of functions $q : \Xi \to [0,1]$.
These are abstractions of quantities in the proof of Proposition~\ref{prop:impossibility-ogp-dart-game}: $\xi \sim (\Xi, P_\xi)$ is an abstraction of the random variables $\yp{\le \ell-1}_{I}$ where $I\sim \unif([n])$, and $q(\xi)$ is an abstraction of $\phi_\ell(\T | \yp{\le \ell-1}_{I})$.
We equip $\cQ$ with the metric $d(q, q') = \E_\xi |q(\xi) - q'(\xi)|$.

For $q \in \cQ$, let $D(q)$ be the law of $u$ sampled by the following experiment.
First, sample $\xi \sim (\Xi, P_\xi)$.
Then, set $u = -\log q(\xi)$ with probability $q(\xi)$, and otherwise set $u = -\log(1 - q(\xi))$.
Clearly $\E_{u\sim D(q)} u = \E_{\xi} H(q(\xi))$.
Define
\[
    F(q) =
    \f{1}{\log k} \cdot
    \f{\log 2 + k \E_{\xi} H(q(\xi))}{
        \P_{(u_1,\ldots,u_k)\sim D(q)^{\otimes k}}
        \lt[\sum_{i=1}^k u_i \ge \log k + \log \log k\rt]
    }\,.
\]
Let $\cP$ be the set of functions $p : \Xi \times [0,1] \to [0,1]$, such that $p(\xi, 0) \in \{0,1\}$ and $p(\xi, 1) = \f12$ for all $\xi \in \Xi$, and $p(\cdot, s)$ (which, for fixed $s\in [0,1]$, is an element of $\cQ$) is continuous in $s$ with respect to the topology of $\cQ$.
Consider the maximin problem
\begin{equation}
    \label{eq:def-kappastst}
    \kappastst
    =
    \limsup_{k\to\infty}
    \max_{p\in \cP}
    \min_{s\in [0,1]}
    F(p(\cdot, s)).
\end{equation}
This has the following geometric interpretation: $\kappastst$ is the smallest constant such that the sub-level set $\{q\in \cQ : F(q) \le \kappastst\}$ topologically disconnects the functions $q \equiv 0$ and $q \equiv \f12$ in $\cQ$.
(Note that $\cQ$ is symmetric under replacing $q(\xi)$ with $1 - q(\xi)$ for any subset of the $\xi\in \Xi$, and $F(q) = F(q')$ for any $q,q'$ related by such a symmetry.
Thus, equivalently $\kappastst$ is the smallest constant such that this sub-level set disconnects the function $q\equiv \f12$ from any $q\in \cQ$ with $q(\xi) \in \{0,1\}$ for all $\xi\in \Xi$.)

First, we show that $\kappast$ is an upper bound on the solution to this maximin problem.

\begin{proposition}
    We have that $\kappast \ge \kappastst$.
\end{proposition}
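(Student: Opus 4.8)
The plan is to exhibit a specific choice of $p \in \cP$ that certifies $\min_{s\in[0,1]} F(p(\cdot,s)) \ge \kappast - o_k(1)$, which after taking $\limsup_{k\to\infty}$ yields $\kappastst \ge$ -- wait, that is the wrong direction. We want an \emph{upper} bound $\kappast \ge \kappastst$, so instead I would bound $F(q)$ from below on the relevant disconnecting set and show that \emph{every} admissible $p$ must pass through a $q$ with $F(q) \ge \kappast - o_k(1)$. Concretely: fix any $p\in\cP$. By continuity of $s\mapsto p(\cdot,s)$ in the metric $d$, the map $s \mapsto \E_\xi H(p(\xi,s))$ is continuous; at $s=0$ it equals $0$ (since $p(\xi,0)\in\{0,1\}$ forces $H(p(\xi,0))=0$) and at $s=1$ it equals $\E_\xi H(1/2) = \log 2$. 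Hence by the intermediate value theorem there is some $s^\dagger \in [0,1]$ with $\E_\xi H(p(\xi,s^\dagger)) = \betast \frac{\log k}{k}$, i.e., the ``entropy contribution'' equals exactly the minimizer $\betast$ of $\iota$. The claim will be that at this $s^\dagger$ we have $F(p(\cdot,s^\dagger)) \ge \kappast - o_k(1)$, which bounds $\min_s F$ from above by $F$ at this point -- no again wrong: $\min_s F \le F(p(\cdot,s^\dagger))$, so this does \emph{not} immediately give an upper bound on $\min_s F$.

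Let me reconsider: to show $\kappast \ge \kappastst = \limsup_k \max_p \min_s F$, it suffices to show that for \emph{every} $p\in\cP$ and every $k$, $\min_{s\in[0,1]} F(p(\cdot,s)) \le \kappast + o_k(1)$. So I need to produce, for each $p$, \emph{some} $s$ at which $F$ is small. The natural candidate is again the $s^\dagger$ obtained by the intermediate value theorem where $\E_\xi H(p(\xi,s^\dagger)) = \beta \frac{\log k}{k}$ for a value $\beta = \beta(s^\dagger)$ that we get to vary over $[0,\infty)$-ish as $s$ ranges over $[0,1]$. The key estimate, mirroring the proof of Proposition~\ref{prop:impossibility-ogp-dart-game}, is that for a function $q$ with $\E_\xi H(q(\xi)) = \beta\frac{\log k}{k}$, the probability $\P_{(u_1,\dots,u_k)\sim D(q)^{\otimes k}}[\sum u_i \ge \log k + \log\log k]$ is at least $1 - \beta e^{-(\beta-1)} - o_k(1)$ (this is exactly the Chernoff computation already carried out in that proof, applied with $\phi_\ell$ replaced by $q$). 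Consequently
\[
    F(q) \le \frac{1}{\log k}\cdot \frac{\log 2 + \beta \log k}{1 - \beta e^{-(\beta-1)} - o_k(1)} = \frac{\beta + o_k(1)}{1 - \beta e^{-(\beta-1)}} + o_k(1) = \iota(\beta) + o_k(1),
\]
valid for $\beta$ in any fixed compact subinterval of $(1,\infty)$ on which $1 - \beta e^{-(\beta-1)}$ is bounded away from $0$. Choosing $s^\dagger$ so that $\beta(s^\dagger) = \betast$ (possible by the intermediate value argument, since $\beta$ ranges continuously from $0$ at $s=0$ up through values $\ge \betast$, using that at $s=1$ the entropy is $\log 2 \gg \betast\frac{\log k}{k}$) gives $F(p(\cdot,s^\dagger)) \le \iota(\betast) + o_k(1) = \kappast + o_k(1)$, hence $\min_s F(p(\cdot,s)) \le \kappast + o_k(1)$ for every $p$, and taking $\max_p$ then $\limsup_k$ yields $\kappastst \le \kappast$.

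The steps, in order: (1) observe that for fixed $q$, the quantity $D(q)$ and the success probability in $F$ match exactly the setup in the proof of Proposition~\ref{prop:impossibility-ogp-dart-game}, so the lower bound $1 - \beta e^{-(\beta-1)} - o_k(1)$ on that probability transfers verbatim (with $\E_\xi H(q(\xi)) = \beta\frac{\log k}{k}$ playing the role of $H(\pi(\yp{\ell}|\cdots)) = \beta_\ell\frac{\log k}{k}$, and $\min(u_r,\log k)$ truncation handled identically); (2) deduce $F(q) \le \iota(\beta) + o_k(1)$ for $\beta$ in a fixed compact subinterval of $(1,\infty)$; (3) use continuity of $s\mapsto \E_\xi H(p(\xi,s))$ together with the boundary values $0$ and $\log 2$ and the intermediate value theorem to locate $s^\dagger$ with $\E_\xi H(p(\xi,s^\dagger)) = \betast\frac{\log k}{k}$; (4) conclude $\min_s F(p(\cdot,s)) \le F(p(\cdot,s^\dagger)) \le \kappast + o_k(1)$; (5) take $\max$ over $p\in\cP$ and $\limsup$ over $k$.

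The main obstacle I anticipate is step (1)--(2): making the $o_k(1)$ error in the Chernoff bound uniform over all $q \in \cQ$ with the prescribed entropy, rather than just for a fixed sequence of $q$'s. The proof of Proposition~\ref{prop:impossibility-ogp-dart-game} bounds $\E[u_r - v_r] \le \frac{1}{ek}$ using only that $x\mapsto x\log\frac1{kx}$ has max $\frac{1}{ek}$ on $[0,1/k]$ and that at most one of the two branches has $\phi_\ell(b|\cdot) \le 1/k$ for $k \ge 3$ -- both facts hold for any $q$, so the bound $\E[v_r] \ge \beta\frac{\log k}{k} - \frac{1}{ek}$ is indeed uniform, and the subsequent Chernoff step depends on $q$ only through $\beta$. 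A secondary subtlety is confirming that $\beta(s) = \frac{k}{\log k}\E_\xi H(p(\xi,s))$ actually attains the value $\betast$: since $\betast$ is a fixed constant while $\beta$ ranges continuously from $0$ (at $s=0$) to $\frac{k\log 2}{\log k} \to \infty$ (at $s=1$), for all large $k$ the value $\betast$ is in this range, so the intermediate value theorem applies; one should also note that near such $s^\dagger$, $\beta(s^\dagger)$ stays in a fixed compact subinterval of $(1,\infty)$ (it equals $\betast \approx 3.513$), so step (2) is applicable there. No genuinely new idea is required beyond repackaging the existing energy-increment analysis.
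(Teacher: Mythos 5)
Your proposal is correct and follows essentially the same route as the paper's proof: use continuity of $s\mapsto \E_\xi H(p(\xi,s))$ (from $0$ at $s=0$ to $\log 2$ at $s=1$) to locate $s^\dagger$ with $\E_\xi H(p(\xi,s^\dagger))=\betast\f{\log k}{k}$, then invoke the Chernoff estimate from the proof of Proposition~\ref{prop:impossibility-ogp-dart-game} to bound $F(p(\cdot,s^\dagger))\le \iota(\betast)+o_k(1)=\kappast+o_k(1)$. Your added observations — that the $o_k(1)$ in the Chernoff step is uniform over $q$ because the paper's bounds on $\E[u_r-v_r]$ depend on $q$ only through $\beta$, and that for large $k$ the value $\betast$ lies in the range of $\beta(s)$ — are exactly the justifications implicit in the paper's single-line appeal to continuity, so there is no gap.
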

\begin{proof}
    Fix some $p\in \cP$.
    By continuity of $p(\cdot, s)$ in $s$, we can set $s\in [0,1]$ such that $\E_{\xi} H(p(\xi, s)) = \betast \f{\log k}{k}$.
    As in the proof of Proposition~\ref{prop:impossibility-ogp-dart-game}, we apply a Chernoff bound on the random variables $\f{\min(u_i, \log k)}{\log k}$ to show that, for any $\beta > 1$ and $q : \Omega \to [0,1]$ with $\E_\xi H(q(\xi)) = \beta \f{\log k}{k}$, we have
    \begin{equation}
        \label{eq:impossibility-key-chernoff-bound}
        \P_{(u_1,\ldots,u_k)\sim D(q)^{\otimes k}}
        \lt[\sum_{i=1}^k u_i < \log k + \log \log k\rt]
        \le
        \beta e^{-(\beta-1)} + o_k(1).
    \end{equation}
    In particular, for the $s$ we chose,
    \[
        F(p(\cdot, s))
        \le
        \f{\f{2}{\log k} + \betast}{1 - \betast e^{-(\betast - 1)} - o_k(1)}
        \to \iota(\betast) = \kappast.
    \]
\end{proof}

Next, we sketch how the proof of Theorem~\ref{thm:impossibility} can be improved to replace $\kappast$ with $\kappastst$.
The proof of Theorem~\ref{thm:impossibility-local-algs} can be modified similarly.

\begin{proposition}
    Theorem~\ref{thm:impossibility} holds for all $\kappa > \kappastst$.
\end{proposition}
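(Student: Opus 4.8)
The plan is to revisit the proof of Theorem~\ref{thm:impossibility} and replace the single choice of moat width (governed by $\betamin, \betamax, \bz$, and the function $\iota$) with the more flexible family of moats encoded by an optimal $p \in \cP$. Concretely, fix $\kappa > \kappastst$. By the definition \eqref{eq:def-kappastst} of $\kappastst$ as a $\limsup$, for all sufficiently large $k$ there exists $p = p_k \in \cP$ with $\min_{s\in[0,1]} F(p(\cdot,s)) < \kappa$. The function $p(\cdot, s)$ should be thought of as a one-parameter family of candidate \emph{conditional} overlap profiles of $\yp{\ell}$ given $\yp{0},\dots,\yp{\ell-1}$: here the abstract space $(\Xi, P_\xi)$ plays the role of the conditioning data $\yp{\le \ell-1}_i$ for $i\sim\unif([n])$, and $p(\xi,s)$ plays the role of $\phi_\ell(\T\mid \yp{\le\ell-1}_i)$ at ``interpolation time'' $s$. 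The endpoints $p(\xi,0)\in\{0,1\}$ (so $\yp{\ell}$ agrees deterministically with one of the existing classes) and $p(\xi,1)=\tfrac12$ (so $\yp{\ell}$ is a fresh uniform assignment) correspond to $\yp{\ell}$ being a duplicate of a prior assignment and to $\yp{\ell}$ being independent of the prior ones, respectively.

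First I would redefine the moat: instead of the event $H(\pi(\yp{\ell}\mid \yp{0},\dots,\yp{\ell-1})) \in [\betamin \tfrac{\log k}{k}, \betamax \tfrac{\log k}{k}]$, the condition $P_\ell$ becomes the event that the conditional overlap profile of $\yp{\ell}$ is $\eps$-close (in the metric $d$ on $\cQ$, pulled back to overlap profiles) to $p(\cdot, s^*)$, where $s^*$ realizes the minimum in \eqref{eq:def-kappastst}. The redefined events $\Sindep, \Sogp$ and the associated quantities carry over with this replacement. The structural lemma (the analogue of Proposition~\ref{prop:impossibility-events-relation}) still goes through: because $p(\cdot,s)$ is continuous in $s$ and interpolates between a ``duplicate'' endpoint and an ``independent'' endpoint, and because a stable evolution of $\xp{t}$ traces out a path in $\cQ$ starting near $q\equiv 0$ (or $q\equiv 1$) — by Fact~\ref{fac:overlap-profile-properties}(\ref{itm:overlap-profile-property-duplication}) — and ending near $q\equiv\tfrac12$ — by $\Sindep$ — the topological-disconnection property of the sub-level set $\{F \le \kappa\}$ forces the path to pass through the moat. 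This is where the geometric reformulation of $\kappastst$ does its work: disconnection is exactly the property that makes the moat unavoidable, and it replaces the one-dimensional intermediate-value argument of \eqref{eq:h-small-move}. One must check that Lemma~\ref{lem:impossibility-hamming-to-entropy} upgrades to a statement that small Hamming moves induce small $d$-moves in conditional overlap profile space; this is a routine strengthening since $d$ is an $L^1$-type distance and a Hamming change of $\Delta$ changes each $\pi_{S,T}$ by at most $\Delta$.

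Next I would redo the first-moment / free-entropy estimates. Lemma~\ref{lem:impossibility-sogp-exp-rate} is unchanged in form: $\tfrac1n\log\P(\Sogp^c) \le \log 2 + \max_{\pi\in P}[H(\pi) - \kappa\tfrac{\log k}{k}\,\E_I|\{\yp{\ell}[I]\}|] + o_k(1)+o(1)$, with $P$ now the set of overlap profiles whose every conditional slice is $\eps$-near $p(\cdot,s^*)$. For the energy lower bound, the decoupling argument of Propositions~\ref{prop:impossibility-ogp-decouple-joint-event-aux}–\ref{prop:impossibility-ogp-decouple-joint-event} is agnostic to the shape of the conditional profile, so it applies verbatim. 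The probabilistic reinterpretation (Proposition~\ref{prop:impossibility-ogp-dart-game}) generalizes: $\sum_\sigma q_\ell(\sigma)$ equals the success probability of the experiment with i.i.d.\ $u_i \sim D(p(\cdot,s^*))$, which is exactly the denominator of $F(p(\cdot,s^*))$, and the entropy contribution of $\yp{\ell}$ is $k\,\E_\xi H(p(\xi,s^*))/\log k$ times $\tfrac{\log k}{k}$, matching the numerator of $F$ minus the $\log 2$. Combining: the net contribution of each $\yp{\ell}$ to the free entropy \eqref{eq:ogp-ksat-free-entropy} is $\le \tfrac{\log k}{k}\bigl(\text{entropy} - \kappa\cdot\text{energy}\bigr) \le \tfrac{1}{k}\bigl(\log 2 + k\,\E_\xi H(p(\xi,s^*)) - \kappa\log k \cdot (\text{denominator of }F)\bigr) < 0$ since $F(p(\cdot,s^*)) < \kappa$; summing over the $k$ assignments $\yp{1},\dots,\yp{k}$ overwhelms the initial $\log 2$. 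Then $\P(\Sogp^c) \le \exp(-\Omega(n))$ as before, and the stability bound Proposition~\ref{prop:impossibility-prob-bounds}(\ref{itm:impossibility-prob-bound-consecutive}) is untouched.

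\textbf{Main obstacle.} The delicate point — and the reason the excerpt only \emph{sketches} this — is translating the \emph{topological} disconnection property of the continuum object $\cQ$ into the \emph{combinatorial} statement that every stably-evolving discrete sequence $\xp{t_{\ell-1}}, \dots$ of algorithm outputs must land, at some step, inside the $\eps$-moat around $p(\cdot,s^*)$. In the original proof this was the one-dimensional intermediate value theorem applied to $h(t) = H(\pi(\xp{t}\mid\cdot))$; now the ``coordinate'' is an element of an infinite-dimensional function space, and one must argue that (i) the discretized path at resolution $\Theta(n)$ per step is fine enough that consecutive overlap profiles are within the $d$-granularity of the moat, (ii) the endpoints genuinely lie on opposite sides of the disconnecting set (the ``duplicate'' side needs $\Sconsec$ plus Fact~\ref{fac:overlap-profile-properties}(\ref{itm:overlap-profile-property-duplication}); the ``independent'' side needs $\Sindep$ with the generalized high-entropy condition), and (iii) the $\limsup$ in the definition of $\kappastst$ supplies, uniformly for large $k$, a genuinely disconnecting sub-level set with a definite ``width'' $\eps = \eps(\kappa) > 0$ so that the moat has positive measure and the first-moment computation over $\pi\in P$ makes sense. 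Handling (iii) rigorously would require committing to a precise sense of ``disconnects'' (e.g.\ via a separating continuous functional or a mod-2 linking argument on paths), which is exactly the kind of analytic bookkeeping the authors deliberately avoid since $\kappastst$ is still bounded away from $1$ and hence not the final answer.
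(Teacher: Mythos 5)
There is a genuine gap in your moat definition that stems from misreading the quantifier order in the maximin. The inequality $\kappa > \kappastst$ and the definition $\kappastst = \limsup_k \max_{p\in\cP}\min_s F(p(\cdot,s))$ together give, for all sufficiently large $k$ and some $\eps=\eps(\kappa)>0$,
\[
\text{for \emph{every} }p\in\cP:\quad \min_{s\in[0,1]}F\bigl(p(\cdot,s)\bigr)\;\le\;\kappa-\eps.
\]
This is the statement that the sub-level set $\{q\in\cQ : F(q)\le\kappa-\eps\}$ topologically disconnects the deterministic endpoints from $q\equiv\tfrac12$, i.e.\ every continuous path crosses it. You instead invoke the weaker (and here useless) ``there exists $p\in\cP$ with $\min_s F(p(\cdot,s))<\kappa$,'' extract one such $p$ and its minimizer $s^*$, and define the moat as an $\eps$-ball around the single point $p(\cdot,s^*)$. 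That is the wrong object: the trajectory $q_t$ traced out by the algorithm has no reason to pass near your chosen path $p$, let alone within $\eps$ of its minimizer, so the $\eps$-ball is avoidable and the construction of the forbidden structure from $\Svalid\cap\Sconsec\cap\Sindep$ breaks down. Your own structural-lemma paragraph betrays the tension: you argue via disconnection of the sub-level set that ``the path passes through the moat,'' but the disconnection argument only forces the path into the sub-level set, not into your ball. The paper's moat is precisely the sub-level set: condition (OGP-B) in its redefined $\Sogp$ requires that the conditional $q$ induced by $\pi(\yp{\ell}\mid\yp{0},\ldots,\yp{\ell-1})$ satisfy $F(q)\le\kappa-\eps$, and then the first-moment bound is negative uniformly over all overlap profiles whose conditional $q$ lies in this set.

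The rest of your outline is sound and matches the paper: the decoupling estimates are agnostic to the shape of the conditional profile; $\sum_\sigma q_\ell(\sigma)$ is exactly the denominator of $F$; the entropy contribution of $\yp{\ell}$ is $\E_\xi H(q(\xi))$; $F(q)\le\kappa-\eps$ makes each $\yp{\ell}$'s net free-entropy contribution negative by a definite margin, overwhelming the initial $\log 2$; and Lemma~\ref{lem:impossibility-hamming-to-entropy} does upgrade to $d$-Lipschitz control as you say. One small point you omit: the paper defines $\xi$ via the XORs $(\yp{1}_i\oplus\yp{0}_i,\ldots,\yp{\ell-1}_i\oplus\yp{0}_i)$, since in the proof one cannot assume $\yp{0}=\T^n$. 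To repair your argument, simply replace the $\eps$-ball by the condition $F(q_\ell)\le\kappa-\eps$; the continuity/intermediate-value discussion in your ``Main obstacle'' paragraph then addresses exactly the right question (how the discretized, $\Sconsec$-controlled evolution of $q_t$ is forced into the sub-level set), and is the same technical point the paper's sketch flags but does not fully spell out.
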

\begin{proof}[Proof Sketch]
    Identically to the original proof of Theorem~\ref{thm:impossibility}, we define the interpolation path $\phip{0}, \ldots, \phip{T}$ and set $\xp{t} = \cA(\phip{t})$ for $0\le t\le T$.
    We define $\Svalid$ as before.
    $\Sconsec$ and $\Sindep$ are analogous to before: $\Sconsec$ is the event that consecutive $\xp{t}$ are close in Hamming distance, and $\Sindep$ is the event that if $0\le t_0\le t_1 \le \cdots \le t_k \le T$ and $t_k \ge t_{k-1} + km$, then any $\nu$-satisfying assignment to $\phip{t_k}$ has large conditional overlap entropy relative to $\xp{t_0},\ldots,\xp{t_{k-1}}$.
    We change the parameters quantifying ``close" and ``large conditional overlap" slightly so that the below proof succeeds; we omit the details.
    Lower bounds on $\P(\Svalid \cap \Sconsec)$ and $\P(\Sindep)$ can be proved analogously to Proposition~\ref{prop:impossibility-prob-bounds}(\ref{itm:impossibility-prob-bound-consecutive},\ref{itm:impossibility-prob-bound-independent}).

    The interesting change will be in the definition of $\Sogp$.
    For $1\le \ell \le k$, the conditional overlap profile $\pi(\yp{\ell} | \yp{0}, \ldots, \yp{\ell-1})$ determines the conditional probabilities\footnote{this is a rewriting of the argument in Subsection~\ref{subsec:impossibility-energy-lb}. The {\XOR}s arise because we no longer assume $\yp{0} = \T^n$.}
    \[
        \phi_\ell (b | \xi)
        =
        \P_{i\sim \unif([n])}
        \lt[
            \yp{\ell}_i \oplus \yp{0}_i = b |
            (\yp{1}_i \oplus \yp{0}_i, \ldots, \yp{\ell-1}_i \oplus \yp{0}_i) = \xi
        \rt],
    \]
    where $\oplus$ denotes \XOR.
    Let $(\Xi, P_\xi)$ be the sample space of $(\yp{1}_i \oplus \yp{0}_i, \ldots, \yp{\ell-1}_i \oplus \yp{0}_i)$, and let $q(\xi) = \phi_\ell (\T | \xi)$.
    Let $\eps>0$ satisfy $\kappa - \eps > \kappastst$.
    $\Sogp$ is now the event that there does not exist $0\le t_0 \le t_1 \le \cdots \le t_k \le T$ and assignments $\yp{0},\ldots,\yp{k} \in \{\T,\F\}^n$ such that
    \begin{enumerate}[label=(OGP-\Alph*), ref=OGP-\Alph*, leftmargin=50pt]
        \item \label{itm:def-sogp-new-satisfy} For all $0\le \ell\le k$, $\yp{\ell}$ $\nu$-satisfies $\phip{t_\ell}$;
        \item \label{itm:def-sogp-new-entropy} For all $1\le \ell\le k$, the conditional overlap profile $H\lt(\pi(\yp{\ell} | \yp{0}, \ldots,\yp{\ell-1}) \rt)$ satisfies that $F(q)\le \kappa - \eps$ for the $q$ defined above.
    \end{enumerate}
    The key point is that our proof that $\P(\Sogp^c) \le \exp(-\Omega(n))$ requires precisely these properties.
    Using the argument in Section~\ref{sec:impossibility-multi-ogp}, we readily prove $\P(\Sogp^c) \le \exp(-\Omega(n))$.

    By a union bound, this gives a positive lower bound on $\P(\Svalid \cap \Sconsec \cap \Sindep \cap \Sogp)$, so $\Svalid \cap \Sconsec \cap \Sindep \cap \Sogp \neq \emptyset$.
    We will show (analogously to Proposition~\ref{prop:impossibility-events-relation}) that $\Svalid \cap \Sconsec \cap \Sindep \cap \Sogp = \emptyset$, yielding a contradiction.
    When $\Svalid, \Sconsec, \Sindep$ simultaneously hold, we will construct an example of the structure forbidden by $\Sogp$.

    We will set $\yp{\ell} = \xp{t_\ell}$ for all $0\le \ell \le k$, for a sequence $0\le t_0 \le t_1 \le \cdots \le t_k \le T$ we now construct.
    We set $t_0 = 0$.
    For $1\le \ell \le k$ we set $t_\ell$ to be the smallest $t>t_{\ell-1}$ such that (\ref{itm:def-sogp-new-entropy}) holds for $\yp{\ell} = \xp{t}$.
    We now sketch why $t_\ell$ exists and satisfies $t_\ell \le t_{\ell-1} + km$.
    Note that this ensures all the $t_\ell$ are well defined because $T = k^2m$.

    By $\Sconsec$, $\xp{t}$ evolves by small steps.
    Thus, for fixed $\yp{0}, \ldots, \yp{\ell-1}$ and varying $\yp{\ell} = \xp{t}$ (varying as we increment $t$), the $q$ defined above moves by small steps in $\cQ$.
    Let $q_t$ denote this $q$ at time $t$.
    By Fact~\ref{fac:overlap-profile-properties}(\ref{itm:overlap-profile-property-duplication}) $q_{t_{\ell-1}}(\xi) \in \{0,1\}$ for all $\xi \in \Xi$.
    $\Sindep$ ensures that $q_{t_{\ell-1}+km}$ is far from $q_{t_{\ell-1}}$ in $\cQ$.
    The evolution of $q_t$ from $t=t_{\ell-1}$ to $t=t_{\ell-1}+km$ can be modeled essentially by a continuous path, and the definition of the maximin $\kappastst$ implies that for some $t$ in this range, $F(q_t) \approx \kappastst \le \kappa-\eps$.
    (Although $q_t$ does not necessarily evolve to the all-$\f12$ function, $\Sindep$ implies that it ends far from where it started, and we can show that over this evolution we already encounter $q_t$ such that $F(q_t)$ is near the maximin value.)
    This shows the existence of $t_\ell$ with $t_\ell \le t_{\ell-1} + km$.

    Since, by $\Svalid$, each $\yp{\ell}$ $\nu$-satisfies $\phip{t_\ell}$, we have constructed an example of the structure forbidden by $\Sogp$.
    This gives the desired contradiction.
\end{proof}

\subsection{Suboptimality of $\kappast$}

We believe that $\kappast > \kappastst$ due to the following heuristic argument.
The Chernoff bound \eqref{eq:impossibility-key-chernoff-bound} is tighest when most of the mass of the random variables $\f{\min(u_i, \log k)}{\log k}$ is near $0$ or $1$.
When this occurs, most of the the mass of $u_i$ is near $0$ or $\log k$.
Then, the event that $\sum_{i=1}^k u_i \ge \log k + \log \log k$ is the event that one or two of the $u_i$ attains a value near $\log k$.
This is a tail probability in a non-asymptotic regime -- approximately, the probability that a Poisson random variable is larger than $1$ or $2$ -- so the Chernoff bound will not get the correct probability.

\subsection{Proof that $\kappastst > 1$}

In this subsection, we will show that $\kappastst$ is bounded below by a constant larger than $1$, approximately $1.716$.
Thus our methods cannot improve the constant $\kappast$ in Theorems~\ref{thm:impossibility} and \ref{thm:impossibility-local-algs} to $1$.

We will first show a weaker lower bound on $\kappastst$.
Define $\psi_1: (0,+\infty) \to \bR$ by
\[
    \psi_1(\lambda) = \f{\lambda/2}{1 - (1 + \lambda)e^{-\lambda}},
\]
and let $\psist_1 = \min_{\lambda > 0} \psi_1(\lambda) \approx 1.675$.
\begin{proposition}
    \label{prop:impossibility-kappastst-lb}
    We have $\kappastst \ge \psist_1$.
\end{proposition}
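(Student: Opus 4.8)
The plan is to use the maximin characterization of $\kappastst$: since $\kappastst$ is (the supremum over probability spaces $(\Xi,P_\xi)$ of) $\limsup_{k\to\infty}\max_{p\in\cP}\min_{s\in[0,1]}F(p(\cdot,s))$, it suffices, for each large $k$, to exhibit a single probability space $(\Xi,P_\xi)$ and a single path $p\in\cP$ with $\min_s F(p(\cdot,s)) \ge \psist_1 - o_k(1)$. I would take $(\Xi,P_\xi) = ([0,1],\mathrm{Leb})$, which is rich enough to carry subsets of arbitrary measure --- something the construction needs and which a two-point space cannot provide.

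The heart is to design a path whose ``bottleneck'' configurations $q$ have a very particular shape: $q$ equals a tiny value $c_k$ on a subset of $\Xi$ of measure $w$ and $0$ elsewhere, with $c_k$ tuned so that \emph{exactly two} jumps of size $-\log c_k$ are needed to overshoot the threshold $\log k+\log\log k$. Concretely, set $c_k = k^{-1/2}(\log k)^{-1/2}$, so that $-\log c_k = \f12(\log k+\log\log k)$: one jump falls just short of the threshold, while two jumps plus the negligible non-jump contributions clear it. Define $p$ in two stages: for $s\in[0,\f12]$ put $p(\xi,s) = c_k\,\ind{\xi\le 2s}$ (sweeping the support measure $w=2s$ from $0$ to $1$); for $s\in[\f12,1]$ put $p(\xi,s) = c_k + (2s-1)(\f12 - c_k)$ (raising $q$ uniformly from $c_k$ to $\f12$). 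This is continuous in the metric of $\cQ$, with $p(\cdot,0)\equiv 0$ and $p(\cdot,1)\equiv\f12$, so $p\in\cP$.

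For the analysis along the path: on the second stage $q$ is a constant $\ge c_k$, so the numerator of $F$ is $\log 2 + k\,\E_\xi H(q(\xi)) \ge kH(c_k) = \Theta(k^{1/2}\log k) \gg \log k$ while the denominator is at most $\log k$, hence $F(p(\cdot,s))\to\infty$ there. On the first stage, writing $q=p(\cdot,s)$ with support measure $w$ and $\mu = \mu(w) := kwc_k$, the numerator is $\log 2 + kwH(c_k) = (1+o_k(1))\,\f{\mu}{2}\log k$ (using $H(c_k) = (1+o_k(1))c_k(-\log c_k)$ and $-\log c_k = (\f12 + o_k(1))\log k$). For the denominator, the $u_i \sim D(q)^{\otimes k}$ are i.i.d.\ two-valued --- equal to $-\log c_k$ (a jump) with probability $wc_k$, and $-\log(1-c_k)=O(c_k)$ otherwise --- the total non-jump contribution is $O(kc_k)=o_k(1)$, so the event $\{\sum_i u_i \ge \log k+\log\log k\}$ equals $\{N\ge 2\}$ for $N\sim\Bin(k,wc_k)$, whose probability is $(1+o_k(1))(1-(1+\mu)e^{-\mu})$ by elementary binomial-to-Poisson asymptotics. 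Hence $F(p(\cdot,s)) = (1+o_k(1))\,\psi_1(\mu(w))$. Since $\mu(w)=kwc_k$ sweeps $[0,\,kc_k]$ with $kc_k = k^{1/2 - o_k(1)}\to\infty$, it covers a neighborhood of $\arg\min\psi_1$, so $\min_s F(p(\cdot,s)) = (1+o_k(1))\min_{\lambda>0}\psi_1(\lambda) = (1+o_k(1))\,\psist_1$; letting $k\to\infty$ yields $\kappastst\ge\psist_1$.

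I do not expect a deep obstacle, but the care required is concentrated in the first-stage denominator analysis. The crucial point --- and the reason $c_k$ must be tuned to exactly this order --- is that the number of jumps needed is precisely $2$: inflating $c_k$ by a constant factor makes two jumps fall short, so one needs three and $\psi_1$ is replaced by a (worse) function built from $\P[\Pois(\cdot)\ge 3]$; shrinking $c_k$ far enough makes a single jump suffice. One must also verify that the non-jump terms are genuinely negligible relative to the threshold, and track the $o_k(1)$ errors uniformly as $\mu$ ranges over a compact neighborhood of $\arg\min\psi_1$ while $k\to\infty$. (A matching lower bound on $\P[\sum_i u_i \ge \log k+\log\log k]$ is not needed for the inequality $\kappastst\ge\psist_1$, since a lower bound on $F$ only uses an upper bound on the denominator; it would be a routine second-moment computation. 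If one wanted to extract the slightly stronger constant $\approx 1.716$ mentioned above, one would optimize over a richer family of path shapes rather than the single-value-$c_k$ ansatz used here.)
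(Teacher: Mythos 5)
Your construction is correct in outline and takes a genuinely different route from the paper's: the paper uses the one-parameter family $p(\xi,s) = \min(s,\f12)\ind{\xi\le s}$, tying the step height and the support measure to the same parameter $s$, whereas you decouple them, fixing the height at $c_k = k^{-1/2}(\log k)^{-1/2}$ (so that $-\log c_k = \f12(\log k+\log\log k)$ and exactly two jumps hit the threshold) and sweeping only the support measure $w$. Both reduce the computation to $\psi_1$ evaluated at $\mu=kwc_k$ (respectively $\mu=ks^2$), and both land on $\psist_1$; your design principle for $c_k$ is arguably the cleaner one to motivate, while the paper's single-parameter family is more compact and handles the boundary regime ($s^2k=o_k(1)$, $s\le 1/k$) by a direct case split where you use the explicit second stage of the path.

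However, there is a real slip in the denominator analysis. For your stage-one $q$ (equal to $c_k$ on a set of measure $w$ and $0$ off it), the law $D(q)$ is \emph{three}-valued, not two-valued: with probability $1-w$ the sample $u$ equals $0$ (from $\xi$ outside the support), with probability $wc_k$ it equals $-\log c_k$, and only with probability $w(1-c_k)$ does it equal $-\log(1-c_k)$. Consequently your claim that the total non-jump contribution is $O(kc_k)=o_k(1)$ is wrong twice over: $kc_k = k^{1/2}(\log k)^{-1/2}\to\infty$, and in any case $kc_k$ would be the bound if all $k$ coordinates contributed $\approx c_k$, which they do not. The correct statement is that the non-jump sum has expectation $\approx kwc_k=\mu=O(1)$, and since it is a sum of $k$ independent summands each bounded by $O(c_k)$, it concentrates; the probability it reaches $\f12(\log k+\log\log k)$ (the gap between one jump and the threshold) is super-polynomially small. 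Since $kc_k\gg\log k$, the $O(kc_k)$ bound alone is much too weak to rule out the non-jump terms clearing the gap, so some concentration argument is indispensable. The paper handles the analogous point by explicitly splitting each $u_i$ into a ``jump'' piece $v_i$ and a ``small'' piece $w_i$ and bounding $\P\lt[\sum_i w_i\ge\log\log k\rt]$ by a Chernoff bound; your proof needs the same device. Once that is inserted, your argument goes through.
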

\begin{proof}
    We will prove this proposition by constructing a suitable function family $p \in \cP$.

    Let $\Xi = [0,1]$ equipped with the uniform measure.
    Let $p: \Xi \times [0,1] \to [0,1]$ be defined by
    \[
        p(\xi, s) =
        \begin{cases}
            \min(s, \f12) & \xi \le s, \\
            0 & \xi \ge s.
        \end{cases}
    \]
    Thus, for fixed $s\in [0,1]$, $p(\xi, s) = \min(s, \f12)$ with probability $s$, and otherwise $p(\xi, s) = 0$.
    We will show that for this $p$,
    \[
        \limsup_{k\to\infty}
        \min_{s\in [0,1]}
        F(p(\cdot, s))
        \ge
        \psist_1,
    \]
    from which the proposition follows.

    Note that if $s = \omega_k(k^{-1/2})$, then $\E_\xi H(p(\xi, s)) = \omega_k(\log k / k)$, and so $F(p(\cdot, s)) = \omega_k(1)$.
    Therefore it suffices to consider $s = O_k(k^{-1/2})$.
    Then,
    \[
        \E_\xi H(p(\xi, s)) =
        (1 + o_k(1)) s^2 \log \f{1}{s}.
    \]
    We now analyze the behavior of the denominator of $F(p(\cdot, s))$.
    Note that a sample $u \sim D(p(\cdot, s))$ equals $\log \f{1}{s}$ with probability $s^2$, $\log \f{1}{1-s} \le \f{s}{1-s}$ with probability $s(1-s)$, and $0$ with probability $1-s$.
    For $i=1,\ldots,k$, define
    \[
        v_i = \log \f{1}{s} \ind{u_i = \log \f{1}{s}},
        \qquad
        \text{and}
        \qquad
        w_i = \f{s}{1-s} \ind{u_i = \log \f{1}{1-s}}.
    \]
    So, $u_i \le v_i + w_i$.
    For $u_1,\ldots,u_k \sim D(p(\cdot, s))^{\otimes k}$, we have
    \[
        \P \lt[\sum_{i=1}^k u_i \ge \log k + \log \log k\rt]
        \le
        \P \lt[\sum_{i=1}^k v_i \ge \log k\rt] +
        \P \lt[\sum_{i=1}^k w_i \ge \log \log k\rt].
    \]
    Let $1 + \delta = \f{\log \log k}{k \E w_1} = \f{\log \log k}{ks^2}$.
    Because $s = O_k(k^{-1/2})$, we have $1+\delta = \omega_k(1)$, and so $\f{\delta^2}{2 + \delta} \ge \f12 (1+\delta)$ for sufficiently large $k$.
    By a Chernoff bound,
    \begin{align*}
        \P \lt[\sum_{i=1}^k w_i \ge \log \log k\rt]
        &\le
        \P \lt[\sum_{i=1}^k \f{1-s}{s} w_i \ge \f{1-s}{s} \log \log k\rt]
        \le \exp\lt(-\f{\delta^2}{2 + \delta} \cdot ks(1-s) \rt) \\
        &\le \exp\lt(- \f12 (1+\delta) ks(1-s)\rt)
        \le \exp\lt(- \f{(1-s)\log \log k}{2s}\rt)
        \le \exp\lt(- \Omega_k(k^{1/2})\rt).
    \end{align*}
    To analyze the other probability, we consider cases $s > \f{1}{k}$ and $s\le \f{1}{k}$.
    We first consider $s > \f{1}{k}$.
    In order to have $\sum_{i=1}^k v_i \ge \log k$, at least two $v_i$ must be nonzero.
    This occurs with probability
    \[
        1 - (1 - s^2)^k - s^2k (1 - s^2)^{k-1}
        \le 1 - (1 + s^2k) (1-s^2)^k.
    \]
    Thus,
    \[
        F(p(\cdot, s))
        \ge
        \f{1}{\log k} \cdot \f{\log 2 + (1 + o_k(1)) s^2k \log \f{1}{s}}{1 - (1 + s^2k) (1-s^2)^k + \exp(-\Omega_k(k^{1/2}))}.
    \]
    If $s^2k = o_k(1)$, then $1 - (1 + s^2k) (1-s^2)^k = O_k(s^4k^2)$, and the right-hand side is $\omega_k(1)$.
    So, this bound is minimized at $s = \lambda k^{-1/2}$ for constant $\lambda$, in which case
    \[
        \f{1}{\log k} \cdot \f{\log 2 + (1 + o_k(1)) s^2k \log \f{1}{s}}{1 - (1 + s^2k) (1-s^2)^k + \exp(-\Omega_k(k^{1/2}))}
        \to
        \f{\lambda/2}{1 - (1 + \lambda)\exp(-\lambda)} = \psi_1(\lambda) \ge \psist_1.
    \]

    We now consider $s \le \f{1}{k}$.
    In order to have $\sum_{i=1}^k v_i \ge \log k$, at least one $v_i$ must be nonzero.
    This occurs with probability
    \[
        1 - (1 - s^2)^k
        = (1 + o_k(1)) s^2k,
    \]
    and so
    \[
        F(p(\cdot, s))
        \ge
        \f{1}{\log k} \cdot \f{\log 2 + (1 + o_k(1)) s^2k \log \f{1}{s}}{(1+o_k(1)) s^2k + \exp(-\Omega_k(k^{1/2}))}.
    \]
    The right-hand side is $\omega_k(1)$ because $s \le \f{1}{k}$.
\end{proof}

For any nonnegative integer $N$, we may further define
\[
    \psi_N(\lambda) =
    \f{\lambda/(N+1)}{1-\lt(\sum_{k\le N} \f{\lambda^k}{k!}\rt)\exp(-\lambda)}
\]
and $\psist_N = \inf_{\lambda > 0} \psi_N (\lambda)$.
Over positive integers $N$, the largest $\psist_N$ is $\psist_2 \approx 1.716$.
The following corollary gives the lower bound on $\kappastst$ alluded to above.
\begin{corollary}
    \label{cor:impossibility-kappastst-lb}
    We have that $\kappastst \ge \psist_2$.
\end{corollary}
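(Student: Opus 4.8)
The plan is to prove the stronger statement $\kappastst \ge \psist_N$ for \emph{every} positive integer $N$; the corollary then follows by taking $N=2$, since $\psist_2\approx 1.716$ is the largest of the $\psist_N$. This is the direct generalization of Proposition~\ref{prop:impossibility-kappastst-lb}, which is exactly the case $N=1$, and the argument is essentially the same once the right one-parameter family of test functions is identified. As in that proposition, it suffices to exhibit a single fixed $p\in\cP$ (depending only on $N$, not on $k$) with $\liminf_{k\to\infty}\min_{s\in[0,1]}F(p(\cdot,s))\ge\psist_N$, because $\kappastst=\limsup_k\max_{p}\min_s F(p(\cdot,s))$.

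Concretely, I would take $\Xi=[0,1]$ with the uniform measure and define $p:\Xi\times[0,1]\to[0,1]$ by
\[
    p(\xi,s)=
    \begin{cases}
        \min(s^{1/N},\f12) & \xi\le s,\\
        0 & \xi>s.
    \end{cases}
\]
One checks $p(\xi,0)=0$, $p(\xi,1)=\f12$, and that $p(\cdot,s)$ is continuous in $s$ in the metric of $\cQ$, so $p\in\cP$. Just as in Proposition~\ref{prop:impossibility-kappastst-lb}, $F(p(\cdot,s))=\omega_k(1)$ unless $s$ sits at the critical scale, which here is $s=\Theta_k(k^{-N/(N+1)})$; I would parametrize $s$ at this scale by $\mu=ks^{(N+1)/N}$, so that the ``spike value'' $t:=\min(s^{1/N},\f12)=(1+o_k(1))(\mu/k)^{1/(N+1)}$ has order $k^{-1/(N+1)}$.

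The heart of the computation is the law of a sample $u\sim D(p(\cdot,s))$: it equals $\log(1/t)=(1-o_k(1))\f{1}{N+1}\log k$ with probability $st=(1+o_k(1))\mu/k$, equals $\log\f{1}{1-t}=O_k(t)=o_k(1)$ with probability $O_k(s)$, and equals $0$ otherwise. Hence the number of ``spikes'' among $u_1,\dots,u_k$ is asymptotically $\Pois(\mu)$, and because $N$ spikes contribute only $(1-o_k(1))\f{N}{N+1}\log k<\log k$ to $\sum_i u_i$ while the aggregate of the medium values is $o_k(\log k)$, the event $\{\sum_i u_i\ge\log k+\log\log k\}$ forces at least $N+1$ spikes. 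Splitting each $u_i$ into its spike part $v_i$ and its $O_k(t)$ part $w_i$ exactly as in the $N=1$ proof, a Chernoff bound gives $\P[\sum_i w_i\ge\log\log k]\le\exp(-\Omega_k(k^{1/(N+1)}))$, so
\[
    \P\lt[\sum_{i=1}^k u_i\ge\log k+\log\log k\rt]
    \le
    \P[\text{at least }N+1\text{ spikes}]+o_k(1)
    \to
    1-\lt(\sum_{j\le N}\f{\mu^j}{j!}\rt)e^{-\mu}.
\]
On the other hand $\log 2+k\,\E_\xi H(p(\xi,s))=\log 2+k\,s\,H(t)=(1+o_k(1))\,\mu\cdot\f{1}{N+1}\log k$. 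Combining the two, $F(p(\cdot,s))\ge(1-o_k(1))\psi_N(\mu)$ along the critical scale, and together with $F(p(\cdot,s))=\omega_k(1)$ off this scale we obtain $\min_{s\in[0,1]}F(p(\cdot,s))\ge(1-o_k(1))\inf_{\lambda>0}\psi_N(\lambda)=(1-o_k(1))\psist_N$; letting $k\to\infty$ gives $\kappastst\ge\psist_N$, and $N=2$ proves the corollary.

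The main obstacle is bookkeeping rather than a new idea. One must verify carefully that $F(p(\cdot,s))$ really does blow up outside the window $s=\Theta_k(k^{-N/(N+1)})$, splitting into the cases $s=\omega_k(k^{-N/(N+1)})$ and $s=o_k(k^{-N/(N+1)})$ exactly as the $N=1$ proof split into its three ranges, and that the key inclusion ``$\sum_i v_i\ge\log k$ implies at least $N+1$ spikes'' is valid, which amounts to $t>k^{-1/N}$; this holds because $t=\Theta_k(k^{-1/(N+1)})$ and $\f{1}{N+1}<\f{1}{N}$. Since the excerpt makes no attempt to evaluate $\kappastst$ rigorously, this argument can remain at the level of a sketch that mirrors the proof of Proposition~\ref{prop:impossibility-kappastst-lb} step for step.
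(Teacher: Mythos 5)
Your proposal is correct, and it takes essentially the same route as the paper's (very terse) proof: exhibit a one-parameter family $p\in\cP$ whose ``spike'' has probability scaling like $s\cdot t$ and magnitude $\log(1/t)$ with $t=\Theta_k(k^{-1/(N+1)})$, so that the number of spikes among $u_1,\dots,u_k$ is asymptotically $\Pois(\mu)$ and the event $\sum_i u_i\ge\log k+\log\log k$ forces at least $N+1$ spikes, yielding $F(p(\cdot,s))\to\psi_N(\mu)$ on the critical scale and $\omega_k(1)$ off it. Your family $p(\xi,s)=\min(s^{1/N},\tfrac12)\,\ind{\xi\le s}$ and the paper's $p_N(\xi,s)=\min(s,\tfrac12)\,\ind{\xi\le s^N}$ are the same one-parameter family in $\cQ$ up to the reparametrization $s\mapsto s^N$, so the computations agree term for term; if anything, your version is cleaner because the paper's case split leaves the region $s^N<\xi<s$ formally unassigned for $N\ge2$ (it is implicitly $0$). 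Your bookkeeping of the critical scale, the spike-count reduction to $N+1$ via $t>k^{-1/N}$, and the $\exp(-\Omega_k(k^{1/(N+1)}))$ tail bound on the medium values all check out, and they match the structure of the $N=1$ argument in Proposition~\ref{prop:impossibility-kappastst-lb} as the corollary requires.
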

\begin{proof}
    We will construct a suitable function family $p$.
    For any nonnegative integer $N$, we can define
    \begin{equation}
        \label{eq:def-p-n}
        p_N(\xi, s) =
        \begin{cases}
            \min(s, \f12) & \xi \le s^N, \\
            0 & \xi \ge s.
        \end{cases}
    \end{equation}
    By a similar analysis to Proposition~\ref{prop:impossibility-kappastst-lb}, we can show for this $p$ that
    \[
        \limsup_{k\to\infty}
        \min_{s\in [0,1]}
        F(p_N(\cdot, s))
        \ge
        \psist_N.
    \]
    Taking $N=2$ yields the result.
\end{proof}

Due to Corollary~\ref{cor:impossibility-kappastst-lb}, a proof of Theorem~\ref{thm:impossibility} improving the constant $\kappast$ below $\psist_2$ will require new conceptual insights.
Finally, we conjecture that Corollary~\ref{cor:impossibility-kappastst-lb} is in fact sharp.
\begin{conjecture}
    \label{conj:kappastst-value}
    We have that $\kappastst = \psist_2$.
    In particular, Theorem~\ref{thm:impossibility} holds for all $\kappa > \psist_2$.
\end{conjecture}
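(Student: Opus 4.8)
Since Corollary~\ref{cor:impossibility-kappastst-lb} already gives $\kappastst \ge \psist_2$, and since the proposition preceding this discussion shows Theorem~\ref{thm:impossibility} for all $\kappa > \kappastst$, it suffices to prove the matching upper bound $\kappastst \le \psist_2$; the assertion about Theorem~\ref{thm:impossibility} is then immediate. The plan is to prove $\kappastst \le \psist_2$ by exhibiting, for each large $k$, an explicit \emph{barrier}: a set $B \subseteq \cQ$ that topologically disconnects $q \equiv 0$ from $q \equiv \f12$ in $\cQ$ and on which $F(q) \le \psist_2 + o_k(1)$. By the geometric interpretation of \eqref{eq:def-kappastst}, any such $B$ is contained in $\{q : F(q) \le \psist_2 + o_k(1)\}$, which therefore also disconnects the two functions, and hence $\min_{s\in[0,1]} F(p(\cdot,s)) \le \psist_2 + o_k(1)$ for every $p \in \cP$, giving $\kappastst \le \psist_2$ after letting $k\to\infty$.

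The shape of $B$ should be read off from the extremal families $p_N$ of Proposition~\ref{prop:impossibility-kappastst-lb} and Corollary~\ref{cor:impossibility-kappastst-lb}, which are ``spikes'': $q$ takes a single value $v$ on a set of $P_\xi$-measure $w$ and is $0$ off it, with $v = \Theta(k^{-1/(N+1)})$ and $w = v^N$. First I would reduce a general $q$ to a spike by a rearrangement/majorization step: fixing the overlap entropy $\E_\xi H(q(\xi))$ (every path through $\cP$ sweeps all of $(0, \omega_k(1))$ by the intermediate value theorem) together with the measure of $\{\xi : q(\xi) \notin \{0,1\}\}$, one shows the success probability $\P_{u\sim D(q)^{\otimes k}}\big[\sum_i u_i \ge \log k + \log\log k\big]$ is maximized at a spike. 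The barrier is then the joint level set where the entropy is pinned to a constant multiple of $\f{\log k}{k}$ and the support measure to a prescribed power of $k$, tuned so that the extremal spike on $B$ is the minimizer of $\psi_2$. On $B$ the value of $F$ reduces to a Poisson computation: the number of coordinates $i$ with $u_i \approx \f13\log k$ is asymptotically $\mathrm{Pois}(\mu)$, the event $\sum_i u_i \ge \log k + \log\log k$ is to leading order ``at least three of these occur,'' and $F(q) \to \psi_2(\mu)$; optimizing the two parameters of $B$ over the admissible range yields $\sup_{q\in B} F = \psist_2 + o_k(1)$. One should also check the claim that $\psist_N < \psist_2$ for every integer $N \neq 2$, which confirms that the scale $k^{-1/3}$ is the cheapest place to cross, so no rival barrier at a different scale does better.

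The main obstacle is sharp, genuinely non-asymptotic control of the success probability. Unlike the proof that $\kappast \ge \kappastst$, where a crude Chernoff bound suffices and loses the constant factor exactly as foreseen in the ``Suboptimality of $\kappast$'' discussion, here one needs a lower bound on $\P\big[\sum_i u_i \ge \log k + \log\log k\big]$ that is tight to first order. Because this is a Poisson tail of \emph{fixed} order rather than a large-deviations event, it does not concentrate, so standard exponential-moment tools are inadequate; instead one must argue directly, tracking the contribution of the ``medium'' coordinates (those with $u_i$ of order $\log\log k$) and ruling out other routes to the threshold. The rearrangement step is equally delicate: the success event depends on $q$ through a high-dimensional, non-linear, non-monotone functional, so proving that spikes are extremal among all $q$ with prescribed entropy and support measure, \emph{uniformly in $k$}, is the crux and is where I expect most of the effort to go. Granting both ingredients, the two-parameter optimization over $B$ and a union bound along the interpolation path complete the argument in the manner already used for the proposition preceding this subsection.
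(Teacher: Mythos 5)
This statement is a \emph{conjecture} in the paper --- the authors explicitly leave it open, and the only rigorous result they establish is the one-sided bound $\kappastst \ge \psist_2$ (Corollary~\ref{cor:impossibility-kappastst-lb}). Their supporting discussion gives exactly the restricted computation your proposal reproduces: if one restricts the maximum in \eqref{eq:def-kappastst} to functions $p$ such that $p(\cdot, s)$ attains at most one nonzero value (the ``spikes'' $p_N$), the maximin evaluates to $\psist_2$; they then \emph{conjecture}, without proof, that this restriction is without loss. Your proposal has identified the same plan, but it does not constitute a proof, for the same reason the authors label this a conjecture rather than a theorem.

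Concretely, your ``rearrangement/majorization step'' --- that among all $q \in \cQ$ with prescribed overlap entropy and prescribed measure of $\{\xi: q(\xi) \notin \{0,1\}\}$, the success probability $\P_{u\sim D(q)^{\otimes k}}\bigl[\sum_i u_i \ge \log k + \log\log k\bigr]$ is maximized by a spike --- is precisely the open claim. You flag it as ``the crux and where most of the effort goes,'' and then say ``granting both ingredients'' the rest follows, so your argument is conditional on exactly the unproven heart of the conjecture. The functional in question is a non-linear, non-monotone function of the law of $u$ through a $k$-fold product, and the tail is at fixed (non-large-deviations) scale, so it neither concentrates nor tensorizes; no standard rearrangement or Schur-convexity argument is available off the shelf, and the paper does not supply one. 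The second ingredient you flag (sharp lower bounds on the success probability, tight to first order, matching the Poisson heuristic) is likewise asserted but not proved in the paper and not established in your sketch. Until these two steps are actually carried out, $\kappastst \le \psist_2$ remains open, and so does the conjecture.
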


The following evidence supports this conjecture.
In the maximin problem \eqref{eq:def-kappastst}, if we restrict the maximum over $p$ to functions such that for every $s$, $p(\xi, s)$ attains at most one nonzero value, then we can show by explicit computation that the maximin problem has value $\psist_2$.
The idea of this proof is that for each such $p$, at the $s$ minimizing $F(p(\cdot, s))$, $p(\cdot, s)$ equals (up to isomorphism of the probability space $(\Xi, P_\xi)$) $p_N(\cdot, s')$ for some $s'$ and some (possibly fractional) $N$.
We can show that fractional $N$ do not maximize $\min_{s\in [0,1]} F(p_N(\cdot, s))$.
Thus the candidate maxima are $p_N$ for integer $N$, and of these $p_2$ is maximal, attaining value $\psist_2$.
We believe that the maximum of \eqref{eq:def-kappastst} over $p\in \cP$ is attained by $p$ with this property.

\end{document}